\newtheorem{theorem}{Theorem} [section]
\newtheorem{corollary}[theorem]{Corollary}	
\newtheorem{lemma}[theorem]{Lemma}
\newtheorem{remark}[theorem]{Remark}
\theoremstyle{definition}
\newcommand{\C}{\mathbb{C}}
\newcommand{\R}{\mathbb{R}}
\newcommand{\im}{\text{\upshape Im\,}}
\tikzset{
	master/.style={
		execute at end picture={
			\coordinate (lower right) at (current bounding box.south east);
			\coordinate (upper left) at (current bounding box.north west);
		}
	},
	slave/.style={
		execute at end picture={
			\pgfresetboundingbox
			\path (upper left) rectangle (lower right);
		}
	}
}
\let\oldbibliography\thebibliography
\renewcommand{\thebibliography}[1]{\oldbibliography{#1}
\setlength{\itemsep}{-0.5pt}}
\def\XXint#1#2#3{{\setbox0=\hbox{$#1{#2#3}{\int}$}
\vcenter{\hbox{$#2#3$}}\kern-.5\wd0}}
\tikzset{->-/.style={decoration={
				markings,
				mark=at position #1 with {\arrow{latex}}},postaction={decorate}}}
	\tikzset{-<-/.style={decoration={
				markings,
				mark=at position #1 with {\arrowreversed{latex}}},postaction={decorate}}}
\tikzset{cross/.style={cross out, draw, 
         minimum size=2*(#1-\pgflinewidth), 
         inner sep=0pt, outer sep=0pt}}
\numberwithin{equation}{section}
\def\bigO{{\cal O}}
\begin{document}
\title{\vspace*{-1.5cm} Disk counting statistics near hard edges of random normal matrices: the multi-component regime}
\author{Yacin Ameur$^*$, Christophe Charlier$^*$, Joakim Cronvall\footnote{Centre for Mathematical Sciences, Lund University, 22100 Lund, Sweden. e-mails: yacin.ameur@math.lu.se,  christophe.charlier@math.lu.se, joakim.cronvall@math.lu.se
} \, and Jonatan Lenells\footnote{Department of Mathematics, KTH Royal Institute of Technology, 10044 Stockholm, Sweden. e-mail: jlenells@kth.se}}

\maketitle

\begin{abstract}
We consider a two-dimensional point process whose points are separated into two disjoint components by a hard wall, and study the multivariate moment generating function of the corresponding disk counting statistics. We investigate the ``hard edge regime" where all disk boundaries are a distance of order $\frac{1}{n}$ away from the hard wall, where $n$ is the number of points. We prove that as $n \to + \infty$, the asymptotics of the moment generating function are of the form
\begin{align*}
& \exp \bigg(C_{1}n + C_{2}\ln n + C_{3} + \mathcal{F}_{n} + \frac{C_{4}}{\sqrt{n}} + \bigO(n^{-\frac{3}{5}})\bigg),
\end{align*}
and we determine the constants $C_{1},\dots,C_{4}$ explicitly. The oscillatory term $\mathcal{F}_{n}$ is of order $1$ and is given in terms of the Jacobi theta function. Our theorems allow us to derive various precise results on the disk counting function. For example, we prove that the asymptotic fluctuations of the number of points in one component are of order $1$ and are given by an oscillatory discrete Gaussian. Furthermore, the variance of this random variable enjoys asymptotics described by the Weierstrass $\wp$-function. %\CC{In general, the order $1$ oscillations in our formulas are only quasiperiodic in $n$; however, if the mass of the equilibrium measure on one component is rational, then the asymptotics are periodic in $n$, mimicking a phenomenon already known in dimension 1.}
\end{abstract}
\noindent
{\small{\sc AMS Subject Classification (2020)}: 41A60, 60B20, 60G55.}

\noindent
{\small{\sc Keywords}: Oscillatory asymptotics, Moment generating functions, Random matrix theory.}
%\tableofcontents

\section{Introduction and statement of results}\label{section: introduction}

In recent years there have been a lot of works on counting statistics of two dimensional point processes, see e.g. \cite{CE2020, LMS2018, L et al 2019, SDMS2020, BKLL2021, Charlier 2d jumps, SDMS2021, AkemannSungsoo, ChLe2022, BC2022, ACCL2022} and references therein. The common feature of these works is that they all deal exclusively with models for which the points condensate on a single connected component (``the one-component regime"). In this paper we deviate from these earlier works in that we study the disk counting statistics of a Coulomb gas (at inverse temperature $\beta=2$) whose points are separated into two disjoint components by a hard wall. Let us now introduce the Coulomb gas model investigated in this work.

\medskip The Mittag-Leffler ensemble is the following joint probability density function
\begin{align}\label{def of point process}
\frac{1}{n!Z_{n}} \prod_{1 \leq j < k \leq n} |\zeta_{k} -\zeta_{j}|^{2} \prod_{j=1}^{n}|\zeta_{j}|^{2\alpha}e^{-n |\zeta_{j}|^{2b}}, \qquad \zeta_{1},\dots,\zeta_{n} \in \mathbb{C},
\end{align}
where $b>0$ and $\alpha > -1$ are fixed parameters and $Z_{n}$ is the normalization constant. As $n \to + \infty$, with high probability the random points $\zeta_{1},\dots,\zeta_{n}$ accumulate on the disk centered at $0$ of radius $b^{-\smash{\frac{1}{2b}}}$ according to the probability measure $\mu(d^{2}z) =  \smash{\frac{b^{\smash{2}}}{\pi}}|z|^{\smash{2b-2}}d^{\smash{2}}z$ \cite{HM2013, SaTo}. This determinantal point process generalizes the complex Ginibre process (which corresponds to $(b,\alpha)=(1,0)$) and has attracted a lot of attention over the years, see e.g. \cite{AV2003, AKS2018, BS2021}.

\medskip \medskip In this paper we focus on the Mittag-Leffler ensemble with a hard wall that separates the random points into two disjoint components. To be precise, let $0<\rho_{1}<\rho_{2}<b^{-\frac{1}{2b}}$. We consider the probability density
\begin{align}\label{def of point process hard}
\frac{1}{n!\mathcal{Z}_{n}} \prod_{1 \leq j < k \leq n} |z_{k} -z_{j}|^{2} \prod_{j=1}^{n} e^{-nQ(z_{j})}, \qquad z_{1},\dots,z_{n} \in \mathbb{C},
\end{align}
where $\mathcal{Z}_{n}$ is the normalization constant and
\begin{align}\label{def of Q}
Q(z) = \begin{cases}
|z|^{2b} - \frac{2\alpha}{n}\ln |z|, & \mbox{if } |z| \in [0,\rho_{1}]\cup [\rho_{2},+\infty), \\
+\infty, & \mbox{otherwise.}
\end{cases}
\end{align}
Because $\rho_{1},\rho_{2} < b^{-\frac{1}{2b}}$, the macroscopic behavior of \eqref{def of point process hard} is different from that of \eqref{def of point process}; it is described by a probability measure $\mu_{h}$ which is supported on $\{z \in \mathbb{C}:|z| \in [0,\rho_{1}]\cup [\rho_{2},b^{-\frac{1}{2b}}] \}$ and has a singular component on the circles of radii $\rho_{1}$ and $\rho_{2}$. This measure can be computed using standard balayage techniques \cite{SaTo} (we provide the details of this computation in Appendix \ref{appendix:eq measure}) and is given by
\begin{align}\label{def of muh}
\mu_{h}(d^{2}z) = 2b^{2}r^{2b-1}\chi_{[0,\rho_{1}]\cup [\rho_{2},b^{-\frac{1}{2b}}]}(r) dr\frac{d\theta}{2\pi} + \sigma_{1} \delta_{\rho_{1}}(r) dr \frac{d\theta}{2\pi} + \sigma_{2} \delta_{\rho_{2}}(r) dr \frac{d\theta}{2\pi}, 
\end{align} 
where $z=re^{i\theta}$, $r>0$, $\theta \in (-\pi,\pi]$ and 
\begin{align}\label{def of taustar}
\sigma_{1} = \sigma_{\star}-b\rho_{1}^{2b}, \qquad \sigma_{2} = b\rho_{2}^{2b}-\sigma_{\star}, \qquad \sigma_{\star} := \frac{\rho_{2}^{2b}-\rho_{1}^{2b}}{2\ln(\frac{\rho_{2}}{\rho_{1}})}.
\end{align}
The assumption $0<\rho_{1}<\rho_{2}<b^{-\frac{1}{2b}}$ implies that $b\rho_1^{2b} < \sigma_{\star} < b\rho_2^{2b}$ and hence $\sigma_{1}, \sigma_{2} >0$. The quantity $\sigma_{\star}$ is the mass of $\mu_{h}$ on $\{|z| \leq \rho_{1}\}$. Indeed, straightforward calculations show that
\begin{align}\label{interpretation of sigma star}
\int_{|z| \leq \rho_{1}} \mu_{h}(d^{2}z) = \sigma_{\star}, \qquad \int_{|z| \geq \rho_{2}} \mu_{h}(d^{2}z) = 1 - \sigma_{\star},
\end{align}
which means that for large $n$, the number of $z_{j}$'s on  $\{|z| \leq \rho_{1}\}$ is roughly $\sigma_{\star}n$ with high probability (see also Corollary \ref{coro:prob interpretation of Xn} and the asymptotics \eqref{asymptotics E in the first droplet} and \eqref{asymptotics Var in the first droplet} below).
The point process \eqref{def of point process hard} is an example of a two-dimensional Coulomb gas that is rotation invariant (meaning that the density \eqref{def of point process hard} remains unchanged if all $z_{j}$'s are multiplied by $e^{i\beta}$, $\beta \in \mathbb{R}$). This ensemble can be seen as a conditional process where the points from \eqref{def of point process} are conditioned on the hole event $\mathcal{H}$ that no $\zeta_{j}$'s lie in the annulus centered at 0 of radii $\rho_{1}$ and $\rho_{2}$. The partition function $\mathcal{Z}_{n}$ of \eqref{def of point process hard} is precisely equal to $Z_{n}\mathbb{P}(\mathcal{H})$, and its large $n$ asymptotics were investigated in \cite{Adhikari, AR Infinite Ginibre, Charlier 2d gap}; see also \cite{GHS1988, ForresterHoleProba, JLM1993, APS gap 2009, AK hole 2013, AIE gap 2014, GN2018, L et al 2019} for related works on the hole event. The process \eqref{def of point process hard} can also be realized as the eigenvalues of an $n \times n$ random normal matrix $M$ taken at random according to the probability density proportional to $e^{-n\, \mathrm{tr}Q(M)}dM$, where ``$\mathrm{tr}$" is the trace and $dM$ is the measure on the set of $n \times n$ normal matrices induced by the flat Euclidian metric of $\mathbb{C}^{n\times n}$  \cite{Mehta, CZ1998, ElbauFedler}. Correlation kernels near hard edges have been studied in \cite{ZS2000, NAKP2020, Seo}.

\begin{figure}
\begin{center}
\begin{tikzpicture}[master]
\node at (0,0) {\includegraphics[width=4.2cm]{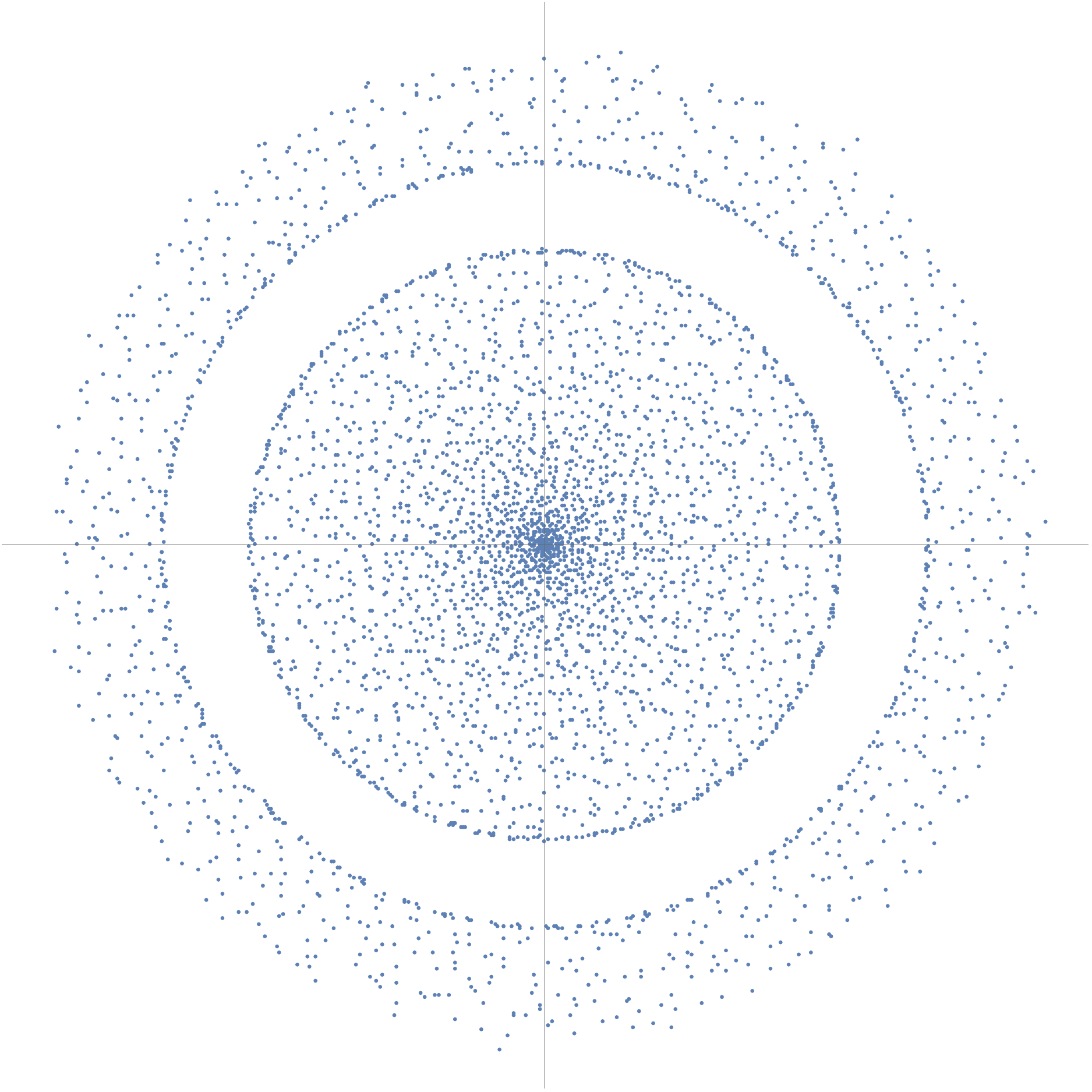}};
\node at (0,-2.5) {$b=\frac{1}{2}$};
\end{tikzpicture}
\begin{tikzpicture}[slave]
\node at (0,0) {\includegraphics[width=4.2cm]{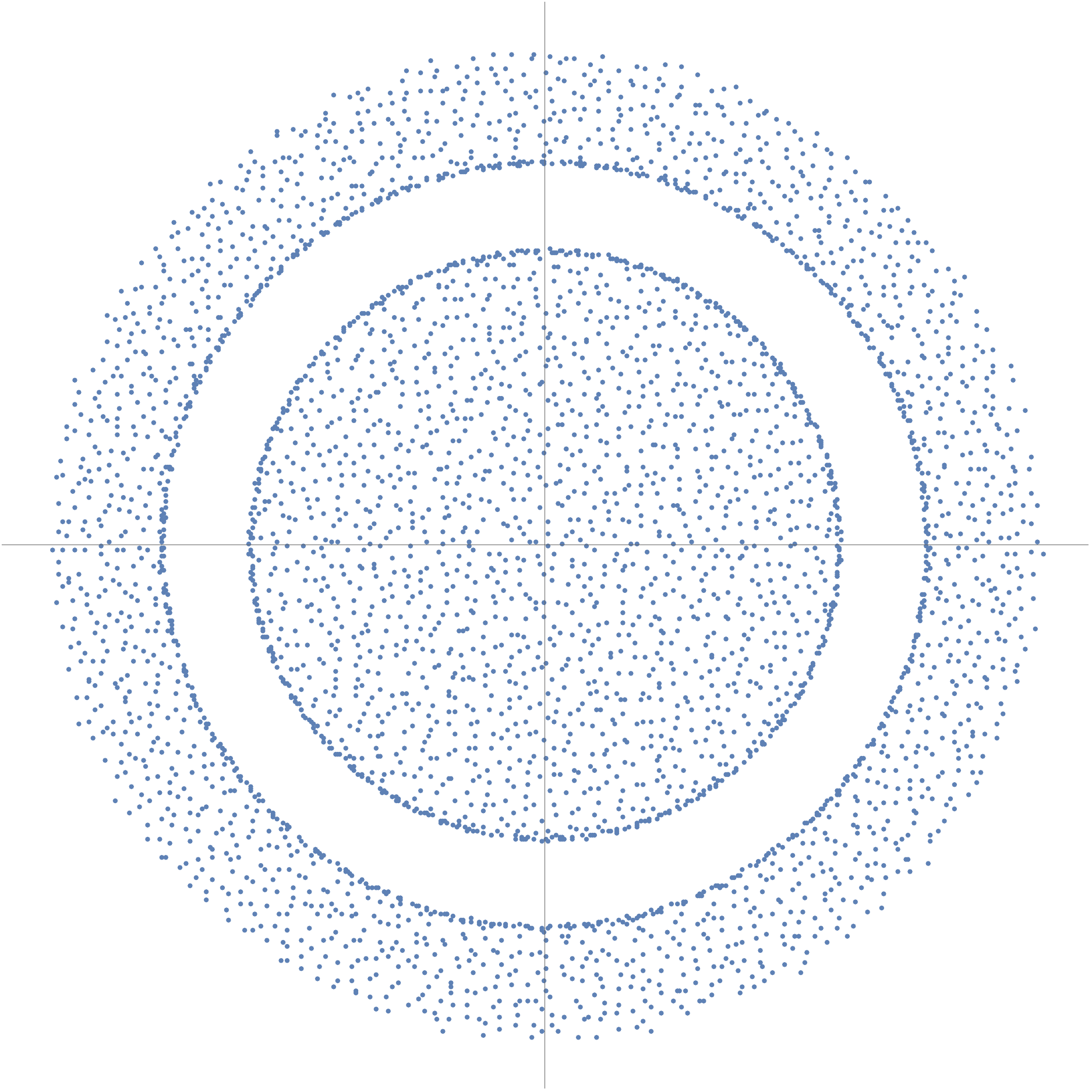}};
\node at (0,-2.5) {$b=1$};
\end{tikzpicture}
\begin{tikzpicture}[slave]
\node at (0,0) {\includegraphics[width=4.2cm]{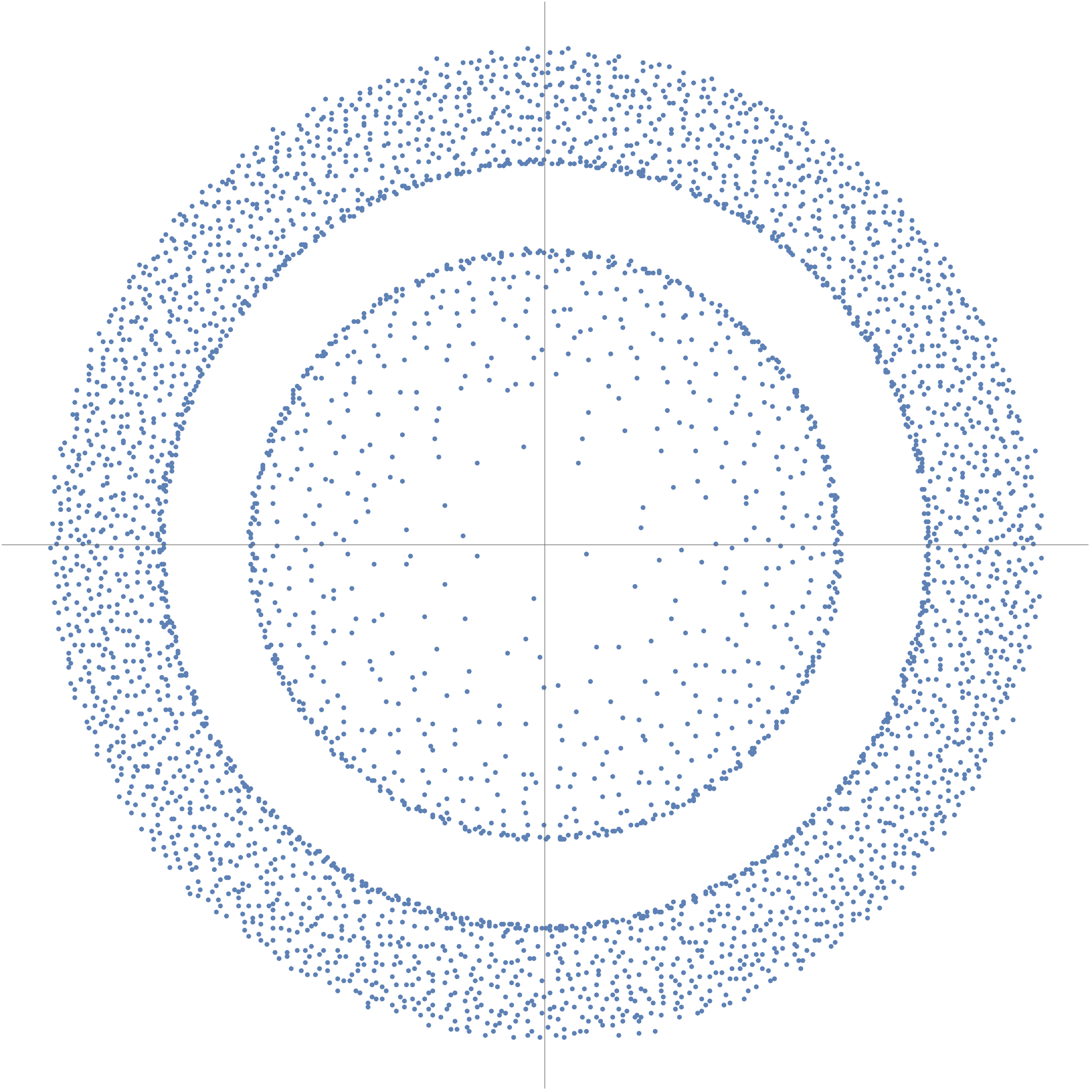}};
\node at (0,-2.5) {$b=2$};
\end{tikzpicture}
\end{center}
\caption{\label{fig: ML with hard wall} Illustration of the point processes corresponding to \eqref{def of point process hard} with $n=4096$, $\rho_{1}=\frac{3}{5}b^{-\frac{1}{2b}}$, $\rho_{2}=\frac{4}{5}b^{-\frac{1}{2b}}$, $\alpha=0$ and the indicated values of $b$.}
\end{figure}

\medskip We will focus on the ``the hard edge regime", i.e. when all disk boundaries are a distance of order $\frac{1}{n}$ away from the hard edges $\{|z|=\rho_{1}\}$ and $\{|z|=\rho_{2}\}$. (Further away from the hard edges, it turns out that there are no oscillations and the situation gets similar to the one-component semi-hard edge regime considered in \cite{ACCL2022}.) Let us now be more specific. Let $\mathrm{N}(y):=\#\{z_{j}: |z_{j}| < y\}$ be the random variable that counts the number of points of \eqref{def of point process hard} in the disk centered at $0$ of radius $y$. Our main result is a precise asymptotic formula as $n \to + \infty$ for the multivariate moment generating function (MGF)
\begin{align}\label{moment generating function intro}
\mathbb{E}\bigg[ \prod_{j=1}^{2m} e^{u_{j}\mathrm{N}(r_{j})} \bigg]
\end{align}
where $m \in \mathbb{N}_{>0}$ is arbitrary (but fixed), $u_{1},\dots,u_{2m} \in \mathbb{R}$, and the radii $r_{1},\dots,r_{2m}$ satisfy $r_{1} < \dots <r_{2m}$ and are merging at a critical speed in the following way
\begin{align}
& r_{\ell} = \rho_{1} \bigg( 1-\frac{t_{\ell}}{n} \bigg)^{\frac{1}{2b}}, & & t_{\ell}\geq 0, \; \ell=1,\dots,m, \label{def of rell part1} \\
& r_{\ell} = \rho_{2} \bigg( 1+\frac{t_{\ell}}{n} \bigg)^{\frac{1}{2b}}, & & t_{\ell}\geq 0, \; \ell=m+1,\dots,2m. \label{def of rell part2}
\end{align}
As $n \to + \infty$, we prove that $\mathbb{E}\big[ \prod_{j=1}^{2m} e^{u_{j}\mathrm{N}(r_{j})} \big]$ enjoys asymptotics of the form
\begin{align}
& \exp \bigg(C_{1}n + C_{2}\ln n + C_{3} + \mathcal{F}_{n} + \frac{C_{4}}{\sqrt{n}} + \bigO(n^{-\frac{3}{5}})\bigg), \label{shape of asymp in hard edge}
\end{align}
and we determine $C_{1},\dots,C_{4},\mathcal{F}_{n}$ explicitly. As corollaries of our various results on the generating function \eqref{moment generating function intro}, we also provide a central limit theorem for the joint fluctuations of $\mathrm{N}(r_{1}),\dots,\mathrm{N}(r_{2m})$, and precise asymptotic formulas for all cumulants of these random variables. Even for $m=1$ our results are new. 

We now introduce the necessary material to present our results. For $j \geq 0$, define
\begin{align}
& \mathsf{T}_{j}(x;\vec{t},\vec{u}) = \sum_{\ell=1}^{m} \omega_{\ell}t_{\ell}^{j}e^{-\frac{t_{\ell}}{b}(x-b\rho_{1}^{2b})}, & & \hat{\mathsf{T}}_{j}(x;\vec{t},\vec{u}) = \sum_{\ell=m+1}^{2m} \omega_{\ell}t_{\ell}^{j}e^{-\frac{t_{\ell}}{b}(b\rho_{2}^{2b}-x)},\label{def of T and That}
\end{align}
where
\begin{align}\label{def of Omega j intro}
\omega_{\ell} = \begin{cases}
e^{u_{\ell}+\dots+u_{2m}}-e^{u_{\ell+1}+\dots+u_{2m}}, & \mbox{if } \ell < 2m, \\
e^{u_{2m}}-1, & \mbox{if } \ell=2m, \\
1, & \mbox{if } \ell=2m+1,
\end{cases}
\end{align}
and $\vec{t}=(t_{1},\dots,t_{2m})$, $\vec{u} = (u_{1},\dots,u_{2m})$. Since the quantities $\mathsf{T}_{0}(b\rho_{1}^{2b};\vec{t},\vec{u})$ and $\hat{\mathsf{T}}_{0}(b\rho_{2}^{2b};\vec{t},\vec{u})$ are independent of $\vec{t}$, we will simply write $\mathsf{T}_{0}(b\rho_{1}^{2b};\vec{u})$ and $\hat{\mathsf{T}}_{0}(b\rho_{2}^{2b};\vec{u})$ instead.
Define
\begin{align}
& f(x;\vec{t},\vec{u}) = \frac{- \big(\frac{b\rho_{1}^{2b}}{x-b\rho_{1}^{2b}} + \frac{\alpha}{b} \big)\mathsf{T}_{1}(x;\vec{t},\vec{u})-\frac{x}{2b}\mathsf{T}_{2}(x;\vec{t},\vec{u})}{1 + \mathsf{T}_{0}(x;\vec{t},\vec{u}) + \hat{\mathsf{T}}_{0}(b\rho_{2}^{2b};\vec{u})}, \label{def of f hard} \\
& \hat{f}(x;\vec{t},\vec{u}) = \frac{ \big(\frac{b\rho_{2}^{2b}}{b\rho_{2}^{2b}-x} - \frac{\alpha}{b} \big)\hat{\mathsf{T}}_{1}(x;\vec{t},\vec{u})+\frac{x}{2b}\hat{\mathsf{T}}_{2}(x;\vec{t},\vec{u})}{1 - \hat{\mathsf{T}}_{0}(x;\vec{t},\vec{u}) + \hat{\mathsf{T}}_{0}(b\rho_{2}^{2b};\vec{u})}. \label{def of fh hard}
\end{align}
Let $\Omega := e^{u_{1}+\dots+u_{2m}}$ and let
\begin{align}\label{mathsfQdef}
\mathsf{Q}(\vec{t}, \vec{u}) := \frac{1 + \mathsf{T}_{0}(\sigma_{\star};\vec{t},\vec{u}) + \hat{\mathsf{T}}_{0}(b\rho_{2}^{2b};\vec{u})}{1 - \hat{\mathsf{T}}_{0}(\sigma_{\star};\vec{t},\vec{u}) + \hat{\mathsf{T}}_{0}(b\rho_{2}^{2b};\vec{u})}.
\end{align}
Our main result involves $\ln \mathsf{Q}(\vec{t}, \vec{u})$ and the next lemma, whose proof is given in Appendix \ref{T0lemmaapp}, shows that this logarithm is well-defined. It also shows that $f$ and $\hat{f}$ are smooth functions of $x \in (b\rho_1^{2b}, b\rho_2^{2b})$. 

\begin{lemma}\label{T0lemma}
Suppose $\vec{u} \in \mathbb{R}^{2m}$, $t_{1}>\dots>t_{m} \geq 0$, and $0 \leq t_{m+1} < \dots < t_{2m}$.
For $x \in [b\rho_1^{2b}, b\rho_2^{2b}]$, it holds that
$$1 + \mathsf{T}_{0}(x;\vec{t},\vec{u}) + \hat{\mathsf{T}}_{0}(b\rho_{2}^{2b};\vec{u}) > 0, \qquad 1 - \hat{\mathsf{T}}_{0}(x;\vec{t},\vec{u}) + \hat{\mathsf{T}}_{0}(b\rho_{2}^{2b};\vec{u}) > 0.$$
\end{lemma}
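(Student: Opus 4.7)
The plan is to rewrite both expressions as sums of manifestly non-negative terms via Abel's summation by parts applied to the weights $\omega_\ell$, after a short telescoping simplification.

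First I would introduce the partial sums $s_\ell := u_\ell + \cdots + u_{2m}$ with $s_{2m+1} := 0$, so that $\omega_\ell = e^{s_\ell} - e^{s_{\ell+1}}$ for every $\ell = 1, \ldots, 2m$. A direct telescoping then yields
\begin{align*}
\hat{\mathsf{T}}_0(b\rho_2^{2b}; \vec{u}) = \sum_{\ell=m+1}^{2m}(e^{s_\ell} - e^{s_{\ell+1}}) = e^{s_{m+1}} - 1,
\end{align*}
so in particular $1 + \hat{\mathsf{T}}_0(b\rho_2^{2b};\vec{u}) = e^{s_{m+1}}$ is a single strictly positive exponential. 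This is the key observation, as it reduces both claimed inequalities to sign statements that involve only the $x$-dependent pieces $\mathsf{T}_0(x;\vec{t},\vec{u})$ and $\hat{\mathsf{T}}_0(x;\vec{t},\vec{u})$ against a constant $e^{s_{m+1}}$.

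Next I would apply Abel summation to each of those two pieces. Writing $a_\ell := e^{-(t_\ell/b)(x - b\rho_1^{2b})}$ for $\ell = 1, \ldots, m$ and $\hat{a}_\ell := e^{-(t_\ell/b)(b\rho_2^{2b} - x)}$ for $\ell = m+1, \ldots, 2m$, the two standard summation-by-parts manipulations should give
\begin{align*}
1 + \mathsf{T}_0(x;\vec{t},\vec{u}) + \hat{\mathsf{T}}_0(b\rho_2^{2b};\vec{u}) &= e^{s_1} a_1 + e^{s_{m+1}}(1 - a_m) + \sum_{\ell=2}^{m} e^{s_\ell}(a_\ell - a_{\ell-1}), \\
1 - \hat{\mathsf{T}}_0(x;\vec{t},\vec{u}) + \hat{\mathsf{T}}_0(b\rho_2^{2b};\vec{u}) &= e^{s_{m+1}}(1 - \hat{a}_{m+1}) + \hat{a}_{2m} + \sum_{\ell=m+2}^{2m} e^{s_\ell}(\hat{a}_{\ell-1} - \hat{a}_\ell).
\end{align*}
The hypotheses on $\vec{t}$ combined with $x \in [b\rho_1^{2b}, b\rho_2^{2b}]$ then do all the work: $t_1 > \cdots > t_m \geq 0$ and $x - b\rho_1^{2b} \geq 0$ force $a_\ell \in (0,1]$ to be non-decreasing in $\ell$, while $0 \leq t_{m+1} < \cdots < t_{2m}$ and $b\rho_2^{2b} - x \geq 0$ force $\hat{a}_\ell \in (0,1]$ to be non-increasing in $\ell$. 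All summands on the right-hand sides are therefore non-negative, and the isolated terms $e^{s_1} a_1 > 0$ (respectively $\hat{a}_{2m} > 0$) deliver strict positivity.

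There is no real obstacle here: the lemma is essentially a structural consequence of the telescoping of $\omega_\ell$ combined with the opposite orderings imposed on the ``inner-radius'' half $t_1 > \cdots > t_m$ and the ``outer-radius'' half $t_{m+1} < \cdots < t_{2m}$, which mirror the parametrizations \eqref{def of rell part1}--\eqref{def of rell part2} of the $r_\ell$. The only point meriting care is to correctly track this sign flip when reading off the monotonicity of $a_\ell$ versus $\hat{a}_\ell$.
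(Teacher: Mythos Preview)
Your proof is correct and, at its core, relies on the same Abel-summation identity as the paper: both rewrite $\phi_1 := 1+\mathsf{T}_0(x)+\hat{\mathsf{T}}_0(b\rho_2^{2b})$ and $\phi_2 := 1-\hat{\mathsf{T}}_0(x)+\hat{\mathsf{T}}_0(b\rho_2^{2b})$ as sums of terms of the form $e^{s_\ell}(a_\ell-a_{\ell-1})$ and $e^{s_\ell}(\hat a_{\ell-1}-\hat a_\ell)$, and then read off non-negativity from the opposite monotonicities of $(a_\ell)$ and $(\hat a_\ell)$.

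The only difference is one of framing. The paper computes $\partial_{u_{2m}}\phi_1$ and $\partial_{u_{2m}}\phi_2$, obtains exactly your rearranged expressions (since every $e^{s_\ell}$ with $\ell\le 2m$ depends on $u_{2m}$, one has $\partial_{u_{2m}}\phi_1=\phi_1$ and $\partial_{u_{2m}}\phi_2=\phi_2-\hat a_{2m}$), checks their positivity termwise, and then concludes via the limits $\lim_{u_{2m}\to-\infty}\phi_1=0$ and $\lim_{u_{2m}\to-\infty}\phi_2=\hat a_{2m}>0$. Your direct rewrite bypasses this derivative-plus-limit detour and is slightly cleaner; the paper's version has the mild advantage of making the monotone dependence on $u_{2m}$ explicit, which is natural given that the lemma is later used to extend to complex $u_j$ in small disks.
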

The Jacobi theta function $\theta(z; \tau)$ is defined for $z \in \mathbb{C}$ and $\im \tau > 0$ by
\begin{align*}
\theta(z; \tau) = \sum_{\ell =-\infty}^\infty e^{\pi i \ell^2 \tau + 2\pi i \ell z}.
\end{align*}
This function satisfies 
\begin{align}\label{theta prop}
\theta(z+1;\tau) = \theta(z;\tau), \qquad \theta(z+\tau;\tau)=e^{-2\pi i z}e^{-\pi i \tau}\theta(z), \qquad \theta(-z) = \theta(z), \qquad \mbox{for all } z \in \mathbb{C},
\end{align}
see also \cite[Chapter 20]{NIST} for further properties. We are now ready to state our main result.

\begin{theorem}[Merging radii at the hard edge: the multi-component regime]\label{thm:main thm hard}
Let $m \in \mathbb{N}_{>0}$, $b>0$, $0 < \rho_1 < \rho_2 < b^{-\frac{1}{2b}}$, $t_{1},\dots,t_{2m} \geq 0$, and $\alpha > -1$ be fixed parameters such that $t_{1}>\dots>t_{m} \geq 0$ and $0 \leq t_{m+1} < \dots < t_{2m}$. For $n \in \mathbb{N}_{>0}$, define
\begin{align}\label{rellhardedge}
& r_{\ell} = \begin{cases}
\rho_{1} \big( 1-\frac{t_{\ell}}{n} \big)^{\frac{1}{2b}}, & \ell=1,\dots,m, \\ 
\rho_{2} \big( 1+\frac{t_{\ell}}{n} \big)^{\frac{1}{2b}}, & \ell=m+1,\dots,2m.
\end{cases}
\end{align}
For any fixed $x_{1},\dots,x_{2m} \in \mathbb{R}$, there exists $\delta > 0$ such that 
\begin{align}\label{asymp in main thm hard}
\mathbb{E}\bigg[ \prod_{j=1}^{2m} e^{u_{j}\mathrm{N}(r_{j})} \bigg] = \exp \bigg( C_{1} n + C_{2} \ln n + C_{3} + \mathcal{F}_{n} +  \frac{C_{4}}{\sqrt{n}} + \bigO\big(n^{-\frac{3}{5}}\big)\bigg), \qquad \mbox{as } n \to + \infty
\end{align}
uniformly for $u_{1} \in \{z \in \mathbb{C}: |z-x_{1}|\leq \delta\},\dots,u_{2m} \in \{z \in \mathbb{C}: |z-x_{2m}|\leq \delta\}$, where
\begin{align}\nonumber
 C_{1} =&\; b \rho_{1}^{2b} \sum_{j=1}^{2m}u_{j} + \int_{b\rho_{1}^{2b}}^{\sigma_{\star}} \ln(1+\mathsf{T}_{0}(x;\vec{t},\vec{u})+\hat{\mathsf{T}}_{0}(b\rho_{2}^{2b};\vec{u}))dx + \int_{\sigma_{\star}}^{b\rho_{2}^{2b}} \ln(1-\hat{\mathsf{T}}_{0}(x;\vec{t},\vec{u})+\hat{\mathsf{T}}_{0}(b\rho_{2}^{2b};\vec{u}))dx,
	\\ \nonumber
 C_{2} = & - \frac{b\rho_{1}^{2b}}{2} \frac{\mathsf{T}_{1}(b\rho_{1}^{2b};\vec{t},\vec{u})}{\Omega} + \frac{b\rho_{2}^{2b}}{2}\hat{\mathsf{T}}_{1}(b\rho_{2}^{2b};\vec{t},\vec{u}), 
	\\ \nonumber
 C_{3} = & - \frac{1}{2} \sum_{j=1}^{2m}u_{j} -  \bigg(\alpha
- \frac{2\ln(\sigma_{2}/\sigma_{1}) + \ln{\mathsf{Q}(\vec{t}, \vec{u})}}{4 \ln(\rho_{2}/\rho_{1})}\bigg) \ln{\mathsf{Q}(\vec{t}, \vec{u})}
	\\ \nonumber
& + \int_{b\rho_{1}^{2b}}^{\sigma_{\star}} \bigg\{ f(x;\vec{t},\vec{u}) + \frac{b \rho_{1}^{2b} \mathsf{T}_{1}(b\rho_{1}^{2b};\vec{t},\vec{u})}{\Omega (x-b\rho_{1}^{2b})} \bigg\}dx + \int_{\sigma_{\star}}^{b\rho_{2}^{2b}} \bigg\{ \hat{f}(x;\vec{t},\vec{u}) - \frac{b \rho_{2}^{2b} \hat{\mathsf{T}}_{1}(b\rho_{2}^{2b};\vec{t},\vec{u})}{b\rho_{2}^{2b}-x} \bigg\}dx 
	\\ \nonumber
& + b\rho_{1}^{2b} \frac{\mathsf{T}_{1}(b\rho_{1}^{2b};\vec{t},\vec{u})}{\Omega} \ln \bigg( \frac{b\rho_{1}^{b}}{\sqrt{2\pi}(\sigma_{\star}-b\rho_{1}^{2b})} \bigg)
- b\rho_{2}^{2b} \hat{\mathsf{T}}_{1}(b\rho_{2}^{2b};\vec{t},\vec{u}) \ln \bigg( \frac{b\rho_{2}^{b}}{\sqrt{2\pi}(b\rho_{2}^{2b}-\sigma_{\star})} \bigg), 
	\\ \nonumber
 C_{4} = &\; \sqrt{2} \,\mathcal{I} b\bigg( \rho_1^{3b} \frac{ \mathsf{T}_{2}(b\rho_1^{2b};\vec{t},\vec{u}) }{ \Omega } - \rho_1^{b}\frac{\mathsf{T}_{1}(b\rho_1^{2b};\vec{t},\vec{u}) }{ \Omega } - \rho_1^{3b} \frac{ \mathsf{T}_{1}(b\rho_1^{2b};\vec{t},\vec{u})^{2} }{ \Omega^{2} } 
	\\\nonumber
&  - \rho_2^{3b} \hat{\mathsf{T}}_2(b \rho_2^{2b}; \vec{t}, \vec{u}) - \rho_2^{b}  \hat{\mathsf{T}}_{1}(b\rho_2^{2b};\vec{t},\vec{u})  - \rho_2^{3b} \hat{\mathsf{T}}_{1}(b\rho_2^{2b};\vec{t},\vec{u})^2  \bigg), 
	\\ \label{mathcalFndef}
\mathcal{F}_{n} = &\; \ln \frac{\theta( \sigma_{\star} n  + \frac{1}{2} - \alpha + \frac{\ln(\sigma_{2}/\sigma_{1}) + \ln{\mathsf{Q}(\vec{t}, \vec{u})} }{2 \ln(\rho_2/\rho_1)}; \frac{\pi i}{\ln(\rho_2/\rho_1)})}{ \theta( \sigma_{\star} n  + \frac{1}{2} - \alpha + \frac{\ln(\sigma_{2}/\sigma_{1})}{2 \ln(\rho_2/\rho_1)}; \frac{\pi i}{\ln(\rho_2/\rho_1)})},
\end{align}
and the constant $\mathcal{I} \in \mathbb{R}$ is given by
\begin{align}
& \mathcal{I} = \int_{-\infty}^{+\infty} \bigg\{ \frac{y\, e^{-y^{2}}}{\sqrt{\pi}\, \mathrm{erfc}(y)} - \chi_{(0,+\infty)}(y) \bigg[ y^{2}+\frac{1}{2} \bigg] \bigg\}dy \approx -0.81367. \label{def of I}
\end{align}
In particular, since $\mathbb{E}\big[ \prod_{j=1}^{2m} e^{u_{j}\mathrm{N}(r_{j})} \big]$ is analytic in $u_{1},\dots,u_{2m} \in \mathbb{C}$ and is positive for $u_{1},\dots,u_{2m} \in \mathbb{R}$, the asymptotic formula \eqref{asymp in main thm hard} together with Cauchy's formula shows that
\begin{align}\label{der of main result hard}
\partial_{u_{1}}^{k_{1}}\dots \partial_{u_{2m}}^{k_{2m}} \bigg\{ \ln \mathbb{E}\bigg[ \prod_{j=1}^{2m} e^{u_{j}\mathrm{N}(r_{j})} \bigg] - \bigg( C_{1} n + C_{2} \ln n + C_{3} +  \frac{C_{4}}{\sqrt{n}} \bigg) \bigg\} = \bigO\big(n^{-\frac{3}{5}}\big), \quad \mbox{as } n \to + \infty,
\end{align}
for any $k_{1},\dots,k_{2m}\in \mathbb{N}$, and $u_{1},\dots,u_{2m}\in \mathbb{R}$. 
\end{theorem}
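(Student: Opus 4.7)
The plan is to exploit the rotation invariance of \eqref{def of point process hard} to convert the MGF into a product of one-dimensional integrals, and then evaluate this product asymptotically by a regime analysis combined with Euler--Maclaurin summation. Concretely, since both the weight and the insertion $\prod_{j=1}^{2m} e^{u_{j}\mathrm{N}(r_{j})}$ are rotation invariant, the monomials $\{z^{k}\}_{k=0}^{n-1}$ diagonalize the modified inner product, and one obtains
\begin{align*}
\mathbb{E}\bigg[\prod_{j=1}^{2m} e^{u_{j}\mathrm{N}(r_{j})}\bigg] = \prod_{k=0}^{n-1} \frac{H_{k,n}(\vec{u},\vec{t})}{H_{k,n}(\vec{0},\vec{t})}, \qquad H_{k,n}(\vec{u},\vec{t}) = \tfrac{\pi}{b}\, n^{-\frac{k+\alpha+1}{b}} \int_{E_{n}} y^{\frac{k+\alpha+1}{b}-1} e^{-y}\, w_{n}(y;\vec{u},\vec{t})\, dy,
\end{align*}
where $E_{n}=[0, n\rho_{1}^{2b}] \cup [n\rho_{2}^{2b}, +\infty)$ and $w_{n}$ is the piecewise constant function whose values on the intervals cut out by the rescaled radii from \eqref{rellhardedge} are the partial products $e^{u_{\ell}+\cdots+u_{2m}}$. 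Taking logarithms reduces the theorem to the asymptotic evaluation of $\sum_{k=0}^{n-1} \ln(H_{k,n}(\vec{u},\vec{t})/H_{k,n}(\vec{0},\vec{t}))$.

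Setting $s = (k+\alpha+1)/b$, the integrand has a unique saddle at $y = s-1$, so the location of $s/n$ relative to the interval $(\rho_{1}^{2b}, \rho_{2}^{2b})$ controls the behaviour. Partition the indices into: (i) inner bulk, $s/n < \rho_{1}^{2b} - n^{-2/5}$; (ii) inner hard-edge window, $|s/n - \rho_{1}^{2b}| \leq n^{-2/5}$; (iii) middle window, $s/n \in (\rho_{1}^{2b} + n^{-2/5}, \rho_{2}^{2b} - n^{-2/5})$; (iv) outer hard-edge window; (v) outer bulk. In regimes (i), (iii), (v) Laplace's method applies and $w_{n}$ is macroscopically constant on the dominant saddle, so each ratio $\ln(H_{k,n}(\vec{u},\vec{t})/H_{k,n}(\vec{0},\vec{t}))$ admits an expansion in $1/n$ analytic in $\vec{u}$. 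In regimes (ii) and (iv) a local change of variable turns the integrals into linear combinations of $\mathrm{erfc}$-type integrals parameterized by the shifts $t_{\ell}$ from \eqref{rellhardedge}, and these produce the functions $\mathsf{T}_{j}, \hat{\mathsf{T}}_{j}, f, \hat{f}$ of \eqref{def of T and That}--\eqref{def of fh hard}; the $n^{-1/2}$ correction isolates the universal constant $\mathcal{I}$ in \eqref{def of I} via the normalized expansion of $y\,e^{-y^{2}}/(\sqrt{\pi}\,\mathrm{erfc}(y))$.

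Adding over $k$ then produces $C_{1}n$ from the bulk leading order and $C_{2}\ln n$ from Euler--Maclaurin applied across the two edge windows. The constant term $C_{3}$ collects the $O(1)$ bulk integrals of $f$ and $\hat{f}$, and the logarithmic singularities of these integrands at $x = b\rho_{1,2}^{2b}$ are exactly cancelled by the matching boundary terms of Euler--Maclaurin, producing the explicit logs $\ln(b\rho_{1,2}^{b}/[\sqrt{2\pi}(\sigma_{\star} \mp b\rho_{1,2}^{2b})])$ in $C_{3}$. The oscillatory term $\mathcal{F}_{n}$ arises from the transition region $k \approx n\sigma_{\star}$ in regime (iii): the inner and outer components of $H_{k,n}$ become comparable there, with relative size
\begin{align*}
\frac{H_{k,n}^{\text{out}}}{H_{k,n}^{\text{in}}} = \Big(\tfrac{\rho_{2}}{\rho_{1}}\Big)^{2(k - n\sigma_{\star})} \Phi_{k}(\vec{u},\vec{t}),
\end{align*}
where $\Phi_{k}$ is smooth in all variables and at $k = n\sigma_{\star}$ reduces to the ratio $\mathsf{Q}$ from \eqref{mathsfQdef} multiplied by a simple power of $\sigma_{2}/\sigma_{1}$. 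A Poisson resummation (equivalently, the Jacobi triple product) then rewrites the remaining discrete sum over integer $k$ in the transition window as the logarithm of the theta-function ratio in \eqref{mathcalFndef}, with modular parameter $\tau = \pi i / \ln(\rho_{2}/\rho_{1})$; the shifts $\tfrac{1}{2} - \alpha$, $\tfrac{\ln(\sigma_{2}/\sigma_{1})}{2\ln(\rho_{2}/\rho_{1})}$ and $\tfrac{\ln \mathsf{Q}}{2\ln(\rho_{2}/\rho_{1})}$ inside the $\theta$-argument come from the $O(1)$ part of the exponent $2(k - n\sigma_{\star})\ln(\rho_{2}/\rho_{1})$ and from the $\vec{u}$-dependent prefactor $\Phi_{k}$. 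Lemma \ref{T0lemma} ensures all logarithms are well-defined on a complex polydisc in $\vec{u}$, so each estimate is analytic there and \eqref{der of main result hard} follows from \eqref{asymp in main thm hard} by Cauchy's formula.

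The hardest step will be the precise identification of the phase, shifts, and modular parameter inside $\mathcal{F}_{n}$, carried out simultaneously with the bookkeeping of the $O(1)$ constants that enter $C_{3}$. This requires tracking the $O(1)$ corrections that each of the five regimes sheds to its neighbours and verifying that the artificial cuts at $s/n = \rho_{1,2}^{2b} \pm n^{-2/5}$ cancel out so that the final error is only $O(n^{-3/5})$. The choice of the exponent $2/5$ in the window widths is dictated precisely by the requirement that the edge expansion be valid to order $n^{-3/5}$ while the bulk Laplace error stays below the same threshold, so a slight mis-choice at either end breaks the matching; getting this balance right is the technical heart of the argument.
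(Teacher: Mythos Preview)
Your proposal is correct and follows essentially the same strategy as the paper: product formula from rotation invariance, regime decomposition governed by the saddle location relative to $\rho_{1}^{2b}$ and $\rho_{2}^{2b}$, uniform incomplete-gamma asymptotics in the edge windows, Euler--Maclaurin summation, and the Jacobi triple product for the transition near $n\sigma_{\star}$. Even your window width $n^{-2/5}$ coincides with the paper's choice $M/\sqrt{n}$ with $M=n^{1/10}$, which is indeed what balances the competing error terms to $\bigO(n^{-3/5})$.

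One organizational difference worth noting: the paper uses a \emph{two-scale} splitting---first at a fixed distance $\epsilon$ from each edge (giving sums $S_{0},\dots,S_{5}$), and only then subdividing the $\epsilon$-neighbourhoods $S_{2},S_{4}$ with the fine scale $M/\sqrt{n}$. The advantage is that in the outer bulk pieces $S_{1},S_{5}$ and the outer parts of $S_{2},S_{4}$ the parameter $\lambda$ is bounded away from $1$ by a fixed amount, so the gamma asymptotics give exponentially small errors there with no work. Your single-scale scheme with windows of width $n^{-2/5}$ is equivalent in principle, but in practice it forces you to track polynomial error terms uniformly over the whole ``bulk'' right up to distance $n^{-2/5}$ from the edges, which is slightly more delicate. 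For the theta term, the paper sums the transition contributions directly via the triple product (and then applies the Jacobi/modular identity to pass from $\tau=\ln(\rho_{1}/\rho_{2})/(\pi i)$ to $\tau=\pi i/\ln(\rho_{2}/\rho_{1})$), rather than Poisson resummation per se; but as you note these are equivalent.
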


\begin{remark}
It is well-known that the theta function is a universal object of one-dimensional point processes in the multi-cut regime, see e.g. \cite{DIZ1997, DKMVZ1999, BDE2000, G2006, Shcherbina, ClaeysGravaMcLaughlin, BG2, BCL1, BCL2, FK2020, BCL3, KM2021, CFWW2021}. In dimension two, the emergence of this function was conjectured in \cite[Section 1.5]{LS2018} and proved in \cite{Charlier 2d gap} in the context of large gap problems (or equivalently, partition function asymptotics with hard edges). This function also appears in \cite{ACC2022} in the study of microscopic correlations and smooth macroscopic statistics of two-dimensional rotation-invariant ensembles with soft edges. To our knowledge, Theorem \ref{thm:main thm hard} is the first result on counting statistics of a two-dimensional point process involving the $\theta$-function.
\end{remark}
\begin{remark}(Periodicity of $\mathcal{F}_{n}$.)
In the multi-cut regime of one-dimensional point processes, asymptotic formulas are, in general, only quasiperiodic in $n$, see e.g. \cite{Widom1995, DKMVZ1999, Shcherbina, BG2}. However, in the special case where the mass of the equilibrium measure on each interval of the support is a rational number, these asymptotic formulas become periodic, see e.g. \cite{Marchal} and \cite[Corollary 2.2]{CFWW2021}. 

Interestingly, Theorem \ref{thm:main thm hard} shows that an analogous phenomenon holds in our two-dimensional setting. To see this, recall from \eqref{theta prop} that $\theta$ is periodic of period $1$. Since $n$ runs over the integers, the function $n \mapsto \mathcal{F}_{n}$ is, in general, only quasiperiodic in $n$. However, it follows from \eqref{mathcalFndef} and \eqref{interpretation of sigma star} that if the mass $\sigma_{\star}=\int_{|z| \leq \rho_{1}} \mu_{h}(d^{2}z)$ is rational, then $n \mapsto \mathcal{F}_{n}$ is periodic in $n$. 

More generally, asymptotic formulas related to a given two-dimensional point process are expected to be periodic in $n$ whenever the masses of the components of the equilibrium measure are all rational. This holds true in the setting of this paper, as well as in the two-dimensional soft edge setting of \cite{ACC2022}. Recent works \cite{DeanoSimm, BY2022} on the so-called lemniscate ensemble also support this belief. This model has a $d$-fold rotational symmetry, where $d$ is a parameter that determines the number of components. The mass of the equilibrium measure on each component is $1/d$, and therefore one expects asymptotic formulas in this model to be periodic in $n$ of period $d$. 
The formulas in \cite{DeanoSimm, BY2022} are consistent with this expectation: these formulas involve $n=Nd$ points and are not oscillatory as $N \to +\infty$.
\end{remark}

\begin{remark}(Probabilistic interpretation of $\mathcal{F}_{n}$.)
Using the well-known relation (see \cite[eq 21.5.8]{NIST})
\begin{align*}
\theta(\tau^{-1}z ; -\tau^{-1}) = e^{\pi i z^{2} \tau^{-1}} \sqrt{-i\tau}\theta(z;\tau),
\end{align*}
we can rewrite $\mathcal{F}_{n}$ as
\begin{align}\label{e Fn prob}
e^{\mathcal{F}_{n}} = \exp \bigg( -\frac{(\ln \mathsf{Q}(\vec{t}, \vec{u}))^{2}}{4 \ln (\frac{\rho_{2}}{\rho_{1}})} \bigg) \frac{\sum_{\ell \in \mathbb{Z}} (\frac{\rho_{1}}{\rho_{2}})^{(\ell-\langle\Lambda_{n}\rangle)^{2}}\mathsf{Q}(\vec{t}, \vec{u})^{\ell-\langle\Lambda_{n}\rangle}}{\sum_{\ell \in \mathbb{Z}} (\frac{\rho_{1}}{\rho_{2}})^{(\ell-\langle\Lambda_{n}\rangle)^{2}}},
\end{align}
where $\langle\Lambda_{n}\rangle := \Lambda_{n}-\lfloor \Lambda_{n} \rfloor$ is the fractional part of $\Lambda_{n}$ and
\begin{align*}
\Lambda_{n} := \sigma_{\star}n -\frac{1}{2}-\alpha + \frac{\ln(\sigma_{2}/\sigma_{1})}{2 \ln(\rho_{2}/\rho_{1})}.
\end{align*}
The right-most fraction in \eqref{e Fn prob} is related to the random variable $v_{n}(\vec{t}, \vec{u}):=(\mathcal{X}_{n}-\langle\Lambda_{n}\rangle) \ln \mathsf{Q}(\vec{t}, \vec{u})$ via
\begin{align}\label{Evn}
\frac{\sum_{\ell \in \mathbb{Z}} (\frac{\rho_{1}}{\rho_{2}})^{(\ell-\langle\Lambda_{n}\rangle)^{2}}\mathsf{Q}(\vec{t}, \vec{u})^{\ell-\langle\Lambda_{n}\rangle}}{\sum_{\ell \in \mathbb{Z}} (\frac{\rho_{1}}{\rho_{2}})^{(\ell-\langle\Lambda_{n}\rangle)^{2}}} = \mathbb{E}[e^{v_{n}(\vec{t}, \vec{u})}],
\end{align}
where $\mathcal{X}_{n}$ is the discrete Gaussian random variable on $\mathbb{Z}$ defined by
\begin{align}\label{def of mathcalXn}
\mathbb{P}(\mathcal{X}_{n}=x) = \frac{(\frac{\rho_{1}}{\rho_{2}})^{(x-\langle\Lambda_{n}\rangle)^{2}}}{\sum_{\ell \in \mathbb{Z}} (\frac{\rho_{1}}{\rho_{2}})^{(\ell-\langle\Lambda_{n}\rangle)^{2}}}, \qquad x \in \mathbb{Z}.
\end{align}
In the multi-cut regime of one-dimensional point processes, fluctuation formulas analogous to \eqref{e Fn prob} involving an oscillatory discrete Gaussian can be found in \cite[Section 8]{BG2} and \cite{CFWW2021}.
\end{remark}
The following corollary gives a probabilistic interpretation of $\mathcal{X}_{n}$ in terms of counting statistics and is analogous to the earlier result \cite[Corollary 1.4]{CFWW2021} obtained in dimension one.

\begin{corollary}\label{coro:prob interpretation of Xn}
Let $x \in \mathbb{Z}$ be fixed. As $n \to + \infty$,
\begin{align}\label{discrete gaussian in the limit}
\mathbb{P}\big(\mathrm{N}(\rho_{1})=\lfloor \Lambda_{n} \rfloor +x\big) = \mathbb{P}(\mathcal{X}_{n}=x) +o(1).
\end{align}
\end{corollary}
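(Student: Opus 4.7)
My approach is Fourier inversion on the integer-valued random variable $\mathrm{N}(\rho_1)$, using Theorem \ref{thm:main thm hard} as the asymptotic input and the identity \eqref{e Fn prob} to recast the theta ratio as a characteristic function. Since both $\mathrm{N}(\rho_1)$ and $\mathcal{X}_n$ take values in $\mathbb{Z}$,
\[
\mathbb{P}\bigl(\mathrm{N}(\rho_1) = k\bigr) = \frac{1}{2\pi}\int_{-\pi}^\pi e^{-iuk}\,\mathbb{E}\bigl[e^{iu\mathrm{N}(\rho_1)}\bigr]\,du, \qquad
\mathbb{P}(\mathcal{X}_n = x) = \frac{1}{2\pi}\int_{-\pi}^\pi e^{-iux}\,\mathbb{E}\bigl[e^{iu\mathcal{X}_n}\bigr]\,du,
\]
so it will suffice to prove the uniform characteristic-function comparison $\mathbb{E}[e^{iu\mathrm{N}(\rho_1)}] = e^{iu\lfloor\Lambda_n\rfloor}\,\mathbb{E}[e^{iu\mathcal{X}_n}]\,(1 + o(1))$ for $u \in [-\pi,\pi]$; substituting $k = \lfloor\Lambda_n\rfloor + x$ then gives \eqref{discrete gaussian in the limit} at once.

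To reach this comparison I specialize Theorem \ref{thm:main thm hard} to $m = 1$, $u_2 = 0$, $t_1 = t_2 = 0$, so that the left-hand side of \eqref{asymp in main thm hard} collapses to $\mathbb{E}[e^{u_1 \mathrm{N}(\rho_1)}]$. The choice $u_2 = 0$ forces $\omega_2 = 0$, hence $\hat{\mathsf{T}}_j \equiv 0$; the choice $t_1 = 0$ forces $\mathsf{T}_0(x;\vec u) \equiv e^{u_1}-1$ and $\mathsf{T}_j \equiv 0$ for $j \geq 1$. Consequently $C_2 = C_4 = 0$, $C_1 = \sigma_{\star} u_1$, $\mathsf{Q}(\vec t, \vec u) = e^{u_1}$, and
\[
C_3 = -\tfrac{1}{2}u_1 - \alpha u_1 + \frac{u_1\ln(\sigma_2/\sigma_1)}{2\ln(\rho_2/\rho_1)} + \frac{u_1^2}{4\ln(\rho_2/\rho_1)}.
\]
A direct check using the definition of $\Lambda_n$ then gives $C_1 n + C_3 = \Lambda_n u_1 + u_1^2/(4\ln(\rho_2/\rho_1))$.

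The algebraic heart of the argument is the cancellation of this Gaussian exponent against the theta ratio. Setting $u_1 = iu$ in \eqref{e Fn prob} (so $\ln\mathsf{Q} = iu$) yields
\[
e^{\mathcal{F}_n}\big|_{u_1 = iu} = \exp\!\Bigl(\tfrac{u^2}{4\ln(\rho_2/\rho_1)}\Bigr)\,\mathbb{E}\bigl[e^{iu(\mathcal{X}_n - \langle\Lambda_n\rangle)}\bigr],
\]
which exactly cancels $(iu)^2/(4\ln(\rho_2/\rho_1))$ coming from $C_3$. Using $\Lambda_n = \lfloor\Lambda_n\rfloor + \langle\Lambda_n\rangle$ to absorb the resulting $e^{iu\langle\Lambda_n\rangle}$ factor into $\mathbb{E}[e^{iu(\mathcal{X}_n-\langle\Lambda_n\rangle)}]$, the desired characteristic-function comparison follows.

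The main technical point is that the statement of Theorem \ref{thm:main thm hard} guarantees uniform asymptotics only in a complex disk of some radius $\delta>0$ about each fixed real point, whereas the Fourier argument above needs uniformity along the imaginary segment $u_1 \in i[-\pi,\pi]$. I would close this gap either by appealing to the stronger uniformity that the proof of Theorem \ref{thm:main thm hard} actually delivers (Riemann--Hilbert / saddle-point methods typically give uniform asymptotics on any compact subset of a complex strip around the real axis, of which $i[-\pi,\pi]$ is one), or by translating this imaginary segment into a short chain of real disk-neighborhoods of the type permitted by the theorem and gluing the estimates together. Once this uniform-in-$u$ estimate is in hand, the remainder is a one-line combination of Fourier inversion with the theta identity \eqref{e Fn prob}.
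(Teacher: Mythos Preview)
Your specialization $m=1$, $u_2=0$, $t_1=t_2=0$ and the resulting values of $C_1,C_2,C_3,C_4,\mathsf{Q}$ are correct, and you correctly identify that the Gaussian factor $u_1^2/(4\ln(\rho_2/\rho_1))$ from $C_3$ cancels the prefactor in \eqref{e Fn prob}. The problem is precisely the gap you flag, and neither of your proposed patches closes it. Theorem \ref{thm:main thm hard} guarantees the asymptotic formula only on complex disks centered at \emph{real} points; the union of all such disks is a neighbourhood of $\mathbb{R}$ of the form $\{|\mathrm{Im}\,u_1|<\delta(\mathrm{Re}\,u_1)\}$, with no a priori lower bound on $\delta$. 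A ``chain of real disks'' therefore cannot be guaranteed to reach $u_1=\pm i\pi$, which is what full Fourier inversion on $\mathbb{Z}$ requires. Your first fix also mislabels the method: the proof of Theorem \ref{thm:main thm hard} is not a Riemann--Hilbert/saddle-point argument but a direct analysis based on uniform incomplete-gamma asymptotics and Riemann sums; extending those estimates to the segment $i[-\pi,\pi]$ would mean reopening the whole proof rather than making a one-line appeal.

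The paper avoids this entirely by staying on the real axis. From
\[
\mathbb{E}\big[e^{u_1\mathrm{N}(\rho_1)}\big]=e^{u_1\Lambda_n}\,\mathbb{E}\big[e^{u_1(\mathcal{X}_n-\langle\Lambda_n\rangle)}\big]\,(1+o(1)),
\]
valid uniformly for $u_1$ in a real compact $K\ni 0$, one cannot conclude directly because the target $\mathcal{X}_n$ itself oscillates with $n$ through $\langle\Lambda_n\rangle\in[0,1)$. The paper therefore argues by contradiction: if \eqref{discrete gaussian in the limit} failed along a subsequence, pass to a further subsequence along which $\langle\Lambda_n\rangle\to y_*\in[0,1]$; then the real MGFs of $\mathrm{N}(\rho_1)-\lfloor\Lambda_n\rfloor$ converge on $K$ to those of a \emph{fixed} limit $\mathcal{X}_*$, and the classical criterion that pointwise convergence of MGFs on a neighbourhood of $0$ implies convergence in distribution (Billingsley) yields $\mathrm{N}(\rho_1)-\lfloor\Lambda_n\rfloor\Rightarrow\mathcal{X}_*$ along this subsequence, contradicting the assumed failure. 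This compactness-plus-real-MGF argument is the device your Fourier-inversion approach is missing.
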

\begin{proof}
The proof is inspired by the proof of \cite[Corollary 1.4]{CFWW2021}. Using Theorem \ref{thm:main thm hard} with $m=1$, $u_{2}=0$ and $t_{1}=0=t_{2}$, we have $C_{1}=u_{1}\sigma_{\star}$, $C_{2}=0$, $\ln{\mathsf{Q}(\vec{0}, \vec{u})}=u_{1}$, $C_{3} = -\frac{u_{1}}{2}-u_{1}(\alpha
- \frac{2\ln(\sigma_{2}/\sigma_{1}) + u_{1}}{4 \ln(\rho_{2}/\rho_{1})})$, $C_{4}=0$, and thus
\begin{align}\label{Eeu1rho1}
\mathbb{E}[e^{u_{1}\mathrm{N}(\rho_{1})}] = e^{u_{1}\Lambda_{n}}\mathbb{E}[e^{u_{1}(\mathcal{X}_{n}-\langle\Lambda_{n}\rangle)}] \big( 1+\bigO(n^{-\frac{3}{5}}) \big), \qquad \mbox{as } n \to + \infty,
\end{align}
uniformly for $u_{1}\in K$, where $K \subset \mathbb{R}$ is a compact subset containing $0$. The rest of the proof proceeds by contradiction. Fix $x \in \mathbb{Z}$ and suppose that \eqref{discrete gaussian in the limit} does not hold. Then there exists a sequence $\{n_{k}\}_{k=1}^{+\infty} \subset \mathbb{N}$ such that $\mathbb{P}\big(\mathrm{N}(\rho_{1})=\lfloor \Lambda_{n_{k}} \rfloor +x\big)$ remains bounded away from $\mathbb{P}(\mathcal{X}_{n_{k}}=x)$ for all $k$. Since $\langle\Lambda_{n_{k}}\rangle \in [0,1)$ for all $k$, there exists a subsequence $\{n_{k_{j}}\}_{j=1}^{+\infty}$ such that $\langle\Lambda_{n_{k_{j}}}\rangle \to y_{*} \in [0,1]$ as $j \to + \infty$. Let $\mathcal{X}_{*}$ be the random variable defined as in \eqref{def of mathcalXn} but with $\langle\Lambda_{n}\rangle$ replaced by $y_{*}$. Then \eqref{Eeu1rho1} together with the fact that 
\begin{align*}
\mathbb{E}[e^{u_{1}\mathcal{X}_{n_{k_{j}}}}] = \mathbb{E}[e^{u_{1}\mathcal{X}_{*}}]+o(1), \qquad \mbox{as } j \to + \infty \mbox{ uniformly for } u_{1} \in K
\end{align*}
implies that
\begin{align*}
\mathbb{E}[e^{u_{1}(\mathrm{N}(\rho_{1})-\lfloor\Lambda_{n_{k_{j}}}\rfloor)}] = \mathbb{E}[e^{u_{1}\mathcal{X}_{*}}]+o(1), \qquad \mbox{as } j\to + \infty \mbox{ uniformly for } u_{1} \in K.
\end{align*}
By \cite[top of page 415]{Bill}, this implies that $\mathrm{N}(\rho_{1})-\lfloor \Lambda_{n_{k_{j}}} \rfloor$ converges in distribution to $\mathcal{X}_{*}$ as $j \to + \infty$. Since $\mathbb{P}(\mathcal{X}_{n_{k_{j}}}=x) = \mathbb{P}(\mathcal{X}_{*}=x)+o(1)$ as $j \to + \infty$, we thus have $\mathbb{P}\big(\mathrm{N}(\rho_{1})=\lfloor \Lambda_{n_{k_{j}}} \rfloor +x\big) = \mathbb{P}(\mathcal{X}_{n_{k_{j}}}=x)+o(1)$ as $j \to + \infty$. We have obtained our contradiction.
\end{proof}

 For $\vec{j} \in (\mathbb{N}^{2m})_{>0} := \{\vec{j}=(j_{1},\dots,j_{2m}) \in \mathbb{N}: j_{1}+\dots+j_{2m}\geq 1\}$, the joint cumulant $\kappa_{\vec{j}}=\kappa_{\vec{j}}(r_{1},\dots,r_{2m};n,b,\alpha)$ of $\mathrm{N}(r_{1}), \dots, \mathrm{N}(r_{2m})$  is defined by
\begin{align}\label{joint cumulant}
\kappa_{\vec{j}}=\kappa_{j_{1},\dots,j_{2m}}:=\partial_{\vec{u}}^{\vec{j}} \ln \mathbb{E}[e^{u_{1}\mathrm{N}(r_{1})+\dots + u_{2m}\mathrm{N}(r_{2m})}] \Big|_{\vec{u}=\vec{0}},
\end{align}
where $\partial_{\vec{u}}^{\vec{j}}:=\partial_{u_{1}}^{j_{1}}\dots \partial_{u_{2m}}^{j_{2m}}$. As an immediate corollary of Theorem \ref{thm:main thm hard}, we obtain the large $n$ behavior of any cumulant $\kappa_{\vec{j}}$.

\begin{corollary}[Asymptotics of cumulants]\label{cumulantscor}
Let $m \in \mathbb{N}_{>0}$, $b>0$, $0 < \rho_1 < \rho_2 < b^{-\frac{1}{2b}}$, $\vec{j} \in (\mathbb{N}^{2m})_{>0}$, $\alpha > -1$, $t_{1}>\dots>t_{m} \geq 0$ and $0 \leq t_{m+1} < \dots < t_{2m}$ be fixed. For $n \in \mathbb{N}_{>0}$, define $\{r_\ell\}_{\ell =1}^{2m}$ by \eqref{rellhardedge}. As $n \to +\infty$, the joint cumulant $\kappa_{\vec{j}}$ satisfies
\begin{align}\label{asymp cumulant hard edge}
\kappa_{\vec{j}} = \partial_{\vec{u}}^{\vec{j}}C_{1}\big|_{\vec{u}=\vec{0}} \; n + \partial_{\vec{u}}^{\vec{j}}C_{2}\big|_{\vec{u}=\vec{0}}  \;\ln{n} + \partial_{\vec{u}}^{\vec{j}}C_{3}\big|_{\vec{u}=\vec{0}} + \partial_{\vec{u}}^{\vec{j}}\mathcal{F}_{n}\big|_{\vec{u}=\vec{0}}  +  \frac{\partial_{\vec{u}}^{\vec{j}}C_{4}\big|_{\vec{u}=\vec{0}}}{\sqrt{n}} + \bigO\big(n^{-\frac{3}{5}}\big),
\end{align}
where $C_{1},\dots,C_{4}, \mathcal{F}_n$ are as in Theorem \ref{thm:main thm hard}. 
\end{corollary}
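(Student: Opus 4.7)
The plan is to deduce the corollary directly from the uniform asymptotic expansion in Theorem \ref{thm:main thm hard} by using Cauchy's integral formula to convert uniformity in $\vec{u}$ into control on the derivatives at $\vec{u}=\vec{0}$. This is essentially the content of \eqref{der of main result hard}, tracked carefully so that the bounded but non-decaying oscillatory term $\mathcal{F}_n$ is also differentiated.

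First I would view $C_1, C_2, C_3, C_4$ and $\mathcal{F}_n$ as functions of $\vec{u}\in\mathbb{C}^{2m}$, the other parameters being held fixed. Their explicit formulas show that each is analytic in $\vec{u}$ on a complex neighborhood of $\vec{0}$: the integrands in the expressions for $C_1$ and $C_3$ depend on $\vec{u}$ through logarithms of $1+\mathsf{T}_{0}+\hat{\mathsf{T}}_{0}$ and $1-\hat{\mathsf{T}}_{0}+\hat{\mathsf{T}}_{0}$, which are strictly positive throughout the relevant integration intervals by Lemma \ref{T0lemma}, while $\mathcal{F}_n$ is a ratio of theta values with a non-vanishing denominator. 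Similarly, the moment generating function is entire in $\vec{u}$ and strictly positive for real $\vec{u}$, so its logarithm is analytic on a complex neighborhood of $\vec{0}$. Applying Theorem \ref{thm:main thm hard} with $x_1=\dots=x_{2m}=0$ now produces a $\delta>0$ such that the function
$$\Phi_n(\vec{u}) \; := \; \ln \mathbb{E}\bigg[\prod_{j=1}^{2m} e^{u_j \mathrm{N}(r_j)}\bigg] - \bigg(C_1(\vec{u})\,n + C_2(\vec{u})\ln n + C_3(\vec{u}) + \mathcal{F}_n(\vec{u}) + \frac{C_4(\vec{u})}{\sqrt n}\bigg) \; = \; \bigO\big(n^{-3/5}\big)$$
uniformly on the polydisc $D_\delta := \{\vec{u}\in\mathbb{C}^{2m} : |u_\ell|\le\delta,\; \ell=1,\dots,2m\}$.

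Next I would invoke Cauchy's integral formula for the derivatives of the analytic function $\Phi_n$ at $\vec{u}=\vec{0}$,
$$\partial_{\vec{u}}^{\vec{j}} \Phi_n(\vec{0}) \; = \; \frac{j_1!\cdots j_{2m}!}{(2\pi i)^{2m}} \oint_{|u_1|=\delta}\!\!\cdots\oint_{|u_{2m}|=\delta} \frac{\Phi_n(\vec{u})}{u_1^{j_1+1}\cdots u_{2m}^{j_{2m}+1}}\, du_1\cdots du_{2m},$$
and bound the integrand by the uniform estimate above; this immediately yields $\partial_{\vec{u}}^{\vec{j}}\Phi_n(\vec{0}) = \bigO(n^{-3/5})$. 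Since $\kappa_{\vec{j}} = \partial_{\vec{u}}^{\vec{j}} \ln \mathbb{E}[e^{u_1 \mathrm{N}(r_1)+\dots+u_{2m}\mathrm{N}(r_{2m})}]\big|_{\vec{u}=\vec{0}}$ by \eqref{joint cumulant}, rearranging produces exactly \eqref{asymp cumulant hard edge}. There is no genuine obstacle here: the substantive analytic input is Theorem \ref{thm:main thm hard}, and the corollary is the routine Cauchy-based passage from uniform asymptotics of an analytic function to asymptotics of its Taylor coefficients.
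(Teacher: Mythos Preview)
Your proof is correct and follows exactly the approach the paper intends: the paper presents this corollary as ``an immediate corollary of Theorem \ref{thm:main thm hard}'' and explains the mechanism in the last paragraph of that theorem, namely uniform analyticity in $\vec{u}$ combined with Cauchy's formula, which is precisely what you carry out. Your inclusion of $\mathcal{F}_n$ in the subtracted quantity $\Phi_n$ is the right way to handle the oscillatory term, and indeed is what the corollary's formula \eqref{asymp cumulant hard edge} requires (note that \eqref{der of main result hard} as written omits $\mathcal{F}_n$, but your version is the one consistent with the corollary).
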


Since $\mathbb{E}[\mathrm{N}(r)] = \kappa_{1}(r)$, we obtain the following result after setting $\vec{j} = 1$ in Corollary \ref{cumulantscor} and performing some long but straightforward calculations. 
 
\begin{corollary}[{Asymptotics of $\mathbb{E}[\mathrm{N}(r_{\ell})]$)}]
Under the assumptions of Corollary \ref{cumulantscor}, the expectation value $\mathbb{E}[\mathrm{N}(r_{\ell})]$ obeys the following formula for any $1 \leq \ell \leq 2m$:
\begin{align*}
& \mathbb{E}[\mathrm{N}(r_{\ell})] = b_1(t_\ell) n + c_{1}(t_\ell) \ln{n} + d_1(t_\ell) + \mathsf{f}_{1}(n, t_\ell) + e_1(t_\ell) n^{-\frac{1}{2}} + \bigO\big(n^{-\frac{3}{5}}\big)
\end{align*}
as $n \to + \infty$, where
\begin{align}\nonumber
 b_1(t_\ell) =&\; \begin{cases}  b\rho_1^{2b} + b\frac{1 - e^{-\frac{t_{\ell}}{b}(\sigma_{\star} - b\rho_1^{2b})} }{t_{\ell}}, & \ell = 1, \dots, m \mbox{ and } t_{\ell}>0, \\ 
 b\rho_2^{2b} - b\frac{1 - e^{-\frac{t_{\ell}}{b}(b\rho_2^{2b} - \sigma_{\star})} }{t_{\ell}}, & \ell = m+1, \dots, 2m \mbox{ and } t_{\ell}>0, \\
 \sigma_{\star}, & \ell = m,m+1 \mbox{ and } t_{\ell}=0,
  \end{cases} 
	\\\nonumber
 c_{1}(t_\ell) = &\begin{cases}   - \frac{b\rho_1^{2b} t_\ell}{2}, & \ell = 1, \dots, m, \\ 
  \frac{b\rho_2^{2b} t_\ell}{2}, & \ell = m+1, \dots, 2m,
  \end{cases}
	\\ \nonumber
d_1(t_\ell) = & -\frac{1}{2} - e^{-\frac{t_{\ell}}{b} (\sigma_{\star} -b\rho_1^{2 b})}\bigg(\alpha - \frac{\ln(\sigma_{2}/\sigma_{1})}{2 \ln(\rho_{2}/\rho_{1})}\bigg) + b \rho_1^{2b} t_\ell \ln \bigg( \frac{b \rho_1^b}{\sqrt{2\pi}(\sigma_{\star} - b \rho_1^{2b})}\bigg)	
	\\\nonumber
& + \int_{b \rho_1^{2b}}^{\sigma_{\star}} \bigg(\frac{b \rho_1^{2b} t_\ell (1 - e^{-\frac{t_{\ell}}{b} (x-b\rho_1^{2 b})})}{x - b\rho_1^{2b}} - e^{-\frac{t_{\ell}}{b} (x-b\rho_1^{2 b})} \frac{t_\ell (2\alpha + x t_\ell)}{2b}\bigg) dx, \qquad \ell = 1, \dots, m, 
	\\ \nonumber
d_1(t_\ell) = & -\frac{1}{2} - e^{-\frac{t_{\ell}}{b} (b\rho_2^{2 b}-\sigma_{\star})}\bigg(\alpha - \frac{\ln(\sigma_{2}/\sigma_{1})}{2 \ln(\rho_{2}/\rho_{1})}\bigg) - b \rho_2^{2b} t_\ell \ln \bigg( \frac{b \rho_2^b}{\sqrt{2\pi}(b \rho_2^{2b} - \sigma_{\star})}\bigg)	
	\\\nonumber
& - \int_{\sigma_{\star}}^{b \rho_2^{2b}} \bigg(\frac{b \rho_2^{2b} t_\ell (1 - e^{-\frac{t_{\ell}}{b} (b\rho_2^{2 b}-x)})}{b\rho_2^{2b} - x} + e^{-\frac{t_{\ell}}{b} (b\rho_2^{2 b}-x)} \frac{t_\ell (2\alpha -xt_{\ell})}{2b}\bigg) dx, \qquad \ell = m+1, \dots, 2m,
	\\ \nonumber
\mathsf{f}_{1}(n, t_\ell) = &\; \begin{cases}
\frac{e^{-\frac{t_{\ell}}{b} (\sigma_{\star}-b\rho_1^{2 b})}}{2 \ln(\rho_{2}/\rho_{1})} (\ln \theta)'( n \sigma_{\star}  + \frac{1}{2} - \alpha+\frac{\ln(\sigma_{2}/\sigma_{1}) }{2 \ln(\rho_2/\rho_1)}; \frac{\pi i}{\ln(\rho_2/\rho_1)}), & \ell = 1, \dots, m,
	\\ \nonumber
\frac{e^{-\frac{t_{\ell}}{b} (b\rho_2^{2 b}- \sigma_{\star})}}{2 \ln(\rho_{2}/\rho_{1})} (\ln \theta)'(n \sigma_{\star}  + \frac{1}{2} - \alpha+\frac{\ln(\sigma_{2}/\sigma_{1}) }{2 \ln(\rho_2/\rho_1)}; \frac{\pi i}{\ln(\rho_2/\rho_1)}), & \ell = m+1, \dots, 2m,
\end{cases}
	\\ \nonumber
 e_1(t_\ell) =&\; \begin{cases}  \sqrt{2} \, \mathcal{I} b \rho_1^{b} t_{\ell} \big( \rho_1^{2b}t_{\ell} - 1 \big), & \ell = 1, \dots, m, \\ 
-\sqrt{2} \, \mathcal{I} b \rho_2^{b} t_{\ell} \big( \rho_2^{2b}t_{\ell} +1 \big), & \ell = m+1, \dots, 2m.
  \end{cases}
\end{align}
In particular, taking $m=1$ and $t_{1}=0$, as $n \to + \infty$ we infer that
\begin{align}\label{asymptotics E in the first droplet}
& \mathbb{E}[N(\rho_{1})] = \Lambda_{n} + \frac{(\ln \theta)'( n \sigma_{\star}  + \frac{1}{2} - \alpha+\frac{\ln(\sigma_{2}/\sigma_{1}) }{2 \ln(\rho_2/\rho_1)}; \frac{\pi i}{\ln(\rho_2/\rho_1)})}{2 \ln(\rho_{2}/\rho_{1})} + \bigO(n^{-\frac{3}{5}}) \nonumber \\
& = \sigma_{\star}n -\frac{1}{2}-\alpha + \frac{\ln(\sigma_{2}/\sigma_{1})}{2 \ln(\rho_{2}/\rho_{1})} + \frac{(\ln \theta)'( n \sigma_{\star}  + \frac{1}{2} - \alpha+\frac{\ln(\sigma_{2}/\sigma_{1}) }{2 \ln(\rho_2/\rho_1)}; \frac{\pi i}{\ln(\rho_2/\rho_1)})}{2 \ln(\rho_{2}/\rho_{1})} + \bigO(n^{-\frac{3}{5}}).
\end{align}
\end{corollary}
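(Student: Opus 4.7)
The corollary is the $\vec{j} = \mathbf{e}_\ell$ specialization of Corollary \ref{cumulantscor} (with $\mathbf{e}_\ell$ the $\ell$-th standard basis vector), since $\kappa_{\mathbf{e}_\ell} = \mathbb{E}[\mathrm{N}(r_\ell)]$. The task thus reduces to computing $\partial_{u_\ell} C_j|_{\vec{u} = \vec{0}}$ for $j = 1, 2, 3, 4$ and $\partial_{u_\ell} \mathcal{F}_n|_{\vec{u} = \vec{0}}$, and identifying them with $b_1(t_\ell)$, $c_1(t_\ell)$, $d_1(t_\ell)$, $e_1(t_\ell)$, and $\mathsf{f}_1(n,t_\ell)$ respectively. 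Exchanging differentiation with the $\bigO(n^{-3/5})$ remainder is legitimate by \eqref{der of main result hard}.

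The whole computation is driven by a single linearization at the origin. Inspection of \eqref{def of Omega j intro} gives $\omega_k|_{\vec{u} = \vec{0}} = 0$ for every $k \in \{1, \dots, 2m\}$ and $\partial_{u_\ell} \omega_k|_{\vec{u} = \vec{0}} = \delta_{k, \ell}$. Consequently $\mathsf{T}_j(x; \vec{t}, \vec{0}) = \hat{\mathsf{T}}_j(x; \vec{t}, \vec{0}) = 0$ for all $j$, $\Omega|_{\vec{u} = \vec{0}} = 1$, $\mathsf{Q}(\vec{t}, \vec{0}) = 1$, $\ln \mathsf{Q}(\vec{t}, \vec{0}) = 0$, and
\[
\partial_{u_\ell} \mathsf{T}_j(x; \vec{t}, \vec{u}) \big|_{\vec{u} = \vec{0}} = \begin{cases} t_\ell^{\,j}\, e^{-\frac{t_\ell}{b}(x - b\rho_1^{2b})}, & \ell \in \{1, \dots, m\}, \\ 0, & \ell \in \{m+1, \dots, 2m\}, \end{cases}
\]
with the symmetric statement for $\partial_{u_\ell} \hat{\mathsf{T}}_j$ (nonzero only for $\ell > m$). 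These identities, combined with the quotient rule, linearize every rational expression in the definitions of $C_1, \dots, C_4$ at $\vec{u} = \vec{0}$.

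With this in hand, the derivatives of $C_1, C_2, C_4$ and $\mathcal{F}_n$ are essentially mechanical. For $C_1$, the derivative introduces an integral $\int_{b\rho_1^{2b}}^{\sigma_\star} e^{-t_\ell(x - b\rho_1^{2b})/b}\,dx$ that evaluates elementarily to reproduce $b_1(t_\ell)$. For $C_2$, the derivative reduces to a single evaluation of $\partial_{u_\ell} \mathsf{T}_1$ or $\partial_{u_\ell} \hat{\mathsf{T}}_1$ at $x = b\rho_j^{2b}$, giving $c_1(t_\ell)$. For $C_4$, the quadratic pieces $\mathsf{T}_1^2/\Omega^2$ and $\hat{\mathsf{T}}_1^2$ vanish under one differentiation at the origin, and the remaining linear combination collapses to $e_1(t_\ell)$. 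For $\mathcal{F}_n$, the vanishing $\ln \mathsf{Q}|_{\vec{u} = \vec{0}} = 0$ means that at $\vec{u}=\vec{0}$ the two theta arguments coincide, so only the perturbation of the first argument contributes; the chain rule applied to $\ln \theta$ yields $\mathsf{f}_1(n, t_\ell)$ once $\partial_{u_\ell} \mathsf{Q}|_{\vec{u} = \vec{0}}$ is computed directly from \eqref{mathsfQdef}.

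The only step requiring care is $\partial_{u_\ell} C_3$, because of the singularities at $x = b\rho_j^{2b}$. The counterterm $\frac{b \rho_1^{2b}\, \mathsf{T}_1(b\rho_1^{2b}; \vec{t}, \vec{u})}{\Omega (x - b\rho_1^{2b})}$ inside the first integrand is evaluated at $b\rho_1^{2b}$ (not at $x$), so its $u_\ell$-derivative contributes $\frac{b\rho_1^{2b} t_\ell}{x - b\rho_1^{2b}}$, which exactly cancels the singular term $-\frac{b\rho_1^{2b} t_\ell}{x - b\rho_1^{2b}} e^{-t_\ell(x - b\rho_1^{2b})/b}$ coming from $\partial_{u_\ell} f$; the combined contribution is the integrable difference quotient $\frac{b \rho_1^{2b}\, t_\ell (1 - e^{-t_\ell(x - b\rho_1^{2b})/b})}{x - b\rho_1^{2b}}$ of $d_1(t_\ell)$. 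The remaining pieces are immediate: the constant $-\tfrac{1}{2}$ from $-\tfrac{1}{2} \sum_j u_j$; the factor $-e^{-\frac{t_\ell}{b}(\sigma_\star - b\rho_1^{2b})}\bigl(\alpha - \tfrac{\ln(\sigma_2/\sigma_1)}{2 \ln(\rho_2/\rho_1)}\bigr)$ from $-\bigl(\alpha - \tfrac{2 \ln(\sigma_2/\sigma_1) + \ln \mathsf{Q}}{4 \ln(\rho_2/\rho_1)}\bigr) \ln \mathsf{Q}$, whose quadratic piece drops out because $\ln \mathsf{Q}|_{\vec{u} = \vec{0}} = 0$; and the boundary logarithm on the last line of $C_3$. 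Together these assemble into $d_1(t_\ell)$ for $\ell \in \{1, \dots, m\}$; the case $\ell \in \{m+1, \dots, 2m\}$ is the mirror argument with $\rho_1 \leftrightarrow \rho_2$ and $\mathsf{T}_j \leftrightarrow \hat{\mathsf{T}}_j$. Finally, \eqref{asymptotics E in the first droplet} is the $m=1$, $t_1 = 0$ specialization, using $\lim_{t \to 0} (1 - e^{-tx})/t = x$ to obtain $b_1(0) = \sigma_\star$ and the vanishings $c_1(0) = e_1(0) = 0$.
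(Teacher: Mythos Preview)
Your proof is correct and follows exactly the approach indicated in the paper, which simply states that the result is obtained ``after setting $\vec{j}=1$ in Corollary \ref{cumulantscor} and performing some long but straightforward calculations.'' You have in fact supplied those straightforward calculations in more detail than the paper itself does, correctly identifying the key linearization $\partial_{u_\ell}\omega_k|_{\vec u=\vec 0}=\delta_{k,\ell}$ and handling the cancellation of the $1/(x-b\rho_1^{2b})$ singularity in $C_3$.
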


Using that
\begin{align*}
 \mbox{Cov}[\mathrm{N}(r_{\ell}),\mathrm{N}(r_{k})] = \kappa_{(1,1)}(r_{\ell},r_{k}), \qquad \mbox{Var}[\mathrm{N}(r_\ell)] = \mbox{Cov}[\mathrm{N}(r_{\ell}),\mathrm{N}(r_{\ell})],
\end{align*}
we can also obtain asymptotic formulas for the covariance and variance of the disk counting function by setting $\vec{j} = (1,1)$ in Corollary \ref{cumulantscor}. We distinguish three cases depending the values of $\ell$ and $k$.
\begin{enumerate}
\item If $1 \leq \ell \leq k \leq m$, then $\mathrm{N}(r_{\ell})$ and $\mathrm{N}(r_{k})$ count the number of points in disks whose radii $r_{\ell}$ and $r_k$ are close to $\rho_1$.

\item If $m+1 \leq \ell \leq k \leq 2m$, then $\mathrm{N}(r_{\ell})$ and $\mathrm{N}(r_{k})$ count the number of points in disks whose radii $r_{\ell}$ and $r_k$ are close to $\rho_2$.

\item If $1 \leq \ell \leq m$ and $m+1\leq k \leq 2m$, then $\mathrm{N}(r_{\ell})$ counts the number of points in a disk of radius $r_\ell \approx \rho_1$, whereas $\mathrm{N}(r_{k})$ counts the number of points in a disk of radius $r_k \approx \rho_2$. 
\end{enumerate}

The formulas for the covariance are naturally expressed in terms of the Weierstrass $\wp$-function. Indeed, these formulas involve the second derivative of $\ln \theta$. By \cite[formula 23.6.14]{NIST} ($\theta_{1}(\pi z)$ and $\theta_{3}(\pi z)$ in \cite{NIST} correspond here to $\theta_{1}(z)$ and $\theta(z)$, respectively)
$$\wp(z; \tau) = \frac{\theta_1'''(0; \tau)}{3\theta_1'(0; \tau)} - \frac{d^2}{dz^2}\ln \theta_1(z; \tau), \quad \text{where} \quad \theta_1(z; \tau) = e^{i\pi z + \frac{\pi i \tau}{4} - \frac{\pi i}{2}} \theta\Big(z + \frac{1 + \tau}{2}; \tau\Big),$$
so we can express this second derivative in terms of the Weierstrass $\wp$-function as
$$-\frac{d^2}{dz^2}\ln \theta(z; \tau) = \wp\Big(z  - \frac{1+\tau}{2} ; \tau\Big) - c,$$
where $c \in \C$ is defined by
\begin{align}\label{cdef}
c := \frac{\theta_1'''(0; \tau)}{3 \theta_1'(0; \tau)}.
\end{align}
The function $z\mapsto \wp(z;\tau)$ is doubly periodic in the complex plane, of periods $1$ and $\tau$.
In the following corollaries, we let $c = c(\rho_1, \rho_2)$ be given by (\ref{cdef}) with $\tau$ defined by
$$\tau := \frac{\pi i}{\ln(\rho_2/\rho_1)}.$$

\begin{corollary}[{Asymptotics of the covariance for $1 \leq \ell \leq k \leq m$}]\label{coro:cov case 1}
Under the assumptions of Corollary \ref{cumulantscor}, the covariance $\mathrm{Cov}(\mathrm{N}(r_{\ell}),\mathrm{N}(r_{k}))$ obeys the following formula for any $1 \leq \ell \leq k \leq m$:
\begin{align*}
\mathrm{Cov}(\mathrm{N}(r_{\ell}),\mathrm{N}(r_{k})) = & \; b_{(1,1)}(t_{\ell},t_{k})n + c_{(1,1)}(t_{\ell},t_{k})\ln{n} + d_{(1,1)}(t_{\ell},t_{k}) \\
& + \mathsf{f}_{(1,1)}(n, t_\ell, t_k) + e_{(1,1)}(t_{\ell},t_{k})n^{-\frac{1}{2}} + \bigO\big(n^{-\frac{3}{5}}\big) \nonumber
\end{align*}
as $n \to +\infty$, where 
\begin{align}\label{def of b11 hard edge}
& b_{(1,1)}(t_{\ell},t_{k}) =  b\frac{1 - e^{-\frac{t_{\ell}}{b}(\sigma_{\star} - b\rho_1^{2b}) } }{t_{\ell}} - b\frac{1 - e^{-\frac{t_{\ell} + t_k}{b}(\sigma_{\star} - b\rho_1^{2b})} }{t_{\ell} + t_k} \quad \mbox{if } t_{k}>0, \\
& b_{(1,1)}(t_{\ell},0) = 0, \qquad c_{(1,1)}(t_{\ell},t_{k}) = \frac{b\rho_1^{2b} t_k}{2}, \nonumber
	 \\ \nonumber
& d_{(1,1)}(t_{\ell},t_{k}) = 
 -e^{-\frac{t_{\ell}}{b}(\sigma_{\star} - b\rho_1^{2b}) } \bigg(\alpha
   - \frac{\ln (\sigma_{2}/\sigma_{1})}{2 \ln (\rho_{2}/\rho_{1})}\bigg)
   +e^{-\frac{t_\ell+t_k}{b}(\sigma_{\star} - b \rho_1^{2 b})}  \bigg(\alpha
   -\frac{\ln (\sigma_{2}/\sigma_{1}) - 1}{2 \ln (\rho_{2}/\rho_{1})}\bigg) 
  	\\ \nonumber
&   +
   \int_{b\rho_1^{2b}}^{\sigma_{\star}}
    \bigg\{- b \rho_{1}^{2b} t_{k} \frac{1-e^{-\frac{t_{\ell}+t_{k}}{b}(x-b\rho_{1}^{2b})}}{x-b\rho_{1}^{2b}}  + \frac{x t_{k}^{2}+2\alpha t_{k}}{2b} e^{-\frac{t_{\ell}+t_{k}}{b}(x-b\rho_{1}^{2b})}       
     	\\ \nonumber
& -t_\ell (e^{-\frac{t_{\ell}}{b}(x-b\rho_{1}^{2b})}-e^{-\frac{t_{\ell}+t_{k}}{b}(x-b\rho_{1}^{2b})}) \bigg( \frac{b\rho_{1}^{2b}}{x-b\rho_{1}^{2b}} + \frac{2\alpha + x t_{\ell}}{2b} \bigg) \bigg\}  dx -b  \rho_1^{2 b} t_k \ln\bigg(\frac{b \rho_1^b}{\sqrt{2 \pi } (\sigma_{\star}-b \rho_1^{2 b})}\bigg),
	\\ \nonumber
& \mathsf{f}_{(1,1)}(n, t_\ell, t_k) = -\frac{e^{-\frac{t_\ell+t_k}{b} (\sigma_{\star} - b \rho_1^{2 b})}}{4\ln(\rho_{2}/\rho_{1})^2}\bigg\{\wp\bigg( n \sigma_{\star}  + \frac{\tau}{2} - \alpha+\frac{\ln(\sigma_{2}/\sigma_{1}) }{2 \ln(\rho_2/\rho_1)}; \tau\bigg) - c
	\\\nonumber
& - 2(e^{\frac{t_k}{b}(\sigma_{\star} - b\rho_1^{2b})} -1) \ln(\rho_{2}/\rho_{1}) (\ln \theta)'\bigg(n \sigma_{\star}  + \frac{1}{2} - \alpha+\frac{\ln(\sigma_{2}/\sigma_{1}) }{2 \ln(\rho_2/\rho_1)}; \tau\bigg) \bigg\},
	\\\nonumber	
& e_{(1,1)}(t_{\ell},t_{k}) = \sqrt{2} \, \mathcal{I} b \rho_1^{b} t_{k} \big( 1-\rho_1^{2b}(2t_{\ell}+t_{k}) \big).
\end{align}
In particular, with $m=1$, $\ell=k=1$ and $t_{1}=0$, as $n \to + \infty$ we obtain
\begin{align}\label{asymptotics Var in the first droplet}
& \mathrm{Var}[N(\rho_{1})] = \frac{1}{2 \ln (\rho_{2}/\rho_{1})} - \frac{\wp\big( n \sigma_{\star}  + \frac{\tau}{2} - \alpha+\frac{\ln(\sigma_{2}/\sigma_{1}) }{2 \ln(\rho_2/\rho_1)}; \tau\big) - c}{4\ln(\rho_{2}/\rho_{1})^2} + \bigO(n^{-\frac{3}{5}}).
\end{align}
\end{corollary}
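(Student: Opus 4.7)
The plan is to apply Corollary~\ref{cumulantscor} with the multi-index $\vec{j}\in\mathbb{N}^{2m}$ having entries $j_\ell=j_k=1$ (or $j_\ell=2$ when $\ell=k$) and zero elsewhere. This reduces the statement to the explicit computation of $\partial_{u_\ell}\partial_{u_k}|_{\vec{u}=\vec{0}}$ of each of the constants $C_1,C_2,C_3,C_4$ and of the oscillatory term $\mathcal{F}_n$ from Theorem~\ref{thm:main thm hard}; the remainder $\bigO(n^{-3/5})$ and the general shape of \eqref{asymp cumulant hard edge} are then automatic.

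The first preparatory step is to evaluate the basic building blocks at $\vec{u}=\vec{0}$. From \eqref{def of Omega j intro} one obtains $\omega_p|_{\vec{u}=\vec{0}}=0$ for $1\le p\le 2m$ and $\omega_{2m+1}=1$, together with $\partial_{u_a}\omega_p|_{\vec{u}=\vec{0}}=\delta_{a,p}$ and $\partial_{u_a}\partial_{u_b}\omega_p|_{\vec{u}=\vec{0}}=\delta_{p,\min(a,b)}$. In particular $\mathsf{T}_j|_{\vec{u}=\vec{0}}=\hat{\mathsf{T}}_j|_{\vec{u}=\vec{0}}=0$, $\Omega|_{\vec{u}=\vec{0}}=\mathsf{Q}|_{\vec{u}=\vec{0}}=1$, and since $\ell,k\in\{1,\dots,m\}$ the $\hat{\mathsf{T}}_j$-derivatives drop out; what survives is $\partial_{u_\ell}\partial_{u_k}\mathsf{T}_j|_{\vec{u}=\vec{0}}=t_\ell^{\,j}e^{-\frac{t_\ell}{b}(x-b\rho_1^{2b})}$ (using $\min(\ell,k)=\ell$) together with the product $(\partial_{u_\ell}\mathsf{T}_j)(\partial_{u_k}\mathsf{T}_{j'})|_{\vec{u}=\vec{0}}=t_\ell^{\,j}t_k^{\,j'}e^{-\frac{t_\ell+t_k}{b}(x-b\rho_1^{2b})}$.

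I would then treat the four constants in turn. For $C_1$, the chain-rule identity $\partial_{u_\ell}\partial_{u_k}\ln(1+A)|_{A=0}=\partial_{u_\ell}\partial_{u_k}A-(\partial_{u_\ell}A)(\partial_{u_k}A)$ applied with $A=\mathsf{T}_0+\hat{\mathsf{T}}_0(b\rho_2^{2b})$ gives an integrand $e^{-\frac{t_\ell}{b}(x-b\rho_1^{2b})}-e^{-\frac{t_\ell+t_k}{b}(x-b\rho_1^{2b})}$ over $[b\rho_1^{2b},\sigma_\star]$, and direct integration produces $b_{(1,1)}$; the second integral in $C_1$ vanishes. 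The Leibniz rule together with $\partial_{u_a}\Omega^{-1}|_{\vec{u}=\vec{0}}=-1$ reduces the analogous calculation for $C_2$ and $C_4$ to elementary algebra, yielding $c_{(1,1)}=\tfrac{b\rho_1^{2b}t_k}{2}$ and $e_{(1,1)}$ respectively. The constant $C_3$ requires the most bookkeeping since one must simultaneously differentiate the $\ln\mathsf{Q}$-prefactor (whose linear and quadratic contributions both enter, using $\partial_{u_\ell}\partial_{u_k}\ln\mathsf{Q}|_{\vec{u}=\vec{0}}=e^{-\frac{t_\ell}{b}(\sigma_\star-b\rho_1^{2b})}-e^{-\frac{t_\ell+t_k}{b}(\sigma_\star-b\rho_1^{2b})}$), the two integrals of $f$ and $\hat{f}$ with their singular compensations, and the boundary logarithms; collecting and simplifying recovers $d_{(1,1)}$.

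The main technical step is the oscillatory contribution. Writing $z_n(\vec{u})=n\sigma_\star+\tfrac{1}{2}-\alpha+\frac{\ln(\sigma_2/\sigma_1)+\ln\mathsf{Q}(\vec{t},\vec{u})}{2\ln(\rho_2/\rho_1)}$, so that $\mathcal{F}_n=\ln\theta(z_n(\vec{u});\tau)-\ln\theta(z_n(\vec{0});\tau)$ with $\tau=\pi i/\ln(\rho_2/\rho_1)$, the chain rule gives
\begin{align*}
\partial_{u_\ell}\partial_{u_k}\mathcal{F}_n\big|_{\vec{u}=\vec{0}}=\frac{(\ln\theta)'(z_n(\vec{0});\tau)\,\partial_{u_\ell}\partial_{u_k}\ln\mathsf{Q}|_{\vec{u}=\vec{0}}}{2\ln(\rho_2/\rho_1)}+\frac{(\ln\theta)''(z_n(\vec{0});\tau)(\partial_{u_\ell}\ln\mathsf{Q})(\partial_{u_k}\ln\mathsf{Q})|_{\vec{u}=\vec{0}}}{4\ln(\rho_2/\rho_1)^2}.
\end{align*}
Substituting the first and second $\vec{u}$-derivatives of $\ln\mathsf{Q}$ and converting $-(\ln\theta)''$ into the Weierstrass $\wp$-function via the identity $-\partial_z^2\ln\theta(z;\tau)=\wp(z-\tfrac{1+\tau}{2};\tau)-c$ from \eqref{cdef} produces precisely $\mathsf{f}_{(1,1)}$. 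The algebraic bookkeeping in the $C_3$-computation and this $\wp$-function identification are the most delicate ingredients. Finally, \eqref{asymptotics Var in the first droplet} is obtained by specialising $m=1,\ell=k=1,t_1=0$: the coefficients $b_{(1,1)},c_{(1,1)},e_{(1,1)}$ all vanish, and only $d_{(1,1)}\to\tfrac{1}{2\ln(\rho_2/\rho_1)}$ and $\mathsf{f}_{(1,1)}\to-\tfrac{\wp(\cdot\,)-c}{4\ln(\rho_2/\rho_1)^2}$ survive.
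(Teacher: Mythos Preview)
Your proposal is correct and follows exactly the approach of the paper, which simply states that the covariance formula is obtained by setting $\vec{j}=(1,1)$ in Corollary~\ref{cumulantscor} and performing direct (unwritten) computations. You have merely supplied the computational details---the derivatives of $\omega_p$, $\mathsf{T}_j$, $\ln\mathsf{Q}$ and the chain-rule identification of $(\ln\theta)''$ with the Weierstrass $\wp$-function---that the paper leaves implicit.
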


\begin{remark}
It should be emphasized that the case $t_{k}>0$ of Corollary \ref{coro:cov case 1} above is drastically different from the case $t_{k}=0$. Indeed, if $t_{k}>0$, then $\mathrm{Cov}(\mathrm{N}(r_{\ell}),\mathrm{N}(r_{k}))$ is of order $n$, while if $t_{k}=0$, then $\mathrm{Cov}(\mathrm{N}(r_{\ell}),\mathrm{N}(r_{k}))=\mathrm{Cov}(\mathrm{N}(r_{\ell}),\mathrm{N}(\rho_{1}))$ is of order $1$. 
\end{remark}

\begin{corollary}[{Asymptotics of the covariance for $m+1 \leq \ell \leq k \leq 2m$}]
Under the assumptions of Corollary \ref{cumulantscor}, the covariance $\mathrm{Cov}(\mathrm{N}(r_{\ell}),\mathrm{N}(r_{k}))$ obeys the following formula for any $m+1 \leq \ell \leq k \leq 2m$:
\begin{align*}
\mathrm{Cov}(\mathrm{N}(r_{\ell}),\mathrm{N}(r_{k})) = & \; b_{(1,1)}(t_{\ell},t_{k})n + c_{(1,1)}(t_{\ell},t_{k})\ln{n} + d_{(1,1)}(t_{\ell},t_{k}) \\
& + \mathsf{f}_{(1,1)}(n, t_\ell, t_k) + e_{(1,1)}(t_{\ell},t_{k})n^{-\frac{1}{2}} + \bigO\big(n^{-\frac{3}{5}}\big) \nonumber
\end{align*}
as $n \to +\infty$, where 
\begin{align}\label{def of b11 hard edge 2}
& b_{(1,1)}(t_{\ell},t_{k}) = b\frac{1 - e^{-\frac{t_{k}}{b}(b\rho_2^{2b} - \sigma_{\star}) } }{t_{k}} - b\frac{1 - e^{-\frac{t_{\ell} + t_k}{b}(b\rho_2^{2b} - \sigma_{\star}) } }{t_{\ell} + t_k} \quad \mbox{if } t_{\ell}>0,
\\
& b_{(1,1)}(0,t_{k}) =0, \qquad c_{(1,1)}(t_{\ell},t_{k}) = \frac{b\rho_2^{2b} t_\ell}{2}, \nonumber
	 \\\nonumber
& d_{(1,1)}(t_{\ell},t_{k}) = e^{-\frac{t_{k}}{b}(b\rho_2^{2b} - \sigma_{\star}) } \bigg(\alpha
   - \frac{\ln (\sigma_{2}/\sigma_{1})}{2 \ln (\rho_{2}/\rho_{1})}\bigg)
   - e^{-\frac{t_\ell+t_k}{b}(b \rho_2^{2 b} - \sigma_{\star})} \bigg(\alpha
   - \frac{\ln (\sigma_{2}/\sigma_{1}) + 1}{2 \ln (\rho_{2}/\rho_{1})}\bigg) 
  	\\\nonumber
&   +
   \int_{\sigma_{\star}}^{b\rho_2^{2b}}
   \bigg\{- b \rho_{2}^{2b} t_{\ell} \frac{1-e^{-\frac{t_{\ell}+t_{k}}{b}(b\rho_{2}^{2b}-x)}}{b\rho_{2}^{2b}-x}  + \frac{x t_{\ell}^{2}-2\alpha t_{\ell}}{2b} e^{-\frac{t_{\ell}+t_{k}}{b}(b\rho_{2}^{2b}-x)}     
     	\\\nonumber
& +t_k (e^{-\frac{t_{k}}{b}(b\rho_{2}^{2b}-x)}-e^{-\frac{t_{\ell}+t_{k}}{b}(b\rho_{2}^{2b}-x)}) \bigg( \frac{-b\rho_{2}^{2b}}{b\rho_{2}^{2b}-x} + \frac{2\alpha - x t_{k}}{2b} \bigg) \bigg\}  dx
  -b  \rho_{2}^{2 b} t_{\ell} \ln\bigg(\frac{b \rho_{2}^b}{\sqrt{2 \pi } (b \rho_2^{2 b}-\sigma_{\star})}\bigg),
	\\\nonumber
& \mathsf{f}_{(1,1)}(n, t_\ell, t_k) = -\frac{e^{-\frac{t_\ell+t_k}{b} (b \rho_2^{2 b} - \sigma_{\star} )}}{4\ln(\rho_{2}/\rho_{1})^2}\bigg\{\wp\bigg( n \sigma_{\star}  + \frac{\tau}{2} - \alpha+\frac{\ln(\sigma_{2}/\sigma_{1}) }{2 \ln(\rho_2/\rho_1)}; \tau\bigg) - c
	\\\nonumber
& + 2(e^{\frac{t_\ell}{b}(b \rho_2^{2 b} - \sigma_{\star} )} -1) \ln(\rho_{2}/\rho_{1}) (\ln \theta)'\bigg(n \sigma_{\star}  + \frac{1}{2} - \alpha+\frac{\ln(\sigma_{2}/\sigma_{1}) }{2 \ln(\rho_2/\rho_1)}; \tau\bigg) \bigg\},
	\\\nonumber
& e_{(1,1)}(t_{\ell},t_{k}) = - \sqrt{2} \, \mathcal{I} b \rho_2^{b} t_{\ell} \big( 1+\rho_2^{2b}(t_{\ell}+2t_{k}) \big).
\end{align}
\end{corollary}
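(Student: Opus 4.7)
The plan is to apply Corollary \ref{cumulantscor} with multi-index $\vec{j}$ having entry $1$ in positions $\ell$ and $k$ and entry $0$ elsewhere, and identify each coefficient of the stated expansion with $\partial_{u_\ell}\partial_{u_k}$ of the corresponding coefficient of Theorem \ref{thm:main thm hard}, evaluated at $\vec{u}=\vec{0}$. The structural simplification that drives the argument is the identity
\begin{align*}
\partial_{u_\ell}\partial_{u_k}\omega_p\big|_{\vec{u}=\vec{0}} = \chi\big(\min(\ell,k)=p\big),\qquad 1\le p\le 2m,
\end{align*}
obtained by a direct check of \eqref{def of Omega j intro}. Since $m+1\leq\ell\leq k\leq 2m$ here, this forces $p\geq m+1$ in every nonzero second-order contribution, so all second derivatives of $\mathsf{T}_j$ vanish at $\vec{u}=\vec{0}$ and only the $\hat{\mathsf{T}}_j$-pieces survive. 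This yields the workhorse formulas, valid for $s\geq m+1$:
\begin{align*}
\partial_{u_\ell}\partial_{u_k}\hat{\mathsf{T}}_j(x;\vec{t},\vec{u})\big|_{\vec{u}=\vec{0}} = t_\ell^{\,j}\, e^{-\frac{t_\ell}{b}(b\rho_2^{2b}-x)},\qquad \partial_{u_s}\hat{\mathsf{T}}_j(x;\vec{t},\vec{u})\big|_{\vec{u}=\vec{0}} = t_s^{\,j}\, e^{-\frac{t_s}{b}(b\rho_2^{2b}-x)},
\end{align*}
alongside vanishing of the corresponding derivatives of $\mathsf{T}_j$, $\mathsf{T}_j/\Omega$, and $(\mathsf{T}_1/\Omega)^2$.

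Armed with these, I differentiate each coefficient in turn. For $C_1$ the first integrand is killed by the cancellation $\partial_{u_\ell}\partial_{u_k}\ln g\big|_{\vec{0}}=\partial_{u_\ell}\partial_{u_k}g\big|_{\vec{0}}-(\partial_{u_\ell}g)(\partial_{u_k}g)\big|_{\vec{0}}$ (since $g|_{\vec{0}}=1$), whereas the second reduces to $e^{-t_k(b\rho_2^{2b}-x)/b}-e^{-(t_\ell+t_k)(b\rho_2^{2b}-x)/b}$ and integrates via $y=b\rho_2^{2b}-x$ to $b_{(1,1)}(t_\ell,t_k)$. For $C_2$ only the $\hat{\mathsf{T}}_1(b\rho_2^{2b})$ term contributes, giving $c_{(1,1)}=\tfrac{b\rho_2^{2b}t_\ell}{2}$. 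For $C_4$ the three $\hat{\mathsf{T}}$-pieces combine, using $\partial_{u_\ell}\partial_{u_k}\hat{\mathsf{T}}_1(b\rho_2^{2b})^2\big|_{\vec{0}}=2t_\ell t_k$ (which arises purely from the product of first-order derivatives since $\hat{\mathsf{T}}_1|_{\vec{0}}=0$), into $e_{(1,1)}$. For $\mathcal{F}_n$ I apply the chain rule through $z(\vec{u})=z_0+L/(2\ln(\rho_2/\rho_1))$ with $L=\ln\mathsf{Q}$, computing $\partial_{u_\ell}L\big|_{\vec{0}}=e^{-t_\ell(b\rho_2^{2b}-\sigma_\star)/b}$ and $\partial_{u_\ell}\partial_{u_k}L\big|_{\vec{0}}=e^{-(t_\ell+t_k)(b\rho_2^{2b}-\sigma_\star)/b}-e^{-t_k(b\rho_2^{2b}-\sigma_\star)/b}$, substituting the Weierstrass relation $(\ln\theta)''(z_0;\tau)=-\bigl(\wp(z_0-(1+\tau)/2;\tau)-c\bigr)$, and invoking the double-periodicity of $\wp$ to shift the argument by $\tau$; this delivers exactly $\mathsf{f}_{(1,1)}(n,t_\ell,t_k)$.

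The longest step is $C_3$. Its $-\tfrac{1}{2}\sum u_j$ term and the $\mathsf{T}_1/\Omega\,\ln(\cdots)$ term both vanish, while $-b\rho_2^{2b}\hat{\mathsf{T}}_1(b\rho_2^{2b})\ln(\cdots)$ produces the $\ln$ boundary term in $d_{(1,1)}$ via $\partial_{u_\ell}\partial_{u_k}\hat{\mathsf{T}}_1(b\rho_2^{2b})\big|_{\vec{0}}=t_\ell$; the $\mathsf{Q}$-bracket yields the first two non-integral terms using $\partial_{u_\ell}L$ and $\partial_{u_\ell}\partial_{u_k}L$ from the previous paragraph; and the $f$-integral together with its $\mathsf{T}_1/\Omega$ counterterm vanishes. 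For the $\hat{f}$-integrand I write $\hat{f}=\hat{N}/\hat{D}$ and exploit $\hat{N}|_{\vec{0}}=0,\ \hat{D}|_{\vec{0}}=1$ to reduce the second derivative to
\begin{align*}
\partial_{u_\ell}\partial_{u_k}\hat{f}\big|_{\vec{0}}=\partial_{u_\ell}\partial_{u_k}\hat{N}\big|_{\vec{0}} - (\partial_{u_\ell}\hat{N})(\partial_{u_k}\hat{D})\big|_{\vec{0}} - (\partial_{u_k}\hat{N})(\partial_{u_\ell}\hat{D})\big|_{\vec{0}},
\end{align*}
which, after adding the counterterm $-b\rho_2^{2b}t_\ell/(b\rho_2^{2b}-x)$, matches the integrand in \eqref{def of b11 hard edge 2}. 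The main obstacle is this last identification: one must recognize the algebraic identity $\bigl(\tfrac{b\rho_2^{2b}}{b\rho_2^{2b}-x}-\tfrac{\alpha}{b}\bigr)t_\ell+\tfrac{xt_\ell^2}{2b}=\tfrac{b\rho_2^{2b}t_\ell}{b\rho_2^{2b}-x}+\tfrac{xt_\ell^2-2\alpha t_\ell}{2b}$ and carefully regroup the exponentials $e^{-t_k(b\rho_2^{2b}-x)/b}$, $e^{-t_\ell(b\rho_2^{2b}-x)/b}$ and their product so that every term lines up with the stated integrand. All remaining manipulations are either routine quotient-rule differentiations or direct adaptations of the proof of the previous corollary, with the roles of $\mathsf{T}_j$ and $\hat{\mathsf{T}}_j$, and of the intervals $[b\rho_1^{2b},\sigma_\star]$ and $[\sigma_\star,b\rho_2^{2b}]$, swapped.
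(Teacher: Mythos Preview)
Your proposal is correct and follows precisely the approach the paper intends: apply Corollary~\ref{cumulantscor} with $\vec{j}$ having entries $1$ in positions $\ell$ and $k$, and carry out the ``long but straightforward calculations'' of $\partial_{u_\ell}\partial_{u_k}C_j|_{\vec{u}=\vec{0}}$ and $\partial_{u_\ell}\partial_{u_k}\mathcal{F}_n|_{\vec{u}=\vec{0}}$. Your key observation $\partial_{u_\ell}\partial_{u_k}\omega_p|_{\vec{u}=\vec{0}}=\chi(\min(\ell,k)=p)$, the resulting vanishing of all $\mathsf{T}_j$-contributions, and the subsequent term-by-term identifications (including the quotient-rule reduction for $\hat{f}$ and the use of the periodicity of $\wp$) are all exactly what is required; the paper does not supply any further detail beyond what you have written.
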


\begin{corollary}[{Asymptotics of the covariance for $1 \leq \ell \leq m$ and $m+1 \leq k \leq 2m$}]
Under the assumptions of Corollary \ref{cumulantscor}, the covariance $\mathrm{Cov}(\mathrm{N}(r_{\ell}),\mathrm{N}(r_{k}))$ obeys the following formula for any $1 \leq \ell \leq m$ and $m+1 \leq k \leq 2m$:
\begin{align*}
& \mathrm{Cov}(\mathrm{N}(r_{\ell}),\mathrm{N}(r_{k})) = d_{(1,1)}(t_{\ell},t_{k}) + \mathsf{f}_{(1,1)}(n, t_\ell, t_k) + \bigO\big(n^{-\frac{3}{5}}\big) \nonumber
\end{align*}
as $n \to +\infty$, where 
\begin{align}\nonumber
& d_{(1,1)}(t_{\ell},t_{k}) = 
 \frac{e^{-\frac{t_\ell}{b}(\sigma_{\star} - b\rho_1^{2b})} e^{-\frac{t_k}{b}(b\rho_2^{2b} - \sigma_{\star})}}{2 \ln(\rho_{2}/\rho_{1})},
	\\\nonumber
& \mathsf{f}_{(1,1)}(n, t_\ell, t_k) = \frac{e^{-\frac{t_\ell}{b}(\sigma_{\star} - b\rho_1^{2b})}e^{-\frac{t_k}{b}(b \rho_2^{2 b} - \sigma_{\star})}}{4\ln(\rho_{2}/\rho_{1})^2}\bigg\{ c - \wp\bigg( n \sigma_{\star}  + \frac{\tau}{2} - \alpha+\frac{\ln(\sigma_{2}/\sigma_{1}) }{2 \ln(\rho_2/\rho_1)}; \tau\bigg) \bigg\}.
\end{align}
\end{corollary}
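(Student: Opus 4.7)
The plan is to apply Corollary \ref{cumulantscor} with the multi-index $\vec{j}$ that has $j_\ell = j_k = 1$ and all other entries zero. This reduces the task to computing the five mixed partial derivatives $\partial_{u_\ell}\partial_{u_k} C_i|_{\vec{u}=\vec{0}}$ for $i=1,\dots,4$ together with $\partial_{u_\ell}\partial_{u_k}\mathcal{F}_n|_{\vec{u}=\vec{0}}$. The key structural input is the following elementary observation about the weights $\omega_j$ defined in \eqref{def of Omega j intro}: at $\vec{u}=\vec{0}$ one has $\partial_{u_a}\omega_j = \delta_{aj}$ for every $a,j$, whereas in the mixed case $a=\ell\le m<k=b$ one checks directly from the definition that $\partial_{u_\ell}\partial_{u_k}\omega_j|_{\vec{u}=\vec{0}}=\delta_{\ell j}$ if $j\le m$ and $0$ if $j\ge m+1$. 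This asymmetry is what forces the scaling of $\mathrm{Cov}(\mathrm{N}(r_\ell),\mathrm{N}(r_k))$ to be $\mathcal{O}(1)$ rather than $\mathcal{O}(n)$: differentiation in $u_\ell$ only ``sees'' the $\mathsf{T}_j$ sums (indices $\le m$) while differentiation in $u_k$ only ``sees'' the $\hat{\mathsf{T}}_j$ sums (indices $\ge m+1$).

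One then computes the five derivatives in turn. For $C_1$, writing each integrand as $\ln(1+X)$ and using the identity $\partial_{u_\ell}\partial_{u_k}\ln(1+X)|_{\vec{u}=\vec{0}} = \partial_{u_\ell}\partial_{u_k} X - (\partial_{u_\ell}X)(\partial_{u_k}X)$, the two contributions cancel exactly, so $\partial_{u_\ell}\partial_{u_k} C_1|_{\vec{u}=\vec{0}} = 0$; this gives the absence of an $\mathcal{O}(n)$ term. The coefficient of $\ln n$ vanishes by the same mechanism applied to $\mathsf{T}_1/\Omega$ and $\hat{\mathsf{T}}_1$: a product-rule computation yields $\partial_{u_\ell}\partial_{u_k}[\mathsf{T}_1/\Omega]|_{\vec{u}=\vec{0}} = t_\ell - 0 - t_\ell + 0 = 0$, and similarly for the $\hat{\mathsf{T}}_1$ piece. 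The $C_4$ term is handled by the same cancellations together with the fact that $(\mathsf{T}_1/\Omega)^2$ and $\hat{\mathsf{T}}_1^2$ have vanishing mixed second derivative (they vanish to second order at $\vec{u}=\vec{0}$). In $C_3$, the integrands involving $f$, $\hat f$, and the auxiliary fractions reduce, after applying the quotient rule, to expressions of the form $\partial_{u_\ell}\partial_{u_k}(\text{num})-\partial_{u_\ell}(\text{num})$ or $\partial_{u_\ell}\partial_{u_k}(\text{num})$, and one verifies that these cancel; the outer $\ln$ terms contain $\mathsf{T}_1/\Omega$ and $\hat{\mathsf{T}}_1$, whose mixed derivative we have already shown vanishes. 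The only surviving contribution comes from the quadratic-in-$\ln \mathsf{Q}$ term, which gives
\begin{align*}
-\frac{(\partial_{u_\ell}\ln\mathsf{Q})(\partial_{u_k}\ln\mathsf{Q})}{2\ln(\rho_2/\rho_1)}\bigg|_{\vec{u}=\vec{0}} = \frac{e^{-\frac{t_\ell}{b}(\sigma_\star - b\rho_1^{2b})}\, e^{-\frac{t_k}{b}(b\rho_2^{2b}-\sigma_\star)}}{2\ln(\rho_2/\rho_1)} = d_{(1,1)}(t_\ell,t_k),
\end{align*}
upon reading off $\partial_{u_\ell}\ln\mathsf{Q}|_{\vec{u}=\vec{0}} = e^{-t_\ell(\sigma_\star - b\rho_1^{2b})/b}$ and $\partial_{u_k}\ln\mathsf{Q}|_{\vec{u}=\vec{0}} = e^{-t_k(b\rho_2^{2b}-\sigma_\star)/b}$ from \eqref{mathsfQdef}.

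It remains to evaluate $\partial_{u_\ell}\partial_{u_k}\mathcal{F}_n|_{\vec{u}=\vec{0}}$. Setting $W = \sigma_\star n + \tfrac{1}{2} - \alpha + \tfrac{\ln(\sigma_2/\sigma_1)}{2\ln(\rho_2/\rho_1)}$ and $V = \tfrac{\ln \mathsf{Q}}{2\ln(\rho_2/\rho_1)}$, we have $\mathcal{F}_n = \ln \theta(W+V;\tau) - \ln\theta(W;\tau)$. Since $\partial_{u_\ell}\partial_{u_k}\ln\mathsf{Q}|_{\vec{u}=\vec{0}} = 0$ by the same cancellation as above, the chain rule yields
\begin{align*}
\partial_{u_\ell}\partial_{u_k}\mathcal{F}_n\big|_{\vec{u}=\vec{0}} = (\ln \theta)''(W;\tau)\,(\partial_{u_\ell}V)(\partial_{u_k}V)\big|_{\vec{u}=\vec{0}}.
\end{align*}
Invoking the identity $(\ln \theta)''(z;\tau) = c - \wp(z - \tfrac{1+\tau}{2};\tau)$ obtained from \eqref{cdef}, together with the $\tau$-periodicity of $\wp$ to rewrite $\wp(W - \tfrac{1+\tau}{2};\tau) = \wp(W - \tfrac{1}{2} + \tfrac{\tau}{2};\tau)$, recovers exactly the formula for $\mathsf{f}_{(1,1)}(n,t_\ell,t_k)$ stated in the corollary. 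The error term $\mathcal{O}(n^{-3/5})$ is inherited from \eqref{der of main result hard}. The main obstacle is essentially bookkeeping: keeping track of the many algebraic cancellations so that it is transparent which contributions survive and which vanish. Once the structural lemma on $\omega_j$ is in hand, each of the five computations is routine.
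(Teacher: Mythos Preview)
Your approach is correct and matches the paper's: the corollary is obtained by setting $\vec{j}$ to have $1$'s in positions $\ell$ and $k$ in Corollary \ref{cumulantscor} and carrying out the differentiation, which the paper leaves as ``long but straightforward calculations''; your structural observation on $\partial_{u_\ell}\partial_{u_k}\omega_j|_{\vec{u}=0}=\delta_{\ell j}$ is exactly what drives all the cancellations. One cosmetic slip: the displayed equality for the surviving quadratic contribution should read $+\tfrac{(\partial_{u_\ell}\ln\mathsf{Q})(\partial_{u_k}\ln\mathsf{Q})}{2\ln(\rho_2/\rho_1)}$ rather than $-\tfrac{(\partial_{u_\ell}\ln\mathsf{Q})(\partial_{u_k}\ln\mathsf{Q})}{2\ln(\rho_2/\rho_1)}$, since the coefficient of $(\ln\mathsf{Q})^2$ in $C_3$ is $+\tfrac{1}{4\ln(\rho_2/\rho_1)}$; your right-hand side $d_{(1,1)}(t_\ell,t_k)$ is nonetheless correct.
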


It is crucial for the next corollary that $t_{m},t_{m+1}>0$, so that $b_{(1,1)}(t_\ell,t_\ell)>0$ for all $\ell \in \{1,\ldots,2m\}$. In this case the asymptotic joint fluctuations of $\mathrm{N}(r_{1}),\ldots,\mathrm{N}(r_{2m})$ are of order $\sqrt{n}$ and are continuous Gaussians. This should be compared to Corollary \ref{coro:prob interpretation of Xn} (corresponding to the case $m=1$ and $t_{m}=0$), which shows that the asymptotic fluctuations of $\mathrm{N}(\rho_{1})$ are of order $1$ and are described by a discrete Gaussian.

\begin{corollary}
Suppose that the assumptions of Corollary \ref{cumulantscor} hold and that $t_{m},t_{m+1}>0$. As $n \to + \infty$, the random variable $(\mathcal{N}_{1},\dots,\mathcal{N}_{2m})$, where
\begin{align}
& \mathcal{N}_{\ell} := \frac{\mathrm{N}(r_{\ell})-(b_1(t_\ell) n+ c_{1}(t_\ell) \ln{n})}{\sqrt{b_{(1,1)}(t_\ell,t_\ell) n}}, \qquad \ell=1,\dots,2m,
\label{Nj hard edge}
\end{align}
convergences in distribution to a multivariate normal random variable of mean $(0,\dots,0)$ whose covariance matrix $\Sigma$ is given by
\begin{align*}
\Sigma_{\ell,k} = \Sigma_{k, \ell} = \begin{cases}
\frac{b_{(1,1)}(t_{\ell},t_{k})}{\sqrt{b_{(1,1)}(t_{\ell},t_{\ell})b_{(1,1)}(t_{k},t_{k})}}, & \text{$1 \leq \ell \leq k \leq m$ or $m+1 \leq \ell \leq k \leq 2m$}, \\
0, & \text{$1 \leq \ell \leq m$ and $m+1 \leq k \leq 2m$},
\end{cases}
\end{align*}
where $b_{(1,1)}(t_{\ell},t_{k})$ is given by \eqref{def of b11 hard edge} for $1 \leq \ell \leq k \leq m$ and by \eqref{def of b11 hard edge 2} for $m+1 \leq \ell \leq k \leq 2m$.
\end{corollary}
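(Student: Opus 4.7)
The plan is to establish convergence in distribution via the moment generating function method. For $\vec{v} = (v_{1},\dots,v_{2m}) \in \mathbb{R}^{2m}$, set $u_{\ell} = v_{\ell}/\sqrt{n\,b_{(1,1)}(t_{\ell},t_{\ell})}$ for each $\ell$. Then
\begin{align*}
\mathbb{E}\bigg[e^{\sum_{\ell} v_{\ell} \mathcal{N}_{\ell}}\bigg] = \exp\bigg(-\sum_{\ell} v_{\ell}\, \frac{b_{1}(t_{\ell}) n + c_{1}(t_{\ell}) \ln n}{\sqrt{n\, b_{(1,1)}(t_{\ell},t_{\ell})}}\bigg)\, \mathbb{E}\bigg[\prod_{\ell=1}^{2m} e^{u_{\ell} \mathrm{N}(r_{\ell})}\bigg].
\end{align*}
Since $|u_{\ell}| = O(n^{-1/2})$, Theorem \ref{thm:main thm hard} applies (with the $x_{\ell}$ chosen to be $0$) and the asymptotic formula \eqref{asymp in main thm hard} holds uniformly for $u_{\ell}$ in a fixed complex neighborhood of $0$. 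This uniformity, combined with Cauchy's formula as in \eqref{der of main result hard}, will allow us to work with complex $\vec v$ in a neighborhood of $\vec 0$ and invoke Curtiss' theorem to conclude convergence in distribution from convergence of MGFs.

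Next I Taylor expand the coefficients $C_{1}(\vec{u}),\dots,C_{4}(\vec{u}),\mathcal{F}_{n}(\vec{u})$ around $\vec{u} = \vec{0}$. At $\vec{u} = \vec{0}$ all weights $\omega_{\ell}$ vanish for $\ell \leq 2m$, so $\mathsf{T}_{j} = \hat{\mathsf{T}}_{j} \equiv 0$ and $\mathsf{Q}(\vec{t},\vec{0}) = 1$, whence $C_{1},C_{2},C_{3},C_{4},\mathcal{F}_{n}$ all vanish at $\vec{u}=\vec{0}$. The single-variable cumulant expansion (via Corollary \ref{cumulantscor} with $\vec{j}$ a standard basis vector) identifies
\begin{align*}
\partial_{u_{\ell}} C_{1}\big|_{\vec{u}=\vec{0}} = b_{1}(t_{\ell}), \qquad \partial_{u_{\ell}} C_{2}\big|_{\vec{u}=\vec{0}} = c_{1}(t_{\ell}),
\end{align*}
and the bivariate cumulant expansion (Corollary \ref{cumulantscor} with $\vec{j}=(1,1)$) combined with the three covariance corollaries identifies
\begin{align*}
\partial_{u_{\ell}}\partial_{u_{k}} C_{1}\big|_{\vec{u}=\vec{0}} = \begin{cases} b_{(1,1)}(t_{\ell},t_{k}), & \text{if } \ell,k \text{ are in the same group}, \\ 0, & \text{otherwise}. \end{cases}
\end{align*}
Substituting $u_{\ell} = v_{\ell}/\sqrt{n\, b_{(1,1)}(t_{\ell},t_{\ell})}$ gives: the $C_{1} n$ contribution's linear-in-$\vec{u}$ part cancels the $b_{1}(t_{\ell}) n$ shift exactly, its quadratic-in-$\vec{u}$ part yields $\tfrac{1}{2}\sum_{\ell,k} \Sigma_{\ell,k} v_{\ell} v_{k}$, and its cubic remainder is $O(n|\vec{u}|^{3}) = O(n^{-1/2})$. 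The $C_{2} \ln n$ linear part cancels the $c_{1}(t_{\ell}) \ln n$ shift, while its quadratic part is $O(n^{-1}\ln n)$. The terms $C_{3}(\vec u)$ and $C_{4}(\vec u)/\sqrt{n}$ are $O(|\vec u|) = O(n^{-1/2})$ and the error in \eqref{asymp in main thm hard} is $O(n^{-3/5})$.

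For the oscillatory theta contribution one uses that $\mathsf{Q}(\vec{t},\vec{u}) = 1 + O(|\vec{u}|)$, so $\ln \mathsf{Q}(\vec{t},\vec{u}) = O(n^{-1/2})$; Taylor expanding $\ln \theta$ in its first argument about the common base point $\sigma_{\star}n + \tfrac{1}{2} - \alpha + \tfrac{\ln(\sigma_{2}/\sigma_{1})}{2\ln(\rho_{2}/\rho_{1})}$ and using that $(\ln \theta)'(\,\cdot\,;\tau)$ is $1$-periodic in its first argument (hence uniformly bounded on any horizontal strip $\mathbb{R} + i[-\epsilon,\epsilon]$ by \eqref{theta prop}) yields $\mathcal{F}_{n}(\vec{u}) = O(n^{-1/2})$. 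Collecting everything,
\begin{align*}
\ln \mathbb{E}\bigg[e^{\sum_{\ell} v_{\ell} \mathcal{N}_{\ell}}\bigg] = \tfrac{1}{2}\, \vec{v}^{\,\top} \Sigma \vec{v} + O(n^{-1/2}\max(1,|\vec v|^{3})),
\end{align*}
which is the log-MGF of the Gaussian $N(\vec{0}, \Sigma)$. Curtiss' theorem then delivers the stated convergence in distribution. The main obstacle is twofold: rigorously identifying the Hessian of $C_{1}$ at $\vec{u}=\vec{0}$ with the covariance coefficients $b_{(1,1)}(t_{\ell},t_{k})$ of the preceding corollaries (and verifying that it vanishes for cross-group pairs, since only the $O(1)$ covariance survives there), and controlling the oscillatory term $\mathcal{F}_{n}$ uniformly in $n$ under the scaling $\vec u \to \vec 0$ — which, as sketched, reduces to the boundedness of $(\ln\theta)'$ on a thin horizontal strip together with the linear vanishing of $\ln\mathsf{Q}$.
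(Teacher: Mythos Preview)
Your proof is correct and follows essentially the same route as the paper: Taylor expand the exponent $C_1 n + C_2\ln n + C_3 + \mathcal{F}_n + C_4 n^{-1/2}$ around $\vec{u}=\vec{0}$ under the scaling $u_\ell = O(n^{-1/2})$, cancel the linear terms against the centering, read off the quadratic form in $C_1$ as the covariance, and check that all other contributions are $o(1)$. The only cosmetic difference is that the paper works with the characteristic function (taking $u_\ell = iv_\ell/\sqrt{b_{(1,1)}(t_\ell,t_\ell)\,n}$ and invoking L\'evy's continuity theorem) while you work with the real moment generating function and Curtiss' theorem; since Theorem~\ref{thm:main thm hard} is stated uniformly on complex neighborhoods of the $x_\ell$, both choices are equally available and the argument is otherwise identical.
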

\begin{proof}
By L\'evy's continuity theorem, the assertion will follow if we can show that the characteristic function $\mathbb{E}[e^{i \sum_{\ell = 1}^{2m} v_\ell \mathcal{N}_\ell}]$ converges pointwise to $e^{-\frac{1}{2}\sum_{\ell, k=1}^{2m} v_\ell \Sigma_{\ell,k} v_k}$ for every $v_\ell \in \mathbb{R}^{2m}$ as $n \to +\infty$. Letting $u_\ell = \frac{i v_\ell }{\sqrt{b_{(1,1)}(t_\ell,t_\ell) n}}$, (\ref{Nj hard edge}) and (\ref{asymp in main thm hard}) show that
\begin{align*}
\mathbb{E}[e^{i \sum_{\ell = 1}^{2m} v_\ell \mathcal{N}_\ell}]
& = \mathbb{E}[e^{\sum_{\ell = 1}^{2m} u_\ell \mathrm{N}(r_{\ell})}]
e^{- \sum_{\ell = 1}^{2m} u_\ell (b_1(t_\ell) n+ c_{1}(t_\ell) \ln{n})}
	\\
& = e^{C_{1}(\vec{u}) n + C_{2}(\vec{u}) \ln n + C_{3}(\vec{u}) + \mathcal{F}_n(\vec{u}) + \bigO(n^{-\frac{1}{2}})}
e^{- \sum_{\ell = 1}^{2m} u_\ell (\partial_{u_\ell} C_1|_{\vec{u}=\vec{0}} n + \partial_{u_\ell} C_2|_{\vec{u}=\vec{0}} \ln{n})}
\end{align*}
as $n \to +\infty$ for any fixed $v_\ell \in \mathbb{R}^{2m}$. Since $C_j|_{\vec{u}=\vec{0}} = 0$ for $j = 1,2,3$, $\mathcal{F}_n(\vec{u}) = \sum_{\ell=1}^{2m} \mathsf{f}_1(n, t_\ell) u_{\ell} + \bigO(|\vec{u}|^2) = \bigO(|\vec{u}|)$, and $|\vec{u}| = \bigO(n^{-1/2})$, we obtain
\begin{align*}
\mathbb{E}[&e^{i \sum_{\ell = 1}^{2m} v_\ell \mathcal{N}_\ell}]
 =  e^{\frac{1}{2}\sum_{\ell,k= 1}^{2m} u_\ell u_k \partial_{u_\ell}\partial_{u_k} C_1|_{\vec{u}=\vec{0}} n
 + \bigO(|\vec{u}|^3 n + |\vec{u}|^2 \ln{n} + |\vec{u}| + n^{-1/2})}
	\\
& =  e^{\frac{1}{2}\sum_{\ell,k = 1}^m \frac{iv_\ell}{\sqrt{b_{(1,1)}(t_\ell,t_\ell)}} \frac{iv_k}{\sqrt{b_{(1,1)}(t_k,t_k)}} b_{(1,1)}(t_{\min(\ell,k)}, t_{\max(\ell,k)}) }
	\\
&\times 
e^{\frac{1}{2}\sum_{\ell,k = m+1}^{2m} \frac{iv_\ell}{\sqrt{b_{(1,1)}(t_\ell,t_\ell)}} \frac{iv_k}{\sqrt{b_{(1,1)}(t_k,t_k)}} b_{(1,1)}(t_{\min(\ell,k)}, t_{\max(\ell,k)}) + \bigO( n^{-1/2})}
\to  e^{-\frac{1}{2}\sum_{\ell, k=1}^{2m} v_\ell \Sigma_{\ell,k} v_k}
\end{align*}
as $n \to +\infty$, which completes the proof.
\end{proof}

\medskip \textbf{Outline of proof.} Using that $\prod_{1 \leq j < k \leq n} |z_{k} -z_{j}|^{2}$ is the product of two Vandermonde determinants, we obtain after standard manipulations that
\begin{align}
\mathcal{E}_{n} & := \mathbb{E}\bigg[ \prod_{\ell=1}^{2m} e^{u_{\ell}\mathrm{N}(r_{\ell})} \bigg] = \frac{1}{n!Z_{n}} \int_{\mathbb{C}}\dots \int_{\mathbb{C}} \prod_{1 \leq j < k \leq n} |z_{k} -z_{j}|^{2} \prod_{j=1}^{n}w(z_{j}) d^{2}z_{j} \label{partition function} 
	\\
& = \frac{1}{Z_{n}} \det \left( \int_{\mathbb{C}} z^{j} \overline{z}^{k} w(z) d^{2}z \right)_{j,k=0}^{n-1} \label{def of Dn as n fold integral} \\
& = \frac{1}{Z_{n}}(2\pi)^{n}\prod_{j=0}^{n-1}\bigg(\int_{0}^{\rho_{1}}+\int_{\rho_{2}}^{+\infty}\bigg)u^{2j+1}w(u)du, \label{simplified determinant}
\end{align}
where the weight $w$ is defined by
\begin{align}\label{def of w and omega}
w(z):= e^{-n Q(z)} \omega(|z|), \qquad \omega(x) := \prod_{\ell=1}^{2m} \begin{cases}
e^{u_{\ell}}, & \mbox{if } x < r_{\ell}, \\
1, & \mbox{if } x \geq r_{\ell}.
\end{cases}
\end{align}
Formula \eqref{simplified determinant} directly follows from \eqref{def of Dn as n fold integral} and the fact that $w$ is rotation-invariant. Indeed, since $w(z)=w(|z|)$, the integral $\int_{\mathbb{C}} z^{j} \overline{z}^{k} w(z) d^{2}z$ is $0$ for $j \neq k$ and is $2\pi  \int_{0}^{+\infty}u^{2j+1}w(u)du$ for $j=k$. So only the main diagonal contributes for the determinants in \eqref{def of Dn as n fold integral}.

Shifting $j \to j-1$ and then performing the change of variables $v = nu^{2b}$ in \eqref{simplified determinant}, we obtain
\begin{align}\nonumber
 \mathcal{E}_{n} & = \frac{\prod_{j=1}^{n}\big(\int_{0}^{\rho_{1}}+\int_{\rho_{2}}^{+\infty}\big)u^{2j+2\alpha-1} e^{-n u^{2b}} \omega(u) du}{\prod_{j=1}^{n}\big(\int_{0}^{\rho_{1}}+\int_{\rho_{2}}^{+\infty}\big)u^{2j+2\alpha-1} e^{-n u^{2b}} du}
	\\ \label{lnEn}
& = \prod_{j=1}^{n} \frac{
\big(\int_{0}^{(\rho_1/n)^{\frac{1}{2b}}}+\int_{(\rho_2/n)^{\frac{1}{2b}}}^{+\infty}\big)v^{\frac{j+ \alpha}{b} - 1} e^{-v} \omega((v/n)^{\frac{1}{2b}})dv}{\big(\int_{0}^{(\rho_1/n)^{\frac{1}{2b}}}+\int_{(\rho_2/n)^{\frac{1}{2b}}}^{+\infty}\big)v^{\frac{j+ \alpha}{b} - 1} e^{-v} dv}.
\end{align}
At this stage it is convenient to write
\begin{align}\label{def of omegaell}
\omega(x) = \sum_{\ell=1}^{2m+1}\omega_{\ell} \mathbf{1}_{[0,r_{\ell})}(x),
\end{align}
where $\{\omega_{\ell}\}_1^{2m+1}$ are given by (\ref{def of Omega j intro}) and $r_{2m+1}:=+\infty$.
Using this representation for $\omega(x)$ in (\ref{lnEn}), we arrive at the following representation for $\ln \mathcal{E}_n$:
\begin{align}
& \ln \mathcal{E}_{n} = \sum_{j=1}^{n} \ln \bigg(1  +\sum_{\ell=1}^{2m} \omega_{\ell} F_{n,j,\ell} \bigg), \label{main exact formula} 
	\\ \label{def of Fnjell}
& F_{n,j,\ell} := \begin{cases} \frac{\gamma(\tfrac{j+\alpha}{b},n r_{\ell}^{2b})}{\Gamma(\tfrac{j+\alpha}{b})-\gamma(\tfrac{j+\alpha}{b},n \rho_{2}^{2b})+\gamma(\tfrac{j+\alpha}{b},n \rho_{1}^{2b})}, & \ell=1,\dots,m, 
	\vspace{.1cm} \\
\frac{\gamma(\tfrac{j+\alpha}{b},n r_{\ell}^{2b})-\gamma(\tfrac{j+\alpha}{b},n \rho_{2}^{2b})+\gamma(\tfrac{j+\alpha}{b},n \rho_{1}^{2b})}{\Gamma(\tfrac{j+\alpha}{b})-\gamma(\tfrac{j+\alpha}{b},n \rho_{2}^{2b})+\gamma(\tfrac{j+\alpha}{b},n \rho_{1}^{2b})}, & \ell=m+1,\dots,2m, 
\end{cases}
\end{align}
where $\gamma(a,z)$ is the incomplete gamma function defined by
\begin{align*}
\gamma(a,z) = \int_{0}^{z}t^{a-1}e^{-t}dt.
\end{align*}
We infer from \eqref{main exact formula} that the asymptotics of $\gamma(a,z)$ as $z \to +\infty$ uniformly for $a\in [\frac{1+\alpha}{b},\frac{z}{b r_{1}^{2b}}+\frac{\alpha}{b}]$ are needed to obtain large $n$ asymptotics for $\mathcal{E}_{n}$ --- we recall these asymptotics in Appendix \ref{section:uniform asymp gamma}.

\medskip Note from \eqref{partition function} that $\mathcal{E}_{n}$ can also be viewed as a partition function with discontinuities approaching hard edges; see also e.g. \cite{ZW2006, LS2017, BBNY2, WebbWong, DeanoSimm, Charlier 2d gap, BC2022, BKS2022} for other works on partition functions.

\medskip In \eqref{main exact formula} and below, $\ln$ always denotes the principal branch of the logarithm.

\section{Proof of Theorem \ref{thm:main thm hard}}\label{section:proof edge}

Our proof strategy uses ideas from \cite{Charlier 2d jumps, Charlier 2d gap, ChLe2022, ACCL2022}. Define
\begin{align*}
& j_{k,-}:=\lceil \tfrac{bn\rho_{k}^{2b}}{1+\epsilon} - \alpha \rceil, \qquad j_{k,+} := \lfloor  \tfrac{bn\rho_{k}^{2b}}{1-\epsilon} - \alpha \rfloor, \qquad k=1,2,
\end{align*}
where $\epsilon > 0$ is independent of $n$ and sufficiently small such that
\begin{align}\label{rhoinequalities}
\frac{b\rho_{1}^{2b}}{1-\epsilon} < \frac{b\rho_{2}^{2b}}{1+\epsilon}, \qquad \frac{b\rho_{2}^{2b}}{1-\epsilon} < 1.
\end{align}
It is convenient to split the sum \eqref{main exact formula} into six parts:
\begin{align}\label{log Dn as a sum of sums hard}
\ln \mathcal{E}_{n} = S_{0} + S_{1} + S_{2} + S_{3} + S_{4} + S_{5},
\end{align}
where
\begin{align}
& S_{0} = \sum_{j=1}^{M'} \ln \bigg( 1+\sum_{\ell=1}^{2m} \omega_{\ell} F_{n,j,\ell} \bigg), & & S_{1} = \sum_{j=M'+1}^{j_{1,-}-1} \ln \bigg( 1 + \sum_{\ell=1}^{2m} \omega_{\ell} F_{n,j,\ell} \bigg), \label{def of S0 and S1 hard} \\
& S_{2} = \sum_{j=j_{1,-}}^{j_{1,+}} \ln \bigg( 1+\sum_{\ell=1}^{2m} \omega_{\ell} F_{n,j,\ell} \bigg), & & S_{3}=\sum_{j=j_{1,+}+1}^{j_{2,-}-1} \ln \bigg( 1 + \sum_{\ell=1}^{2m} \omega_{\ell} F_{n,j,\ell} \bigg), \label{def of S2 and S3 hard} \\
& S_{4} = \sum_{j=j_{2,-}}^{j_{2,+}} \ln \bigg( 1+\sum_{\ell=1}^{2m} \omega_{\ell} F_{n,j,\ell} \bigg), & & S_{5}=\sum_{j=j_{2,+}+1}^{n} \ln \bigg( 1 + \sum_{\ell=1}^{2m} \omega_{\ell} F_{n,j,\ell} \bigg), \label{def of S4 and S5 hard}
\end{align}
and where $M'>0$ is an integer independent of $n$. For $j=1,\dots,n$, we define $a_{j}:=\frac{j+\alpha}{b}$ and
\begin{align}\label{def etajl hard}
& \hat\lambda_{j,\ell} := \frac{bnr_{\ell}^{2b}}{j+\alpha}, & & \hat\eta_{j,\ell} := (\hat\lambda_{j,\ell}-1)\sqrt{\frac{2 (\hat\lambda_{j,\ell}-1-\ln \hat\lambda_{j,\ell})}{(\hat\lambda_{j,\ell}-1)^{2}}}, & & \ell =1,\dots,2m, \\
& \lambda_{j,k} := \frac{bn\rho_{k}^{2b}}{j+\alpha}, & & \eta_{j,k} := (\lambda_{j,k}-1)\sqrt{\frac{2 (\lambda_{j,k} -1-\ln \lambda_{j,k})}{(\lambda_{j,k} -1)^{2}}}, & & k =1,2.
\end{align}
With this notation, we can write
\begin{align}\label{Fnjellexpression}
& F_{n,j,\ell} = \begin{cases} \frac{\gamma(a_j, a_j \hat{\lambda}_{j,\ell})}{\Gamma(a_j)-\gamma(a_j,a_j \lambda_{j,2})+\gamma(a_j,a_j \lambda_{j,1})}, & \ell=1,\dots,m, 
	\vspace{.1cm} \\
\frac{\gamma(a_j, a_j \hat{\lambda}_{j,\ell})-\gamma(a_j, a_j \lambda_{j,2})+\gamma(a_j,a_j \lambda_{j,1})}{\Gamma(a_j)-\gamma(a_j,a_j \lambda_{j,2})+\gamma(a_j, a_j \lambda_{j,1})}, & \ell=m+1,\dots,2m. 
\end{cases}
\end{align}

\begin{lemma}\label{lemma: S0 hard}
Let $x_{1},\dots,x_{2m} \in \mathbb{R}$ be fixed. There exists $\delta > 0$ such that
\begin{align}\label{asymp of S0 hard}
S_{0} = M' \ln \Omega + \bigO(e^{-cn}), \qquad \mbox{as } n \to + \infty,
\end{align}
uniformly for $u_{1} \in \{z \in \mathbb{C}: |z-x_{1}|\leq \delta\},\dots,u_{2m} \in \{z \in \mathbb{C}: |z-x_{2m}|\leq \delta\}$.
\end{lemma}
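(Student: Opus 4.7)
The plan is to observe that the range $1\le j\le M'$ is the ``trivial'' regime in which all the relevant arguments of the incomplete gamma function are exponentially large while $a_{j}=\frac{j+\alpha}{b}$ stays bounded, so that every $F_{n,j,\ell}$ collapses to $1$ with exponential accuracy and the sum telescopes into $\Omega$.

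First I would fix $M'$ and note that for $1\le j\le M'$, $a_j$ ranges over a finite set of positive numbers (independent of $n$). For any such fixed $a>0$ and any $R>0$, the classical asymptotics of the upper incomplete gamma function give $\Gamma(a,z)=z^{a-1}e^{-z}(1+\bigO(1/z))$ as $z\to+\infty$, hence $\gamma(a,z)=\Gamma(a)(1-\bigO(z^{a-1}e^{-z}))$. Since $r_\ell=\rho_1(1-t_\ell/n)^{1/(2b)}\to\rho_1$ for $\ell\le m$ and $r_\ell=\rho_2(1+t_\ell/n)^{1/(2b)}\to\rho_2$ for $\ell\ge m+1$, the quantities $nr_\ell^{2b}$, $n\rho_1^{2b}$, $n\rho_2^{2b}$ are all bounded below by a positive multiple of $n$. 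Therefore there exists $c>0$ such that, uniformly in $1\le j\le M'$ and $1\le\ell\le 2m$,
\begin{align*}
\gamma(a_j,nr_\ell^{2b}) = \Gamma(a_j)\bigl(1+\bigO(e^{-cn})\bigr),\quad \gamma(a_j,n\rho_k^{2b})=\Gamma(a_j)\bigl(1+\bigO(e^{-cn})\bigr),\ k=1,2.
\end{align*}
Substituting into the two cases of \eqref{Fnjellexpression}, both numerator and denominator of $F_{n,j,\ell}$ reduce to $\Gamma(a_j)(1+\bigO(e^{-cn}))$, so
\begin{align*}
F_{n,j,\ell} = 1 + \bigO(e^{-cn}), \qquad 1\le j\le M',\ 1\le\ell\le 2m.
\end{align*}

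Next I would use the telescoping identity built into the definition \eqref{def of Omega j intro} of the $\omega_\ell$'s: since $\omega_\ell=e^{u_\ell+\cdots+u_{2m}}-e^{u_{\ell+1}+\cdots+u_{2m}}$ for $\ell<2m$ and $\omega_{2m}=e^{u_{2m}}-1$, the sum telescopes to $\sum_{\ell=1}^{2m}\omega_\ell=e^{u_1+\cdots+u_{2m}}-1=\Omega-1$. Combined with the previous estimate, this gives
\begin{align*}
1+\sum_{\ell=1}^{2m}\omega_\ell F_{n,j,\ell}
= 1+(\Omega-1) + \sum_{\ell=1}^{2m}\omega_\ell\,\bigO(e^{-cn})
= \Omega + \bigO(e^{-cn}),
\end{align*}
where the implicit constant depends on $M'$ and on an upper bound for $|\omega_\ell|$.

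Finally, I would fix $x_1,\dots,x_{2m}\in\mathbb{R}$ and pick $\delta>0$ small enough that whenever $|u_\ell-x_\ell|\le\delta$ for all $\ell$ one has $\re\Omega\ge\tfrac12 e^{x_1+\cdots+x_{2m}}>0$; the $\omega_\ell$'s remain uniformly bounded in this polydisc. Then the principal logarithm is analytic near $\Omega$, and
\begin{align*}
\ln\!\bigg(1+\sum_{\ell=1}^{2m}\omega_\ell F_{n,j,\ell}\bigg) = \ln\Omega + \bigO(e^{-cn})
\end{align*}
uniformly in $u_1,\dots,u_{2m}$ in the specified polydisc and in $1\le j\le M'$. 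Summing over these finitely many values of $j$ yields $S_0 = M'\ln\Omega+\bigO(e^{-cn})$, proving \eqref{asymp of S0 hard}. There is no real obstacle here; the only points requiring a little care are the uniformity in $\vec{u}$ (handled by choosing $\delta$ so that $\Omega$ stays away from the branch cut of $\ln$) and checking that the exponential rate $c$ can be chosen uniformly in $\ell$, which follows from $\min_\ell r_\ell^{2b}>0$ for all large $n$.
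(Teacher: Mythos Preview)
Your proof is correct and follows essentially the same approach as the paper: both use the elementary asymptotics of $\gamma(a,z)$ for fixed $a$ and large $z$ (the paper invokes its Lemma~\ref{lemma:various regime of gamma}) to show $F_{n,j,\ell}=1+\bigO(e^{-cn})$, then use the telescoping identity $1+\sum_{\ell=1}^{2m}\omega_{\ell}=\Omega$. Your treatment of the uniformity in $\vec{u}$ is in fact slightly more explicit than the paper's, which simply observes that the $\bigO(e^{-cn})$ terms are independent of $u_{1},\dots,u_{2m}$.
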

\begin{proof}
Using \eqref{def of Fnjell} and Lemma \ref{lemma:various regime of gamma} we infer that
\begin{align*}
F_{n,j,\ell} = 1+\bigO(e^{-cn}), \qquad \mbox{as } n \to + \infty
\end{align*}
uniformly for $j \in \{1,\dots,M'\}$ and $\ell\in\{1,\dots,2m\}$. Hence, since $1+\sum_{\ell=1}^{2m} \omega_{\ell} = e^{u_{1}+\dots+u_{2m}} = \Omega$,
\begin{align*}
S_{0} & = \sum_{j=1}^{M'} \ln \bigg( 1+\sum_{\ell=1}^{2m} \omega_{\ell} \big[1 + \bigO(e^{-cn}) \big] \bigg) = \sum_{j=1}^{M'} \ln \Omega + \bigO(e^{-cn}), \quad \mbox{as } n \to +\infty.
\end{align*}
Since the above error terms on the left of the second equality are independent of $u_{1},\dots,u_{2m}$, the claim follows.
\end{proof}

%In the following lemma, we obtain the large $n$ asymptotics of $S_{2k-1}$ for all $k \in \{1,\dots,m+1\}$. The analysis of $S_{1}$ is slightly different from the other sums because it depends on $M'$ (recall that $j_{0,+}=M'$). Let $f:\mathbb{R}\to \mathbb{R}$. We will use the notation $\bigO_{M}(f(M))$ to denote a quantity, possibly depending on $n$, which satisfies the following property: there exists $M_{0}>0$, $n_{0}>0$ and $C>0$ with $C$ independent of $n$ and $M$ such that $|\bigO_{M}(f(M))| \leq C |f(M)|$ for all $M \geq M_{0}$ and for all $n \geq n_{0}$.
\begin{lemma}\label{lemma: S2km1 hard}
We can choose $M'$ sufficiently large such that the following holds. For any fixed $x_{1},\dots,x_{2m} \in \mathbb{R}$, there exists $\delta > 0$ such that 
\begin{align*}
& S_{1} = (j_{1,-}-M'-1) \ln \Omega + \bigO(e^{-cn}),
\end{align*}
as $n \to +\infty$ uniformly for $u_{1} \in \{z \in \mathbb{C}: |z-x_{1}|\leq \delta\},\dots,u_{2m} \in \{z \in \mathbb{C}: |z-x_{2m}|\leq \delta\}$.
\end{lemma}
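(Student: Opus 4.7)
The strategy is to show that, uniformly for $j$ in the range $M'+1 \leq j \leq j_{1,-}-1$ and for all $\ell \in \{1,\dots,2m\}$, the quantity $F_{n,j,\ell}$ satisfies $F_{n,j,\ell} = 1 + \bigO(e^{-cn})$ with $c>0$ depending only on $\epsilon$; then the bound $S_1 = (j_{1,-}-M'-1)\ln\Omega + \bigO(e^{-cn})$ follows by taking the logarithm and summing (the prefactor of $n$ lost when summing is absorbed by adjusting the constant in the exponential).

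First I would observe that, by definition of $j_{1,-}$, for every $j$ with $M'+1 \le j \le j_{1,-}-1$ one has $j+\alpha < \tfrac{bn\rho_{1}^{2b}}{1+\epsilon}$, so
\begin{equation*}
\lambda_{j,1} = \frac{bn\rho_{1}^{2b}}{j+\alpha} > 1+\epsilon, \qquad \lambda_{j,2} > (1+\epsilon)\tfrac{\rho_{2}^{2b}}{\rho_{1}^{2b}} > 1+\epsilon.
\end{equation*}
Using $r_{\ell}^{2b} = \rho_{1}^{2b}(1-t_{\ell}/n)$ for $\ell \leq m$ and $r_{\ell}^{2b} = \rho_{2}^{2b}(1+t_{\ell}/n)$ for $\ell \geq m+1$, we also get $\hat{\lambda}_{j,\ell} \geq 1 + \epsilon/2$ for all $\ell$ and $n$ large enough. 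Next, for any $\lambda > 1$ the uniform asymptotics of the incomplete gamma function recalled in Appendix \ref{section:uniform asymp gamma} yield $\Gamma(a, a\lambda)/\Gamma(a) = \bigO(a^{-1/2} e^{-a h(\lambda)})$ with $h(\lambda) := \lambda - 1 - \ln\lambda > 0$, and the key claim to establish is the lower bound
\begin{equation*}
a_j \, h(\lambda_{j,k}) \geq c_\epsilon\, n, \qquad a_j \, h(\hat{\lambda}_{j,\ell}) \geq c_\epsilon \, n,
\end{equation*}
uniformly for $M'+1 \leq j \leq j_{1,-}-1$ and $k=1,2$, $\ell = 1,\dots,2m$, for some $c_\epsilon>0$ and a sufficiently large choice of $M'$.

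The main (and only nontrivial) step is to verify this lower bound. Writing $j+\alpha = \theta n$ with $\theta \in [\tfrac{M'+\alpha+1}{n}, \tfrac{b\rho_{1}^{2b}}{1+\epsilon}]$, the quantity $a_j \, h(\lambda_{j,1})$ equals $n \, g(\theta)$ with
\begin{equation*}
g(\theta) := \rho_{1}^{2b} - \frac{\theta}{b} - \frac{\theta}{b}\ln\!\bigg(\frac{b\rho_{1}^{2b}}{\theta}\bigg).
\end{equation*}
One checks that $g(\theta) > 0$ on $(0,b\rho_{1}^{2b})$, that $g(\theta) \to \rho_{1}^{2b}$ as $\theta \to 0^+$, and that $g(b\rho_{1}^{2b}/(1+\epsilon)) = \rho_{1}^{2b} h(1+\epsilon)/(1+\epsilon) > 0$. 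In particular, by continuity $g$ is bounded below by a positive constant on $[\tfrac{M'+\alpha+1}{n}, \tfrac{b\rho_{1}^{2b}}{1+\epsilon}]$ provided $M'$ is chosen large enough to stay safely in the interior (one can also use $g(\theta) \geq \rho_1^{2b}/2$ for $\theta$ near $0$). The same argument works for $\lambda_{j,2}$ and for each $\hat{\lambda}_{j,\ell}$ (since they differ from $\lambda_{j,1}$ or $\lambda_{j,2}$ by $1+\bigO(n^{-1})$), yielding the desired uniform bound.

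With this in hand, $\gamma(a_j, a_j \hat{\lambda}_{j,\ell})/\Gamma(a_j) = 1 + \bigO(e^{-cn})$ and likewise $\gamma(a_j, a_j \lambda_{j,k})/\Gamma(a_j) = 1 + \bigO(e^{-cn})$ uniformly for $j$ in our range and for $\vec{u}$ in a small complex neighborhood of $(x_1,\dots,x_{2m})$. Substituting into \eqref{Fnjellexpression} gives $F_{n,j,\ell} = 1 + \bigO(e^{-cn})$ in both the $\ell \leq m$ and $\ell \geq m+1$ cases (in the latter, the exponentially small corrections in numerator and denominator cancel to leading order). Then
\begin{equation*}
\ln\!\bigg(1 + \sum_{\ell=1}^{2m}\omega_\ell F_{n,j,\ell}\bigg) = \ln\!\bigg(\Omega + \bigO(e^{-cn})\bigg) = \ln\Omega + \bigO(e^{-cn}),
\end{equation*}
uniformly in $j$ and in $\vec{u}$ (the bound on $\omega_\ell$ is uniform on compact sets, and $\Omega$ stays bounded away from $0$ for $\vec{u}$ near a real point). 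Summing over $j = M'+1, \ldots, j_{1,-}-1$ produces at most $n$ terms, so the total error is $\bigO(n e^{-cn}) = \bigO(e^{-c'n})$ for any $c' < c$, and this is the claimed statement. The main obstacle is the uniform-in-$\theta$ positivity of $g$, which is standard but must be verified by hand for all three ranges of $\lambda$ (near $1+\epsilon$, in the middle, and near $+\infty$).
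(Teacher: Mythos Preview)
Your proof is correct and follows essentially the same approach as the paper. One minor point: the reason $M'$ must be chosen large is not to secure the positivity of $g$ (which, as you yourself observe, holds on all of $(0,\, b\rho_1^{2b}/(1+\epsilon)]$ with $g(\theta)\to\rho_1^{2b}$ as $\theta\to 0^+$), but rather to guarantee that $a_j=(j+\alpha)/b$ exceeds the threshold $A(\epsilon/2)$ required for the uniform estimate of Lemma~\ref{lemma: uniform}(i) to apply.
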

\begin{proof}
By Lemma \ref{lemma: uniform} (i), for any $\tilde{\delta}>0$ there exist $A=A(\tilde{\delta}),C=C(\tilde{\delta})>0$ such that $|\frac{\gamma(a,z)}{\Gamma(a)}-1| \leq Ce^{-\frac{a\eta^{2}}{2}}$ for all $a \geq A$ and all $\lambda=\frac{z}{a} \geq 1+\tilde{\delta}$, where $\eta$ is defined by \eqref{lol8}. For $n$ large enough, we have $\lambda_{j,k}, \hat{\lambda}_{j,\ell} \geq 1 + \frac{\epsilon}{2}$ for all $j \in \{M'+1,\dots,j_{1,-}-1\}$ and all $k$ and $\ell$. Thus, let us take $\tilde{\delta}=\frac{\epsilon}{2}$ and choose $M'$ large enough so that $a_{j} = \frac{j+\alpha}{b} \geq A(\frac{\epsilon}{2})$ for all $j \in \{M'+1,\dots,j_{1,-}-1\}$. 
From (\ref{Fnjellexpression}), we infer that, as $n \to + \infty$,
\begin{align*}
& F_{n,j,\ell} = \frac{1+\bigO(e^{-\frac{a_{j}\hat\eta_{j,\ell}^{2}}{2}})}{1+\bigO(e^{-\frac{a_{j}\eta_{j,2}^{2}}{2}}+e^{-\frac{a_{j}\eta_{j,1}^{2}}{2}})} = 1+\bigO(e^{-cn}), & & \ell=1,\dots,m, \\
& F_{n,j,\ell} = \frac{1+\bigO(e^{-\frac{a_{j}\hat\eta_{j,\ell}^{2}}{2}}+e^{-\frac{a_{j}\eta_{j,2}^{2}}{2}}+e^{-\frac{a_{j}\eta_{j,1}^{2}}{2}})}{1+\bigO(e^{-\frac{a_{j}\eta_{j,2}^{2}}{2}}+e^{-\frac{a_{j}\eta_{j,1}^{2}}{2}})} = 1+\bigO(e^{-cn}), & & \ell=m+1,\dots,2m,
\end{align*}
uniformly for $j \in \{M'+1,\dots,j_{1,-}-1\}$, and the claim follows.
\end{proof}
\begin{lemma}\label{lemma: S5 hard}
For any fixed $x_{1},\dots,x_{2m} \in \mathbb{R}$, there exists $\delta > 0$ such that
\begin{align}\label{asymp of S5 hard}
S_{5} = \bigO(e^{-cn}), \qquad \mbox{as } n \to + \infty,
\end{align}
uniformly for $u_{1} \in \{z \in \mathbb{C}: |z-x_{1}|\leq \delta\},\dots,u_{2m} \in \{z \in \mathbb{C}: |z-x_{2m}|\leq \delta\}$.
\end{lemma}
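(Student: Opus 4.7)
\textbf{Proof plan for Lemma \ref{lemma: S5 hard}.}
The plan is to show that each summand in $S_5$ is exponentially small in $n$, and then absorb the factor $n$ coming from the length of the sum into the exponential decay. First I would verify that for every $j \in \{j_{2,+}+1,\dots,n\}$ and every $\ell \in \{1,\dots,2m\}$, both $\lambda_{j,k}$ ($k=1,2$) and $\hat\lambda_{j,\ell}$ stay strictly below $1$ by a definite margin. By definition of $j_{2,+}$, we have $j+\alpha > \frac{bn\rho_2^{2b}}{1-\epsilon}$ for $j \geq j_{2,+}+1$, hence $\lambda_{j,2} < 1-\epsilon$ and $\lambda_{j,1} < \lambda_{j,2} < 1-\epsilon$. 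Since $r_\ell = \rho_1(1-t_\ell/n)^{1/(2b)}$ for $\ell\leq m$ and $r_\ell = \rho_2(1+t_\ell/n)^{1/(2b)}$ for $\ell > m$, one has $\hat\lambda_{j,\ell}/\lambda_{j,2} = 1 + \mathcal{O}(1/n)$ (respectively $\hat\lambda_{j,\ell}/\lambda_{j,1} = 1 + \mathcal{O}(1/n)$), so for $n$ large enough $\hat\lambda_{j,\ell} \leq 1 - \epsilon/2$ uniformly in $j$ and $\ell$. The inequalities \eqref{rhoinequalities} also guarantee $\lambda_{j,2} \leq b\rho_2^{2b} < 1$ at the other end $j = n$, so in fact $\lambda_{j,1}, \lambda_{j,2}, \hat\lambda_{j,\ell}$ all lie in a compact subset of $(0, 1-\epsilon/2]$ for every $j$ in the sum.

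Next I would invoke the uniform asymptotics for $\gamma(a,z)$ recalled in Appendix \ref{section:uniform asymp gamma}. Their $\lambda \leq 1 - \tilde\delta$ branch (the analogue of Lemma \ref{lemma: uniform}(i) used in the proof of Lemma \ref{lemma: S2km1 hard}) says that $\gamma(a,a\lambda)/\Gamma(a) = \mathcal{O}(e^{-a\eta^2/2})$, where $\eta^2 = 2(\lambda - 1 - \ln\lambda)$ is bounded below by a positive constant $\eta_0^2$ on the compact range identified above. Combining this with $a_j = (j+\alpha)/b \geq n\rho_2^{2b}/(1-\epsilon)$ for $j \geq j_{2,+}+1$ yields
\begin{align*}
\frac{\gamma(a_j,a_j\lambda_{j,k})}{\Gamma(a_j)} = \mathcal{O}(e^{-cn}), \qquad \frac{\gamma(a_j,a_j\hat\lambda_{j,\ell})}{\Gamma(a_j)} = \mathcal{O}(e^{-cn}),
\end{align*}
uniformly in $j$, $k$, $\ell$, for some $c>0$. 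Dividing numerator and denominator of \eqref{Fnjellexpression} by $\Gamma(a_j)$ and using that the denominator equals $1+\mathcal{O}(e^{-cn})$ gives $F_{n,j,\ell} = \mathcal{O}(e^{-cn})$ uniformly.

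Finally, for $u_1,\dots,u_{2m}$ in small complex disks around $x_1,\dots,x_{2m}$, the weights $\omega_\ell$ defined in \eqref{def of Omega j intro} are bounded, so $\sum_{\ell=1}^{2m}\omega_\ell F_{n,j,\ell} = \mathcal{O}(e^{-cn})$ is small, and the expansion $\ln(1+x) = x + \mathcal{O}(x^2)$ gives $\ln\bigl(1+\sum_\ell \omega_\ell F_{n,j,\ell}\bigr) = \mathcal{O}(e^{-cn})$ for each term. Summing at most $n$ such terms and absorbing the polynomial factor $n$ into the exponential (by replacing $c$ with, say, $c/2$) yields $S_5 = \mathcal{O}(e^{-cn})$ as $n\to+\infty$, uniformly in the stated $u_j$-domain. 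The potential subtlety is a near-cancellation in the numerator of the $\ell > m$ case of \eqref{Fnjellexpression}; but since each of the three $\gamma$-terms is individually $\mathcal{O}(\Gamma(a_j)e^{-cn})$, so is any linear combination, and no cancellation analysis is actually required. This is the simplest of the six lemmas in the splitting \eqref{log Dn as a sum of sums hard} because all arguments of $\gamma$ lie strictly in the small-$\lambda$ tail regime; the real work in the proof of Theorem \ref{thm:main thm hard} is in the transition sums $S_2$ and $S_4$.
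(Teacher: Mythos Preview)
Your proposal is correct and follows essentially the same approach as the paper: show that all the $\lambda_{j,k}$ and $\hat\lambda_{j,\ell}$ lie in a compact subset of $(0,1)$ for $j\in\{j_{2,+}+1,\dots,n\}$, invoke Lemma~\ref{lemma: uniform}(ii) to get $F_{n,j,\ell}=\bigO(e^{-cn})$ uniformly, and sum. The paper's proof is just a terser version of what you wrote; your additional verification that the $\hat\lambda_{j,\ell}$ stay below $1-\epsilon/2$ and that the denominator of \eqref{Fnjellexpression} is $1+\bigO(e^{-cn})$ is exactly what is implicit in the paper's one-line appeal to Lemma~\ref{lemma: uniform}(ii).
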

\begin{proof}
We infer from \eqref{def of Fnjell} and Lemma \ref{lemma: uniform} (ii) that
\begin{align*}
F_{n,j,\ell} = \bigO(e^{-cn}), \qquad \mbox{as } n \to + \infty
\end{align*}
uniformly for $j \in \{j_{2,+}+1,\dots,n\}$ and $\ell\in\{1,\dots,2m\}$. 
Hence
\begin{align*}
S_{5} & = \sum_{j=j_{2,+}+1}^{n} \ln \bigg( 1+\sum_{\ell=1}^{2m} \omega_{\ell} \bigO(e^{-cn}) \bigg) = \bigO(e^{-cn}), \quad \mbox{as } n \to +\infty.
\end{align*}
Since the above error terms on the left of the second equality are independent of $u_{1},\dots,u_{2m}$, the claim follows.
\end{proof}

The following lemma will be used to obtain the large $n$ asymptotics of $S_{3}$.

\begin{lemma}\label{lemma:Riemann sum NEW}[Taken from \cite[Lemma 3.4]{Charlier 2d gap}]
Let $A = A(n), a_{0}=a_0(n), B = B(n), b_{0} = b_0(n)$ be bounded functions of $n \in \{1,2,\dots\}$, such that 
\begin{align*}
& a_{n} := An + a_{0} \qquad \mbox{ and } \qquad b_{n} := Bn + b_{0}
\end{align*}
are integers. Assume also that $B-A$ is positive and remains bounded away from $0$. Let $f$ be a function independent of $n$, and which is $C^{4}([\min\{\frac{a_{n}}{n},A\},\max\{\frac{b_{n}}{n},B\}])$ for all $n\in \{1,2,\dots\}$. Then as $n \to + \infty$, we have
\begin{align}
&  \sum_{j=a_{n}}^{b_{n}}f(\tfrac{j}{n}) = n \int_{A}^{B}f(x)dx + \frac{(1-2a_{0})f(A)+(1+2b_{0})f(B)}{2}  \nonumber \\
& + \frac{(-1+6a_{0}-6a_{0}^{2})f'(A)+(1+6b_{0}+6b_{0}^{2})f'(B)}{12n}+ \frac{(-a_{0}+3a_{0}^{2}-2a_{0}^{3})f''(A)+(b_{0}+3b_{0}^{2}+2b_{0}^{3})f''(B)}{12n^{2}} \nonumber \\
& + \bigO \bigg( \frac{\mathfrak{m}_{A,n}(f''')+\mathfrak{m}_{B,n}(f''')}{n^{3}} + \sum_{j=a_{n}}^{b_{n}-1} \frac{\mathfrak{m}_{j,n}(f'''')}{n^{4}} \bigg), \label{sum f asymp gap NEW}
\end{align}
where, for a given function $g$ continuous on $[\min\{\frac{a_{n}}{n},A\},\max\{\frac{b_{n}}{n},B\}]$,
\begin{align*}
\mathfrak{m}_{A,n}(g) := \max_{x \in [\min\{\frac{a_{n}}{n},A\},\max\{\frac{a_{n}}{n},A\}]}|g(x)|, \quad \mathfrak{m}_{B,n}(g) := \max_{x \in [\min\{\frac{b_{n}}{n},B\},\max\{\frac{b_{n}}{n},B\}]}|g(x)|,
\end{align*}
and for $j \in \{a_{n},\dots,b_{n}-1\}$, $\mathfrak{m}_{j,n}(g) := \max_{x \in [\frac{j}{n},\frac{j+1}{n}]}|g(x)|$.
\end{lemma}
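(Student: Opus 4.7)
The plan is to apply the classical Euler--Maclaurin summation formula to $g(x) := f(x/n)$ on the integer interval $[a_{n},b_{n}]$ and then Taylor-expand every resulting quantity around the fixed limits $A$ and $B$, noting that $a_{n}/n = A + a_{0}/n$ and $b_{n}/n = B + b_{0}/n$ differ from $A,B$ only by shifts of size $O(1/n)$.

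First I would write
$$\sum_{j=a_{n}}^{b_{n}} f(j/n) = n\int_{a_{n}/n}^{b_{n}/n} f(y)\,dy + \frac{f(a_{n}/n)+f(b_{n}/n)}{2} + \frac{f'(b_{n}/n)-f'(a_{n}/n)}{12\,n} + R,$$
where $R$ is the Euler--Maclaurin remainder after stopping at the first Bernoulli correction. Representing $R$ in Peano-kernel form on each unit subinterval $[j,j+1]$ and using $g^{(4)}(x)=f^{(4)}(x/n)/n^{4}$, I would bound each local contribution by a constant times $\mathfrak{m}_{j,n}(f'''')/n^{4}$; summing over $j$ reproduces the last term in \eqref{sum f asymp gap NEW}. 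The hypothesis $f\in C^{4}$ is exactly what this step requires.

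Next I would expand the main integral and the two boundary evaluations about $A$ and $B$. For the integral I would split
$$n\int_{A+a_{0}/n}^{B+b_{0}/n} f(y)\,dy = n\int_{A}^{B} f(y)\,dy + n\int_{B}^{B+b_{0}/n} f(y)\,dy - n\int_{A}^{A+a_{0}/n} f(y)\,dy$$
and expand $f$ around $A$ and $B$ with integral-form Taylor remainder, obtaining explicit contributions up to order $n^{-2}$ and a remainder controlled by $\bigl(\mathfrak{m}_{A,n}(f''')+\mathfrak{m}_{B,n}(f''')\bigr)/n^{3}$. In the same spirit I would expand $f(A+a_{0}/n),f(B+b_{0}/n)$ to order $n^{-2}$ and $f'(A+a_{0}/n),f'(B+b_{0}/n)$ to order $n^{-1}$, with error of the same type.

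The last step is purely bookkeeping: collect all contributions power of $1/n$ by power of $1/n$ and verify that the coefficients combine to
$(1-2a_{0})/2$ and $(1+2b_{0})/2$ in front of $f(A)$ and $f(B)$,
$(-1+6a_{0}-6a_{0}^{2})/12$ and $(1+6b_{0}+6b_{0}^{2})/12$ in front of $f'(A)$ and $f'(B)$,
and $(-a_{0}+3a_{0}^{2}-2a_{0}^{3})/12$ and $(b_{0}+3b_{0}^{2}+2b_{0}^{3})/12$ in front of $f''(A)$ and $f''(B)$. The main obstacle I anticipate is precisely this bookkeeping: three different expansions (of the integral, of the boundary values, and of the first-derivative boundary terms) must interlock so that the final error never involves any derivative of order higher than four, which is what allows the $C^{4}$ regularity hypothesis to be sharp. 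Everything else is either standard Euler--Maclaurin with Peano-kernel remainder or routine Taylor-with-remainder, and the boundedness of $a_{0}(n),b_{0}(n)$ together with $B-A$ staying bounded away from $0$ ensures that all implicit constants are uniform in $n$.
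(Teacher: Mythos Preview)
Your proposal is correct. The paper itself does not give a proof of this lemma---it is stated with the attribution ``Taken from \cite[Lemma~3.4]{Charlier 2d gap}'' and then simply used as a black box. Your route (Euler--Maclaurin with $k=2$ applied to $g(x)=f(x/n)$ on the integer interval $[a_n,b_n]$, followed by Taylor expansion of the endpoint contributions around $A$ and $B$) is the natural argument; the coefficient bookkeeping you outline checks out, and the Peano-kernel form of the remainder together with $g^{(4)}(x)=n^{-4}f^{(4)}(x/n)$ gives exactly the $\sum_j \mathfrak{m}_{j,n}(f'''')/n^{4}$ error, while the Lagrange remainders in the three endpoint expansions produce the $\mathfrak{m}_{A,n}(f''')/n^{3}$ and $\mathfrak{m}_{B,n}(f''')/n^{3}$ terms.
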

Following the method of \cite{Charlier 2d gap}, we define
\begin{align} \label{def of theta n eps hard}
\theta_{k,+}^{(n,\epsilon)} = \bigg( \frac{b n \rho_{k}^{2b}}{1-\epsilon}-\alpha \bigg)-\bigg\lfloor \frac{b n \rho_{k}^{2b}}{1-\epsilon}-\alpha \bigg\rfloor, \quad \theta_{k,-}^{(n,\epsilon)} = \bigg\lceil \frac{b n \rho_{k}^{2b}}{1+\epsilon}-\alpha \bigg\rceil-\bigg( \frac{b n \rho_{k}^{2b}}{1+\epsilon}-\alpha \bigg), \quad k=1,2,
\end{align}
and we split $S_{3}$ in two parts
\begin{align}\label{lol30}
S_{3} = S_{3}^{(1)}+S_{3}^{(2)},
\end{align}
where 
\begin{align}\label{def of S3p1p and S3p2p}
& S_{3}^{(1)} = \sum_{j=j_{1,+}+1}^{\lfloor j_{\star} \rfloor} \ln \bigg(1+ \sum_{\ell=1}^{2m} \omega_{\ell} F_{n,j,\ell} \bigg), \qquad S_{3}^{(2)} = \sum_{j=\lfloor j_{\star} \rfloor+1}^{j_{2,-}-1} \ln \bigg(1+ \sum_{\ell=1}^{2m} \omega_{\ell} F_{n,j,\ell} \bigg),
\end{align}
with
\begin{align}\label{def of jkstar and tkstar}
j_{\star} := n \sigma_{\star} -\alpha,
\end{align}
and where $\sigma_{\star}$ is defined in \eqref{def of taustar}. Define also
\begin{align}\label{def of theta star}
\theta_{\star} = j_{\star}- \lfloor j_{\star} \rfloor.
\end{align}
 The identity
\begin{align}\label{asymp etajk-etajkm1}
\frac{a_{j}(\eta_{j,2}^{2}-\eta_{j,1}^{2})}{2} = 2(j_{\star}-j) \ln\bigg( \frac{\rho_{2}}{\rho_{1}} \bigg),
\end{align}
implies that $\eta_{j,2}^{2}-\eta_{j,1}^{2}$ is positive for $j\in\{j_{1,+}+1,\dots,\lfloor j_{\star} \rfloor\}$ and negative for $j \in \{\lfloor j_{\star} \rfloor+1,\dots,j_{2,-}-1\}$.
\begin{lemma}\label{lemma:S3p1p asymp hard}
We can choose $M'$ sufficiently large such that the following holds. For any fixed $x_{1},\dots,x_{2m} \in \mathbb{R}$, there exists $\delta > 0$ such that 
\begin{align}\nonumber
& S_{3}^{(1)} = n \int_{\frac{b\rho_{1}^{2b}}{1-\epsilon}}^{\sigma_{\star}} f_{1}(x)dx + \bigg( \alpha - \frac{1}{2} + \theta_{1,+}^{(n,\epsilon)} \bigg) f_{1}\Big( \frac{b \rho_{1}^{2b}}{1-\epsilon} \Big) + \bigg( \frac{1}{2} - \alpha - \theta_{\star} \bigg) f_{1}(\sigma_{\star}) + \int_{\frac{b\rho_{1}^{2b}}{1-\epsilon}}^{\sigma_{\star}}f(x)dx 
	\\ \label{S31expansion}
& + \sum_{j=0}^{+\infty} \ln \bigg\{ 1 - \frac{\mathsf{T}_{0}(\sigma_{\star})+\hat{\mathsf{T}}_{0}(\sigma_{\star})}{1 + \mathsf{T}_{0}(\sigma_{\star}) + \hat{\mathsf{T}}_{0}(b\rho_{2}^{2b})} \frac{\frac{\sigma_{\star}-b\rho_{1}^{2b}}{b\rho_{2}^{2b}-\sigma_{\star}}(\frac{\rho_{2}}{\rho_{1}})^{-2(\theta_{\star}+j)}}{1+\frac{\sigma_{\star}-b\rho_{1}^{2b}}{b\rho_{2}^{2b}-\sigma_{\star}}(\frac{\rho_{2}}{\rho_{1}})^{-2(\theta_{\star}+j)}} \bigg\} + \bigO\bigg(\frac{(\ln n)^{2}}{n}\bigg)
\end{align}
as $n \to +\infty$ uniformly for $u_{1} \in \{z \in \mathbb{C}: |z-x_{1}|\leq \delta\},\dots,u_{2m} \in \{z \in \mathbb{C}: |z-x_{2m}|\leq \delta\}$, where $f_{1}(x) := \ln \big( 1+\mathsf{T}_{0}(x) + \hat{\mathsf{T}}_{0}(b\rho_{2}^{2b}) \big)$ and $f$ and $\mathsf{T}_{j}, \hat{\mathsf{T}}_{j}$ are defined in \eqref{def of f hard} and \eqref{def of T and That}.
\end{lemma}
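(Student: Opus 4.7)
The plan is to expand the summand $\ln(1+\sum_{\ell}\omega_{\ell}F_{n,j,\ell})$ uniformly in $j$ as a smooth function of $x:=(j+\alpha)/n$ plus an oscillatory correction that is negligible outside a window of size $\bigO(\ln n)$ around $j=\lfloor j_{\star}\rfloor$, and then sum each regime separately.

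Throughout the range $j_{1,+}+1\le j\le\lfloor j_{\star}\rfloor$, the parameters $\lambda_{j,1}$ and $\hat\lambda_{j,\ell}$ ($\ell\le m$) lie in $[b\rho_{1}^{2b}/\sigma_{\star},1-\epsilon]$ while $\lambda_{j,2}$ and $\hat\lambda_{j,\ell}$ ($\ell>m$) stay bounded away from $1$ from above. Inserting the large-parameter expansions of Lemma \ref{lemma: uniform} into \eqref{Fnjellexpression}, using the Taylor identities
\begin{align*}
\tfrac{a_{j}}{2}(\hat\eta_{j,\ell}^{2}-\eta_{j,1}^{2}) = \tfrac{t_{\ell}}{b}(x-b\rho_{1}^{2b})+\bigO(1/n)\quad(\ell\le m),
\end{align*}
\begin{align*}
\tfrac{a_{j}}{2}(\hat\eta_{j,\ell}^{2}-\eta_{j,2}^{2}) = \tfrac{t_{\ell}}{b}(b\rho_{2}^{2b}-x)+\bigO(1/n)\quad(\ell>m),
\end{align*}
and rewriting \eqref{asymp etajk-etajkm1} as $e^{-a_{j}(\eta_{j,2}^{2}-\eta_{j,1}^{2})/2}=(\rho_{2}/\rho_{1})^{-2(j_{\star}-j)}$, a direct computation yields
\begin{align*}
1+\sum_{\ell=1}^{2m}\omega_{\ell}F_{n,j,\ell} = M(x)\bigg[1+\frac{f(x)}{n}-\frac{\mathsf{T}_{0}(x)+\hat{\mathsf{T}}_{0}(x)}{M(x)}\cdot\frac{Y_{j}}{1+Y_{j}}+\bigO(n^{-2})\bigg],
\end{align*}
where $M(x):=1+\mathsf{T}_{0}(x)+\hat{\mathsf{T}}_{0}(b\rho_{2}^{2b})$ and $Y_{j}:=\frac{1-\lambda_{j,1}}{\lambda_{j,2}-1}(\rho_{2}/\rho_{1})^{-2(j_{\star}-j)}$. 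The key algebraic input is that evaluating the prefactor in $Y_{j}$ at $j=j_{\star}$ and using \eqref{def of taustar} gives $\frac{\sigma_{\star}-b\rho_{1}^{2b}}{b\rho_{2}^{2b}-\sigma_{\star}}$, which is precisely the coefficient appearing inside the infinite sum of \eqref{S31expansion}.

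Next I would fix $K=K(n)=\lfloor C\ln n\rfloor$ with $C$ so large that $(\rho_{2}/\rho_{1})^{-2K}=\bigO(n^{-2})$, and split $S_{3}^{(1)}=\Sigma_{\mathrm{bulk}}+\Sigma_{\mathrm{edge}}$ at $j=\lfloor j_{\star}\rfloor-K$. In $\Sigma_{\mathrm{bulk}}$ the factor $Y_{j}/(1+Y_{j})$ is $\bigO(n^{-2})$, so the summand reduces to $\ln M(x)+f(x)/n+\bigO(n^{-2})$. Applying Lemma \ref{lemma:Riemann sum NEW} to $\sum \ln M((j+\alpha)/n)$ with $A=b\rho_{1}^{2b}/(1-\epsilon)$ and $B=(\lfloor j_{\star}\rfloor-K+\alpha)/n$ produces the integral $n\int_{A}^{B}\ln M\,dx$ plus the trapezoidal boundary correction at $A$ matching $(\alpha-\tfrac12+\theta_{1,+}^{(n,\epsilon)})f_{1}(A)$, while the $f/n$ piece contributes $\int_{A}^{B}f\,dx$, with all remainders $\bigO((\ln n)/n)$. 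In $\Sigma_{\mathrm{edge}}$, substituting $j'=\lfloor j_{\star}\rfloor-j$ so that $j_{\star}-j=\theta_{\star}+j'$, then freezing $\mathsf{T}_{0},\hat{\mathsf{T}}_{0},M,f$ at $x=\sigma_{\star}$ (error $\bigO(K/n)$ per term and $\bigO(K^{2}/n)=\bigO((\ln n)^{2}/n)$ in total), turns the edge sum into the partial sum of the infinite series in \eqref{S31expansion} plus a contribution $K\ln M(\sigma_{\star})+\bigO(K/n)$ from the smooth part; the tail $j'\ge K$ added to complete the series contributes only $\bigO((\rho_{2}/\rho_{1})^{-2K})=\bigO(n^{-2})$. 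The Taylor identity $n\int_{B}^{\sigma_{\star}}\ln M\,dx = K\ln M(\sigma_{\star})+\bigO(K^{2}/n)$ extends the main integral up to $\sigma_{\star}$, and the trapezoidal correction of Lemma \ref{lemma:Riemann sum NEW} at $B$, combined via $\theta_{\star}=j_{\star}-\lfloor j_{\star}\rfloor$, yields the upper endpoint term $(\tfrac12-\alpha-\theta_{\star})f_{1}(\sigma_{\star})$; the analogous matching extends $\int f$ up to $\sigma_{\star}$.

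The main obstacle will be controlling the three error sources simultaneously --- the subleading terms in the Temme expansion that assemble into the function $f$, the $\bigO((\ln n)^{2}/n)$ freezing-at-$\sigma_{\star}$ error in the edge, and the matching between the upper endpoint of the bulk integral and the smooth part of the edge --- so that no spurious terms remain and the overall error is genuinely $\bigO((\ln n)^{2}/n)$ uniformly in $u_{1},\ldots,u_{2m}$ on small discs around $x_{1},\ldots,x_{2m}$. Uniformity follows from the analyticity of $F_{n,j,\ell}$ in $\vec{u}$ and from Lemma \ref{T0lemma}, which ensures that $M(x)$ and $1+Y_{j}$ stay bounded away from zero on the entire parameter polydisc.
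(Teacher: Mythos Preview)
Your proof is correct and uses the same ingredients as the paper (Temme asymptotics for $\gamma(a,z)$, the Riemann-sum Lemma~\ref{lemma:Riemann sum NEW}, and localization plus freezing near $j_\star$), but the decomposition is organized differently. The paper first splits the \emph{summand} algebraically into $\mathcal{S}_1=\sum f_1(j/n)$, $\mathcal{S}_3=\tfrac1n\sum f(j/n)$ (smooth) and an oscillatory $\mathcal{S}_2$, each taken over the full range $j_{1,+}+1\le j\le\lfloor j_\star\rfloor$; a single application of Lemma~\ref{lemma:Riemann sum NEW} to $\mathcal{S}_1$ (and to $\mathcal{S}_3$) over this full range then produces both endpoint corrections $(\alpha-\tfrac12+\theta_{1,+}^{(n,\epsilon)})f_1(\tfrac{b\rho_1^{2b}}{1-\epsilon})$ and $(\tfrac12-\alpha-\theta_\star)f_1(\sigma_\star)$ directly, and only $\mathcal{S}_2$ needs the $M'\ln n$ window and the freezing at $\sigma_\star$. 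Your bulk/edge range-split is equivalent but forces you to match at the intermediate cut $j=\lfloor j_\star\rfloor-K$, which is precisely the bookkeeping you flag as ``the main obstacle''; the paper's summand-first organization sidesteps it. One small correction: the displayed $\bigO(n^{-2})$ in your expansion of $1+\sum_\ell\omega_\ell F_{n,j,\ell}$ should be $\bigO(n^{-2})+\bigO\!\big((\rho_2/\rho_1)^{-2(j_\star-j)}/n\big)$, since the $1/n$ subleading pieces interact with the oscillatory denominator $(1+Y_j)^{-1}$; this extra term sums to $\bigO((\ln n)/n)$ and does not affect the final bound.
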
 
\begin{proof}
Using (\ref{rhoinequalities}), we see that for $j \in \{j_{1,+}+1,\ldots,j_{2,-}-1\}$ and $\ell \in \{1,\ldots,m\}$, $1-\hat\lambda_{j,\ell}$ and $1-\lambda_{j,1}$ are positive and remain bounded away from $0$, while for $j \in \{j_{1,+}+1,\ldots,j_{2,-}-1\}$ and $\ell \in \{m+1,\ldots,2m\}$, $\hat\lambda_{j,\ell}-1$ and $\lambda_{j,2}-1$ are positive and remain bounded away from $0$. 

Hence, using Lemma \ref{lemma: asymp of gamma for lambda one over sqrt away from 1} $(i)$ and $(ii)$, as $n \to + \infty$ we have
\begin{align}
& \frac{\gamma(a_j,a_j \hat{\lambda}_{j,\ell})}{\Gamma(a_j)} = \frac{e^{-\frac{a_{j}\hat\eta_{j,\ell}^{2}}{2}}}{\sqrt{2\pi}} \bigg\{ \frac{1}{1-\hat{\lambda}_{j,\ell}}\frac{1}{\sqrt{a_{j}}} + \frac{1+10\hat{\lambda}_{j,\ell}+\hat{\lambda}_{j,\ell}^{2}}{12(\hat{\lambda}_{j,\ell}-1)^{3}}\frac{1}{a_{j}^{3/2}} + \bigO(n^{-5/2}) \bigg\}, & & \ell=1,\ldots,m, \nonumber \\
& \frac{\gamma(a_j,a_j \hat{\lambda}_{j,\ell})}{\Gamma(a_j)} = 1+\frac{e^{-\frac{a_{j}\hat\eta_{j,\ell}^{2}}{2}}}{\sqrt{2\pi}} \bigg\{ \frac{-1}{\hat{\lambda}_{j,\ell}-1}\frac{1}{\sqrt{a_{j}}} + \frac{1+10\hat{\lambda}_{j,\ell}+\hat{\lambda}_{j,\ell}^{2}}{12(\hat{\lambda}_{j,\ell}-1)^{3}}\frac{1}{a_{j}^{3/2}} + \bigO(n^{-5/2}) \bigg\}, & & \ell=m+1,\ldots,2m, 
	\nonumber \\
& \frac{\gamma(a_j,a_j \lambda_{j,1})}{\Gamma(a_j)} = \frac{e^{-\frac{a_{j}\eta_{j,1}^{2}}{2}}}{\sqrt{2\pi}} \bigg\{ \frac{1}{1-\lambda_{j,1}}\frac{1}{\sqrt{a_{j}}} + \frac{1+10\lambda_{j,1}+\lambda_{j,1}^{2}}{12(\lambda_{j,1}-1)^{3}}\frac{1}{a_{j}^{3/2}} + \bigO(n^{-5/2}) \bigg\}, \nonumber \\
& \frac{\gamma(a_j,a_j \lambda_{j,2})}{\Gamma(a_j)} = 1+\frac{e^{-\frac{a_{j}\eta_{j,2}^{2}}{2}}}{\sqrt{2\pi}} \bigg\{ \frac{-1}{\lambda_{j,2}-1}\frac{1}{\sqrt{a_{j}}} + \frac{1+10\lambda_{j,2}+\lambda_{j,2}^{2}}{12(\lambda_{j,2}-1)^{3}}\frac{1}{a_{j}^{3/2}} + \bigO(n^{-5/2}) \bigg\}, \label{lol2}
\end{align}
uniformly for $j \in \{j_{1,+}+1,\dots,j_{2,-}-1\}$. Moreover, as $n \to + \infty$,
\begin{align}
& e^{-\frac{a_{j} \hat{\eta}_{j,\ell}^{2}}{2}} = e^{-\frac{a_{j}\eta_{j,1}^{2}}{2}}e^{-\frac{t_{\ell}}{b}(\frac{j}{n}-b\rho_{1}^{2b})} \bigg\{ 1 - \frac{t_{\ell}^{2}j/n+2t_{\ell}\alpha}{2bn} + \bigO(n^{-2}) \bigg\}, & & \ell=1,\dots,m, \nonumber \\
& e^{-\frac{a_{j} \hat{\eta}_{j,\ell}^{2}}{2}} = e^{-\frac{a_{j}\eta_{j,2}^{2}}{2}}e^{-\frac{t_{\ell}}{b}(b\rho_{2}^{2b}-\frac{j}{n})} \bigg\{ 1 - \frac{t_{\ell}^{2}j/n-2t_{\ell}\alpha}{2bn} + \bigO(n^{-2}) \bigg\}, & & \ell=m+1,\dots,2m, \label{lol3}
\end{align}
uniformly for $j \in \{j_{1,+}+1,\dots,j_{2,-}-1\}$. Hence, after a long but direct computation,  we obtain
\begin{align*}
& F_{n,j,l} = \frac{e^{-\frac{t_{\ell}}{b}(j/n-b\rho_{1}^{2b})}}{1+\frac{j/n-b\rho_{1}^{2b}}{b\rho_{2}^{2b}-j/n}\big(\frac{\rho_{2}}{\rho_{1}}\big)^{-2(j_{\star}-j)}} + t_{\ell}e^{-\frac{t_{\ell}}{b}(j/n-b\rho_{1}^{2b})} \frac{(j/n)^{2}t_{\ell}+2b^{2}\rho_{1}^{2b} - t_{\ell}b\rho_{1}^{2b}j/n+2\alpha(j/n-b\rho_{1}^{2b})}{-2b(j/n-b\rho_{1}^{2b})n} , \\
& + \bigO\bigg( \frac{\big(\frac{\rho_{2}}{\rho_{1}}\big)^{-2(j_{\star}-j)}}{n} + \frac{1}{n^{2}} \bigg), \hspace{7.55cm} \ell=1,\dots,m, \\
& F_{n,j,l} = 1 - \frac{e^{-\frac{t_{\ell}}{b}(b\rho_{2}^{2b} - j/n)} \frac{j/n-b\rho_{1}^{2b}}{b\rho_{2}^{2b}-j/n}\big(\frac{\rho_{2}}{\rho_{1}}\big)^{-2(j_{\star}-j)}}{1+\frac{j/n-b\rho_{1}^{2b}}{b\rho_{2}^{2b}-j/n}\big(\frac{\rho_{2}}{\rho_{1}}\big)^{-2(j_{\star}-j)}} + \bigO\bigg( \frac{\big(\frac{\rho_{2}}{\rho_{1}}\big)^{-2(j_{\star}-j)}}{n} + \frac{1}{n^{2}} \bigg), \quad \ell=m+1,\dots,2m,
\end{align*}
as $n \to + \infty$ uniformly for $j \in \{j_{1,+}+1,\dots,\lfloor j_{\star} \rfloor\}$. Substituting the above into \eqref{def of S3p1p and S3p2p} yields
\begin{align*}
& S_{3}^{(1)} = \sum_{j=j_{1,+}+1}^{\lfloor j_{\star} \rfloor} \ln \bigg\{ 1 + \frac{\mathsf{T}_{0}(j/n)}{1+\frac{j/n-b\rho_{1}^{2b}}{b\rho_{2}^{2b}-j/n}(\frac{\rho_{2}}{\rho_{1}})^{-2(j_{\star}-j)}} + \hat{\mathsf{T}}_{0}(b\rho_{2}^{2b}) - \frac{\hat{\mathsf{T}}_{0}(j/n) \frac{j/n-b\rho_{1}^{2b}}{b\rho_{2}^{2b}-j/n}(\frac{\rho_{2}}{\rho_{1}})^{-2(j_{\star}-j)}}{1+\frac{j/n-b\rho_{1}^{2b}}{b\rho_{2}^{2b}-j/n}(\frac{\rho_{2}}{\rho_{1}})^{-2(j_{\star}-j)}} \\
& + \frac{(j/n)^{2} \mathsf{T}_{2}(j/n) + 2b^{2}\rho_{1}^{2b}\mathsf{T}_{1}(j/n) - b\rho_{1}^{2b} j/n \; \mathsf{T}_{2}(j/n)+2\alpha(j/n-b\rho_{1}^{2b})\mathsf{T}_{1}(j/n)}{-2b(j/n-b\rho_{1}^{2b})n} \\
& + \sum_{\ell=1}^{2m} \omega_{\ell} \bigO\bigg( \frac{\big(\frac{\rho_{2}}{\rho_{1}}\big)^{-2(j_{\star}-j)}}{n} + \frac{1}{n^{2}} \bigg)  \bigg\}
\end{align*}
as $n \to + \infty$, where the above error terms are independent of $u_{1},\dots,u_{2m}$ and uniform for $j \in \{j_{1,+}+1,\dots,\lfloor j_{\star} \rfloor\}$. Expanding further, we obtain
\begin{align*}
S_{3}^{(1)} = \mathcal{S}_{1} + \mathcal{S}_{2} + \frac{1}{n}\mathcal{S}_{3} + \sum_{j=j_{1,+}+1}^{\lfloor j_{\star} \rfloor} \bigO\bigg( \frac{\big(\frac{\rho_{2}}{\rho_{1}}\big)^{-2(j_{\star}-j)}}{n} + \frac{1}{n^{2}} \bigg), \qquad \mbox{as } n \to + \infty,
\end{align*}
where
\begin{align*}
& \mathcal{S}_{1} = \sum_{j=j_{1,+}+1}^{\lfloor j_{\star} \rfloor} \ln \Big\{ 1 + \mathsf{T}_{0}(j/n) + \hat{\mathsf{T}}_{0}(b\rho_{2}^{2b}) \Big\}, \\
& \mathcal{S}_{2} = \sum_{j=j_{1,+}+1}^{\lfloor j_{\star} \rfloor} \ln \bigg\{ 1 - \frac{\mathsf{T}_{0}(j/n)+\hat{\mathsf{T}}_{0}(j/n)}{1 + \mathsf{T}_{0}(j/n) + \hat{\mathsf{T}}_{0}(b\rho_{2}^{2b})} \frac{\frac{j/n-b\rho_{1}^{2b}}{b\rho_{2}^{2b}-j/n}(\frac{\rho_{2}}{\rho_{1}})^{-2(j_{\star}-j)}}{1+\frac{j/n-b\rho_{1}^{2b}}{b\rho_{2}^{2b}-j/n}(\frac{\rho_{2}}{\rho_{1}})^{-2(j_{\star}-j)}} \bigg\}, \\
& \mathcal{S}_{3} = \sum_{j=j_{1,+}+1}^{\lfloor j_{\star} \rfloor} \frac{(j/n)^{2} \mathsf{T}_{2}(j/n) + 2b^{2}\rho_{1}^{2b}\mathsf{T}_{1}(j/n) - b\rho_{1}^{2b} j/n \; \mathsf{T}_{2}(j/n)+2\alpha(j/n-b\rho_{1}^{2b})\mathsf{T}_{1}(j/n)}{-2b(j/n-b\rho_{1}^{2b})(1 + \mathsf{T}_{0}(j/n) + \hat{\mathsf{T}}_{0}(b\rho_{2}^{2b}))},
\end{align*}
and we have used that
\begin{align*}
\bigg(1 - \frac{\mathsf{T}_{0}(j/n)+\hat{\mathsf{T}}_{0}(j/n)}{1 + \mathsf{T}_{0}(j/n) + \hat{\mathsf{T}}_{0}(b\rho_{2}^{2b})} \frac{\frac{j/n-b\rho_{1}^{2b}}{b\rho_{2}^{2b}-j/n}(\frac{\rho_{2}}{\rho_{1}})^{-2(j_{\star}-j)}}{1+\frac{j/n-b\rho_{1}^{2b}}{b\rho_{2}^{2b}-j/n}(\frac{\rho_{2}}{\rho_{1}})^{-2(j_{\star}-j)}}\bigg)^{-1} = 1 + \bigO\bigg( \Big(\frac{\rho_{2}}{\rho_{1}}\Big)^{-2(j_{\star}-j)} \bigg)
\end{align*}
uniformly for $j \in \{j_{1,+}+1,\dots,\lfloor j_{\star} \rfloor\}$ to obtain the expression for $\mathcal{S}_{3}$. Clearly, $M'$ can be chosen large enough such that
\begin{align*}
& \sum_{j=j_{1,+}+1}^{\lfloor j_{\star} \rfloor} \bigO\bigg( \frac{\big(\frac{\rho_{2}}{\rho_{1}}\big)^{-2(j_{\star}-j)}}{n} + \frac{1}{n^{2}} \bigg) = \bigO (n^{-1}) + \sum_{j=j_{1,+}+1}^{\lfloor j_{\star} \rfloor} \bigO\bigg( \frac{\big(\frac{\rho_{2}}{\rho_{1}}\big)^{-2(j_{\star}-j)}}{n} \bigg) \\
& = \bigO (n^{-1}) + \bigO(n^{-100}) + \sum_{j=\lfloor j_{\star}-M' \ln n \rfloor}^{\lfloor j_{\star} \rfloor} \bigO\bigg( \frac{\big(\frac{\rho_{2}}{\rho_{1}}\big)^{-2(j_{\star}-j)}}{n} \bigg) = \bigO\bigg( \frac{\ln n}{n} \bigg).
\end{align*}
Also, we can express $\mathcal{S}_1$ and $\mathcal{S}_3$ in terms of the functions $f_1$ and $f$ as follows:
$$\mathcal{S}_{1} = \sum_{j=j_{1,+}+1}^{\lfloor j_{\star} \rfloor} f_1(j/n), \qquad
\mathcal{S}_{3} = \sum_{j=j_{1,+}+1}^{\lfloor j_{\star} \rfloor} f(j/n).$$
These sums can be expanded using Lemma \ref{lemma:Riemann sum NEW} (with $A=\frac{b\rho_{1}^{2b}}{1-\epsilon}$, $a_{0} = 1-\alpha-\theta_{1,+}^{(n,\epsilon)}$, $B=\sigma_{\star}$ and $b_{0} = -\alpha-\theta_{\star}$); this gives the terms involving $f_1$ and $f$ on the right-hand side of (\ref{S31expansion}). Thus it only remains to expand $\mathcal{S}_{2}$ (the analysis required for $\mathcal{S}_{2}$ is similar but different from \cite[Lemma 3.6]{Charlier 2d gap}). Let $s_{2,j}$ be the summand of $\mathcal{S}_{2}$. Since $s_{2,j} = \bigO((\frac{\rho_{2}}{\rho_{1}})^{-2(j_{\star}-j)})$ as $n \to + \infty$ uniformly for $j \in \{j_{1,+}+1,\dots,\lfloor j_{\star} \rfloor\}$, we have
\begin{align*}
\mathcal{S}_{2} = \sum_{j=\lfloor j_{\star}-M' \ln n \rfloor}^{\lfloor j_{\star} \rfloor} s_{2,j} + \bigO(n^{-100}), \qquad \mbox{as } n \to + \infty,
\end{align*}
provided $M'$ is chosen large enough. Furthermore, since $\mathsf{T}_{0}$ and $\hat{\mathsf{T}}_{0}$ are analytic at $\sigma_{\star}$,
\begin{align*}
\mathcal{S}_{2} & = \sum_{j=\lfloor j_{\star}-M' \ln n \rfloor}^{\lfloor j_{\star} \rfloor} \bigg[ \ln \bigg\{ 1 - \frac{\mathsf{T}_{0}(\sigma_{\star})+\hat{\mathsf{T}}_{0}(\sigma_{\star})}{1 + \mathsf{T}_{0}(\sigma_{\star}) + \hat{\mathsf{T}}_{0}(b\rho_{2}^{2b})} \frac{\frac{\sigma_{\star}-b\rho_{1}^{2b}}{b\rho_{2}^{2b}-\sigma_{\star}}(\frac{\rho_{2}}{\rho_{1}})^{-2(j_{\star}-j)}}{1+\frac{\sigma_{\star}-b\rho_{1}^{2b}}{b\rho_{2}^{2b}-\sigma_{\star}}(\frac{\rho_{2}}{\rho_{1}})^{-2(j_{\star}-j)}} \bigg\} + \bigO\big(\sigma_{\star}-j/n\big) \bigg] + \bigO(n^{-100})
	\\
& = \sum_{j=\lfloor j_{\star}-M' \ln n \rfloor}^{\lfloor j_{\star} \rfloor} \ln \bigg\{ 1 - \frac{\mathsf{T}_{0}(\sigma_{\star})+\hat{\mathsf{T}}_{0}(\sigma_{\star})}{1 + \mathsf{T}_{0}(\sigma_{\star}) + \hat{\mathsf{T}}_{0}(b\rho_{2}^{2b})} \frac{\frac{\sigma_{\star}-b\rho_{1}^{2b}}{b\rho_{2}^{2b}-\sigma_{\star}}(\frac{\rho_{2}}{\rho_{1}})^{-2(j_{\star}-j)}}{1+\frac{\sigma_{\star}-b\rho_{1}^{2b}}{b\rho_{2}^{2b}-\sigma_{\star}}(\frac{\rho_{2}}{\rho_{1}})^{-2(j_{\star}-j)}} \bigg\} + \bigO\bigg(\frac{(\ln n)^{2}}{n}\bigg) \\
& = \sum_{j=-\infty}^{\lfloor j_{\star} \rfloor} \ln \bigg\{ 1 - \frac{\mathsf{T}_{0}(\sigma_{\star})+\hat{\mathsf{T}}_{0}(\sigma_{\star})}{1 + \mathsf{T}_{0}(\sigma_{\star}) + \hat{\mathsf{T}}_{0}(b\rho_{2}^{2b})} \frac{\frac{\sigma_{\star}-b\rho_{1}^{2b}}{b\rho_{2}^{2b}-\sigma_{\star}}(\frac{\rho_{2}}{\rho_{1}})^{-2(j_{\star}-j)}}{1+\frac{\sigma_{\star}-b\rho_{1}^{2b}}{b\rho_{2}^{2b}-\sigma_{\star}}(\frac{\rho_{2}}{\rho_{1}})^{-2(j_{\star}-j)}} \bigg\} + \bigO\bigg(\frac{(\ln n)^{2}}{n}\bigg)
\end{align*}
as $n \to + \infty$. Changing now indices, we find
\begin{align*}
\mathcal{S}_{2} = \sum_{j=0}^{+\infty} \ln \bigg\{ 1 - \frac{\mathsf{T}_{0}(\sigma_{\star})+\hat{\mathsf{T}}_{0}(\sigma_{\star})}{1 + \mathsf{T}_{0}(\sigma_{\star}) + \hat{\mathsf{T}}_{0}(b\rho_{2}^{2b})} \frac{\frac{\sigma_{\star}-b\rho_{1}^{2b}}{b\rho_{2}^{2b}-\sigma_{\star}}(\frac{\rho_{2}}{\rho_{1}})^{-2(\theta_{\star}+j)}}{1+\frac{\sigma_{\star}-b\rho_{1}^{2b}}{b\rho_{2}^{2b}-\sigma_{\star}}(\frac{\rho_{2}}{\rho_{1}})^{-2(\theta_{\star}+j)}} \bigg\} + \bigO\bigg(\frac{(\ln n)^{2}}{n}\bigg) \quad \mbox{as } n \to + \infty.
\end{align*}
The claim follows after a computation combining the asymptotics of $\mathcal{S}_{1},\mathcal{S}_{2},\mathcal{S}_{3}$.
\end{proof}
\begin{lemma}\label{lemma:S3p2p asymp hard}
We can choose $M'$ sufficiently large such that the following holds. For any fixed $x_{1},\dots,x_{2m} \in \mathbb{R}$, there exists $\delta > 0$ such that 
\begin{align}\nonumber
& S_{3}^{(2)} = n \int_{\sigma_{\star}}^{\frac{b\rho_{2}^{2b}}{1+\epsilon}} \hat{f}_{1}(x)dx + \bigg( \alpha - \frac{1}{2} + \theta_{\star} \bigg) \hat{f}_{1}( \sigma_{\star} ) + \bigg( \theta_{2,-}^{(n,\epsilon)}-\alpha-\frac{1}{2} \bigg) \hat{f}_{1}\Big(\frac{b\rho_{2}^{2b}}{1+\epsilon}\Big) + \int_{\sigma_{\star}}^{\frac{b\rho_{2}^{2b}}{1+\epsilon}}\hat{f}(x)dx 
	\\ \label{S32expansion}
& + \sum_{j=0}^{+\infty} \ln \bigg\{ 1 + \frac{\mathsf{T}_{0}(\sigma_{\star})+\hat{\mathsf{T}}_{0}(\sigma_{\star})}{1 - \hat{\mathsf{T}}_{0}(\sigma_{\star}) + \hat{\mathsf{T}}_{0}(b\rho_{2}^{2b})} \frac{\frac{b\rho_{2}^{2b}-\sigma_{\star}}{\sigma_{\star}-b\rho_{1}^{2b}}\big(\frac{\rho_{2}}{\rho_{1}}\big)^{-2(j+1-\theta_{\star})}}{1+\frac{b\rho_{2}^{2b}-\sigma_{\star}}{\sigma_{\star}-b\rho_{1}^{2b}}\big(\frac{\rho_{2}}{\rho_{1}}\big)^{-2(j+1-\theta_{\star})}} \bigg\} + \bigO\bigg(\frac{(\ln n)^{2}}{n}\bigg)
\end{align}
as $n \to +\infty$ uniformly for $u_{1} \in \{z \in \mathbb{C}: |z-x_{1}|\leq \delta\},\dots,u_{2m} \in \{z \in \mathbb{C}: |z-x_{2m}|\leq \delta\}$, where $\hat{f}_{1}(x) := \ln \big( 1-\hat{\mathsf{T}}_{0}(x) + \hat{\mathsf{T}}_{0}(b\rho_{2}^{2b}) \big)$ and $\hat{f}$ and $\mathsf{T}_{j}, \hat{\mathsf{T}}_{j}$ are defined in \eqref{def of fh hard} and \eqref{def of T and That}.
\end{lemma}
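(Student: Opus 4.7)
The plan is to mirror the proof of Lemma \ref{lemma:S3p1p asymp hard} almost verbatim, with the roles of the two hard edges interchanged. The expansions \eqref{lol2}--\eqref{lol3} remain valid uniformly for $j \in \{\lfloor j_\star\rfloor+1,\dots,j_{2,-}-1\}$, because by \eqref{rhoinequalities} the quantities $1-\lambda_{j,1}$ and $\lambda_{j,2}-1$ stay bounded away from $0$ on this range, and for $\ell \le m$ we still have $\hat\lambda_{j,\ell}<1$ while for $\ell > m$ we have $\hat\lambda_{j,\ell}>1$. The essential new feature is the sign flip in \eqref{asymp etajk-etajkm1}: for $j > \lfloor j_\star\rfloor$, $(\rho_2/\rho_1)^{-2(j_\star-j)}$ is \emph{large}, so the ``small parameter'' around which one must expand is now its reciprocal $(\rho_2/\rho_1)^{-2(j-j_\star)}$. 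Accordingly, before expanding the logarithm I would divide numerator and denominator of each $F_{n,j,\ell}$ by the factor $\frac{j/n-b\rho_1^{2b}}{b\rho_2^{2b}-j/n}(\rho_2/\rho_1)^{-2(j_\star-j)}$, which converts the expression into one whose dominant term matches the right hard edge $\rho_2$.

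After this rearrangement a direct computation yields, uniformly for $j \in \{\lfloor j_\star\rfloor+1,\dots,j_{2,-}-1\}$,
\begin{align*}
F_{n,j,\ell} = \frac{e^{-\frac{t_\ell}{b}(j/n-b\rho_1^{2b})}\frac{b\rho_2^{2b}-j/n}{j/n-b\rho_1^{2b}}(\rho_2/\rho_1)^{-2(j-j_\star)}}{1+\frac{b\rho_2^{2b}-j/n}{j/n-b\rho_1^{2b}}(\rho_2/\rho_1)^{-2(j-j_\star)}} + \bigO\!\left(\tfrac{(\rho_2/\rho_1)^{-2(j-j_\star)}}{n}+\tfrac{1}{n^2}\right), \quad \ell \le m,
\end{align*}
together with the analogous formula for $\ell > m$ in which $\hat{\mathsf{T}}$-like quantities appear. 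Substituting these into $\ln(1+\sum_\ell \omega_\ell F_{n,j,\ell})$ and expanding, I would separate the sum into three pieces $S_3^{(2)} = \hat{\mathcal S}_1 + \hat{\mathcal S}_2 + \tfrac{1}{n}\hat{\mathcal S}_3 + \bigO(\tfrac{\ln n}{n})$, where $\hat{\mathcal S}_1 = \sum \hat f_1(j/n)$, $\hat{\mathcal S}_3 = \sum \hat f(j/n)$ (with $\hat f_1$ and $\hat f$ as in the statement and in \eqref{def of fh hard}), and $\hat{\mathcal S}_2$ collects the oscillatory corrections of order $(\rho_2/\rho_1)^{-2(j-j_\star)}$.

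To finish, I would apply Lemma \ref{lemma:Riemann sum NEW} to $\hat{\mathcal S}_1$ and $\hat{\mathcal S}_3$ with $A=\sigma_\star$, $B=b\rho_2^{2b}/(1+\epsilon)$, $a_0 = -\alpha+\theta_\star$, and $b_0 = -\alpha-\theta_{2,-}^{(n,\epsilon)}$; this produces exactly the integrals of $\hat f_1$ and $\hat f$ and the two boundary terms displayed in \eqref{S32expansion}. For $\hat{\mathcal S}_2$ the geometric decay of the summands lets me truncate to $j \in \{\lfloor j_\star\rfloor+1,\dots,\lfloor j_\star\rfloor+M'\ln n\}$ at cost $\bigO(n^{-100})$, then Taylor-expand $\mathsf{T}_0(j/n)$ and $\hat{\mathsf{T}}_0(j/n)$ around $\sigma_\star$ at cost $\bigO((\ln n)^2/n)$, and finally extend the upper summation to $+\infty$. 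Reindexing via $k = j - \lfloor j_\star\rfloor - 1$, so that $j - j_\star = k + 1 - \theta_\star$, produces precisely the sum over $k\ge 0$ with exponent $(\rho_2/\rho_1)^{-2(k+1-\theta_\star)}$ claimed in \eqref{S32expansion}.

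The only non-routine step is the bookkeeping: one must verify that after the divide-by-dominant-factor rearrangement the leading log is indeed $\hat f_1$ (with the $-\hat{\mathsf{T}}_0(x)+\hat{\mathsf{T}}_0(b\rho_2^{2b})$ structure) and that the $1/n$ correction agrees with $\hat f$ in \eqref{def of fh hard}. The sign difference in the oscillatory log (a $+$ here versus the $-$ in Lemma \ref{lemma:S3p1p asymp hard}) and the shift $k+1-\theta_\star$ in place of $\theta_\star+j$ both trace back to this reversed factorization, and are the easiest details to get wrong.
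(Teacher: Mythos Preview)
Your approach is essentially identical to the paper's: the same rearrangement of $F_{n,j,\ell}$ around the now-small factor $(\rho_2/\rho_1)^{-2(j-j_\star)}$, the same three-way split into $\sum \hat f_1(j/n)$, an oscillatory sum, and $\tfrac{1}{n}\sum \hat f(j/n)$, the same truncate--Taylor--extend treatment of the oscillatory piece, and the same reindexing $k=j-\lfloor j_\star\rfloor-1$.

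One bookkeeping slip: your Riemann-sum parameters are off. The lower summation index is $\lfloor j_\star\rfloor+1 = n\sigma_\star + (1-\alpha-\theta_\star)$, so $a_0 = 1-\alpha-\theta_\star$ (not $-\alpha+\theta_\star$); the upper index is $j_{2,-}-1 = \tfrac{b n\rho_2^{2b}}{1+\epsilon} + (\theta_{2,-}^{(n,\epsilon)}-\alpha-1)$, so $b_0 = \theta_{2,-}^{(n,\epsilon)}-\alpha-1$ (not $-\alpha-\theta_{2,-}^{(n,\epsilon)}$). With your values the boundary coefficients $\tfrac{1-2a_0}{2}$ and $\tfrac{1+2b_0}{2}$ come out as $\tfrac12+\alpha-\theta_\star$ and $\tfrac12-\alpha-\theta_{2,-}^{(n,\epsilon)}$, which do not match the stated $(\alpha-\tfrac12+\theta_\star)$ and $(\theta_{2,-}^{(n,\epsilon)}-\alpha-\tfrac12)$. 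This is exactly the kind of sign/shift error you flagged as ``easiest to get wrong''; once corrected, everything else goes through as you describe.
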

\begin{proof}
After a long but direct computation using \eqref{lol2} and \eqref{lol3}, we obtain
\begin{align*}
& F_{n,j,l} = e^{-\frac{t_{\ell}}{b}(j/n-b\rho_{1}^{2b})} \frac{\frac{b\rho_{2}^{2b}-j/n}{j/n-b\rho_{1}^{2b}}\big(\frac{\rho_{2}}{\rho_{1}}\big)^{-2(j-j_{\star})}}{1+ \frac{b\rho_{2}^{2b}-j/n}{j/n-b\rho_{1}^{2b}}\big(\frac{\rho_{2}}{\rho_{1}}\big)^{-2(j-j_{\star})}} + \bigO\bigg( \frac{\big(\frac{\rho_{2}}{\rho_{1}}\big)^{-2(j-j_{\star})}}{n} + \frac{1}{n^{2}} \bigg), \quad \ell=1,\dots,m, \\
& F_{n,j,l} = 1 - \frac{e^{-\frac{t_{\ell}}{b}(b\rho_{2}^{2b}-j/n)} }{1+\frac{b\rho_{2}^{2b}-j/n}{j/n-b\rho_{1}^{2b}}\big(\frac{\rho_{2}}{\rho_{1}}\big)^{-2(j-j_{\star})}} + \frac{t_{\ell}}{n}e^{-\frac{t_{\ell}}{b}(b\rho_{2}^{2b}-j/n)} \bigg( \frac{b\rho_{2}^{2b}}{b\rho_{2}^{2b}-j/n} - \frac{\alpha}{b} + \frac{t_{\ell}\, j/n}{2b} \bigg) \\
& + \bigO\bigg( \frac{\big(\frac{\rho_{2}}{\rho_{1}}\big)^{-2(j-j_{\star})}}{n} + \frac{1}{n^{2}} \bigg), \quad \ell=m+1,\dots,2m,
\end{align*}
as $n \to + \infty$ uniformly for $j \in \{\lfloor j_{\star} \rfloor+1,\dots,j_{2,-}-1\}$. Hence, 
\begin{align*}
& S_{3}^{(2)} = \sum_{j=\lfloor j_{\star} \rfloor+1}^{j_{2,-}-1} \ln \bigg\{ 1 + \mathsf{T}_{0}(j/n)\frac{ \frac{b\rho_{2}^{2b}-j/n}{j/n-b\rho_{1}^{2b}}\big(\frac{\rho_{2}}{\rho_{1}}\big)^{-2(j-j_{\star})}}{1+\frac{b\rho_{2}^{2b}-j/n}{j/n-b\rho_{1}^{2b}}\big(\frac{\rho_{2}}{\rho_{1}}\big)^{-2(j-j_{\star})}} + \hat{\mathsf{T}}_{0}(b\rho_{2}^{2b}) - \frac{\hat{\mathsf{T}}_{0}(j/n)}{1+\frac{b\rho_{2}^{2b}-j/n}{j/n-b\rho_{1}^{2b}}\big(\frac{\rho_{2}}{\rho_{1}}\big)^{-2(j-j_{\star})}} \\
& + \frac{1}{n} \bigg\{ \bigg( \frac{b\rho_{2}^{2b}}{b\rho_{2}^{2b}-j/n} - \frac{\alpha}{b} \bigg) \hat{\mathsf{T}}_{1}(j/n) + \frac{j/n}{2b} \hat{\mathsf{T}}_{2}(j/n) \bigg\} + \sum_{\ell=1}^{2m} \omega_{\ell} \bigO\bigg( \frac{\big(\frac{\rho_{2}}{\rho_{1}}\big)^{-2(j-j_{\star})}}{n} + \frac{1}{n^{2}} \bigg)  \bigg\}
\end{align*}
as $n \to + \infty$, where the above error terms are independent of $u_{1},\dots,u_{2m}$ and uniform for $j \in \{ \lfloor j_{\star} \rfloor+1,\dots,j_{2,-}-1 \}$. Expanding further, we obtain
\begin{align*}
S_{3}^{(2)} = \mathcal{S}_{4} + \mathcal{S}_{5} + \frac{1}{n}\mathcal{S}_{6} + \bigO\bigg( \frac{\ln n}{n} \bigg), \qquad \mbox{as } n \to + \infty,
\end{align*}
where
\begin{align*}
& \mathcal{S}_{4} = \sum_{j=\lfloor j_{\star} \rfloor+1}^{j_{2,-}-1} \ln \Big\{ 1 - \hat{\mathsf{T}}_{0}(j/n) + \hat{\mathsf{T}}_{0}(b\rho_{2}^{2b}) \Big\}, \\
& \mathcal{S}_{5} = \sum_{j=\lfloor j_{\star} \rfloor+1}^{j_{2,-}-1} \ln \bigg\{ 1 + \frac{\mathsf{T}_{0}(j/n)+\hat{\mathsf{T}}_{0}(j/n)}{1 - \hat{\mathsf{T}}_{0}(j/n) + \hat{\mathsf{T}}_{0}(b\rho_{2}^{2b})} \frac{\frac{b\rho_{2}^{2b}-j/n}{j/n-b\rho_{1}^{2b}}\big(\frac{\rho_{2}}{\rho_{1}}\big)^{-2(j-j_{\star})}}{1+\frac{b\rho_{2}^{2b}-j/n}{j/n-b\rho_{1}^{2b}}\big(\frac{\rho_{2}}{\rho_{1}}\big)^{-2(j-j_{\star})}} \bigg\}, \\
& \mathcal{S}_{6} = \sum_{j=\lfloor j_{\star} \rfloor+1}^{j_{2,-}-1} \frac{\big( \frac{b\rho_{2}^{2b}}{b\rho_{2}^{2b}-j/n} - \frac{\alpha}{b} \big) \hat{\mathsf{T}}_{1}(j/n) + \frac{j/n}{2b} \hat{\mathsf{T}}_{2}(j/n)}{1 - \hat{\mathsf{T}}_{0}(j/n) + \hat{\mathsf{T}}_{0}(b\rho_{2}^{2b})}.
\end{align*}
We can express $\mathcal{S}_4$ and $\mathcal{S}_6$ in terms of the functions $\hat{f}_1$ and $\hat{f}$ as follows:
$$\mathcal{S}_{4} = \sum_{j=j_{1,+}+1}^{\lfloor j_{\star} \rfloor} \hat{f}_1(j/n), \qquad
\mathcal{S}_{6} = \sum_{j=j_{1,+}+1}^{\lfloor j_{\star} \rfloor} \hat{f}(j/n).$$
These sums can be expanded using Lemma \ref{lemma:Riemann sum NEW} (with $A=\sigma_{\star}$, $a_{0} = 1-\alpha-\theta_{\star}$, $B= \frac{b\rho_{2}^{2b}}{1+\epsilon}$ and $b_{0} = \theta_{2,-}^{(n,\epsilon)}-\alpha-1$); this gives the terms involving $\hat{f}_1$ and $\hat{f}$ on the right-hand side of (\ref{S32expansion}). We now turn to the analysis of $\mathcal{S}_{5}$. Let $s_{5,j}$ be the summand of $\mathcal{S}_{5}$. Since $s_{5,j} = \bigO((\frac{\rho_{2}}{\rho_{1}})^{-2(j-j_{\star})})$ as $n \to + \infty$ uniformly for $j \in \{\lfloor j_{\star} \rfloor+1,\dots,j_{2,-}-1\}$, we have
\begin{align*}
\mathcal{S}_{5} = \sum_{j=\lfloor j_{\star} \rfloor+1}^{\lfloor j_{\star}+M' \ln n \rfloor} s_{5,j} + \bigO(n^{-100}), \qquad \mbox{as } n \to + \infty,
\end{align*}
provided $M'$ is chosen large enough. Furthermore, since $\mathsf{T}_{0}$ and $\hat{\mathsf{T}}_{0}$ are analytic at $\sigma_{\star}$,
\begin{align*}
\mathcal{S}_{5} & = \sum_{j=\lfloor j_{\star} \rfloor+1}^{\lfloor j_{\star}+M' \ln n \rfloor} \bigg[ \ln \bigg\{ 1 + \frac{\mathsf{T}_{0}(\sigma_{\star})+\hat{\mathsf{T}}_{0}(\sigma_{\star})}{1 - \hat{\mathsf{T}}_{0}(\sigma_{\star}) + \hat{\mathsf{T}}_{0}(b\rho_{2}^{2b})} \frac{\frac{b\rho_{2}^{2b}-\sigma_{\star}}{\sigma_{\star}-b\rho_{1}^{2b}}\big(\frac{\rho_{2}}{\rho_{1}}\big)^{-2(j-j_{\star})}}{1+\frac{b\rho_{2}^{2b}-\sigma_{\star}}{\sigma_{\star}-b\rho_{1}^{2b}}\big(\frac{\rho_{2}}{\rho_{1}}\big)^{-2(j-j_{\star})}} \bigg\} + \bigO\big(j/n-\sigma_{\star}\big) \bigg] + \bigO(n^{-100}) \\
& = \sum_{j=\lfloor j_{\star} \rfloor+1}^{\lfloor j_{\star}+M' \ln n \rfloor} \ln \bigg\{ 1 + \frac{\mathsf{T}_{0}(\sigma_{\star})+\hat{\mathsf{T}}_{0}(\sigma_{\star})}{1 - \hat{\mathsf{T}}_{0}(\sigma_{\star}) + \hat{\mathsf{T}}_{0}(b\rho_{2}^{2b})} \frac{\frac{b\rho_{2}^{2b}-\sigma_{\star}}{\sigma_{\star}-b\rho_{1}^{2b}}\big(\frac{\rho_{2}}{\rho_{1}}\big)^{-2(j-j_{\star})}}{1+\frac{b\rho_{2}^{2b}-\sigma_{\star}}{\sigma_{\star}-b\rho_{1}^{2b}}\big(\frac{\rho_{2}}{\rho_{1}}\big)^{-2(j-j_{\star})}} \bigg\} + \bigO\bigg(\frac{(\ln n)^{2}}{n}\bigg) \\
& = \sum_{j=\lfloor j_{\star} \rfloor+1}^{+\infty} \ln \bigg\{ 1 + \frac{\mathsf{T}_{0}(\sigma_{\star})+\hat{\mathsf{T}}_{0}(\sigma_{\star})}{1 - \hat{\mathsf{T}}_{0}(\sigma_{\star}) + \hat{\mathsf{T}}_{0}(b\rho_{2}^{2b})} \frac{\frac{b\rho_{2}^{2b}-\sigma_{\star}}{\sigma_{\star}-b\rho_{1}^{2b}}\big(\frac{\rho_{2}}{\rho_{1}}\big)^{-2(j-j_{\star})}}{1+\frac{b\rho_{2}^{2b}-\sigma_{\star}}{\sigma_{\star}-b\rho_{1}^{2b}}\big(\frac{\rho_{2}}{\rho_{1}}\big)^{-2(j-j_{\star})}} \bigg\} + \bigO\bigg(\frac{(\ln n)^{2}}{n}\bigg)
\end{align*}
as $n \to + \infty$. Changing indices, we get
\begin{align*}
\mathcal{S}_{5} = \sum_{j=0}^{+\infty} \ln \bigg\{ 1 + \frac{\mathsf{T}_{0}(\sigma_{\star})+\hat{\mathsf{T}}_{0}(\sigma_{\star})}{1 - \hat{\mathsf{T}}_{0}(\sigma_{\star}) + \hat{\mathsf{T}}_{0}(b\rho_{2}^{2b})} \frac{\frac{b\rho_{2}^{2b}-\sigma_{\star}}{\sigma_{\star}-b\rho_{1}^{2b}}\big(\frac{\rho_{2}}{\rho_{1}}\big)^{-2(j+1-\theta_{\star})}}{1+\frac{b\rho_{2}^{2b}-\sigma_{\star}}{\sigma_{\star}-b\rho_{1}^{2b}}\big(\frac{\rho_{2}}{\rho_{1}}\big)^{-2(j+1-\theta_{\star})}} \bigg\} + \bigO\bigg(\frac{(\ln n)^{2}}{n}\bigg), \quad \mbox{as } n \to + \infty.
\end{align*}
The claim follows after a computation combining the asymptotics of $\mathcal{S}_{4},\mathcal{S}_{5},\mathcal{S}_{6}$.
\end{proof}

Recall that $\sigma_{1}, \sigma_{2} > 0$ were defined in (\ref{def of taustar}), $\mathsf{Q} = \mathsf{Q}(\vec{t}, \vec{u})$ was defined in (\ref{mathsfQdef}), and $\mathcal{F}_n$ was defined in terms of the Jacobi theta function $\theta(z; \tau)$ in (\ref{mathcalFndef}).

\begin{lemma}\label{lemma:S3 asymp hard}
We can choose $M'$ sufficiently large such that the following holds. For any fixed $x_{1},\dots,x_{2m} \in \mathbb{R}$, there exists $\delta > 0$ such that 
\begin{align*}
 S_{3} = &\; n \bigg\{\int_{\frac{b\rho_{1}^{2b}}{1-\epsilon}}^{\sigma_{\star}} f_{1}(x)dx + \int_{\sigma_{\star}}^{\frac{b\rho_{2}^{2b}}{1+\epsilon}} \hat{f}_{1}(x)dx \bigg\} + \bigg( \alpha - \frac{1}{2} + \theta_{1,+}^{(n,\epsilon)} \bigg) f_{1}\Big( \frac{b \rho_{1}^{2b}}{1-\epsilon} \Big)  + \bigg( \theta_{2,-}^{(n,\epsilon)}-\alpha-\frac{1}{2} \bigg) \hat{f}_{1}\Big(\frac{b\rho_{2}^{2b}}{1+\epsilon}\Big)  \\
& + \bigg( \frac{1}{2} - \alpha - \theta_{\star} \bigg) \big(f_{1}(\sigma_{\star}) - \hat{f}_{1}( \sigma_{\star} ) \big) + \int_{\frac{b\rho_{1}^{2b}}{1-\epsilon}}^{\sigma_{\star}}f(x)dx + \int_{\sigma_{\star}}^{\frac{b\rho_{2}^{2b}}{1+\epsilon}}\hat{f}(x)dx \\
& + \theta_\star \ln \mathsf{Q} -\frac{\ln{\mathsf{Q}}}{2} \bigg(
1 - \frac{2\ln(\sigma_{2}/\sigma_{1}) + \ln{\mathsf{Q}}}{2 \ln(\rho_{2}/\rho_{1})}\bigg) 
+ \mathcal{F}_n + \bigO\bigg(\frac{(\ln n)^{2}}{n}\bigg)
\end{align*}
as $n \to +\infty$ uniformly for $u_{1} \in \{z \in \mathbb{C}: |z-x_{1}|\leq \delta\},\dots,u_{2m} \in \{z \in \mathbb{C}: |z-x_{2m}|\leq \delta\}$, where $f_{1}(x) := \ln \big( 1+\mathsf{T}_{0}(x) + \hat{\mathsf{T}}_{0}(b\rho_{2}^{2b}) \big)$, $\hat{f}_{1}(x) := \ln \big( 1-\hat{\mathsf{T}}_{0}(x) + \hat{\mathsf{T}}_{0}(b\rho_{2}^{2b}) \big)$ and $\mathsf{T}_{j}$, $\hat{\mathsf{T}}_{j}$, $f$ and $\hat{f}$ are defined in \eqref{def of T and That}--\eqref{def of fh hard}.
\end{lemma}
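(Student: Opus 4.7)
The plan is to add the expansions of Lemmas \ref{lemma:S3p1p asymp hard} and \ref{lemma:S3p2p asymp hard} via $S_3 = S_3^{(1)} + S_3^{(2)}$, and then reorganize the two oscillatory infinite series into the stated combination $\theta_\star \ln \mathsf{Q} - \frac{\ln\mathsf{Q}}{2}\bigl(1 - \frac{2\ln(\sigma_2/\sigma_1) + \ln\mathsf{Q}}{2\ln(\rho_2/\rho_1)}\bigr) + \mathcal{F}_n$. All the non-oscillatory pieces combine cleanly: the order-$n$ integrals of $f_1$ and $\hat f_1$, the boundary contributions at $\frac{b\rho_1^{2b}}{1-\epsilon}$ and $\frac{b\rho_2^{2b}}{1+\epsilon}$, and the integrals of $f$ and $\hat f$ add up directly, while the two $\sigma_\star$-boundary terms fuse into $(\tfrac{1}{2}-\alpha-\theta_\star)(f_1(\sigma_\star)-\hat f_1(\sigma_\star))$. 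The $\bigO((\ln n)^2/n)$ error is inherited from the two previous lemmas. So the essential content of the proof is the identification of the sum of the two series.

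The first step is an algebraic simplification. Writing $\sigma_1 = \sigma_\star - b\rho_1^{2b}$, $\sigma_2 = b\rho_2^{2b} - \sigma_\star$ and using \eqref{mathsfQdef}, a short computation collapses the $j$-th summand of the first series to $\ln\bigl[(1 + \tfrac{\sigma_1}{\sigma_2}q^{\theta_\star+j}\mathsf{Q}^{-1})/(1 + \tfrac{\sigma_1}{\sigma_2}q^{\theta_\star+j})\bigr]$ and the $j$-th summand of the second to $\ln\bigl[(1 + \tfrac{\sigma_2}{\sigma_1}q^{j+1-\theta_\star}\mathsf{Q})/(1 + \tfrac{\sigma_2}{\sigma_1}q^{j+1-\theta_\star})\bigr]$, where $q := (\rho_1/\rho_2)^2 \in (0,1)$. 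Their sum is therefore the logarithm of a single ratio of infinite products of shape $\prod_{j\ge 0}(1 + q^{1/2}q^j x)(1 + q^{1/2}q^j/x)$, which is precisely the non-trivial part of the Jacobi triple product expansion of $\theta(z;\tilde\tau)$ at the dual modulus $\tilde\tau := i\ln(\rho_2/\rho_1)/\pi$ (for which $e^{\pi i \tilde\tau} = \rho_1/\rho_2$ and $\im\tilde\tau > 0$). The ratio of products thus collapses to $\theta(z_+;\tilde\tau)/\theta(z_-;\tilde\tau)$, where $z_-$ is determined by $e^{2\pi i z_-} = \frac{\sigma_1}{\sigma_2}q^{\theta_\star - 1/2}$ and $z_+ = z_- - \frac{\ln\mathsf{Q}}{2\pi i}$.

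The final step is to apply the modular transformation quoted just before \eqref{e Fn prob}: combined with $\tau := \frac{\pi i}{\ln(\rho_2/\rho_1)} = -\tilde\tau^{-1}$ and $\theta(-w;\tau)=\theta(w;\tau)$, this yields $\theta(z;\tilde\tau) = (-i\tilde\tau)^{-1/2} e^{\pi i z^2 \tau}\theta(\tau z;\tau)$. The $(-i\tilde\tau)^{-1/2}$ prefactors cancel in the ratio, and the values $\tau z_\pm$ simplify (modulo integer shifts, which are absorbed using $1$-periodicity of $\theta$ together with the identity $\theta_\star = n\sigma_\star - \alpha - \lfloor n\sigma_\star-\alpha\rfloor$) to exactly the two theta arguments appearing in $\mathcal{F}_n$; the surviving theta ratio is therefore $e^{\mathcal{F}_n}$. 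The main obstacle is the bookkeeping of the quadratic prefactor $\pi i(z_+^2-z_-^2)\tau$ produced by the modular transformation. Expanding $z_+^2-z_-^2 = (z_+-z_-)(z_++z_-)$ and using $\tilde\tau\tau = -1$ and $\tau/(2\pi i) = 1/(2\ln(\rho_2/\rho_1))$, this prefactor must simplify exactly to $\theta_\star\ln\mathsf{Q} - \frac{\ln\mathsf{Q}}{2}\bigl(1 - \frac{2\ln(\sigma_2/\sigma_1) + \ln\mathsf{Q}}{2\ln(\rho_2/\rho_1)}\bigr)$. Once this single algebraic identity is verified, the lemma follows at once.
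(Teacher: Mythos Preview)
Your proposal is correct and follows essentially the same route as the paper's proof: add Lemmas~\ref{lemma:S3p1p asymp hard} and~\ref{lemma:S3p2p asymp hard}, algebraically simplify each summand of the two oscillatory series into the form $\ln\frac{1+cq^{j}\mathsf{Q}^{\pm 1}}{1+cq^{j}}$, recognize the resulting infinite product as a ratio of Jacobi theta functions via the triple product identity, and then apply the modular transformation to pass from the modulus $\tilde\tau$ to $\tau=-\tilde\tau^{-1}$, with the quadratic Gaussian prefactor producing the non-theta terms and integer periodicity plus evenness of $\theta$ reducing the arguments to those in $\mathcal{F}_n$. The only differences from the paper are cosmetic (you work with $q=(\rho_1/\rho_2)^2$ where the paper uses $w=\rho_1/\rho_2$, and you invoke the modular identity in the form quoted before~\eqref{e Fn prob} rather than the equivalent Jacobi identity used in the proof).
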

\begin{proof}
Let
\begin{align}\nonumber
\Sigma_n := &\;  \sum_{j=0}^{+\infty} \ln \bigg\{ 1 - \frac{\mathsf{T}_{0}(\sigma_{\star})+\hat{\mathsf{T}}_{0}(\sigma_{\star})}{1 + \mathsf{T}_{0}(\sigma_{\star}) + \hat{\mathsf{T}}_{0}(b\rho_{2}^{2b})} \frac{\frac{\sigma_{\star}-b\rho_{1}^{2b}}{b\rho_{2}^{2b}-\sigma_{\star}}(\frac{\rho_{2}}{\rho_{1}})^{-2(\theta_{\star}+j)}}{1+\frac{\sigma_{\star}-b\rho_{1}^{2b}}{b\rho_{2}^{2b}-\sigma_{\star}}(\frac{\rho_{2}}{\rho_{1}})^{-2(\theta_{\star}+j)}} \bigg\} 
	\\\nonumber
& + \sum_{j=0}^{+\infty} \ln \bigg\{ 1 + \frac{\mathsf{T}_{0}(\sigma_{\star})+\hat{\mathsf{T}}_{0}(\sigma_{\star})}{1 - \hat{\mathsf{T}}_{0}(\sigma_{\star}) + \hat{\mathsf{T}}_{0}(b\rho_{2}^{2b})} \frac{\frac{b\rho_{2}^{2b}-\sigma_{\star}}{\sigma_{\star}-b\rho_{1}^{2b}}\big(\frac{\rho_{2}}{\rho_{1}}\big)^{-2(j+1-\theta_{\star})}}{1+\frac{b\rho_{2}^{2b}-\sigma_{\star}}{\sigma_{\star}-b\rho_{1}^{2b}}\big(\frac{\rho_{2}}{\rho_{1}}\big)^{-2(j+1-\theta_{\star})}} \bigg\}.
\end{align}
The desired conclusion is a direct consequence of Lemmas \ref{lemma:S3p1p asymp hard} and \ref{lemma:S3p2p asymp hard} if we can show that
\begin{align}\label{twosumsFn}
\Sigma_n = \theta_\star \ln \mathsf{Q} -\frac{\ln{\mathsf{Q}}}{2} \bigg(
1 - \frac{2\ln(\sigma_{2}/\sigma_{1}) + \ln{\mathsf{Q}}}{2 \ln(\rho_{2}/\rho_{1})}\bigg) 
+  \mathcal{F}_n.
\end{align}
To show (\ref{twosumsFn}), we introduce the short-hand notation $v := \frac{b\rho_{2}^{2b}-\sigma_{\star}}{\sigma_{\star}-b\rho_{1}^{2b}} = \sigma_{2}/\sigma_{1} > 0$ and $w:= \rho_1/\rho_2 \in (0,1)$.
Then we can write
\begin{align*}
 \Sigma_n = & \sum_{j=0}^{+\infty} \ln \bigg\{ 1 - \frac{ \mathsf{Q} - 1}{ \mathsf{Q}} \frac{1}{v w^{-2(\theta_\star + j)} +1} \bigg\} 
+ \sum_{j=0}^{+\infty} \ln \bigg\{ 1 + (\mathsf{Q} - 1) \frac{1}{\frac{w^{-2(j+1-\theta_\star)}}{v} +1} \bigg\}.
\end{align*}
Combining the logarithms in the two sums and then letting $j = \ell-1$, this becomes
\begin{align*}
 \Sigma_n = &\; \sum_{j=0}^{+\infty} \ln \frac{(1 + w^{2j + 2} \frac{v  \mathsf{Q}}{w^{2\theta_\star}} )(1 + w^{2j} \frac{w^{2\theta_\star}}{v  \mathsf{Q}})}{(1 + w^{2j+2}\frac{v}{w^{2\theta_\star}})(1 + w^{2j} \frac{w^{2\theta_\star}}{v} )}
 =  \sum_{\ell =1}^{+\infty} \ln \frac{(1 + w^{2\ell -1} \frac{v  \mathsf{Q}}{w^{2\theta_\star-1}} )(1 + w^{2\ell -1} \frac{w^{2\theta_\star-1}}{v  \mathsf{Q}})}{(1 + w^{2\ell -1}\frac{v}{w^{2\theta_\star-1}})(1 + w^{2\ell -1} \frac{w^{2\theta_\star-1}}{v} )}.
\end{align*}
Using the Jacobi triple product identity
$$\theta(z; \tau) = \prod_{\ell =1}^\infty (1 - e^{2\ell  \pi i \tau})(1 + e^{(2\ell -1)\pi i\tau + 2\pi i z})(1 + e^{(2\ell -1)\pi i\tau - 2\pi i z}),$$
we obtain
\begin{align*}
\Sigma_n =  \ln \frac{\theta(\frac{1}{2\pi i} \ln(\frac{v  \mathsf{Q}}{w^{2\theta_\star - 1}}); \frac{\ln w}{\pi i})}{\theta(\frac{1}{2\pi i} \ln(\frac{v}{w^{2\theta_\star - 1}}); \frac{\ln w}{\pi i})},
\end{align*}
where $\im(\frac{\ln w}{\pi i}) > 0$ because $w \in (0,1)$.
With the help of the Jacobi identity
$$%\theta(\frac{z}{\tau}; -\frac{1}{\tau}) = (-i\tau)^{1/2} e^{\frac{\pi i z^2}{\tau}} \theta(z; \tau), \qquad \text{i.e.} \quad 
\theta(z; \tau) = (-i\tau)^{-1/2} e^{-\frac{\pi i z^2}{\tau}} \theta\Big(\frac{z}{\tau}; -\frac{1}{\tau}\Big)$$
this can be written as
\begin{align*}
\Sigma_n & = \theta_\star \ln \mathsf{Q} - \frac{\ln \mathsf{Q}}{2}\bigg(1 + \frac{2\ln v + \ln \mathsf{Q}}{2 \ln w}\bigg) 
 + \ln \frac{\theta(\frac{\ln(v  \mathsf{Q}) }{2 \ln w} - \theta_\star + \frac{1}{2}; -\frac{\pi i}{\ln w})}{ \theta(\frac{\ln v}{2 \ln w}   - \theta_\star + \frac{1}{2}; -\frac{\pi i}{\ln w})}.
\end{align*}
Using the periodicity property $\theta(z + 1; \tau) = \theta(z; \tau)$, we can replace $\theta_\star = j_{\star}- \lfloor j_{\star} \rfloor$ by $j_\star = n\sigma_{\star} - \alpha$ and $+\frac{1}{2}$ by $-\frac{1}{2}$ inside $\theta$. Using also $\theta(-z;\tau)=\theta(z;\tau)$, $v = \sigma_{2}/\sigma_{1}$ and $w = \rho_1/\rho_2$, we arrive at (\ref{twosumsFn}).
\end{proof}

We now turn our attention to $S_2$ and $S_4$. Let $M:=n^{\frac{1}{10}}$. We split $S_{2k}$, $k = 1,2$, as follows:
\begin{align}\label{asymp prelim of S2kpvp hard}
& S_{2k}=S_{2k}^{(1)}+S_{2k}^{(2)}+S_{2k}^{(3)}, \qquad S_{2k}^{(v)} := \sum_{j:\lambda_{j,k}\in I_{v}}  \ln \bigg( 1 + \sum_{\ell=1}^{2m} \omega_{\ell} F_{n,j,\ell} \bigg), \quad v=1,2,3,
\end{align}
where
\begin{align*}
I_{1} = [1-\epsilon,1-\tfrac{M}{\sqrt{n}}), \qquad I_{2} = [1-\tfrac{M}{\sqrt{n}},1+\tfrac{M}{\sqrt{n}}], \qquad I_{3} = (1+\tfrac{M}{\sqrt{n}},1+\epsilon].
\end{align*}
From \eqref{asymp prelim of S2kpvp hard}, we see that the large $n$ asymptotics of $\{S_{2k}^{(v)}\}_{v=1,2,3}$ involve the asymptotics of $\gamma(a,z)$ when $a \to + \infty$, $z \to +\infty$ with $\lambda=\frac{z}{a} \in [1-\epsilon,1+\epsilon]$. These sums can also be rewritten using
\begin{align}\label{sums lambda j hard}
& \sum_{j:\lambda_{j,k}\in I_{3}} = \sum_{j=j_{k,-}}^{g_{k,-}-1}, \qquad \sum_{j:\lambda_{j,k}\in I_{2}} = \sum_{j= g_{k,-}}^{g_{k,+}}, \qquad \sum_{j:\lambda_{j,k}\in I_{1}} = \sum_{j= g_{k,+}+1}^{j_{k,+}},
\end{align}
where $g_{k,-} := \lceil \frac{bn\rho_{k}^{2b}}{1+\frac{M}{\sqrt{n}}}-\alpha \rceil$ and $g_{k,+} := \lfloor \frac{bn\rho_{k}^{2b}}{1-\frac{M}{\sqrt{n}}}-\alpha \rfloor$ for $k = 1,2$. Let us also define
\begin{align*}
& \theta_{k,-}^{(n,M)} := g_{k,-} - \bigg( \frac{bn \rho_{k}^{2b}}{1+\frac{M}{\sqrt{n}}} - \alpha \bigg) = \bigg\lceil \frac{bn \rho_{k}^{2b}}{1+\frac{M}{\sqrt{n}}} - \alpha \bigg\rceil - \bigg( \frac{bn \rho_{k}^{2b}}{1+\frac{M}{\sqrt{n}}} - \alpha \bigg), \\
& \theta_{k,+}^{(n,M)} := \bigg( \frac{bn \rho_{k}^{2b}}{1-\frac{M}{\sqrt{n}}} - \alpha \bigg) - g_{k,+} = \bigg( \frac{bn \rho_{k}^{2b}}{1-\frac{M}{\sqrt{n}}} - \alpha \bigg) - \bigg\lfloor \frac{bn \rho_{k}^{2b}}{1-\frac{M}{\sqrt{n}}} - \alpha \bigg\rfloor.
\end{align*}
Clearly, $\theta_{k,-}^{(n,M)},\theta_{k,+}^{(n,M)} \in [0,1)$.

\begin{lemma}\label{lemma:S2kp3p hard}
For any fixed $x_{1},\dots,x_{2m} \in \mathbb{R}$, there exists $\delta > 0$ such that
\begin{align*}
S_{2}^{(3)} = & \; \Big( b\rho_{1}^{2b}n - j_{1,-} - bM\rho_{1}^{2b}\sqrt{n} + bM^{2}\rho_{1}^{2b}-\alpha+\theta_{1,-}^{(n,M)} - bM^{3}\rho_{1}^{2b}n^{-\frac{1}{2}} \Big) \ln  \Omega + \bigO(M^{4}n^{-1}),
\end{align*}
as $n \to +\infty$ uniformly for $u_{1} \in \{z \in \mathbb{C}: |z-x_{1}|\leq \delta\},\dots,u_{2m} \in \{z \in \mathbb{C}: |z-x_{2m}|\leq \delta\}$.
\end{lemma}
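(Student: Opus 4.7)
\medskip\noindent\textbf{Proof strategy.} The plan is to show that on the summation range $\{j_{1,-},\dots,g_{1,-}-1\}$ every factor $F_{n,j,\ell}$ is exponentially close to $1$, so each summand of $S_{2}^{(3)}$ reduces to $\ln\Omega$ up to a super-polynomially small error; then it only remains to count the number of terms carefully.

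First I would verify that in the range $\lambda_{j,1}\in I_{3}=(1+M/\sqrt n,\,1+\epsilon]$ all the ratios $\lambda_{j,1},\lambda_{j,2},\hat\lambda_{j,\ell}$ (with $\ell=1,\dots,2m$) satisfy $\lambda\ge 1+cM/\sqrt n$ for some $c>0$ and $n$ large. For $\lambda_{j,1}$ this is the definition of $I_{3}$; for $\lambda_{j,2}=(\rho_{2}/\rho_{1})^{2b}\lambda_{j,1}$ it is automatic since $(\rho_{2}/\rho_{1})^{2b}>1$; and for $\hat\lambda_{j,\ell}$ the identities $\hat\lambda_{j,\ell}=\lambda_{j,1}(1-t_{\ell}/n)$ for $\ell\le m$ and $\hat\lambda_{j,\ell}=\lambda_{j,2}(1+t_{\ell}/n)$ for $\ell\ge m+1$ (read off from \eqref{rellhardedge}) together with $M/\sqrt n\gg t_{\ell}/n$ do the job. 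Consequently $a_{j}\eta_{j,k}^{2}/2$ and $a_{j}\hat\eta_{j,\ell}^{2}/2$ are both at least $cM^{2}=cn^{1/5}$, and Lemma~\ref{lemma: uniform}(i) yields
\begin{equation*}
\frac{\gamma(a_{j},a_{j}\lambda)}{\Gamma(a_{j})}=1+\bigO(e^{-cM^{2}})
\end{equation*}
uniformly in $j$, for each of the arguments $\lambda\in\{\lambda_{j,1},\lambda_{j,2},\hat\lambda_{j,\ell}\}$.

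Substituting this into \eqref{Fnjellexpression}, the numerator and the denominator of each $F_{n,j,\ell}$ equal $1+\bigO(e^{-cM^{2}})$, so $F_{n,j,\ell}=1+\bigO(e^{-cM^{2}})$ uniformly. Since $1+\sum_{\ell=1}^{2m}\omega_{\ell}=\Omega$, it follows (after shrinking $\delta$ if needed so that $\Omega$ stays away from $(-\infty,0]$ and the principal branch of $\ln$ applies) that
\begin{equation*}
\ln\Big(1+\sum_{\ell=1}^{2m}\omega_{\ell}F_{n,j,\ell}\Big)=\ln\Omega+\bigO(e^{-cM^{2}})
\end{equation*}
for every $j\in\{j_{1,-},\dots,g_{1,-}-1\}$. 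It only remains to evaluate the number $g_{1,-}-j_{1,-}$ of terms. Using
\begin{equation*}
\frac{1}{1+M/\sqrt n}=1-\frac{M}{\sqrt n}+\frac{M^{2}}{n}-\frac{M^{3}}{n^{3/2}}+\bigO\!\left(\frac{M^{4}}{n^{2}}\right)
\end{equation*}
and the identity $g_{1,-}=\frac{bn\rho_{1}^{2b}}{1+M/\sqrt n}-\alpha+\theta_{1,-}^{(n,M)}$, one gets
\begin{equation*}
g_{1,-}=b\rho_{1}^{2b}n-bM\rho_{1}^{2b}\sqrt n+bM^{2}\rho_{1}^{2b}-bM^{3}\rho_{1}^{2b}\,n^{-1/2}-\alpha+\theta_{1,-}^{(n,M)}+\bigO(M^{4}/n).
\end{equation*}
Multiplying by $\ln\Omega$ and absorbing the total exponential remainder $(g_{1,-}-j_{1,-})\cdot\bigO(e^{-cM^{2}})=\bigO(n\,e^{-cn^{1/5}})$ into $\bigO(M^{4}/n)$ reproduces the announced formula.

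The only obstacle worth noting is uniformity in $\vec u$ over the closed polydisks $|u_{\ell}-x_{\ell}|\le\delta$. This is painless: $F_{n,j,\ell}$ is independent of $\vec u$, the only $\vec u$-dependence sits in the entire coefficients $\omega_{\ell}(\vec u)$, and one simply shrinks $\delta$ so that $\Omega=e^{u_{1}+\cdots+u_{2m}}$ stays in a fixed sector bounded away from the negative real axis.
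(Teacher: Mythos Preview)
Your proof is correct and follows the same approach as the paper: show each $F_{n,j,\ell}=1+\bigO(e^{-cM^{2}})$ on the summation range, reduce to $(g_{1,-}-j_{1,-})\ln\Omega$, and expand $g_{1,-}$. One small technical point: Lemma~\ref{lemma: uniform}(i) as stated requires $\lambda\ge 1+\delta$ for a \emph{fixed} $\delta>0$, but here $\lambda_{j,1}$ and $\hat\lambda_{j,\ell}$ ($\ell\le m$) can have $\lambda-1\asymp M/\sqrt{n}\to 0$; the paper therefore invokes Lemma~\ref{lemma: asymp of gamma for lambda one over sqrt away from 1}(i) (valid uniformly for $\lambda\ge 1+1/\sqrt{a}$), and alternatively the full uniform expansion \eqref{asymp of Ra} in Lemma~\ref{lemma: uniform} also gives your estimate directly.
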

\begin{proof}
For all sufficiently large $n$, and uniformly for $j\in \{j:\lambda_{j,1}\in I_{3}\}$, we have that $\eta_{j,2}, \hat{\eta}_{j,\ell}$, $\ell=m+1,\dots,2m$ are positive and bounded away from $0$, and furthermore
\begin{align*}
& \min_{\ell\in\{1,\dots,m\}} \{\eta_{j,1},\hat{\eta}_{j,\ell}\} \geq \tfrac{M}{\sqrt{n}} + \bigO(\tfrac{1}{\sqrt{n}}), & & - \sqrt{a_{j}/2} \min_{\ell\in\{1,\dots,m\}} \{\eta_{j,1},\hat{\eta}_{j,\ell}\} \leq - \tfrac{M r^{b}}{\sqrt{2}} + \bigO(1).
\end{align*}
Using Lemma \ref{lemma: asymp of gamma for lambda one over sqrt away from 1} and the fact that $M= n^{\frac{1}{10}}$, we infer that 
\begin{align*}
& F_{n,j,\ell} = \frac{1+\bigO(e^{-\frac{a_{j}\hat\eta_{j,\ell}^{2}}{2}})}{1+\bigO(e^{-\frac{a_{j}\eta_{j,2}^{2}}{2}}+e^{-\frac{a_{j}\eta_{j,1}^{2}}{2}})} = 1+\bigO(e^{-cn^{1/5}}), & & \ell=1,\dots,m, \\
& F_{n,j,\ell} = \frac{1+\bigO(e^{-\frac{a_{j}\hat\eta_{j,\ell}^{2}}{2}}+e^{-\frac{a_{j}\eta_{j,2}^{2}}{2}}+e^{-\frac{a_{j}\eta_{j,1}^{2}}{2}})}{1+\bigO(e^{-\frac{a_{j}\eta_{j,2}^{2}}{2}}+e^{-\frac{a_{j}\eta_{j,1}^{2}}{2}})} = 1+\bigO(e^{-cn^{1/5}}), & & \ell=m+1,\dots,2m,
\end{align*}
uniformly for $j \in \{j_{1,-},\dots,g_{1,-}-1\}$. Hence, by \eqref{sums lambda j hard},
\begin{align*}
S_{2}^{(3)} & = \sum_{j=j_{1,-}}^{g_{1,-}-1} \ln \bigg( 1+\sum_{\ell=1}^{2m} \omega_{\ell} \big[1 + \bigO(e^{-cn^{1/5}}) \big] \bigg) = (g_{1,-}-j_{1,-})\ln \Omega + \bigO(n e^{-cn^{1/5}}), \quad \mbox{as } n \to +\infty.
\end{align*}
In the above, the error terms before the second equality are independent of $u_{1},\dots,u_{2m}$, so the expansion (see \cite[Lemma 2.4]{Charlier 2d jumps})
\begin{align*}
g_{1,-}-j_{1,-} = b\rho_{1}^{2b}n - j_{1,-} - bM\rho_{1}^{2b}\sqrt{n} + bM^{2}\rho_{1}^{2b}-\alpha+\theta_{1,-}^{(n,M)} - bM^{3}\rho_{1}^{2b}n^{-\frac{1}{2}} + \bigO(M^{4}n^{-1})
\end{align*}
yields the claim.
\end{proof}

\begin{lemma}\label{lemma:S4p1p hard}
For any fixed $x_{1},\dots,x_{2m} \in \mathbb{R}$, there exists $\delta > 0$ such that
\begin{align*}
S_{4}^{(1)} = & \; \bigO(n^{-100}),
\end{align*}
as $n \to +\infty$ uniformly for $u_{1} \in \{z \in \mathbb{C}: |z-x_{1}|\leq \delta\},\dots,u_{2m} \in \{z \in \mathbb{C}: |z-x_{2m}|\leq \delta\}$.
\end{lemma}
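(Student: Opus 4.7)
The plan is to establish that on the summation range $\{j:\lambda_{j,2}\in I_1\}$ every term $F_{n,j,\ell}$ is exponentially small of the form $\bigO(e^{-c n^{1/5}})$. Combined with the fact that the sum has at most $\bigO(n)$ terms, this immediately yields the claimed $\bigO(n^{-100})$ bound since $e^{-cn^{1/5}}$ decays faster than any power of $n$. The structure of the argument mirrors that of Lemma \ref{lemma: S5 hard} and Lemma \ref{lemma:S2kp3p hard}, except that here $F_{n,j,\ell}$ is close to $0$ rather than to $1$.

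The crucial step is verifying that every parameter appearing in the incomplete gammas in \eqref{Fnjellexpression} is bounded above by $1-c\,M/\sqrt n$ for some positive constant $c$ independent of $n$. By definition of $I_1$, we immediately have $\lambda_{j,2}<1-M/\sqrt n$. Since $\lambda_{j,1}=(\rho_1/\rho_2)^{2b}\lambda_{j,2}<(\rho_1/\rho_2)^{2b}<1$, the parameter $\lambda_{j,1}$ is in fact bounded away from $1$ uniformly in $n$. For the shifted parameters, $\hat\lambda_{j,\ell}=\lambda_{j,1}(1-t_\ell/n)$ when $\ell\leq m$ (which inherits the uniform bound away from $1$), and $\hat\lambda_{j,\ell}=\lambda_{j,2}(1+t_\ell/n)$ when $\ell\geq m+1$; in the latter case, since $\lambda_{j,2}<1-M/\sqrt n$ and $t_\ell/n=o(M/\sqrt n)$ (as $M=n^{1/10}$), we get $\hat\lambda_{j,\ell}<1-\tfrac{M}{2\sqrt n}$ for all sufficiently large $n$. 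Then for any $\lambda\in[1-\epsilon,1-\tfrac{M}{2\sqrt n}]$ a Taylor expansion of $\eta^2=2(\lambda-1-\ln\lambda)$ gives $\eta^2\geq c(1-\lambda)^2\geq cM^2/n$, so $a_j\eta^2/2\geq c'\,n^{1/5}$ since $a_j$ is of order $n$.

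Invoking Lemma \ref{lemma: asymp of gamma for lambda one over sqrt away from 1}(i), which supplies $\gamma(a,a\lambda)/\Gamma(a)=\bigO(e^{-a\eta^2/2})$ in the regime $\lambda<1-\tilde\delta$, we obtain
\[
\frac{\gamma(a_j,a_j\lambda)}{\Gamma(a_j)}=\bigO\!\bigl(e^{-c'n^{1/5}}\bigr)\qquad\text{for each}\ \lambda\in\{\lambda_{j,1},\lambda_{j,2},\hat\lambda_{j,1},\ldots,\hat\lambda_{j,2m}\}.
\]
Substituted into \eqref{Fnjellexpression}, this makes every numerator $\bigO(\Gamma(a_j)e^{-c'n^{1/5}})$ while the denominator equals $\Gamma(a_j)\bigl(1-\bigO(e^{-c'n^{1/5}})\bigr)$, hence stays bounded away from $0$. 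Therefore $F_{n,j,\ell}=\bigO(e^{-c'n^{1/5}})$ uniformly in $j$, in $\ell\in\{1,\ldots,2m\}$, and in $\vec u$ on the specified compact complex disks, since $|\omega_\ell|$ is bounded on those disks and none of the intermediate bounds depends on $\vec u$. Consequently $\ln\bigl(1+\sum_{\ell=1}^{2m}\omega_\ell F_{n,j,\ell}\bigr)=\bigO(e^{-c'n^{1/5}})$ uniformly, and summing over the $\bigO(n)$ indices produces $S_4^{(1)}=\bigO(ne^{-c'n^{1/5}})=\bigO(n^{-100})$. There is no real obstacle; the only delicate point is ensuring $M/\sqrt n$ dominates $t_\ell/n$, which holds because $M=n^{1/10}\to\infty$.
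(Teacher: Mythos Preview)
Your argument is correct and follows essentially the same route as the paper: you verify that all the parameters $\lambda_{j,1},\lambda_{j,2},\hat\lambda_{j,\ell}$ lie strictly below $1$ by at least a margin of order $M/\sqrt{n}$ (or are uniformly bounded away from $1$), deduce $a_j\eta^2/2\gtrsim n^{1/5}$, and conclude $F_{n,j,\ell}=\bigO(e^{-c n^{1/5}})$. One small correction: you cite part~(i) of Lemma~\ref{lemma: asymp of gamma for lambda one over sqrt away from 1}, but part~(i) treats the regime $\lambda\geq 1+1/\sqrt{a}$; the estimate you describe and use (namely $\gamma(a,a\lambda)/\Gamma(a)=\bigO(e^{-a\eta^2/2})$ for $\lambda<1$) is part~(ii).
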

\begin{proof}
For all sufficiently large $n$, and uniformly for $j\in \{j:\lambda_{j,2}\in I_{1}\}$, we have that $\eta_{j,1}, \hat{\eta}_{j,\ell}$, $\ell=1,\dots,m$ are negative and bounded away from $0$, and furthermore
\begin{align*}
& \max_{\ell\in\{m+1,\dots,2m\}} \{\eta_{j,2},\hat{\eta}_{j,\ell}\} \leq -\tfrac{M}{\sqrt{n}} + \bigO(\tfrac{1}{\sqrt{n}}), & & - \sqrt{a_{j}/2} \max_{\ell\in\{m+1,\dots,2m\}} \{\eta_{j,2},\hat{\eta}_{j,\ell}\} \geq \tfrac{M r^{b}}{\sqrt{2}} + \bigO(1).
\end{align*}
Using Lemma \ref{lemma: asymp of gamma for lambda one over sqrt away from 1} and the fact that $M= n^{\frac{1}{10}}$, we infer that
\begin{align*}
& F_{n,j,\ell} = \bigO(e^{-cn^{1/5}}) = \bigO(n^{-101}), & & \ell=1,\dots,2m, 
\end{align*}
uniformly for $j \in \{g_{2,+}+1,\dots,j_{2,+}\}$. Hence, by \eqref{sums lambda j hard},
\begin{align*}
S_{4}^{(1)} & = \sum_{j=g_{2,+}+1}^{j_{2,+}} \ln \bigg( 1+\sum_{\ell=1}^{2m} \omega_{\ell} \bigO(n^{-101}) \bigg) = \bigO(n^{-100}), \quad \mbox{as } n \to +\infty,
\end{align*}
and the claim follows.
\end{proof}

\begin{lemma}\label{lemma:S2p1p hard}
For any fixed $x_{1},\dots,x_{2m} \in \mathbb{R}$, there exists $\delta > 0$ such that 
\begin{align*}
& S_{2}^{(1)} = D_{1}^{(\epsilon)} n + D_{2}^{(M)} \sqrt{n} + D_{3} \ln n + D_{4}^{(n,\epsilon,M)} + \frac{D_{5}^{(n,M)}}{\sqrt{n}} + \bigO\bigg( \frac{M^{4}}{n} + \frac{1}{\sqrt{n} M} + \frac{1}{M^{6}} + \frac{\sqrt{n}}{M^{11}} \bigg),
\end{align*}
as $n \to +\infty$ uniformly for $u_{1} \in \{z \in \mathbb{C}: |z-x_{1}|\leq \delta\},\dots,u_{2m} \in \{z \in \mathbb{C}: |z-x_{2m}|\leq \delta\}$, where
\begin{align*}
& D_{1}^{(\epsilon)} = \int_{b\rho_{1}^{2b}}^{\frac{b\rho_{1}^{2b}}{1-\epsilon}}f_{1}(x)dx, \qquad D_{2}^{(M)} = -b\rho_{1}^{2b} f_{1}(b\rho_{1}^{2b})M, \qquad D_{3} =  \frac{-b\rho_{1}^{2b} \mathsf{T}_{1}(b\rho_{1}^{2b})}{2(1+\mathsf{T}_{0}(b\rho_{1}^{2b})+\hat{\mathsf{T}}_{0}(b\rho_{2}^{2b}))}, \\
& D_{4}^{(n,\epsilon,M)} = -b\rho_{1}^{2b} M^{2} \Big( f_{1}(b\rho_{1}^{2b}) + \frac{b\rho_{1}^{2b}}{2}f_{1}'(b\rho_{1}^{2b}) \Big) - \frac{b\rho_{1}^{2b} \mathsf{T}_{1}(b\rho_{1}^{2b})}{1+\mathsf{T}_{0}(b\rho_{1}^{2b})+\hat{\mathsf{T}}_{0}(b\rho_{2}^{2b})} \ln \bigg( \frac{\epsilon}{M(1-\epsilon)} \bigg) \\
& + \int_{b\rho_{1}^{2b}}^{\frac{b\rho_{1}^{2b}}{1-\epsilon}} \bigg\{ f(x) + \frac{b\rho_{1}^{2b}\mathsf{T}_{1}(b\rho_{1}^{2b})}{(1+\mathsf{T}_{0}(b\rho_{1}^{2b})+\hat{\mathsf{T}}_{0}(b\rho_{2}^{2b}))(x-b\rho_{1}^{2b})} \bigg\}dx + \bigg( \alpha - \frac{1}{2} + \theta_{1,+}^{(n,M)} \bigg)f_{1}(b \rho_{1}^{2b}) \\
& + \bigg( \frac{1}{2} - \alpha - \theta_{1,+}^{(n,\epsilon)} \bigg) f_{1}\bigg( \frac{b\rho_{1}^{2b}}{1-\epsilon} \bigg) + \frac{b \mathsf{T}_{1}(b\rho_{1}^{2b})}{M^{2}(1+\mathsf{T}_{0}(b\rho_{1}^{2b})+\hat{\mathsf{T}}_{0}(b\rho_{2}^{2b}))} + \frac{-5b \mathsf{T}_{1}(b\rho_{1}^{2b})}{2\rho_{1}^{2b}M^{4}(1+\mathsf{T}_{0}(b\rho_{1}^{2b})+\hat{\mathsf{T}}_{0}(b\rho_{2}^{2b}))}, \\
& D_{5}^{(n,M)} = -M^{3}b\rho_{1}^{2b} \bigg( f_{1}(b\rho_{1}^{2b}) + b\rho_{1}^{2b} f_{1}'(b\rho_{1}^{2b}) + \frac{(b\rho_{1}^{2b})^{2}}{6}f_{1}''(b\rho_{1}^{2b}) \bigg) + M b\rho_{1}^{2b} f_{1}'(b\rho_{1}^{2b}) \bigg( \alpha-\frac{1}{2}+\theta_{1,+}^{(n,M)} \bigg) \\
& + M \bigg( \frac{(b+\alpha)\rho_{1}^{2b}\mathsf{T}_{1}(b\rho_{1}^{2b})}{1+\mathsf{T}_{0}(b\rho_{1}^{2b})+\hat{\mathsf{T}}_{0}(b\rho_{2}^{2b})} - \frac{b\rho_{1}^{4b} \mathsf{T}_{2}(b\rho_{1}^{2b})}{2(1+\mathsf{T}_{0}(b\rho_{1}^{2b})+\hat{\mathsf{T}}_{0}(b\rho_{2}^{2b}))} + \frac{b\rho_{1}^{4b} \mathsf{T}_{1}(b\rho_{1}^{2b})^{2}}{(1+\mathsf{T}_{0}(b\rho_{1}^{2b})+\hat{\mathsf{T}}_{0}(b\rho_{2}^{2b}))^{2}} \bigg),
\end{align*}
where $f_{1}$ and $f$ are as in the statement of Lemma \ref{lemma:S3 asymp hard}.
\end{lemma}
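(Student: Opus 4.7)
The plan is to follow the same strategy as in Lemma \ref{lemma:S3p1p asymp hard}, but with sharper expansions because $\lambda_{j,1}$ now approaches $1$ from below at distance as small as $M/\sqrt{n}$. For $j \in \{g_{1,+}+1,\dots,j_{1,+}\}$, the quantities $\lambda_{j,1}$ and $\hat{\lambda}_{j,\ell}$ with $\ell=1,\dots,m$ all lie in the near-critical regime from below (using $\hat{\lambda}_{j,\ell} = \lambda_{j,1}(1-t_\ell/n)$), while $\lambda_{j,2}$ and $\hat{\lambda}_{j,\ell}$ for $\ell=m+1,\dots,2m$ remain bounded away from $1$ from above and contribute only $\bigO(e^{-cn^{1/5}})$ to $F_{n,j,\ell}$, exactly as in the proof of Lemma \ref{lemma: S0 hard}.

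The first step is to apply the uniform asymptotics of Lemma \ref{lemma: asymp of gamma for lambda one over sqrt away from 1} as in \eqref{lol2} to enough orders in $1/\sqrt{a_j}$ so that the remainder is of size $\bigO(n^{-5/2})$, and then to Taylor-expand $\hat{\eta}_{j,\ell}$ and $e^{-a_j\hat{\eta}_{j,\ell}^2/2}$ around their values at $t_\ell=0$ as in \eqref{lol3}. Substituting into $F_{n,j,\ell}$ and expanding $\ln(1+\sum_\ell \omega_\ell F_{n,j,\ell})$ in powers of $n^{-1/2}$ yields a Riemann sum whose summand has the form $f_1(j/n) + \tfrac{1}{n}f(j/n) +$ higher-order terms, where the analogues of the oscillatory factors involving $(\rho_2/\rho_1)^{-2(j_{\star}-j)}$ from Lemma \ref{lemma:S3p1p asymp hard} are all of size $\bigO(e^{-cn^{1/5}})$, because the factor $e^{-a_j\eta_{j,1}^2/2}$ is at most of this size throughout the summation range.

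The next step is to apply Lemma \ref{lemma:Riemann sum NEW} with $A = b\rho_1^{2b}/(1-M/\sqrt{n})$, $B = b\rho_1^{2b}/(1-\epsilon)$, and with $a_0 = 1-\alpha-\theta_{1,+}^{(n,M)}$, $b_0 = -\alpha-\theta_{1,+}^{(n,\epsilon)}$ so that $An + a_0 = g_{1,+}+1$ and $Bn + b_0 = j_{1,+}$. Expanding $A = b\rho_1^{2b}(1 + M/\sqrt{n} + M^2/n + M^3/n^{3/2} + \bigO(M^4/n^2))$ and Taylor-expanding $f_1, f_1', f_1''$ around $b\rho_1^{2b}$ produces the leading $D_1^{(\epsilon)} n$ piece together with $D_2^{(M)}\sqrt{n}$ and the $M^2$- and $M^3$-parts of $D_4^{(n,\epsilon,M)}$ and $D_5^{(n,M)}/\sqrt{n}$. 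The logarithmic term $D_3 \ln n$ and the $\ln(\epsilon/(M(1-\epsilon)))$ part of $D_4$ arise from the $1/(x-b\rho_1^{2b})$ singularity of $f$: subtracting and adding the pole gives
\[
\int_A^{\frac{b\rho_1^{2b}}{1-\epsilon}} f(x)\,dx = \int_{b\rho_1^{2b}}^{\frac{b\rho_1^{2b}}{1-\epsilon}} \bigg\{f(x) + \frac{b\rho_1^{2b}\mathsf{T}_1(b\rho_1^{2b})}{(1+\mathsf{T}_0(b\rho_1^{2b})+\hat{\mathsf{T}}_0(b\rho_2^{2b}))(x-b\rho_1^{2b})}\bigg\}dx + D_3 \ln n + \cdots,
\]
since $\int_A^{b\rho_1^{2b}/(1-\epsilon)} \frac{dx}{x-b\rho_1^{2b}} = \ln(\sqrt{n}/M) - \ln(\epsilon/(1-\epsilon)) + \bigO(M/\sqrt{n})$.

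The main obstacle is the meticulous bookkeeping of all the $M$-dependent boundary contributions, and verifying that they combine into the claimed form with the uniform error $\bigO(M^4/n + 1/(\sqrt{n}M) + 1/M^6 + \sqrt{n}/M^{11})$. The four sources are distinct: $M^4/n$ is the $f''''$ remainder in Lemma \ref{lemma:Riemann sum NEW}; $1/(\sqrt{n}M)$ comes from dropped subleading pieces in the gamma expansion multiplied by factors that blow up as $1/(1-\hat{\lambda}_{j,\ell})^k \sim (\sqrt{n}/M)^k$ at the boundary $\lambda = 1 - M/\sqrt{n}$; and $1/M^6$ and $\sqrt{n}/M^{11}$ arise from higher orders of the same expansion that are not absorbed into $D_4$ or $D_5$. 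These four sources are designed to balance at $M = n^{1/10}$, each becoming $\bigO(n^{-3/5})$, which is what produces the final error in Theorem \ref{thm:main thm hard}. Uniformity in $(u_1,\dots,u_{2m})$ in the complex neighborhood follows from the analyticity of the quantities involved, exactly as in Lemmas \ref{lemma: S0 hard}--\ref{lemma:S3 asymp hard}.
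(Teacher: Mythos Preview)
Your proposal is correct and follows essentially the same route as the paper. The paper likewise reduces $F_{n,j,\ell}$ for $\ell\le m$ to the erfc/Temme form (equation \eqref{lol4}), replaces the $\ell>m$ contributions by the constant $\hat{\mathsf{T}}_0(b\rho_2^{2b})$, expands the summand as $f_1(j/n)+\tfrac{1}{n}f(j/n)$ plus explicit higher-order terms with $(j/n-b\rho_1^{2b})^{-3}$ and $(j/n-b\rho_1^{2b})^{-5}$ singularities, and then applies Lemma \ref{lemma:Riemann sum NEW} with exactly your choice of $A,B,a_0,b_0$; the four error terms arise from the sources you name. One harmless slip: your displayed integral should read $\int_A^{b\rho_1^{2b}/(1-\epsilon)}\frac{dx}{x-b\rho_1^{2b}}=\ln(\sqrt{n}/M)+\ln(\epsilon/(1-\epsilon))+\bigO(M/\sqrt{n})$, with a plus sign, so that the product with the residue $-\tfrac{b\rho_1^{2b}\mathsf{T}_1(b\rho_1^{2b})}{1+\mathsf{T}_0(b\rho_1^{2b})+\hat{\mathsf{T}}_0(b\rho_2^{2b})}$ reproduces $D_3\ln n$ together with the $\ln(\epsilon/(M(1-\epsilon)))$ term in $D_4^{(n,\epsilon,M)}$.
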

\begin{proof}
Using (\ref{Fnjellexpression}) and Lemma \ref{lemma: uniform}, we obtain
\begin{align}
S_{2}^{(1)} & = \hspace{-0.15cm} \sum_{j= g_{1,+}+1}^{j_{1,+}} \hspace{-0.15cm} \ln  \hspace{-0.05cm} \bigg( \hspace{-0.05cm} 1 \hspace{-0.05cm} + \hspace{-0.05cm} \sum_{\ell=1}^{m} \omega_{\ell} \bigg\{ \frac{ \frac{1}{2}\mathrm{erfc}\big(-\hat{\eta}_{j,\ell} \sqrt{a_{j}/2}\big) - R_{a_{j}}(\hat\eta_{j,\ell}) }{ \frac{1}{2}\mathrm{erfc}\big(-\eta_{j,1} \sqrt{a_{j}/2}\big) - R_{a_{j}}(\eta_{j,1}) } + \bigO(e^{-cn^{\frac{1}{5}}})\bigg\} + \hspace{-0.15cm} \sum_{\ell=m+1}^{2m} \hspace{-0.15cm} \omega_{\ell}(1+\bigO(e^{-cn^{\frac{1}{5}}})) \bigg) \nonumber \\
& = \sum_{j= g_{1,+}+1}^{j_{1,+}} \ln \bigg( 1+\sum_{\ell=1}^{m} \omega_{\ell} \frac{ \frac{1}{2}\mathrm{erfc}\big(-\hat{\eta}_{j,\ell} \sqrt{a_{j}/2}\big) - R_{a_{j}}(\hat\eta_{j,\ell}) }{ \frac{1}{2}\mathrm{erfc}\big(-\eta_{j,1} \sqrt{a_{j}/2}\big) - R_{a_{j}}(\eta_{j,1}) } + \hat{\mathsf{T}}_{0}(b\rho_{2}^{2b}) \bigg) +\bigO(e^{-cn^{1/5}}). \label{lol4}
\end{align}
The computation and the resulting formulas appearing in the rest of this proof are the same as in the proof of \cite[Lemma 2.6]{ACCL2022}, except that $\rho$ in \cite{ACCL2022} corresponds to $\rho_1$ here and the extra term $\hat{\mathsf{T}}_{0}(b\rho_{2}^{2b})$ in the above equation implies that $1+ \mathsf{T}_{0}(x)$ in \cite[Lemma 2.6]{ACCL2022} is replaced here by $1+ \mathsf{T}_{0}(x) +\hat{\mathsf{T}}_{0}(b\rho_{2}^{2b})$. We nevertheless provide the details here for completeness. Using \eqref{asymp of Ra}, we find
\begin{align*}
S_{2}^{(1)} = &\; \sum_{j= g_{1,+}+1}^{j_{1,+}} \bigg(  f_{1}(j/n)+\frac{1}{n}f(j/n) + \frac{1}{n^{2}} \frac{2b^{3}\rho_{1}^{4b} \mathsf{T}_{1}(j/n)}{(1+\mathsf{T}_{0}(j/n)+\hat{\mathsf{T}}_{0}(b\rho_{2}^{2b}))(j/n-b\rho_{1}^{2b})^{3}} \\
& + \frac{1}{n^{3}} \frac{-10b^{5}\rho_{1}^{6b} \mathsf{T}_{1}(j/n)}{(1+\mathsf{T}_{0}(j/n)+\hat{\mathsf{T}}_{0}(b\rho_{2}^{2b}))(j/n-b\rho_{1}^{2b})^{5}} \\
& + \bigO\bigg(\frac{n^{-2}}{(j/n-b\rho_{1}^{2b})^{2}} + \frac{n^{-3}}{(j/n-b\rho_{1}^{2b})^{4}} + \frac{n^{-4}}{(j/n-b\rho_{1}^{2b})^{7}} + \frac{n^{-6}}{(j/n-b\rho_{1}^{2b})^{12}} \bigg)\bigg).
\end{align*}
Note that
\begin{align*}
& \sum_{j= g_{1,+}+1}^{j_{1,+}}\bigO\bigg(\frac{n^{-2}}{(j/n-b\rho_{1}^{2b})^{2}} + \frac{n^{-3}}{(j/n-b\rho_{1}^{2b})^{4}} + \frac{n^{-4}}{(j/n-b\rho_{1}^{2b})^{7}} + \frac{n^{-6}}{(j/n-b\rho_{1}^{2b})^{12}} \bigg) \\
& = \bigO\bigg( \frac{1}{M\sqrt{n} } + \frac{1}{M^{3}\sqrt{n}} + \frac{1}{M^{6}} + \frac{\sqrt{n}}{M^{11}} \bigg).
\end{align*}
Also, using Lemma \ref{lemma:Riemann sum NEW} (with $A=\frac{b\rho_{1}^{2b}}{1-\frac{M}{\sqrt{n}}}$, $a_{0}=1-\alpha-\theta_{1,+}^{(n,M)}$, $B=\frac{b\rho_{1}^{2b}}{1-\epsilon}$ and $b_{0} = -\alpha-\theta_{1,+}^{(n,\epsilon)}$), we get
\begin{align*}
& \sum_{j= g_{1,+}+1}^{j_{1,+}} f_{1}(j/n) = n \int_{\frac{b\rho_{1}^{2b}}{1-\frac{M}{\sqrt{n}}}}^{\frac{b\rho_{1}^{2b}}{1-\epsilon}}f_{1}(x)dx + \big(\alpha-\tfrac{1}{2}+\theta_{1,+}^{(n,M)}\big)f_{1}(\tfrac{b\rho_{1}^{2b}}{1-\frac{M}{\sqrt{n}}})+\big(\tfrac{1}{2}-\alpha-\theta_{1,+}^{(n,\epsilon)}\big)f_{1}(\tfrac{b\rho_{1}^{2b}}{1-\epsilon})+\bigO(n^{-1}), \\
& \frac{1}{n}\sum_{j= g_{1,+}+1}^{j_{1,+}} f(j/n) = \int_{\frac{b\rho_{1}^{2b}}{1-\frac{M}{\sqrt{n}}}}^{\frac{b\rho_{1}^{2b}}{1-\epsilon}}f(x)dx + \bigO\bigg(\frac{1}{M\sqrt{n}}\bigg), \\
& \frac{1}{n^{2}}\sum_{j= g_{1,+}+1}^{j_{1,+}}  \frac{2b^{3}\rho_{1}^{4b} \mathsf{T}_{1}(j/n)(j/n-b\rho^{2b})^{-3}}{1+\mathsf{T}_{0}(j/n)+\hat{\mathsf{T}}_{0}(b\rho_{2}^{2b})} = \frac{1}{n} \int_{\frac{b\rho_{1}^{2b}}{1-\frac{M}{\sqrt{n}}}}^{\frac{b\rho_{1}^{2b}}{1-\epsilon}} \frac{2b^{3}\rho_{1}^{4b} \mathsf{T}_{1}(x)(x-b\rho_{1}^{2b})^{-3}}{1+\mathsf{T}_{0}(x)+\hat{\mathsf{T}}_{0}(b\rho_{2}^{2b})} dx + \bigO\bigg(\frac{1}{M^{3}\sqrt{n}}\bigg), \\
& \frac{1}{n^{3}}\sum_{j= g_{1,+}+1}^{j_{1,+}} \frac{-10b^{5}\rho_{1}^{6b} \mathsf{T}_{1}(j/n) (j/n-b\rho^{2b})^{-5}}{1+\mathsf{T}_{0}(j/n)+\hat{\mathsf{T}}_{0}(b\rho_{2}^{2b})} = \frac{1}{n^{2}} \int_{\frac{b\rho_{1}^{2b}}{1-\frac{M}{\sqrt{n}}}}^{\frac{b\rho_{1}^{2b}}{1-\epsilon}} \frac{-10b^{5}\rho_{1}^{6b} \mathsf{T}_{1}(x)(x-b\rho^{2b})^{-5}}{1+\mathsf{T}_{0}(x)+\hat{\mathsf{T}}_{0}(b\rho_{2}^{2b})} dx + \bigO\bigg(\frac{1}{M^{5}\sqrt{n}}\bigg).
\end{align*}
Furthermore, by a direct analysis,
\begin{align*}
& n \int_{\frac{b\rho_{1}^{2b}}{1-\frac{M}{\sqrt{n}}}}^{\frac{b\rho_{1}^{2b}}{1-\epsilon}}f_{1}(x)dx = n \int_{b\rho_{1}^{2b}}^{\frac{b\rho_{1}^{2b}}{1-\epsilon}}f_{1}(x)dx - b\rho_{1}^{2b}f_{1}(b \rho_{1}^{2b})M\sqrt{n} - M^{2} b \rho_{1}^{2b}\big( f_{1}(b\rho_{1}^{2b})+ \frac{b \rho_{1}^{2b}}{2} f_{1}'(b \rho_{1}^{2b}) \big) \\
& \hspace{3cm} - \frac{M^{3}}{\sqrt{n}} b\rho_{1}^{2b} \Big( f_{1}(b\rho_{1}^{2b}) + b\rho_{1}^{2b}f_{1}'(b\rho_{1}^{2b}) + \frac{(b\rho_{1}^{2b})^{2}}{6}f_{1}''(b\rho_{1}^{2b})  \Big) + \bigO\bigg( \frac{M^{4}}{n} \bigg), \\
& f_{1}(\tfrac{b\rho_{1}^{2b}}{1-\frac{M}{\sqrt{n}}}) = f_{1}(b \rho_{1}^{2b}) + \frac{M}{\sqrt{n}}b\rho_{1}^{2b}f_{1}'(b \rho_{1}^{2b}) + \bigO\bigg( \frac{M^{2}}{n} \bigg), \\
& \int_{\frac{b\rho_{1}^{2b}}{1-\frac{M}{\sqrt{n}}}}^{\frac{b\rho_{1}^{2b}}{1-\epsilon}}f(x)dx = \int_{b\rho_{1}^{2b}}^{\frac{b\rho_{1}^{2b}}{1-\epsilon}} \bigg\{ f(x) + \frac{b\rho_{1}^{2b} \mathsf{T}_{1}(b\rho_{1}^{2b})}{(1+\mathsf{T}_{0}(b\rho_{1}^{2b})+\hat{\mathsf{T}}_{0}(b\rho_{2}^{2b}))(x-b\rho_{1}^{2b})} \bigg\}dx - \frac{b\rho_{1}^{2b} \mathsf{T}_{1}(b\rho_{1}^{2b})}{2(1+\mathsf{T}_{0}(b\rho_{1}^{2b})+\hat{\mathsf{T}}_{0}(b\rho_{2}^{2b}))} \ln n \\
& - \frac{b\rho_{1}^{2b} \mathsf{T}_{1}(b\rho_{1}^{2b})}{1+\mathsf{T}_{0}(b\rho_{1}^{2b})+\hat{\mathsf{T}}_{0}(b\rho_{2}^{2b})} \ln \frac{\epsilon}{M(1-\epsilon)} + \frac{M}{\sqrt{n}} \bigg\{ \frac{(b+\alpha) \rho_{1}^{2b} \mathsf{T}_{1}(b\rho_{1}^{2b})}{1+\mathsf{T}_{0}(b\rho_{1}^{2b})+\hat{\mathsf{T}}_{0}(b\rho_{2}^{2b})} - \frac{b \rho_{1}^{4b} \mathsf{T}_{2}(b\rho_{1}^{2b})}{2(1+\mathsf{T}_{0}(b\rho_{1}^{2b})+\hat{\mathsf{T}}_{0}(b\rho_{2}^{2b}))} \\
& + \frac{b\rho_{1}^{4b} \mathsf{T}_{1}(b\rho_{1}^{2b})^{2}}{(1+\mathsf{T}_{0}(b\rho_{1}^{2b})+\hat{\mathsf{T}}_{0}(b\rho_{2}^{2b}))^{2}} \bigg\} + \bigO\bigg( \frac{M^{2}}{n} \bigg), \\
& \frac{1}{n} \int_{\frac{b\rho_{1}^{2b}}{1-\frac{M}{\sqrt{n}}}}^{\frac{b\rho_{1}^{2b}}{1-\epsilon}} \frac{2b^{3}\rho_{1}^{4b} \mathsf{T}_{1}(x)}{(1+\mathsf{T}_{0}(x)+\hat{\mathsf{T}}_{0}(b\rho_{2}^{2b}))(x-b\rho_{1}^{2b})^{3}} dx = \frac{b \mathsf{T}_{1}(b \rho_{1}^{2b})}{M^{2}(1+\mathsf{T}_{0}(b \rho_{1}^{2b})+\hat{\mathsf{T}}_{0}(b \rho_{2}^{2b}))} + \bigO\bigg(\frac{1}{M \sqrt{n}}\bigg), \\
& \frac{1}{n^{2}} \int_{\frac{b\rho_{1}^{2b}}{1-\frac{M}{\sqrt{n}}}}^{\frac{b\rho_{1}^{2b}}{1-\epsilon}} \frac{-10b^{5}\rho_{1}^{6b} \mathsf{T}_{1}(x)}{(1+\mathsf{T}_{0}(x)+\hat{\mathsf{T}}_{0}(b\rho_{2}^{2b}))(x-b\rho_{1}^{2b})^{5}} dx = \frac{-5 b \mathsf{T}_{1}(b \rho_{1}^{2b})}{2\rho_{1}^{2b}M^{4}(1+\mathsf{T}_{0}(b \rho_{1}^{2b})+\hat{\mathsf{T}}_{0}(b \rho_{2}^{2b}))} + \bigO\bigg(\frac{1}{M^{3} \sqrt{n}}\bigg).
\end{align*}
The claim now follows after a computation.
\end{proof}

\begin{lemma}\label{lemma:S4p3p hard}
For any fixed $x_{1},\dots,x_{2m} \in \mathbb{R}$, there exists $\delta > 0$ such that 
\begin{align*}
& S_{4}^{(3)} = \hat{D}_{1}^{(\epsilon)} n + \hat{D}_{2}^{(M)} \sqrt{n} + \hat{D}_{3} \ln n + \hat{D}_{4}^{(n,\epsilon,M)} + \frac{\hat{D}_{5}^{(n,M)}}{\sqrt{n}} + \bigO\bigg( \frac{M^{4}}{n} + \frac{1}{\sqrt{n} M} + \frac{1}{M^{6}}  + \frac{\sqrt{n}}{M^{11}} \bigg),
\end{align*}
as $n \to +\infty$ uniformly for $u_{1} \in \{z \in \mathbb{C}: |z-x_{1}|\leq \delta\},\dots,u_{2m} \in \{z \in \mathbb{C}: |z-x_{2m}|\leq \delta\}$, where
\begin{align*}
& \hat{D}_{1}^{(\epsilon)} = \int_{\frac{b\rho_{2}^{2b}}{1+\epsilon}}^{b\rho_{2}^{2b}}\hat{f}_{1}(x)dx, \qquad \hat{D}_{2}^{(M)} = 0, \qquad \hat{D}_{3} =  \frac{b\rho_{2}^{2b} \hat{\mathsf{T}}_{1}(b\rho_{2}^{2b})}{2}, \\
& \hat{D}_{4}^{(n,\epsilon,M)} = M^{2} \frac{b^{2}\rho_{2}^{4b}}{2}\hat{f}_{1}'(b\rho_{2}^{2b}) + b\rho_{2}^{2b} \hat{\mathsf{T}}_{1}(b\rho_{2}^{2b}) \ln \bigg( \frac{\epsilon}{M(1+\epsilon)} \bigg) + \int_{\frac{b\rho_{2}^{2b}}{1+\epsilon}}^{b\rho_{2}^{2b}} \bigg\{ \hat{f}(x) - \frac{b\rho_{2}^{2b}\hat{\mathsf{T}}_{1}(b\rho_{2}^{2b})}{b\rho_{2}^{2b}-x} \bigg\}dx  \\
& + \bigg( \frac{1}{2} + \alpha - \theta_{2,-}^{(n,\epsilon)} \bigg)\hat{f}_{1}\bigg(\frac{b \rho_{2}^{2b}}{1+\epsilon}\bigg) - \frac{b \hat{\mathsf{T}}_{1}(b\rho_{2}^{2b})}{M^{2}} + \frac{5b \hat{\mathsf{T}}_{1}(b\rho_{2}^{2b})}{2\rho_{2}^{2b}M^{4}}, \\
& \hat{D}_{5}^{(n,M)} = -M^{3}b^{2}\rho_{2}^{4b} \bigg( \hat{f}_{1}'(b\rho_{2}^{2b}) + \frac{b\rho_{2}^{2b}}{6}\hat{f}_{1}''(b\rho_{2}^{2b}) \bigg) + M b\rho_{2}^{2b} \hat{f}_{1}'(b\rho_{2}^{2b}) \bigg( \frac{1}{2}+\alpha-\theta_{2,-}^{(n,M)} \bigg) \\
& + M \bigg( (b+\alpha)\rho_{2}^{2b}\hat{\mathsf{T}}_{1}(b\rho_{2}^{2b}) + \frac{b\rho_{2}^{4b}}{2} \hat{\mathsf{T}}_{2}(b\rho_{2}^{2b}) + b\rho_{2}^{4b} \hat{\mathsf{T}}_{1}(b\rho_{2}^{2b})^{2} \bigg),
\end{align*}
where $\hat{f}_{1}$ and $\hat{f}$ are as in the statement of Lemma \ref{lemma:S3 asymp hard}.
\end{lemma}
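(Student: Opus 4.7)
The proof proceeds by close analogy with Lemma \ref{lemma:S2p1p hard}, adapted to the regime $\lambda_{j,2} \in I_3 = (1 + \tfrac{M}{\sqrt{n}}, 1+\epsilon]$ near the outer hard edge. For $j$ in the summation range of $S_4^{(3)}$, the inequalities \eqref{rhoinequalities} ensure that $\lambda_{j,1}$ and $\hat{\lambda}_{j,\ell}$ for $\ell \leq m$ remain bounded below $1$ by a fixed amount, so Lemma \ref{lemma: uniform}(ii) yields $\gamma(a_j, a_j\lambda)/\Gamma(a_j) = \bigO(e^{-c n^{1/5}})$ for these values. In contrast, $\lambda_{j,2}$ and $\hat{\lambda}_{j,\ell}$ for $\ell > m$ are close to $1$ from above with positive $\eta$'s of size at least $\bigO(M/\sqrt{n})$, and Lemma \ref{lemma: uniform} provides the erfc-type expansion $\gamma(a_j, a_j\lambda)/\Gamma(a_j) = \tfrac{1}{2}\mathrm{erfc}(-\eta \sqrt{a_j/2}) - R_{a_j}(\eta)$. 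Substituting into (\ref{Fnjellexpression}) gives $F_{n,j,\ell} = \bigO(e^{-c n^{1/5}})$ for $\ell \leq m$, while for $\ell > m$,
\begin{align*}
F_{n,j,\ell} = \frac{\bigl[\tfrac{1}{2}\mathrm{erfc}(-\hat{\eta}_{j,\ell}\sqrt{a_j/2}) - R_{a_j}(\hat{\eta}_{j,\ell})\bigr] - \bigl[\tfrac{1}{2}\mathrm{erfc}(-\eta_{j,2}\sqrt{a_j/2}) - R_{a_j}(\eta_{j,2})\bigr]}{1 - \bigl[\tfrac{1}{2}\mathrm{erfc}(-\eta_{j,2}\sqrt{a_j/2}) - R_{a_j}(\eta_{j,2})\bigr]} + \bigO(e^{-c n^{1/5}}).
\end{align*}

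Inserting the expansion of $R_{a_j}(\eta)$ from \eqref{asymp of Ra} and simplifying reduces the summand of $S_4^{(3)}$ to
\begin{align*}
\hat{f}_1(j/n) + \tfrac{1}{n}\hat{f}(j/n) + \tfrac{1}{n^2}\tfrac{2b^3\rho_2^{4b}\hat{\mathsf{T}}_1(j/n)}{(b\rho_2^{2b}-j/n)^3} + \tfrac{1}{n^3}\tfrac{-10b^5\rho_2^{6b}\hat{\mathsf{T}}_1(j/n)}{(b\rho_2^{2b}-j/n)^5}
\end{align*}
plus residuals of the form $n^{-k}(b\rho_2^{2b}-j/n)^{-q}$ exactly parallel to those in the proof of Lemma \ref{lemma:S2p1p hard} (with the erfc sign flipped because $\eta_{j,2}, \hat{\eta}_{j,\ell} > 0$ here, and with the denominator $1 - \hat{\mathsf{T}}_0(j/n) + \hat{\mathsf{T}}_0(b\rho_2^{2b})$ playing the role of $1 + \mathsf{T}_0(j/n) + \hat{\mathsf{T}}_0(b\rho_2^{2b})$). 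The plan is then to apply Lemma \ref{lemma:Riemann sum NEW} to the sums of $\hat{f}_1(j/n)$ and $\hat{f}(j/n)$ with $A = \frac{b\rho_2^{2b}}{1+\epsilon}$, $a_0 = -\alpha + \theta_{2,-}^{(n,\epsilon)}$, $B = \frac{b\rho_2^{2b}}{1+M/\sqrt{n}}$, $b_0 = -1 - \alpha + \theta_{2,-}^{(n,M)}$, and to bound the last two higher-order sums against their integral analogues; the $\hat{D}_3 \ln n$ contribution then arises from the logarithmic divergence of $\int \frac{b\rho_2^{2b}\hat{\mathsf{T}}_1(b\rho_2^{2b})}{b\rho_2^{2b}-x}\,dx$ at the right endpoint.

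The last step is to Taylor expand the integral $n \int_{b\rho_2^{2b}/(1+M/\sqrt{n})}^{b\rho_2^{2b}/(1+\epsilon)} \hat{f}_1(x)\,dx$ and the boundary factors about $x = b\rho_2^{2b}$ in powers of $M/\sqrt{n}$, and to collect the resulting constants into $\hat{D}_1^{(\epsilon)}, \hat{D}_4^{(n,\epsilon,M)}, \hat{D}_5^{(n,M)}$. The key structural simplification, absent from Lemma \ref{lemma:S2p1p hard}, is the identity
\begin{align*}
\hat{f}_1(b\rho_2^{2b}) = \ln\bigl(1 - \hat{\mathsf{T}}_0(b\rho_2^{2b}) + \hat{\mathsf{T}}_0(b\rho_2^{2b})\bigr) = 0,
\end{align*}
which kills the $\sqrt{n}$-coefficient and yields $\hat{D}_2^{(M)} = 0$. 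The main obstacle is purely computational bookkeeping: verifying that the errors coming from the expansion of $R_{a_j}(\eta)$, from the remainder in Lemma \ref{lemma:Riemann sum NEW}, and from the Taylor truncation $\frac{b\rho_2^{2b}}{1+M/\sqrt{n}} \to b\rho_2^{2b}$ combine into the composite bound $\bigO(M^4/n + 1/(\sqrt{n}M) + 1/M^6 + \sqrt{n}/M^{11})$. This is mechanical and follows the pattern of the end of the proof of Lemma \ref{lemma:S2p1p hard}.
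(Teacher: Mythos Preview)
Your approach is essentially identical to the paper's: the same reduction via Lemma \ref{lemma: uniform} to the erfc-type expansion, the same four-term local expansion of the summand into $\hat f_1 + n^{-1}\hat f + \cdots$, the same application of Lemma \ref{lemma:Riemann sum NEW} with the parameters you list, and the same Taylor expansion at the upper endpoint, including the observation that $\hat f_1(b\rho_2^{2b})=0$ kills the $\sqrt{n}$ term. Two cosmetic slips to watch when you write it out in full: the paper's $n^{-2}$ and $n^{-3}$ coefficients carry the opposite signs to the ones you displayed (the denominator $1-\hat{\mathsf T}_0+\hat{\mathsf T}_0(b\rho_2^{2b})$ is also present there, as you note in prose), and your final Taylor integral has its limits reversed relative to $A<B$; neither affects the method.
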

\begin{proof}
Using Lemma \ref{lemma: uniform}, we obtain
\begin{align}
S_{4}^{(3)} & = \sum_{j= j_{2,-}}^{g_{2,-}-1} \ln \bigg( 1+\sum_{\ell=1}^{m} \omega_{\ell}  \bigO(e^{-cn}) \nonumber \\
& + \sum_{\ell=m+1}^{2m} \omega_{\ell}\bigg\{\frac{ \frac{1}{2}\mathrm{erfc}\big(-\hat{\eta}_{j,\ell} \sqrt{a_{j}/2}\big) - R_{a_{j}}(\hat\eta_{j,\ell}) - \big[\frac{1}{2}\mathrm{erfc}\big(-\eta_{j,2} \sqrt{a_{j}/2}\big) - R_{a_{j}}(\eta_{j,2})\big] }{ 1 - \big[\frac{1}{2}\mathrm{erfc}\big(-\eta_{j,2} \sqrt{a_{j}/2}\big) - R_{a_{j}}(\eta_{j,2})\big] }+\bigO(e^{-cn})) \bigg\} \bigg) \nonumber \\
& = \bigO(e^{-cn}) + \sum_{j= j_{2,-}}^{g_{2,-}-1} \ln \bigg( 1 \nonumber \\
& +\sum_{\ell=m+1}^{2m} \omega_{\ell} \frac{ \frac{1}{2}\mathrm{erfc}\big(-\hat{\eta}_{j,\ell} \sqrt{a_{j}/2}\big) - R_{a_{j}}(\hat\eta_{j,\ell}) - \big[\frac{1}{2}\mathrm{erfc}\big(-\eta_{j,2} \sqrt{a_{j}/2}\big) - R_{a_{j}}(\eta_{j,2})\big] }{ 1 - \big[\frac{1}{2}\mathrm{erfc}\big(-\eta_{j,2} \sqrt{a_{j}/2}\big) - R_{a_{j}}(\eta_{j,2})\big] }  \bigg). \label{lol5}
\end{align}
Using then \eqref{asymp of Ra}, we find
\begin{align*}
& S_{4}^{(3)} = \sum_{j= j_{2,-}}^{g_{2,-}-1} \bigg(  \hat{f}_{1}(j/n)+\frac{1}{n}\hat{f}(j/n) + \frac{1}{n^{2}} \frac{-2b^{3}\rho_{2}^{4b} \hat{\mathsf{T}}_{1}(j/n)}{(1-\hat{\mathsf{T}}_{0}(j/n)+\hat{\mathsf{T}}_{0}(b\rho_{2}^{2b}))(b\rho_{2}^{2b}-j/n)^{3}} \\
& + \frac{1}{n^{3}} \frac{10b^{5}\rho_{2}^{6b} \hat{\mathsf{T}}_{1}(j/n)(b\rho_{2}^{2b}-j/n)^{-5}}{1-\hat{\mathsf{T}}_{0}(j/n)+\hat{\mathsf{T}}_{0}(b\rho_{2}^{2b})} \\
& + \bigO\bigg(\frac{n^{-2}}{(b\rho_{2}^{2b}-j/n)^{2}} + \frac{n^{-3}}{(b\rho_{2}^{2b}-j/n)^{4}} + \frac{n^{-4}}{(b\rho_{2}^{2b}-j/n)^{7}} + \frac{n^{-6}}{(b\rho_{2}^{2b}-j/n)^{12}} \bigg)\bigg).
\end{align*}
Note that
\begin{align*}
& \sum_{j= j_{2,-}}^{g_{2,-}-1} \bigO\bigg(\frac{n^{-2}}{(b\rho_{2}^{2b}-j/n)^{2}} + \frac{n^{-3}}{(b\rho_{2}^{2b}-j/n)^{4}} + \frac{n^{-4}}{(b\rho_{2}^{2b}-j/n)^{7}} + \frac{n^{-6}}{(b\rho_{2}^{2b}-j/n)^{12}} \bigg) \\
& = \bigO\bigg( \frac{1}{M\sqrt{n} } + \frac{1}{M^{3}\sqrt{n}} + \frac{1}{M^{6}} + \frac{\sqrt{n}}{M^{11}} \bigg).
\end{align*}
Also, using Lemma \ref{lemma:Riemann sum NEW} (with $A=\frac{b\rho_{2}^{2b}}{1+\epsilon}$, $a_{0}=\theta_{2,-}^{(n,\epsilon)}-\alpha$, $B=\frac{b\rho_{2}^{2b}}{1+\frac{M}{\sqrt{n}}}$ and $b_{0} = \theta_{2,-}^{(n,M)}-1-\alpha$), we get
\begin{align*}
& \sum_{j= j_{2,-}}^{g_{2,-}-1} \hat{f}_{1}(j/n) = n \int_{\frac{b\rho_{2}^{2b}}{1+\epsilon}}^{\frac{b\rho_{2}^{2b}}{1+\frac{M}{\sqrt{n}}}}\hat{f}_{1}(x)dx +(\tfrac{1}{2}+\alpha-\theta_{2,-}^{(n,\epsilon)})\hat{f}_{1}(\tfrac{b\rho_{2}^{2b}}{1+\epsilon}) + \big(\theta_{2,-}^{(n,M)}-\tfrac{1}{2}-\alpha\big)\hat{f}_{1}(\tfrac{b\rho_{2}^{2b}}{1+\frac{M}{\sqrt{n}}})+\bigO(n^{-1}), \\
& \frac{1}{n}\sum_{j= j_{2,-}}^{g_{2,-}-1} \hat{f}(j/n) = \int_{\frac{b\rho_{2}^{2b}}{1+\epsilon}}^{\frac{b\rho_{2}^{2b}}{1+\frac{M}{\sqrt{n}}}} \hat{f}(x)dx + \bigO\bigg(\frac{1}{M\sqrt{n}}\bigg), \\
& \frac{1}{n^{2}}\sum_{j= j_{2,-}}^{g_{2,-}-1}  \frac{-2b^{3}\rho_{2}^{4b} \hat{\mathsf{T}}_{1}(j/n)(b\rho_{2}^{2b}-j/n)^{-3}}{1-\hat{\mathsf{T}}_{0}(j/n)+\hat{\mathsf{T}}_{0}(b\rho_{2}^{2b})} = \frac{1}{n} \int_{\frac{b\rho_{2}^{2b}}{1+\epsilon}}^{\frac{b\rho_{2}^{2b}}{1+\frac{M}{\sqrt{n}}}} \frac{-2b^{3}\rho_{2}^{4b} \hat{\mathsf{T}}_{1}(x)(b\rho_{2}^{2b}-x)^{-3}}{1-\hat{\mathsf{T}}_{0}(x)+\hat{\mathsf{T}}_{0}(b\rho_{2}^{2b})} dx + \bigO\bigg(\frac{1}{M^{3}\sqrt{n}}\bigg), \\
& \frac{1}{n^{3}}\sum_{j= j_{2,-}}^{g_{2,-}-1} \frac{10b^{5}\rho_{2}^{6b} \hat{\mathsf{T}}_{1}(j/n)(b\rho_{2}^{2b}-j/n)^{-5}}{1-\hat{\mathsf{T}}_{0}(j/n)+\hat{\mathsf{T}}_{0}(b\rho_{2}^{2b})} = \frac{1}{n^{2}} \int_{\frac{b\rho_{2}^{2b}}{1+\epsilon}}^{\frac{b\rho_{2}^{2b}}{1+\frac{M}{\sqrt{n}}}} \frac{10b^{5}\rho_{2}^{6b} \hat{\mathsf{T}}_{1}(x)(b\rho_{2}^{2b}-x)^{-5}}{1-\hat{\mathsf{T}}_{0}(x)+\hat{\mathsf{T}}_{0}(b\rho_{2}^{2b})} dx + \bigO\bigg(\frac{1}{M^{5}\sqrt{n}}\bigg).
\end{align*}
Furthermore, by a direct analysis,
\begin{align*}
& n \int_{\frac{b\rho_{2}^{2b}}{1+\epsilon}}^{\frac{b\rho_{2}^{2b}}{1+\frac{M}{\sqrt{n}}}}\hat{f}_{1}(x)dx = n \int_{\frac{b\rho_{2}^{2b}}{1+\epsilon}}^{b\rho_{2}^{2b}}\hat{f}_{1}(x)dx + M^{2} \frac{b^{2} \rho_{2}^{4b}}{2} \hat{f}_{1}'(b \rho_{2}^{2b}) \big) \\
& \hspace{3cm} - \frac{M^{3}}{\sqrt{n}} b^{2} \rho_{2}^{4b} \Big( \hat{f}_{1}'(b\rho_{2}^{2b}) + \frac{b\rho_{2}^{2b}}{6}\hat{f}_{1}''(b\rho_{2}^{2b})  \Big) + \bigO\bigg( \frac{M^{4}}{n} \bigg), \\
& \hat{f}_{1}(\tfrac{b\rho_{2}^{2b}}{1+\frac{M}{\sqrt{n}}}) = \hat{f}_{1}(b \rho_{2}^{2b}) - \frac{M}{\sqrt{n}}b\rho_{2}^{2b}\hat{f}_{1}'(b \rho_{2}^{2b}) + \bigO\bigg( \frac{M^{2}}{n} \bigg), \\
& \int_{\frac{b\rho_{2}^{2b}}{1+\epsilon}}^{\frac{b\rho_{2}^{2b}}{1+\frac{M}{\sqrt{n}}}} \hat{f}(x)dx = \int_{\frac{b\rho_{2}^{2b}}{1+\epsilon}}^{b\rho_{2}^{2b}} \bigg\{ \hat{f}(x) - \frac{b\rho_{2}^{2b} \hat{\mathsf{T}}_{1}(b\rho_{2}^{2b})}{b\rho_{2}^{2b}-x} \bigg\}dx + \frac{b\rho_{2}^{2b} \hat{\mathsf{T}}_{1}(b\rho_{2}^{2b})}{2} \ln n - b\rho_{2}^{2b} \hat{\mathsf{T}}_{2}(b\rho_{2}^{2b}) \ln \frac{M(1+\epsilon)}{\epsilon} \\
&  + \frac{M}{\sqrt{n}} \bigg\{ (b+\alpha) \rho_{2}^{2b} \hat{\mathsf{T}}_{1}(b\rho_{2}^{2b}) + \frac{b \rho_{2}^{4b}}{2}\hat{\mathsf{T}}_{2}(b\rho_{2}^{2b}) + b\rho_{2}^{4b}  \hat{\mathsf{T}}_{1}(b\rho_{2}^{2b})^{2} \bigg\} + \bigO\bigg( \frac{M^{2}}{n} \bigg), \\
& \frac{1}{n} \int_{\frac{b\rho_{2}^{2b}}{1+\epsilon}}^{\frac{b\rho_{2}^{2b}}{1+\frac{M}{\sqrt{n}}}} \frac{-2b^{3}\rho_{2}^{4b} \hat{\mathsf{T}}_{1}(x)(b\rho_{2}^{2b}-x)^{-3}}{1-\hat{\mathsf{T}}_{0}(x)+\hat{\mathsf{T}}_{0}(b\rho_{2}^{2b})} dx = -\frac{b}{M^{2}}\hat{\mathsf{T}}_{1}(b \rho_{2}^{2b}) + \bigO\bigg(\frac{1}{M \sqrt{n}}\bigg), \\
& \frac{1}{n^{2}} \int_{\frac{b\rho_{2}^{2b}}{1+\epsilon}}^{\frac{b\rho_{2}^{2b}}{1+\frac{M}{\sqrt{n}}}} \frac{10b^{5}\rho_{2}^{6b} \hat{\mathsf{T}}_{1}(x)(b\rho_{2}^{2b}-x)^{-5}}{1-\hat{\mathsf{T}}_{0}(x)+\hat{\mathsf{T}}_{0}(b\rho_{2}^{2b})} dx = \frac{5 b \hat{\mathsf{T}}_{1}(b \rho_{2}^{2b})}{2\rho_{2}^{2b}M^{4}} + \bigO\bigg(\frac{1}{M^{3} \sqrt{n}}\bigg).
\end{align*}
The claim now follows after a computation.
\end{proof}

For $\ell \in \{1,\dots,2m\}$ and $j \in \{1,\dots,n\}$, we define $\hat{M}_{j,\ell} := \sqrt{n}(\hat{\lambda}_{j,\ell}-1)$, $M_{j,1} := \sqrt{n}(\lambda_{j,1}-1)$ and $M_{j,2} := \sqrt{n}(\lambda_{j,2}-1)$. For the large $n$ asymptotics of $\smash{S_{2}^{(2)}}$ we will need the following lemma.
\begin{lemma}[{Taken from \cite[Lemma 3.11]{Charlier 2d gap}}]\label{lemma:Riemann sum}
Let $h \in C^{3}(\mathbb{R})$ and $k \in \{1,2\}$. As $n \to + \infty$, we have
\begin{align}
& \sum_{j=g_{k,-}}^{g_{k,+}}h(M_{j,k}) = b\rho_{k}^{2b} \int_{-M}^{M} h(t) dt \; \sqrt{n} - 2 b \rho_{k}^{2b} \int_{-M}^{M} th(t) dt + \bigg( \frac{1}{2}-\theta_{k,-}^{(n,M)} \bigg)h(M)+ \bigg( \frac{1}{2}-\theta_{k,+}^{(n,M)} \bigg)h(-M) \nonumber \\
& + \frac{1}{\sqrt{n}}\bigg[ 3b\rho_{k}^{2b} \int_{-M}^{M}t^{2}h(t)dt + \bigg( \frac{1}{12}+\frac{\theta_{k,-}^{(n,M)}(\theta_{k,-}^{(n,M)}-1)}{2} \bigg)\frac{h'(M)}{b\rho_{k}^{2b}} - \bigg( \frac{1}{12}+\frac{\theta_{k,+}^{(n,M)}(\theta_{k,+}^{(n,M)}-1)}{2} \bigg)\frac{h'(-M)}{b\rho_{k}^{2b}} \bigg] \nonumber \\
& + \bigO\Bigg(  \frac{1}{n^{3/2}} \sum_{j=g_{k,-}+1}^{g_{k,+}} \bigg( (1+|M_{j,k}|^{3}) \tilde{\mathfrak{m}}_{j,n, k}(h) + (1+M_{j,k}^{2})\tilde{\mathfrak{m}}_{j,n, k}(h') + (1+|M_{j,k}|) \tilde{\mathfrak{m}}_{j,n, k}(h'') + \tilde{\mathfrak{m}}_{j,n, k}(h''') \bigg)   \Bigg), \label{sum f asymp 2}
\end{align}
where, for $\tilde{h} \in C(\mathbb{R})$ and $j \in \{g_{k,-}+1,\dots,g_{k,+}\}$, we define $\tilde{\mathfrak{m}}_{j,n, k}(\tilde{h}) := \max_{x \in [M_{j,k},M_{j-1,k}]}|\tilde{h}(x)|$.
\end{lemma}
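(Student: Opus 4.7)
My plan is to apply the Euler--Maclaurin formula directly to $F(j) := h(M_{j,k})$ and then change variables from $j$ to $t = M_{j,k}$ in the resulting integral. Note that Lemma~\ref{lemma:Riemann sum NEW} cannot be applied as stated, since the natural integration endpoints $A = \tfrac{b\rho_k^{2b}}{1+M/\sqrt{n}}$ and $B = \tfrac{b\rho_k^{2b}}{1-M/\sqrt{n}}$ satisfy $B-A = \bigO(M/\sqrt{n})$, violating the hypothesis that $B-A$ be bounded away from zero. Since $h \in C^3(\mathbb{R})$ and $x \mapsto M_{x,k}$ is smooth, $F$ is $C^3$ in $x$, and Euler--Maclaurin with integer endpoints yields
\begin{align*}
\sum_{j=g_{k,-}}^{g_{k,+}} F(j) = \int_{g_{k,-}}^{g_{k,+}} F(x)\, dx + \frac{F(g_{k,-}) + F(g_{k,+})}{2} + \frac{F'(g_{k,+}) - F'(g_{k,-})}{12} + R,
\end{align*}
with a remainder $R$ bounded in terms of $\int|F'''|$; the chain rule converts $F^{(p)}$ into $h, h', h'', h'''$ weighted by powers of $(1+|M_{j,k}|)$ and appropriate factors of $\sqrt{n}$, producing precisely the stated error structure involving $\tilde{\mathfrak{m}}_{j,n,k}$.

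Next, I perform the substitution $x = \phi(t) := \frac{bn\rho_k^{2b}}{1+t/\sqrt{n}}-\alpha$, a smooth decreasing bijection with Jacobian
\begin{align*}
|\phi'(t)| = \frac{b\sqrt{n}\rho_k^{2b}}{(1+t/\sqrt{n})^2} = b\rho_k^{2b}\sqrt{n} - 2b\rho_k^{2b}\, t + \frac{3b\rho_k^{2b} t^2}{\sqrt{n}} + \bigO\bigg(\frac{|t|^3}{n}\bigg).
\end{align*}
Setting $T_\pm := M_{g_{k,\mp},k}$, the integral above becomes $\int_{T_+}^{T_-} h(t)|\phi'(t)|\,dt$. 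Using the identity $g_{k,\pm}+\alpha = \frac{bn\rho_k^{2b}}{1\mp M/\sqrt{n}}\pm\theta_{k,\pm}^{(n,M)}$ and Taylor expanding $\lambda \mapsto \sqrt{n}(\lambda-1)$, one finds $T_\pm = \pm M \mp \tfrac{\theta_{k,\pm}^{(n,M)}}{b\rho_k^{2b}\sqrt{n}} + \bigO(Mn^{-1})$. Shifting the integration bounds from $(T_+,T_-)$ to $(-M,M)$ produces boundary contributions $-\theta_{k,\pm}^{(n,M)}\, h(\pm M)$ at order $1$ and further corrections involving $h'(\pm M)$ at order $n^{-1/2}$.

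Combining these boundary terms with the Euler--Maclaurin boundary contribution $(F(g_{k,-})+F(g_{k,+}))/2 = \tfrac{1}{2}(h(M)+h(-M)) + \bigO(n^{-1/2})$ (expanded via $F(g_{k,\pm}) = h(\pm M \mp \tfrac{\theta_{k,\pm}^{(n,M)}}{b\rho_k^{2b}\sqrt{n}} + \bigO(Mn^{-1}))$) yields the stated $(\tfrac{1}{2}-\theta_{k,\pm}^{(n,M)})\, h(\pm M)$ coefficients. The leading $\sqrt{n}$-term and the order-$1$ term $-2b\rho_k^{2b}\int th$ come directly from the first two terms of the Jacobian expansion.

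The main obstacle is the bookkeeping at order $n^{-1/2}$: the coefficients $\tfrac{1}{12} + \tfrac{\theta_{k,\pm}^{(n,M)}(\theta_{k,\pm}^{(n,M)}-1)}{2}$ multiplying $h'(\pm M)/(b\rho_k^{2b})$ arise from combining four distinct sources: (i) the $3b\rho_k^{2b}t^2/\sqrt{n}$ term of the Jacobian expansion, (ii) the Euler--Maclaurin correction $\tfrac{F'(g_{k,+})-F'(g_{k,-})}{12}$, (iii) the order-$n^{-1/2}$ shift of the endpoints $T_\pm$ acting on the leading $b\rho_k^{2b}\sqrt{n}$ piece of the integrand, and (iv) the quadratic Taylor term in the expansion of $F(g_{k,\pm})$, which is responsible for the $\theta^2$-contributions. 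Verifying that this algebra collapses to the clean shape $\tfrac{1}{12}+\tfrac{\theta(\theta-1)}{2}$ is the principal computational task, and it provides the central reason why one must expand $|\phi'(t)|$ and $F(g_{k,\pm})$ to precisely the orders indicated above.
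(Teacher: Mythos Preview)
The paper does not supply a proof of this lemma; it is quoted verbatim from \cite[Lemma 3.11]{Charlier 2d gap}. Your approach---Euler--Maclaurin applied to $F(j)=h(M_{j,k})$, followed by the change of variable $x=\phi(t)$ in the resulting integral and a Taylor expansion of the Jacobian $|\phi'(t)|$---is the standard route and is correct in outline. The four sources (i)--(iv) you list for the $h'(\pm M)$ coefficients at order $n^{-1/2}$ are exactly the right ones.

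One cosmetic slip worth flagging: from the definitions in the paper one has
\[
g_{k,\pm}+\alpha=\frac{bn\rho_k^{2b}}{1\mp M/\sqrt{n}}\mp\theta_{k,\pm}^{(n,M)}
\]
(not $\pm\theta_{k,\pm}^{(n,M)}$), and hence with your convention $T_\pm:=M_{g_{k,\mp},k}$ one gets
\[
T_+=M-\tfrac{\theta_{k,-}^{(n,M)}}{b\rho_k^{2b}\sqrt{n}}+\bigO(Mn^{-1}),\qquad
T_-=-M+\tfrac{\theta_{k,+}^{(n,M)}}{b\rho_k^{2b}\sqrt{n}}+\bigO(Mn^{-1}),
\]
so the $\theta$-subscripts in your displayed formula for $T_\pm$ are swapped. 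This is a typo that does not affect the validity of the argument; once corrected, the boundary bookkeeping you describe yields precisely the stated coefficients $\tfrac{1}{2}-\theta_{k,\mp}^{(n,M)}$ in front of $h(\pm M)$ and $\tfrac{1}{12}+\tfrac{\theta(\theta-1)}{2}$ in front of $h'(\pm M)/(b\rho_k^{2b})$.
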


\begin{lemma}\label{lemma:S2kp2p hard}
Let $x_{1},\dots,x_{2m} \in \mathbb{R}$ be fixed. There exists $\delta > 0$ such that
\begin{align*}
&  S_{2}^{(2)} = E_{2}^{(M)} \sqrt{n} + E_{4}^{(M)} + \frac{E_{5}^{(M)}}{\sqrt{n}} + \bigO\bigg( \frac{M^{4}}{n} + \frac{M^{14}}{n^{2}} \bigg), \\
& E_{2}^{(M)} = 2b\rho_{1}^{2b} M \ln(1+\mathsf{T}_{0}(b\rho_{1}^{2b})+\hat{\mathsf{T}}_{0}(b\rho_{2}^{2b})), \\
& E_{4}^{(M)} = \ln(1+\mathsf{T}_{0}(b\rho_{1}^{2b})+\hat{\mathsf{T}}_{0}(b\rho_{2}^{2b})) \big( 1-\theta_{1,-}^{(n,M)}-\theta_{1,+}^{(n,M)} \big) + b \rho_{1}^{2b} \int_{-M}^{M} h_{1}(t)dt, \\
& E_{5}^{(M)} = 2b\rho_{1}^{2b}M^{3} \ln(1+\mathsf{T}_{0}(b\rho_{1}^{2b})+\hat{\mathsf{T}}_{0}(b\rho_{2}^{2b})) + \bigg( \frac{1}{2}-\theta_{1,-}^{(n,M)} \bigg) h_{1}(M) + \bigg( \frac{1}{2}-\theta_{1,+}^{(n,M)} \bigg) h_{1}(-M) \\
& \hspace{1.15cm} + b \rho_{1}^{2b} \int_{-M}^{M} \big( h_{2}(t)-2th_{1}(t) \big)dt,
\end{align*}
as $n \to +\infty$ uniformly for $u_{1} \in \{z \in \mathbb{C}: |z-x_{1}|\leq \delta\},\dots,u_{2m} \in \{z \in \mathbb{C}: |z-x_{2m}|\leq \delta\}$, where $h_{1}$, $h_{2}$ are given by
\begin{align*}
& h_{1}(x) = -\frac{ 2\rho_{1}^{b} \mathsf{T}_{1}(b\rho_{1}^{2b})}{1+\mathsf{T}_{0}(b\rho_{1}^{2b})+\hat{\mathsf{T}}_{0}(b\rho_{2}^{2b})} \frac{e^{-\frac{1}{2}x^{2}\rho_{1}^{2b}}}{\sqrt{2\pi} \, \mathrm{erfc}(-\frac{x \rho_{1}^{b}}{\sqrt{2}})}, \\
& h_{2}(x) = -\frac{h_{1}(x)^{2}}{2} + \frac{1}{1+\mathsf{T}_{0}(b\rho_{1}^{2b})+\hat{\mathsf{T}}_{0}(b\rho_{2}^{2b})} \frac{e^{-\frac{1}{2}x^{2}\rho_{1}^{2b}}}{\sqrt{2\pi} \, \mathrm{erfc}(-\frac{x \rho_{1}^{b}}{\sqrt{2}})} \bigg\{ \Big( \rho_{1}^{b} x - \frac{5}{3}\rho_{1}^{3b}x^{3} \Big) \mathsf{T}_{1}(b\rho^{2b}) \\
& \hspace{1.25cm} -\rho_{1}^{3b}x \mathsf{T}_{2}(b\rho_{1}^{2b}) + \frac{4-10\rho_{1}^{2b}x^{2}}{3} \mathsf{T}_{1}(b\rho_{1}^{2b}) \frac{e^{-\frac{1}{2}x^{2}\rho_{1}^{2b}}}{\sqrt{2\pi}\, \mathrm{erfc}(-\frac{x\rho_{1}^{b}}{\sqrt{2}})} \bigg\}.
\end{align*}
\end{lemma}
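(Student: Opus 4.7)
The plan is to adapt the analysis used in Lemmas \ref{lemma:S2p1p hard} and \ref{lemma:S4p3p hard} (which follow the corresponding transition-region lemma of \cite{ACCL2022}) to the two-cut setting, the only new feature being that the incomplete Gamma function ratios associated with the radii $r_{m+1},\dots,r_{2m}$ and with $\rho_2$ now contribute $1+\bigO(e^{-cn})$ rather than a vanishing quantity.

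First, for $j$ with $\lambda_{j,1}\in I_2$, Lemma \ref{lemma: uniform} shows that $\hat\lambda_{j,\ell}$ ($\ell\le m$) and $\lambda_{j,1}$ are within $\bigO(M/\sqrt n)$ of $1$ and require the erfc asymptotics, whereas $\hat\lambda_{j,\ell}$ ($\ell>m$) and $\lambda_{j,2}$ are bounded away from $1$ from above, so the corresponding ratios equal $1+\bigO(e^{-cn^{1/5}})$. Using the telescoping identity $\sum_{\ell=m+1}^{2m}\omega_\ell=\hat{\mathsf{T}}_0(b\rho_2^{2b};\vec u)$ together with \eqref{Fnjellexpression}, one obtains
\begin{align*}
1+\sum_{\ell=1}^{2m}\omega_\ell F_{n,j,\ell}=1+\hat{\mathsf{T}}_0(b\rho_2^{2b})+\sum_{\ell=1}^{m}\omega_\ell\,\frac{\frac{1}{2}\mathrm{erfc}(-\hat\eta_{j,\ell}\sqrt{a_j/2})-R_{a_j}(\hat\eta_{j,\ell})}{\frac{1}{2}\mathrm{erfc}(-\eta_{j,1}\sqrt{a_j/2})-R_{a_j}(\eta_{j,1})}+\bigO(e^{-cn^{1/5}}).
\end{align*}

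Second, I would write $\hat{M}_{j,\ell}=M_{j,1}-t_\ell/\sqrt n+\bigO(M/n)$, expand $\sqrt{a_j/2}\,\hat\eta_{j,\ell}$ and $\sqrt{a_j/2}\,\eta_{j,1}$ in powers of $1/\sqrt n$ around the base value $M_{j,1}\rho_1^b/\sqrt 2$, and Taylor-expand both $\frac{1}{2}\mathrm{erfc}$ and the remainder $R_{a_j}$ using \eqref{asymp of Ra}. Because each $\hat\eta_{j,\ell}-\eta_{j,1}$ is of order $1/\sqrt n$, the sum over $\ell\le m$ reshapes into $\mathsf{T}_0(b\rho_1^{2b})+\tilde h_1(M_{j,1})/\sqrt n+\tilde h_2(M_{j,1})/n+\bigO(M^{14}/n^{3/2})$. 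Taking the logarithm peels off the leading constant $\ln(1+\mathsf{T}_0(b\rho_1^{2b})+\hat{\mathsf{T}}_0(b\rho_2^{2b}))$, and the $1/\sqrt n$ and $1/n$ corrections match exactly the functions $h_1$ and $h_2$ in the statement. This is a transcription of the analogous one-component calculation of \cite{ACCL2022}, the sole modification being that every denominator $1+\mathsf{T}_0(x)$ there is replaced by $1+\mathsf{T}_0(x)+\hat{\mathsf{T}}_0(b\rho_2^{2b})$ here; since $\hat{\mathsf{T}}_0(b\rho_2^{2b})$ is independent of $j$, this additive constant passes straight through the expansion.

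Third, I would apply Lemma \ref{lemma:Riemann sum} with $k=1$ separately to the three resulting pieces. The constant $\ln(1+\mathsf{T}_0(b\rho_1^{2b})+\hat{\mathsf{T}}_0(b\rho_2^{2b}))$ yields the leading $E_2^{(M)}\sqrt n$, the defect $\ln(1+\mathsf{T}_0+\hat{\mathsf{T}}_0)(1-\theta_{1,-}^{(n,M)}-\theta_{1,+}^{(n,M)})$ in $E_4^{(M)}$, and the $\int t^2$ contribution $2b\rho_1^{2b}M^3\ln(1+\mathsf{T}_0+\hat{\mathsf{T}}_0)$ in $E_5^{(M)}$. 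The contribution $h_1(M_{j,1})/\sqrt n$ yields $b\rho_1^{2b}\int_{-M}^M h_1(t)\,dt$ in $E_4^{(M)}$ together with the boundary and $-2b\rho_1^{2b}\int_{-M}^M t\,h_1(t)\,dt$ terms in $E_5^{(M)}$, and $h_2(M_{j,1})/n$ yields $b\rho_1^{2b}\int_{-M}^M h_2(t)\,dt$ in $E_5^{(M)}$. The error $\bigO(M^4/n+M^{14}/n^2)$ arises from Lemma \ref{lemma:Riemann sum}'s remainder using that $\tilde{\mathfrak{m}}_{j,n,1}(h_i^{(k)})$ is bounded on $[-M,M]$, combined with $M=n^{1/10}$. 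The main obstacle is the second step: one must push the expansion of $\frac{1}{2}\mathrm{erfc}(-\hat\eta_{j,\ell}\sqrt{a_j/2})$ to two orders beyond its base value, keep all the $R_{a_j}$ corrections to the same precision, and verify that after summation against the $\omega_\ell$ the result collapses exactly into the explicit functions $h_1,h_2$ displayed in the lemma.
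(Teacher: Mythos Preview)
Your proposal is correct and follows essentially the same route as the paper: reduce to the one-component transition analysis of \cite{ACCL2022} by observing that the $\ell>m$ and $\lambda_{j,2}$ ratios contribute $1+\bigO(e^{-cn^{1/5}})$ and hence only the additive constant $\hat{\mathsf{T}}_0(b\rho_2^{2b})$, expand the remaining erfc/$R_{a_j}$ ratios in powers of $n^{-1/2}$, take the logarithm to produce $h_1,h_2$, and apply Lemma \ref{lemma:Riemann sum}. One point worth making explicit (the paper does so) is that the higher-order coefficients $g_3,g_4,g_5$ in the pre-logarithm expansion exhibit cancellations as $x\to-\infty$ (e.g.\ $g_3(x)=\bigO(x^2)$ rather than the naive $\bigO(x^4)$); without these the per-term remainder would be too large to sum to $\bigO(M^4/n+M^{14}/n^2)$, and this is the actual source of that error bound rather than the Riemann-sum remainder in Lemma \ref{lemma:Riemann sum}.
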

\begin{proof}
In a similar way as in \eqref{lol4}, we infer that there exists $\delta > 0$ such that
\begin{align}\label{lol1 hard}
& S_{2}^{(2)} = \sum_{j:\lambda_{j,1}\in I_{2}} \ln \bigg( 1 + \hat{\mathsf{T}}_{0}(b\rho_{2}^{2b})+\sum_{\ell=1}^{m} \omega_{\ell} \frac{ \frac{1}{2}\mathrm{erfc}\big(-\hat{\eta}_{j,\ell} \sqrt{a_{j}/2}\big) - R_{a_{j}}(\hat\eta_{j,\ell}) }{ \frac{1}{2}\mathrm{erfc}\big(-\eta_{j,1} \sqrt{a_{j}/2}\big) - R_{a_{j}}(\eta_{j,1}) }  \bigg) +\bigO(e^{-cn^{1/5}}),
\end{align}
as $n \to + \infty$ uniformly for $u_{1} \in \{z \in \mathbb{C}: |z-x_{1}|\leq \delta\},\dots,u_{2m} \in \{z \in \mathbb{C}: |z-x_{2m}|\leq \delta\}$. For $j \in \{j:\lambda_{j,1}\in I_{2}\}$, we have $1-\frac{M}{\sqrt{n}} \leq \lambda_{j,1} = \frac{bn\rho_{1}^{2b}}{j+\alpha} \leq 1+\frac{M}{\sqrt{n}}$, $-M \leq M_{j,1} \leq M$, and 
\begin{align*}
\hat{M}_{j,\ell} = M_{j,1} - \frac{t_{\ell}}{\sqrt{n}} - \frac{t_{\ell}M_{j,1}}{n}, \qquad \ell=1,\dots,m.
\end{align*}
Furthermore, for $\ell\in \{1,\dots,m\}$, as $n \to + \infty$ we have 
\begin{align*}
\hat{\eta}_{j,\ell} = &\; \frac{M_{j,1}}{\sqrt{n}} - \frac{M_{j,1}^{2}+3t_{\ell}}{3n} + \frac{7M_{j,1}^{3}-12 t_{\ell} M_{j,1}}{36 n^{3/2}} + \bigO\bigg(\frac{1+M_{j,1}^{4}}{n^2}\bigg),
%-\frac{73 M_{j,1}^4-45 M_{j,1}^2 t_{\ell}+180 t_{\ell}^2}{540 n^2} + \frac{1331 M_{j,1}^5-552 M_{j,1}^3 t_{\ell}-1080 M_{j,1} t_{\ell}^2}{12960  n^{5/2}} + \bigO\bigg(\frac{1+M_{j,1}^{6}}{n^3}\bigg),
	 \\
-\hat{\eta}_{j,\ell} \sqrt{a_{j}/2} = & - \frac{M_{j,1}\rho_{1}^{b}}{\sqrt{2}} + \frac{(5M_{j,1}^{2}+6t_{\ell})\rho_{1}^{b}}{6\sqrt{2} \sqrt{n}} - \frac{\rho_{1}^{b} M_{j,1}(53M_{j,1}^{2}+12t_{\ell})}{72\sqrt{2} n} 
+ \bigO\bigg( \frac{1+M_{j,1}^{4}}{n^{3/2}} \bigg),
%& + \frac{\rho^b \left(270 M_{j,1}^2 t_{\ell}+1447 M_{j,1}^4+720  t_{\ell}^2\right)}{2160 \sqrt{2} n^{3/2}} -\frac{M_{j,1} \rho^b \left(5352 M_{j,1}^2 t_{\ell}+32183 M_{j,1}^4+4320  t_{\ell}^2\right)}{51840 \sqrt{2} n^2}  + \bigO\bigg( \frac{1+M_{j,1}^{6}}{n^{5/2}} \bigg) 
\end{align*}
uniformly for $j\in \{j:\lambda_{j,1}\in I_{2}\}$. Hence, for $\ell\in \{1,\dots,m\}$, by \eqref{asymp of Ra} as $n \to + \infty$ we have
\begin{align*}
& R_{a_{j}}(\hat{\eta}_{j,\ell}) = \frac{e^{-\frac{M_{j,1}^{2}\rho_{1}^{2b}}{2}}}{\sqrt{2\pi}} \bigg( \frac{-1}{3\rho_{1}^{b}\sqrt{n}} - \frac{M_{j,1}(3+10M_{j,1}^{2}\rho_{1}^{2b}+12t_{\ell}\rho_{1}^{2b})}{36\rho_{1}^{b}n} + \bigO((1+M_{j,1}^{6})n^{-\frac{3}{2}}) \bigg)
\end{align*}
and
\begin{align*}
& \frac{1}{2}\mathrm{erfc}\Big(-\hat{\eta}_{j,\ell} \sqrt{a_{j}/2}\Big) = \frac{1}{2}\mathrm{erfc}\Big(-\frac{\rho_{1}^{b}M_{j,1}}{\sqrt{2}}\Big) -\frac{e^{-\frac{M_{j,1}^{2}\rho_{1}^{2b}}{2}}\rho_{1}^{b}(5 M_{j,1}^{2} + 6 t_{\ell})}{6\sqrt{2\pi}\sqrt{n}} \\
& + \frac{e^{-\frac{M_{j,1}^{2}\rho_{1}^{2b}}{2}}M_{j,1} \rho_{1}^{b}}{72\sqrt{2\pi} \, n} \Big( 53M_{j,1}^{2} + 12 t_{\ell} - 25 M_{j,1}^{4} \rho_{1}^{2b} - 60 M_{j,1}^{2} t_{\ell} \rho_{1}^{2b} - 36 t_{\ell}^{2}\rho_{1}^{2b} \Big) + \bigO\Big(e^{-\frac{M_{j,1}^{2}\rho_{1}^{2b}}{2}}\frac{1+M_{j,1}^{8}}{n^{3/2}} \Big),
\end{align*}
uniformly for $j\in \{j:\lambda_{j,1}\in I_{2}\}$. Extending the above asymptotic formulas to higher order (see \cite[Lemma 2.8]{ACCL2022} for details in a similar situation), we deduce that the argument of $\ln$ in \eqref{lol1 hard} enjoys the following asymptotics
\begin{align}
& 1+\hat{\mathsf{T}}_{0}(b\rho_{2}^{2b})+\sum_{\ell=1}^{m} \omega_{\ell} \frac{ \frac{1}{2}\mathrm{erfc}\big(-\hat{\eta}_{j,\ell} \sqrt{a_{j}/2}\big) - R_{a_{j}}(\hat\eta_{j,\ell}) }{ \frac{1}{2}\mathrm{erfc}\big(-\eta_{j,1} \sqrt{a_{j}/2}\big) - R_{a_{j}}(\eta_{j,1}) }    \nonumber \\
& = g_{1}(M_{j,1}) + \frac{g_{2}(M_{j,1})}{\sqrt{n}} + \frac{g_{3}(M_{j,1})}{n} + \frac{g_{4}(M_{j,1})}{n^{3/2}} + \frac{g_{5}(M_{j,1})}{n^{2}} + \bigO\Big(\frac{1+|M_{j,1}|^{13}}{n^{5/2}}\Big), \label{asymp of S2kp2p in proof hard}
\end{align}
as $n \to + \infty$, where
\begin{align*}
& g_{1}(x) = 1+\mathsf{T}_{0}(b\rho_{1}^{2b})+\hat{\mathsf{T}}_{0}(b\rho_{2}^{2b}), \qquad g_{2}(x) = - \frac{e^{-\frac{1}{2}x^{2}\rho_{1}^{2b}}2\rho_{1}^{b} \mathsf{T}_{1}(b\rho_{1}^{2b})}{\sqrt{2\pi} \mathrm{erfc}(-\frac{x\rho_{1}^{b}}{\sqrt{2}})}, \\
& g_{3}(x) = \frac{e^{-\frac{1}{2}x^{2}\rho_{1}^{2b}}}{3\sqrt{2\pi} \, \mathrm{erfc}(-\frac{x\rho_{1}^{b}}{\sqrt{2}})} \bigg\{ \frac{e^{-\frac{1}{2}x^{2}\rho_{1}^{2b}} \mathsf{T}_{1}(b\rho_{1}^{2b})}{\sqrt{2\pi} \, \mathrm{erfc}(-\frac{x\rho_{1}^{b}}{\sqrt{2}})} (4-10x^{2}\rho_{1}^{2b}) + \mathsf{T}_{1}(b\rho_{1}^{2b}) \big( 3x\rho_{1}^{b}-5x^{3}\rho_{1}^{3b} \big) \\
& \hspace{4.1cm} -3 \rho_{1}^{3b}x \mathsf{T}_{2}(b\rho_{1}^{2b}) \bigg\}.
\end{align*}
We remark that equation (\ref{asymp of S2kp2p in proof hard}) coincides with \cite[eq (2.29)]{ACCL2022} except that (\ref{asymp of S2kp2p in proof hard}) has the extra term $\hat{\mathsf{T}}_{0}(b\rho_{2}^{2b})$ on each side and $(\rho, M_j, \eta_j, \eta_{j,\ell})$ in \cite[eq (2.29)]{ACCL2022} are given by $(\rho_1, M_{j,1}, \eta_{j,1}, \hat{\eta}_{j,\ell})$ here.
The functions $g_{4}$ and $g_{5}$ can also be computed explicitly, but we do not write them down. Since $\frac{e^{-\frac{1}{2}x^{2}\rho_{1}^{2b}}}{\sqrt{2\pi} \, \mathrm{erfc}(-\frac{x\rho_{1}^{b}}{\sqrt{2}})} \sim -\frac{\rho_{1}^{b} x}{2}$ as $x \to - \infty$, $g_{2}(x) = \bigO(x)$ as $x \to -\infty$. At first glance, it seems that $g_{3}(x) = \bigO(x^{4})$ as $x \to -\infty$. However, some cancellations take place and in fact $g_{3}(x) = \bigO(x^{2})$ as $x \to -\infty$. Likewise, the expressions for $g_{4}$ and $g_{5}$ suggest a priori that $g_{4}(x) = \bigO(x^{7})$ and $g_{5}(x) = \bigO(x^{10})$ as $x \to -\infty$, but here too, cancellations take place and in fact  $g_{4}(x) = \bigO(x^{3})$ and $g_{5}(x) = \bigO(x^{4})$ as $x \to -\infty$. Using the above large $x$ estimates for $\{g_{j}(x)\}_{j=2}^{5}$ and \eqref{asymp of S2kp2p in proof hard}, we obtain after a computation that
\begin{align*}
& S_{2}^{(2)} = \sum_{j=g_{1,-}}^{g_{1,+}} \bigg\{ \ln(1+\mathsf{T}_{0}(b\rho_{1}^{2b})+\hat{\mathsf{T}}_{0}(b\rho_{2}^{2b})) + \frac{h_{1}(M_{j,1})}{\sqrt{n}} + \frac{h_{2}(M_{j,1})}{n} + \bigO\bigg( \frac{1+|M_{j,1}|^{3}}{n^{3/2}} + \frac{1+|M_{j,1}|^{13}}{n^{5/2}} \bigg) \bigg\}.
\end{align*}
as $n \to + \infty$. The above error can be estimated as follows:
\begin{align*}
\sum_{j=g_{1,-}}^{g_{1,+}}\bigO\bigg( \frac{1+|M_{j,1}|^{3}}{n^{3/2}} + \frac{1+|M_{j,1}|^{13}}{n^{5/2}} \bigg) = \bigO\bigg( \frac{M^{4}}{n} + \frac{M^{14}}{n^{2}} \bigg), \qquad \mbox{as } n \to + \infty.
\end{align*}
Using then Lemma \ref{lemma:Riemann sum}, we obtain the claim.
\end{proof}

\begin{lemma}\label{lemma:S4p2p hard}
Let $x_{1},\dots,x_{2m} \in \mathbb{R}$ be fixed. There exists $\delta > 0$ such that
\begin{align*}
&  S_{4}^{(2)} = \hat{E}_{2}^{(M)} \sqrt{n} + \hat{E}_{4}^{(M)} + \frac{\hat{E}_{5}^{(M)}}{\sqrt{n}} + \bigO\bigg( \frac{M^{4}}{n} + \frac{M^{14}}{n^{2}} \bigg), \\
& \hat{E}_{2}^{(M)} = 0, \qquad \hat{E}_{4}^{(M)} = b \rho_{2}^{2b} \int_{-M}^{M} \hat{h}_{1}(t)dt, \\
& \hat{E}_{5}^{(M)} = \bigg( \frac{1}{2}-\theta_{2,-}^{(n,M)} \bigg) \hat{h}_{1}(M) + \bigg( \frac{1}{2}-\theta_{2,+}^{(n,M)} \bigg) \hat{h}_{1}(-M) + b \rho_{2}^{2b} \int_{-M}^{M} \big( \hat{h}_{2}(t) - 2 t \hat{h}_{1}(t) \big)dt,
\end{align*}
as $n \to +\infty$ uniformly for $u_{1} \in \{z \in \mathbb{C}: |z-x_{1}|\leq \delta\},\dots,u_{2m} \in \{z \in \mathbb{C}: |z-x_{2m}|\leq \delta\}$, where $\hat{h}_{1}$, $\hat{h}_{2}$ are given by
\begin{align*}
& \hat{h}_{1}(x) = 2\rho_{2}^{b} \hat{\mathsf{T}}_{1}(b\rho_{2}^{2b}) \frac{e^{-\frac{1}{2}x^{2}\rho_{2}^{2b}}}{\sqrt{2\pi} \, \big(2 - \mathrm{erfc}(-\frac{x \rho_{2}^{b}}{\sqrt{2}})\big)}, \\
& \hat{h}_{2}(x) = -\frac{\hat{h}_{1}(x)^{2}}{2} + \frac{e^{-\frac{1}{2}x^{2}\rho_{2}^{2b}}}{\sqrt{2\pi} \, \big(2 - \mathrm{erfc}(-\frac{x \rho_{2}^{b}}{\sqrt{2}})\big)} \bigg\{ \Big( -\rho_{2}^{b} x + \frac{5}{3}\rho_{2}^{3b}x^{3} \Big) \hat{\mathsf{T}}_{1}(b\rho_{2}^{2b}) \\
& \hspace{1.25cm} -\rho_{2}^{3b}x \hat{\mathsf{T}}_{2}(b\rho_{2}^{2b}) + \frac{4-10\rho_{2}^{2b}x^{2}}{3} \hat{\mathsf{T}}_{1}(b\rho_{2}^{2b}) \frac{e^{-\frac{1}{2}x^{2}\rho_{2}^{2b}}}{\sqrt{2\pi}\, \big(2-\mathrm{erfc}(-\frac{x\rho_{2}^{b}}{\sqrt{2}})\big)} \bigg\}.
\end{align*}
\end{lemma}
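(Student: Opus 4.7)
The plan is to mirror the proof of Lemma \ref{lemma:S2kp2p hard}, interchanging the roles of the two hard edges so that $\rho_{1}$ is replaced by $\rho_{2}$ and the relevant terms now come from $\ell\in\{m+1,\dots,2m\}$. The starting point is to identify which ratios $\gamma(a_{j},a_{j}\lambda)/\Gamma(a_{j})$ are exponentially close to $0$ or $1$ for $j$ with $\lambda_{j,2}\in I_{2}$. Here $\lambda_{j,1}$ and $\hat{\lambda}_{j,\ell}$ for $\ell\in\{1,\dots,m\}$ are strictly less than $1$ and bounded away from it, so by Lemma \ref{lemma: uniform} their $\gamma/\Gamma$ are $\bigO(e^{-cn})$, yielding $F_{n,j,\ell}=\bigO(e^{-cn})$ for $\ell\in\{1,\dots,m\}$. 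The remaining quantities $\lambda_{j,2}$ and $\hat{\lambda}_{j,\ell}$ for $\ell\in\{m+1,\dots,2m\}$ are all within $\bigO(M/\sqrt{n})$ of $1$ and require the erfc expansion of Lemma \ref{lemma: uniform}(iii). This yields an analog of \eqref{lol5}:
\begin{align*}
S_{4}^{(2)} = \bigO(e^{-cn^{1/5}}) + \sum_{j:\lambda_{j,2}\in I_{2}}\ln\bigg(1+\sum_{\ell=m+1}^{2m}\omega_{\ell}\frac{\tfrac{1}{2}\mathrm{erfc}(-\hat{\eta}_{j,\ell}\sqrt{a_{j}/2})-R_{a_{j}}(\hat{\eta}_{j,\ell})-\tfrac{1}{2}\mathrm{erfc}(-\eta_{j,2}\sqrt{a_{j}/2})+R_{a_{j}}(\eta_{j,2})}{1-\tfrac{1}{2}\mathrm{erfc}(-\eta_{j,2}\sqrt{a_{j}/2})+R_{a_{j}}(\eta_{j,2})}\bigg).
\end{align*}

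Next I would expand the summand in powers of $1/\sqrt{n}$, following the corresponding step in Lemma \ref{lemma:S2kp2p hard}. For $\ell\in\{m+1,\dots,2m\}$ the relation $\hat{\lambda}_{j,\ell}=\lambda_{j,2}(1+t_{\ell}/n)$ gives $\hat{M}_{j,\ell}=M_{j,2}+t_{\ell}/\sqrt{n}+t_{\ell}M_{j,2}/n$, and analogous Taylor expansions hold for $\hat{\eta}_{j,\ell}\sqrt{a_{j}/2}$ and $\eta_{j,2}\sqrt{a_{j}/2}$ around their common leading value $M_{j,2}\rho_{2}^{b}/\sqrt{2}$. Combined with the expansion \eqref{asymp of Ra} of $R_{a_{j}}$, this produces an expansion
\begin{align*}
1+\sum_{\ell=m+1}^{2m}\omega_{\ell}F_{n,j,\ell} = \hat{g}_{1}(M_{j,2})+\frac{\hat{g}_{2}(M_{j,2})}{\sqrt{n}}+\frac{\hat{g}_{3}(M_{j,2})}{n}+\frac{\hat{g}_{4}(M_{j,2})}{n^{3/2}}+\frac{\hat{g}_{5}(M_{j,2})}{n^{2}}+\bigO\bigg(\frac{1+|M_{j,2}|^{13}}{n^{5/2}}\bigg),
\end{align*}
where crucially $\hat{g}_{1}\equiv 1$ because the numerator of the displayed fraction vanishes at leading order---this is the key asymmetry with Lemma \ref{lemma:S2kp2p hard} which forces $\hat{E}_{2}^{(M)}=0$. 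Taking $\ln$ via $\ln(1+x)=x-x^{2}/2+\cdots$, the summand becomes $\hat{h}_{1}(M_{j,2})/\sqrt{n}+\hat{h}_{2}(M_{j,2})/n+\bigO((1+|M_{j,2}|^{3})/n^{3/2}+(1+|M_{j,2}|^{13})/n^{5/2})$, where $\hat{h}_{1}=\hat{g}_{2}$ comes directly from the leading erfc-derivative contribution and the $-\hat{h}_{1}^{2}/2$ correction inside $\hat{h}_{2}$ arises from the quadratic term of the Taylor expansion of $\ln(1+\cdot)$. The factor $2-\mathrm{erfc}(-x\rho_{2}^{b}/\sqrt{2})$ appearing in the denominators of $\hat{h}_{1},\hat{h}_{2}$ is precisely $2\bigl(1-\tfrac{1}{2}\mathrm{erfc}(-M_{j,2}\rho_{2}^{b}/\sqrt{2})\bigr)$, the leading-order value of the denominator of the fraction above.

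Finally, I would verify the polynomial growth estimates $\hat{g}_{2}=\bigO(x)$, $\hat{g}_{3}=\bigO(x^{2})$, $\hat{g}_{4}=\bigO(x^{3})$, $\hat{g}_{5}=\bigO(x^{4})$ as $x\to+\infty$ (using $\tfrac{e^{-x^{2}\rho_{2}^{2b}/2}}{\sqrt{2\pi}(2-\mathrm{erfc}(-x\rho_{2}^{b}/\sqrt{2}))}\sim\tfrac{x\rho_{2}^{b}}{2}$ and analogous cancellations as in \cite[Lemma 2.8]{ACCL2022}), so that the total error sums to $\bigO(M^{4}/n+M^{14}/n^{2})$. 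Applying Lemma \ref{lemma:Riemann sum} with $k=2$ to $\sum\hat{h}_{1}(M_{j,2})/\sqrt{n}$ and $\sum\hat{h}_{2}(M_{j,2})/n$ then gives the stated formulas: the leading $\sqrt{n}$-coefficient vanishes ($\hat{E}_{2}^{(M)}=0$ because $\hat{g}_{1}\equiv 1$), the $\bigO(1)$ term is $b\rho_{2}^{2b}\int_{-M}^{M}\hat{h}_{1}(t)dt$, and the $1/\sqrt{n}$ coefficient combines the boundary contributions $(\tfrac{1}{2}-\theta_{2,\pm}^{(n,M)})\hat{h}_{1}(\pm M)$ with the bulk integral of $\hat{h}_{2}-2t\hat{h}_{1}$ furnished by \eqref{sum f asymp 2}. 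The main technical obstacle is the long but mechanical bookkeeping in the Taylor expansion of step two---in particular, verifying $\hat{g}_{1}\equiv 1$ and extracting the concise closed forms for $\hat{h}_{1}$ and $\hat{h}_{2}$---but this is a direct mirror of the computation already performed for Lemma \ref{lemma:S2kp2p hard}.
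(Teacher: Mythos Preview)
Your proposal is correct and takes essentially the same approach as the paper: starting from the analog of \eqref{lol5}, expanding the summand in powers of $1/\sqrt{n}$ using the relation $\hat{M}_{j,\ell}=M_{j,2}+t_{\ell}/\sqrt{n}+t_{\ell}M_{j,2}/n$ and \eqref{asymp of Ra}, observing that $\hat{g}_{1}\equiv 1$ (whence $\hat{h}_{1}=\hat{g}_{2}$ and $\hat{h}_{2}=-\hat{h}_{1}^{2}/2+\hat{g}_{3}$), verifying the growth estimates $\hat{g}_{k}(x)=\bigO(x^{k-1})$ as $x\to+\infty$, and then applying Lemma~\ref{lemma:Riemann sum} with $k=2$. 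One small remark: there is no part ``(iii)'' in Lemma~\ref{lemma: uniform}; the erfc representation you invoke is the main identity of that lemma, not a labeled subpart.
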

\begin{proof}
In a similar way as in \eqref{lol5}, we infer that there exists $\delta > 0$ such that
\begin{align}\label{lol1 hard hat}
& S_{4}^{(2)} = \bigO(e^{-cn}) + \sum_{j:\lambda_{j,2}\in I_{2}}  \ln \bigg( 1  \nonumber \\
& +\sum_{\ell=m+1}^{2m} \omega_{\ell} \frac{ \frac{1}{2}\mathrm{erfc}\big(-\hat{\eta}_{j,\ell} \sqrt{a_{j}/2}\big) - R_{a_{j}}(\hat\eta_{j,\ell}) - \big[\frac{1}{2}\mathrm{erfc}\big(-\eta_{j,2} \sqrt{a_{j}/2}\big) - R_{a_{j}}(\eta_{j,2})\big] }{ 1 - \big[\frac{1}{2}\mathrm{erfc}\big(-\eta_{j,2} \sqrt{a_{j}/2}\big) - R_{a_{j}}(\eta_{j,2})\big] }  \bigg)
\end{align}
as $n \to + \infty$ uniformly for $u_{1} \in \{z \in \mathbb{C}: |z-x_{1}|\leq \delta\},\dots,u_{2m} \in \{z \in \mathbb{C}: |z-x_{2m}|\leq \delta\}$. For $j \in \{j:\lambda_{j,2}\in I_{2}\}$, we have $1-\frac{M}{\sqrt{n}} \leq \lambda_{j,2} = \frac{bn\rho_{2}^{2b}}{j+\alpha} \leq 1+\frac{M}{\sqrt{n}}$, $-M \leq M_{j,2} \leq M$, and 
\begin{align*}
\hat{M}_{j,k} = M_{j,2} + \frac{t_{k}}{\sqrt{n}} + \frac{t_{k}M_{j,2}}{n}, \qquad k=m+1,\dots,2m.
\end{align*}
Furthermore, for $\ell\in \{m+1,\dots,2m\}$, as $n \to + \infty$ we have 
\begin{align}
\hat{\eta}_{j,\ell} & = \frac{M_{j,2}}{\sqrt{n}} - \frac{M_{j,2}^{2}-3t_{\ell}}{3n} + \frac{7M_{j,2}^{3}+12 t_{\ell} M_{j,2}}{36 n^{3/2}} + \bigO\bigg(\frac{1+M_{j,2}^{4}}{n^{2}}\bigg)
	 \\
-\hat{\eta}_{j,\ell} \sqrt{a_{j}/2} & = - \frac{M_{j,2}\rho_{2}^{b}}{\sqrt{2}} + \frac{(5M_{j,2}^{2}-6t_{\ell})\rho_{2}^{b}}{6\sqrt{2} \sqrt{n}} - \frac{\rho_{2}^{b} M_{j,2}(53M_{j,2}^{2}-12t_{\ell})}{72\sqrt{2} n} + \bigO\bigg( \frac{1+M_{j,2}^{4}}{n^{3/2}} \bigg)
\end{align}
uniformly for $j\in \{j:\lambda_{j,2}\in I_{2}\}$. Hence, for $\ell\in \{m+1,\dots,2m\}$, by \eqref{asymp of Ra} as $n \to + \infty$ we have
\begin{align*}
& R_{a_{j}}(\hat{\eta}_{j,\ell}) = \frac{e^{-\frac{M_{j,2}^{2}\rho_{2}^{2b}}{2}}}{\sqrt{2\pi}} \bigg( \frac{-1}{3\rho_{2}^{b}\sqrt{n}} - \frac{M_{j,2}(3+10M_{j,2}^{2}\rho_{2}^{2b}-12t_{\ell}\rho_{2}^{2b})}{36\rho_{2}^{b}n} + \bigO((1+M_{j,2}^{6})n^{-\frac{3}{2}}) \bigg)
\end{align*}
and
\begin{align*}
& \frac{1}{2}\mathrm{erfc}\Big(-\hat{\eta}_{j,\ell} \sqrt{a_{j}/2}\Big) = \frac{1}{2}\mathrm{erfc}\Big(-\frac{\rho_{2}^{b}M_{j,2}}{\sqrt{2}}\Big) -\frac{e^{-\frac{M_{j,2}^{2}\rho_{2}^{2b}}{2}}\rho_{2}^{b}(5 M_{j,2}^{2} - 6 t_{\ell})}{6\sqrt{2\pi}\sqrt{n}} \\
& + \frac{e^{-\frac{M_{j,2}^{2}\rho_{2}^{2b}}{2}}M_{j,2} \rho_{2}^{b}}{72\sqrt{2\pi} \, n} \Big( 53M_{j,2}^{2} - 12 t_{\ell} - 25 M_{j,2}^{4} \rho_{2}^{2b} + 60 M_{j,2}^{2} t_{\ell} \rho_{2}^{2b} - 36 t_{\ell}^{2}\rho_{2}^{2b} \Big) + \bigO\Big(e^{-\frac{M_{j,2}^{2}\rho_{2}^{2b}}{2}}\frac{1+M_{j,2}^{8}}{n^{3/2}} \Big),
\end{align*}
uniformly for $j\in \{j:\lambda_{j,2}\in I_{2}\}$. Extending the above asymptotic formulas to higher order (see \cite[Lemma 2.8]{ACCL2022} for details in a similar situation), we deduce from \eqref{lol1 hard hat} that
\begin{align}
& 1 + \sum_{\ell=m+1}^{2m} \omega_{\ell} \frac{ \frac{1}{2}\mathrm{erfc}\big(-\hat{\eta}_{j,\ell} \sqrt{a_{j}/2}\big) - R_{a_{j}}(\hat\eta_{j,\ell}) - \big[\frac{1}{2}\mathrm{erfc}\big(-\eta_{j,2} \sqrt{a_{j}/2}\big) - R_{a_{j}}(\eta_{j,2})\big] }{ 1 - \big[\frac{1}{2}\mathrm{erfc}\big(-\eta_{j,2} \sqrt{a_{j}/2}\big) - R_{a_{j}}(\eta_{j,2})\big] } \nonumber \\
& = 1 + \frac{\hat{g}_{2}(M_{j,2})}{\sqrt{n}} + \frac{\hat{g}_{3}(M_{j,2})}{n} + \frac{\hat{g}_{4}(M_{j,2})}{n^{3/2}} + \frac{\hat{g}_{5}(M_{j,2})}{n^{2}} + \bigO\Big(\frac{1+|M_{j,2}|^{13}}{n^{5/2}}\Big)\bigg), \label{asymp of S2kp2p in proof hard hat}
\end{align}
as $n \to + \infty$, where
\begin{align*}
& \hat{g}_{2}(x) = \frac{e^{-\frac{1}{2}x^{2}\rho_{2}^{2b}}2\rho_{2}^{b} \hat{\mathsf{T}}_{1}(b\rho_{2}^{2b})}{\sqrt{2\pi} (2-\mathrm{erfc}(-\frac{x\rho_{2}^{b}}{\sqrt{2}}))}, \\
& \hat{g}_{3}(x) = \frac{e^{-\frac{1}{2}x^{2}\rho_{2}^{2b}}}{\sqrt{2\pi} \, \big(2 - \mathrm{erfc}(-\frac{x \rho_{2}^{b}}{\sqrt{2}})\big)} \bigg\{ \Big( -\rho_{2}^{b} x + \frac{5}{3}\rho_{2}^{3b}x^{3} \Big) \hat{\mathsf{T}}_{1}(b\rho_{2}^{2b}) \\
& \hspace{1.25cm} -\rho_{2}^{3b}x \hat{\mathsf{T}}_{2}(b\rho_{2}^{2b}) + \frac{4-10\rho_{2}^{2b}x^{2}}{3} \hat{\mathsf{T}}_{1}(b\rho_{2}^{2b}) \frac{e^{-\frac{1}{2}x^{2}\rho_{2}^{2b}}}{\sqrt{2\pi}\, \big(2-\mathrm{erfc}(-\frac{x\rho_{2}^{b}}{\sqrt{2}})\big)} \bigg\}.
\end{align*}
The functions $\hat{g}_{4}$ and $\hat{g}_{5}$ can also be computed explicitly, but we do not write them down. Since $\frac{e^{-\frac{1}{2}x^{2}\rho_{2}^{2b}}}{\sqrt{2\pi} \, (2-\mathrm{erfc}(-\frac{x\rho_{2}^{b}}{\sqrt{2}}))} \sim \frac{\rho_{2}^{b} x}{2}$ as $x \to + \infty$, $\hat{g}_{2}(x) = \bigO(x)$ as $x \to +\infty$. At first glance, it seems that $\hat{g}_{3}(x) = \bigO(x^{4})$ as $x \to +\infty$. However, some cancellations take place and in fact $\hat{g}_{3}(x) = \bigO(x^{2})$ as $x \to +\infty$. Likewise, the expressions for $\hat{g}_{4}$ and $\hat{g}_{5}$ suggest a priori that $\hat{g}_{4}(x) = \bigO(x^{7})$ and $\hat{g}_{5}(x) = \bigO(x^{10})$ as $x \to +\infty$, but here too, cancellations take place and in fact $\hat{g}_{4}(x) = \bigO(x^{3})$ and $\hat{g}_{5}(x) = \bigO(x^{4})$ as $x \to +\infty$. Using the above large $x$ estimates for $\{\hat{g}_{j}(x)\}_{j=2}^{5}$ and \eqref{asymp of S2kp2p in proof hard hat}, we obtain after a computation that
\begin{align*}
& S_{4}^{(2)} = \sum_{j=g_{2,-}}^{g_{2,+}} \bigg\{ \frac{\hat{h}_{1}(M_{j,2})}{\sqrt{n}} + \frac{\hat{h}_{2}(M_{j,2})}{n} + \bigO\bigg( \frac{1+|M_{j,2}|^{3}}{n^{3/2}} + \frac{1+|M_{j,2}|^{13}}{n^{5/2}} \bigg) \bigg\}.
\end{align*}
as $n \to + \infty$, where $\hat{h}_{1} = \hat{g}_{2}$ and $\hat{h}_{2} = - \frac{\hat{h}_{1}^{2}}{2}+\hat{g}_3$. Note that
\begin{align*}
\sum_{j=g_{2,-}}^{g_{2,+}}\bigO\bigg( \frac{1+|M_{j,2}|^{3}}{n^{3/2}} + \frac{1+|M_{j,2}|^{13}}{n^{5/2}} \bigg) = \bigO\bigg( \frac{M^{4}}{n} + \frac{M^{14}}{n^{2}} \bigg), \qquad \mbox{as } n \to + \infty.
\end{align*}
Using then Lemma \ref{lemma:Riemann sum}, we find the claim.
\end{proof}

Define the real constants $\{\mathcal{I}_j\}_1^4 \subset \mathbb{R}$ by
\begin{align}
& \mathcal{I}_1 = \int_{-\infty}^{+\infty} \bigg\{ \frac{e^{-y^{2}}}{\sqrt{\pi}\, \mathrm{erfc}(y)} - \chi_{(0,+\infty)}(y) \bigg[ y + \frac{y}{2(1+y^{2})} \bigg] \bigg\}dy, \label{def of I1}
	\\
& \mathcal{I}_{2} = \int_{-\infty}^{+\infty} \bigg\{ \frac{y^{3}e^{-y^{2}}}{\sqrt{\pi} \, \mathrm{erfc}(y)} - \chi_{(0,+\infty)}(y) \bigg[ y^{4}+\frac{y^{2}}{2}-\frac{1}{2} \bigg] \bigg\}dy, \label{def of I3} 
	\\
& \mathcal{I}_{3} = \int_{-\infty}^{+\infty} \bigg\{ \bigg( \frac{e^{-y^{2}}}{\sqrt{\pi} \, \mathrm{erfc}(y)} \bigg)^{2} - \chi_{(0,+\infty)}(y) \bigg[ y^{2}+1 \bigg] \bigg\} dy, \label{def of I4}
	\\
& \mathcal{I}_{4} = \int_{-\infty}^{+\infty} \bigg\{ \bigg( \frac{y \, e^{-y^{2}}}{\sqrt{\pi} \, \mathrm{erfc}(y)} \bigg)^{2} - \chi_{(0,+\infty)}(y)\bigg[ y^{4}+y^{2}-\frac{3}{4} \bigg] \bigg\} dy. \label{def of I5}
\end{align}
Recall also that $\mathcal{I}$ is given by \eqref{def of I}.

\begin{lemma}\label{lemma: asymp of S2k final hard}
For any fixed $x_{1},\dots,x_{2m} \in \mathbb{R}$, there exists $\delta > 0$ such that
\begin{align*}
& S_{2} =  - j_{1,-} \ln  \Omega + F_{1}^{(\epsilon)}n + F_{2} \ln n + F_{3}^{(n,\epsilon)} + \frac{F_{4}}{\sqrt{n}}  + \bigO\bigg(\frac{\sqrt{n}}{M^{11}} + \frac{1}{M^{6}} + \frac{1}{\sqrt{n} M} + \frac{M^{4}}{n} + \frac{M^{14}}{n^{2}} \bigg),
\end{align*}
as $n \to +\infty$ uniformly for $u_{1} \in \{z \in \mathbb{C}: |z-x_{1}|\leq \delta\},\dots,u_{2m} \in \{z \in \mathbb{C}: |z-x_{2m}|\leq \delta\}$, where
\begin{align*}
F_{1}^{(\epsilon)} = &\; b \rho_{1}^{2b} \ln \Omega + \int_{b\rho_{1}^{2b}}^{\frac{b\rho_{1}^{2b}}{1-\epsilon}} f_{1}(x)dx, 
	\\
F_{2} = & -\frac{b\rho_{1}^{2b}}{2} \frac{\mathsf{T}_{1}(b\rho_{1}^{2b})}{\Omega}, 
	\\
F_{3}^{(n,\epsilon)} = &\; \frac{1}{2}\ln \Omega + \int_{b\rho_{1}^{2b}}^{\frac{b\rho_{1}^{2b}}{1-\epsilon}} \bigg\{ f(x) + \frac{b\rho_{1}^{2b} \mathsf{T}_{1}(b\rho_{1}^{2b})}{\Omega(x-b\rho_{1}^{2b})} \bigg\}dx + \bigg( \frac{1}{2}-\alpha - \theta_{1,+}^{(n,\epsilon)}  \bigg) f_{1}\bigg(\frac{b\rho_{1}^{2b}}{1-\epsilon}\bigg) 
	\\
&  - 2b\rho_{1}^{2b} \frac{\mathsf{T}_{1}(b\rho_{1}^{2b})}{\Omega} \mathcal{I}_1 
 + \frac{b\rho_{1}^{2b}}{2\Omega}\mathsf{T}_{1}(b \rho_{1}^{2b}) \big( \ln 2 - 2b \ln(\rho_{1}) \big) - \frac{\mathsf{T}_{1}(b\rho_{1}^{2b})}{\Omega}b\rho_{1}^{2b} \ln \bigg( \frac{\epsilon}{1-\epsilon} \bigg), 
 	\\
 F_{4} = &\; \sqrt{2}b\rho_{1}^{b}\frac{ \rho_{1}^{2b} \mathsf{T}_{2}(b\rho_{1}^{2b}) -5 \mathsf{T}_{1}(b\rho_{1}^{2b}) }{ \Omega } \mathcal{I}
 + \frac{10\sqrt{2} b \rho_{1}^{b}}{3} \frac{\mathsf{T}_{1}(b\rho_{1}^{2b})}{\Omega} \mathcal{I}_2 
 	\\
& + \sqrt{2} b \rho_{1}^{b} \frac{\mathsf{T}_{1}(b\rho_{1}^{2b})}{\Omega} \bigg( \frac{2}{3} -  \rho_{1}^{2b} \frac{\mathsf{T}_{1}(b\rho_{1}^{2b})}{\Omega} \bigg) \mathcal{I}_3 
- \frac{10 \sqrt{2} b \rho_{1}^{b}}{3} \frac{\mathsf{T}_{1}(b\rho_{1}^{2b})}{\Omega} \mathcal{I}_4,
\end{align*}
and $f_{1}$ and $f$ are as in the statement of Lemma \ref{lemma:S3 asymp hard}.
\end{lemma}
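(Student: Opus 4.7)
My strategy is to combine the asymptotic expansions of $S_{2}^{(1)}$, $S_{2}^{(2)}$, $S_{2}^{(3)}$ obtained in Lemmas \ref{lemma:S2p1p hard}, \ref{lemma:S2kp2p hard}, and \ref{lemma:S2kp3p hard}, and to simplify the sum by eliminating the auxiliary parameter $M = n^{1/10}$. The algebra rests on three identities for the functions in \eqref{def of T and That} and Lemma \ref{lemma:S3 asymp hard}: the $\vec{t}$-independent identity
$$f_{1}(b\rho_{1}^{2b}) \;=\; \ln\bigl(1 + \mathsf{T}_{0}(b\rho_{1}^{2b};\vec{u}) + \hat{\mathsf{T}}_{0}(b\rho_{2}^{2b};\vec{u})\bigr) \;=\; \ln \Omega,$$
together with the derivative identities $f_{1}'(b\rho_{1}^{2b}) = -\mathsf{T}_{1}(b\rho_{1}^{2b})/(b\Omega)$ and $f_{1}''(b\rho_{1}^{2b}) = \mathsf{T}_{2}(b\rho_{1}^{2b})/(b^{2}\Omega) - \mathsf{T}_{1}(b\rho_{1}^{2b})^{2}/(b^{2}\Omega^{2})$ obtained by differentiating the defining series for $\mathsf{T}_{0}$.

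The first simplification is the cancellation of the divergent $M$-dependent contributions. At order $\sqrt{n}$, the three pieces $D_{2}^{(M)}\sqrt{n}$, $E_{2}^{(M)}\sqrt{n}$, and $-bM\rho_{1}^{2b}\sqrt{n}\,\ln\Omega$ from $S_{2}^{(3)}$ add to zero thanks to $f_{1}(b\rho_{1}^{2b}) = \ln \Omega$. At the $O(1)$ level, the $M^{2}$ terms of $D_{4}^{(n,\epsilon,M)}$ and of $S_{2}^{(3)}$ cancel against the leading divergent part of $b\rho_{1}^{2b}\int_{-M}^{M}h_{1}(t)\,dt$; the $\ln M$ piece coming from $-\tfrac{b\rho_{1}^{2b}\mathsf{T}_{1}(b\rho_{1}^{2b})}{\Omega}\ln(\epsilon/(M(1-\epsilon)))$ in $D_{4}^{(n,\epsilon,M)}$ cancels against the logarithmic divergence of the same integral, leaving only the genuine $-\tfrac{b\rho_{1}^{2b}}{2\Omega}\mathsf{T}_{1}(b\rho_{1}^{2b})\ln n$ from $D_{3}\ln n$; and the various $\theta_{1,\pm}^{(n,M)}\ln\Omega$ terms telescope to produce the $\tfrac{1}{2}\ln \Omega$ summand of $F_{3}^{(n,\epsilon)}$. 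Analogously, at order $n^{-1/2}$ the $M^{3}$ and $M$ coefficients of $D_{5}^{(n,M)}/\sqrt{n}$, $E_{5}^{(M)}/\sqrt{n}$, and the residual term $-bM^{3}\rho_{1}^{2b}n^{-1/2}\ln\Omega$ of $S_{2}^{(3)}$ must cancel against the divergences of $b\rho_{1}^{2b}\int_{-M}^{M}(h_{2}(t) - 2th_{1}(t))\,dt$ and of $b\rho_{1}^{2b}\int_{-M}^{M}th_{1}(t)\,dt$.

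To evaluate the finite remainders, I would substitute $y = -t\rho_{1}^{b}/\sqrt{2}$ so that the integration interval becomes $[-R,R]$ with $R = M\rho_{1}^{b}/\sqrt{2}$, and rewrite each of the integrals $\int_{-M}^{M}h_{1}$, $\int_{-M}^{M}th_{1}$, $\int_{-M}^{M}(h_{2}-2th_{1})$ as combinations of $\int_{-R}^{R}$ of $e^{-y^{2}}/(\sqrt{\pi}\,\mathrm{erfc}(y))$, $y\,e^{-y^{2}}/(\sqrt{\pi}\,\mathrm{erfc}(y))$, and their squares. The Mill's-ratio expansion $e^{-y^{2}}/(\sqrt{\pi}\,\mathrm{erfc}(y)) = y + y/(2(1+y^{2})) + \bigO(y^{-3})$ as $y \to +\infty$ matches exactly the polynomial subtractions in the definitions \eqref{def of I}, \eqref{def of I1}--\eqref{def of I5}, so subtracting and reintegrating yields each integral as one of $\mathcal{I}, \mathcal{I}_{1}, \mathcal{I}_{2}, \mathcal{I}_{3}, \mathcal{I}_{4}$ plus a controlled polynomial in $R$ and a remainder $\bigO(R^{-2})$; the polynomial parts are precisely what was needed in the previous step. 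The constant parts then assemble into $F_{4}$ in the stated combination, with $\mathcal{I}$ coming from the linear part of $\int t h_{1}$, $\mathcal{I}_{1}$ from $\int h_{1}$, $\mathcal{I}_{3}$ from the $-h_{1}^{2}/2$ contribution to $h_{2}$, and $\mathcal{I}_{2}, \mathcal{I}_{4}$ from the remaining $h_{2}-2th_{1}$ pieces. The main obstacle is the bookkeeping at order $n^{-1/2}$: one must simultaneously handle the boundary terms $h_{1}(\pm M)$ (with $h_{1}(M)$ exponentially small but $h_{1}(-M)$ of size $\bigO(M)$) that appear in $E_{5}^{(M)}$, the $\mathsf{T}_{1}^{2}/\Omega^{2}$ summand produced by the $f_{1}''$ piece of $D_{5}^{(n,M)}$, and the several polynomial subtractions, and verify that the combination reproduces $F_{4}$ exactly. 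Collecting the error bounds from all three lemmas yields $\bigO(\sqrt{n}/M^{11} + 1/M^{6} + 1/(\sqrt{n}M) + M^{4}/n + M^{14}/n^{2})$, which is what the statement asserts.
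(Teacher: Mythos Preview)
Your proposal is correct and follows essentially the same approach as the paper: combine Lemmas \ref{lemma:S2kp3p hard}, \ref{lemma:S2p1p hard}, and \ref{lemma:S2kp2p hard}, use $f_{1}(b\rho_{1}^{2b})=\ln\Omega$ to kill the $\sqrt{n}$ coefficient, and then verify by direct computation that the $M$-dependent pieces in the $O(1)$ and $O(n^{-1/2})$ terms cancel up to errors $\bigO(M^{-6})$ and $\bigO(M^{-1})$ respectively. In fact you spell out more of the mechanism (the derivative identities for $f_{1}$, the change of variables $y=-t\rho_{1}^{b}/\sqrt{2}$, and the matching of Mills-ratio tails with the subtractions in \eqref{def of I}--\eqref{def of I5}) than the paper does, which simply refers to \cite[Lemma 2.9]{ACCL2022} for the analogous computation.
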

\begin{proof}
By combining Lemmas \ref{lemma:S2kp3p hard}, \ref{lemma:S2p1p hard} and \ref{lemma:S2kp2p hard}, we have
\begin{align*}
& S_{2} = - j_{1,-} \ln  \Omega + F_{1}^{(\epsilon)}n + \widetilde{F}_{2}\sqrt{n} + F_{2} \ln n + F_{3}^{(n,\epsilon,M)} + \frac{F_{4}^{(M)}}{\sqrt{n}}  + \bigO\bigg(\frac{\sqrt{n}}{M^{11}} + \frac{1}{M^{6}} + \frac{1}{\sqrt{n} M} + \frac{M^{4}}{n} + \frac{M^{14}}{n^{2}} \bigg),
\end{align*}
as $n \to +\infty$ uniformly for $u_{1} \in \{z \in \mathbb{C}: |z-x_{1}|\leq \delta\},\dots,u_{2m} \in \{z \in \mathbb{C}: |z-x_{2m}|\leq \delta\}$, where $F_{1}^{(\epsilon)}$ and $F_{2}$ are as in the statement, and
\begin{align*}
& \widetilde{F}_{2} = - bM \rho_{1}^{2b} \ln \Omega + D_{2}^{(M)}+E_{2}^{(M)}, \\
& F_{3}^{(n,\epsilon,M)} = \big( bM^{2}\rho_{1}^{2b} - \alpha + \theta_{1,-}^{(n,M)} \big) \ln \Omega + D_{4}^{(n,\epsilon,M)} + E_{4}^{(M)}, \\
& F_{4}^{(M)} = -bM^{3}\rho_{1}^{2b} \ln \Omega + D_{5}^{(n,M)} + E_{5}^{(M)}.
\end{align*}
Using that $f_{1}(b \rho_{1}^{2b}) = \ln(1+\mathsf{T}_{0}(b\rho_{1}^{2b})+\hat{\mathsf{T}}_{0}(b\rho_{2}^{2b})) = \ln \Omega$, we readily verify that $\widetilde{F}_{2}=0$. Furthermore, a long but direct computation shows that (see \cite[Lemma 2.9]{ACCL2022} for a similar situation with more details provided)
\begin{align*}
& F_{3}^{(n,\epsilon,M)} = F_{3}^{(n,\epsilon)} + \bigO(M^{-6}), \qquad F_{4}^{(M)} = F_{4} + \bigO(M^{-1}),
\end{align*}
and the claim follows.
\end{proof}

\begin{lemma}\label{lemma: asymp of S4 final hard}
For any fixed $x_{1},\dots,x_{2m} \in \mathbb{R}$, there exists $\delta > 0$ such that
\begin{align*}
& S_{4} = \hat{F}_{1}^{(\epsilon)}n + \hat{F}_{2} \ln n + \hat{F}_{3}^{(n,\epsilon)} + \frac{\hat{F}_{4}}{\sqrt{n}}  + \bigO\bigg(\frac{\sqrt{n}}{M^{11}} +\frac{1}{M^{6}} + \frac{1}{\sqrt{n} M} + \frac{M^{4}}{n} + \frac{M^{14}}{n^{2}} \bigg),
\end{align*}
as $n \to +\infty$ uniformly for $u_{1} \in \{z \in \mathbb{C}: |z-x_{1}|\leq \delta\},\dots,u_{2m} \in \{z \in \mathbb{C}: |z-x_{2m}|\leq \delta\}$, where
\begin{align*}
\hat{F}_{1}^{(\epsilon)} = &\; \int_{\frac{b\rho_{2}^{2b}}{1+\epsilon}}^{b\rho_{2}^{2b}} \hat{f}_{1}(x)dx, 
	\\
\hat{F}_{2} = &\; \frac{b\rho_{2}^{2b}}{2} \hat{\mathsf{T}}_{1}(b\rho_{2}^{2b}), 
	\\
\hat{F}_{3}^{(n,\epsilon)} = &\; \int_{\frac{b\rho_{2}^{2b}}{1+\epsilon}}^{b\rho_{2}^{2b}} \bigg\{ \hat{f}(x) - \frac{b\rho_{2}^{2b} \hat{\mathsf{T}}_{1}(b\rho_{2}^{2b})}{b\rho_{2}^{2b}-x} \bigg\}dx + \bigg( \frac{1}{2}+\alpha - \theta_{2,-}^{(n,\epsilon)}  \bigg) \hat{f}_{1}\bigg(\frac{b\rho_{2}^{2b}}{1+\epsilon}\bigg) 
	\\
&  + 2b\rho_{2}^{2b} \hat{\mathsf{T}}_{1}(b\rho_{2}^{2b}) \mathcal{I}_1 
 - \frac{b\rho_{2}^{2b}}{2}\hat{\mathsf{T}}_{1}(b \rho_{2}^{2b}) \big( \ln 2 - 2b \ln(\rho_{2}) \big) + \hat{\mathsf{T}}_{1}(b\rho_{2}^{2b}) b\rho_{2}^{2b} \ln \bigg( \frac{\epsilon}{1+\epsilon} \bigg), 
	\\
\hat{F}_{4} = & -\sqrt{2}b\rho_{2}^{b} \big( \rho_{2}^{2b} \hat{\mathsf{T}}_{2}(b\rho_{2}^{2b}) + 5 \hat{\mathsf{T}}_{1}(b\rho_{2}^{2b}) \big) \mathcal{I} 
+ \frac{10\sqrt{2} b \rho_{2}^{b}}{3} \hat{\mathsf{T}}_{1}(b\rho_{2}^{2b}) \mathcal{I}_2 
	\\
& + \sqrt{2} b \rho_{2}^{b} \hat{\mathsf{T}}_{1}(b\rho_{2}^{2b}) \bigg( \frac{2}{3} - \rho_{2}^{2b} \hat{\mathsf{T}}_{1}(b\rho_{2}^{2b}) \bigg) \mathcal{I}_3 
 - \frac{10 \sqrt{2} b \rho_{2}^{b}}{3} \hat{\mathsf{T}}_{1}(b\rho_{2}^{2b}) \mathcal{I}_4,
\end{align*}
and $\hat{f}_{1}$ and $\hat{f}$ are as in the statement of Lemma \ref{lemma:S3 asymp hard}.
\end{lemma}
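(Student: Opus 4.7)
The plan is to follow exactly the structure of the proof of Lemma \ref{lemma: asymp of S2k final hard}, with $S_2$ replaced by $S_4$ and the decomposition $S_4 = S_4^{(1)}+S_4^{(2)}+S_4^{(3)}$ in \eqref{asymp prelim of S2kpvp hard}. The three constituent asymptotic expansions are already provided: Lemma \ref{lemma:S4p1p hard} gives $S_4^{(1)}=\bigO(n^{-100})$, Lemma \ref{lemma:S4p3p hard} provides the expansion of $S_4^{(3)}$ in terms of $\hat{D}_1^{(\epsilon)},\hat{D}_2^{(M)},\hat{D}_3,\hat{D}_4^{(n,\epsilon,M)},\hat{D}_5^{(n,M)}$, and Lemma \ref{lemma:S4p2p hard} gives the expansion of $S_4^{(2)}$ in terms of $\hat{E}_2^{(M)},\hat{E}_4^{(M)},\hat{E}_5^{(M)}$. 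Adding the three expansions termwise yields an intermediate formula of the shape
\begin{align*}
S_4 = \hat{F}_1^{(\epsilon)} n + \widetilde{F}_2 \sqrt n + \hat{F}_2 \ln n + \hat{F}_3^{(n,\epsilon,M)} + \frac{\hat{F}_4^{(M)}}{\sqrt n} + \bigO\!\left(\tfrac{\sqrt n}{M^{11}}+\tfrac{1}{M^6}+\tfrac{1}{\sqrt n M}+\tfrac{M^4}{n}+\tfrac{M^{14}}{n^2}\right),
\end{align*}
with $\hat{F}_1^{(\epsilon)}=\hat{D}_1^{(\epsilon)}$, $\widetilde{F}_2=\hat{D}_2^{(M)}+\hat{E}_2^{(M)}$, $\hat{F}_2=\hat{D}_3$, $\hat{F}_3^{(n,\epsilon,M)}=\hat{D}_4^{(n,\epsilon,M)}+\hat{E}_4^{(M)}$, and $\hat{F}_4^{(M)}=\hat{D}_5^{(n,M)}+\hat{E}_5^{(M)}$.

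The first verification is that the $\sqrt n$ contribution vanishes. This is immediate from Lemmas \ref{lemma:S4p3p hard} and \ref{lemma:S4p2p hard}, which give $\hat{D}_2^{(M)}=0=\hat{E}_2^{(M)}$. Unlike the $S_2$ case there is no cancellation to check because the leading boundary contribution $\hat{D}_2^{(M)}$ at $x=b\rho_2^{2b}$ is already zero, reflecting the fact that the weight $\omega$ is constant on $[\rho_2,+\infty)$ (no jump of $\Omega$ at the outer hard edge).

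Next I would show that $\hat{F}_3^{(n,\epsilon,M)}$ equals the claimed $M$-independent $\hat{F}_3^{(n,\epsilon)}$ up to $\bigO(M^{-6})$. Plugging in the explicit expressions for $\hat{D}_4^{(n,\epsilon,M)}$ and $\hat{E}_4^{(M)}$ from Lemmas \ref{lemma:S4p3p hard} and \ref{lemma:S4p2p hard}, the $M$-dependent pieces are:
(i) the boundary terms $M^2 \tfrac{b^2 \rho_2^{4b}}{2}\hat f_1'(b\rho_2^{2b})$, $b\rho_2^{2b}\hat{\mathsf T}_1(b\rho_2^{2b})\ln\frac{\epsilon}{M(1+\epsilon)}$, $-b\hat{\mathsf T}_1(b\rho_2^{2b})/M^2$ and $+\tfrac{5b\hat{\mathsf T}_1(b\rho_2^{2b})}{2\rho_2^{2b}M^4}$ from $\hat D_4^{(n,\epsilon,M)}$;
(ii) the integral $b\rho_2^{2b}\int_{-M}^{M}\hat h_1(t)\,dt$ from $\hat E_4^{(M)}$.
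The integral in (ii) is evaluated by splitting $\hat h_1$ as
\begin{align*}
\hat h_1(x) = 2\hat{\mathsf T}_1(b\rho_2^{2b})\cdot \rho_2^{b}\cdot\frac{e^{-\tfrac{1}{2}x^2\rho_2^{2b}}}{\sqrt{2\pi}(2-\mathrm{erfc}(-\tfrac{x\rho_2^{b}}{\sqrt 2}))},
\end{align*}
changing variables $y=-\tfrac{x\rho_2^{b}}{\sqrt 2}$, and then subtracting/adding the large-$|y|$ counterterm $\chi_{(0,+\infty)}(y)[y+\tfrac{y}{2(1+y^{2})}]$ so as to produce the integral $\mathcal I_1$ of \eqref{def of I1}. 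The subtracted pieces combine with the explicit boundary terms from (i) to cancel all $M$-dependence to the stated order, leaving the constant $2b\rho_2^{2b}\hat{\mathsf T}_1(b\rho_2^{2b})\mathcal I_1$ together with the logarithmic residue $\ln\tfrac{\epsilon}{1+\epsilon}$ and the explicit algebraic correction $-\tfrac{b\rho_2^{2b}}{2}\hat{\mathsf T}_1(b\rho_2^{2b})(\ln 2 - 2b\ln\rho_2)$. This is the identical calculation to the one performed for $F_3^{(n,\epsilon,M)}$ in the proof of Lemma \ref{lemma: asymp of S2k final hard}, with the reference \cite[Lemma 2.9]{ACCL2022} providing the template; the sign differences come from the fact that we are now at the outer hard edge (integrands with $b\rho_2^{2b}-x$ rather than $x-b\rho_1^{2b}$), which also accounts for the sign flip in front of $\mathcal I_1$ compared with $F_3^{(n,\epsilon)}$.

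Finally, for the $n^{-1/2}$ term I would show $\hat F_4^{(M)}=\hat F_4 + \bigO(M^{-1})$ by the analogous splitting of $\hat h_2(t)-2t\hat h_1(t)$. The leading odd-in-$t$ pieces produce $\mathcal I$ via \eqref{def of I}, while the even-in-$t$ pieces produce the constants $\mathcal I_2, \mathcal I_3, \mathcal I_4$ from \eqref{def of I3}--\eqref{def of I5}; the polynomial counterterms from those definitions cancel the boundary polynomials in $M$ appearing in $\hat D_5^{(n,M)}$ and the $(\tfrac{1}{2}-\theta_{2,\pm}^{(n,M)})\hat h_1(\pm M)$ boundary contributions from $\hat E_5^{(M)}$. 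The main obstacle throughout is purely bookkeeping: one must track each polynomial and exponential tail carefully, using the large-$|x|$ expansions of $\tfrac{e^{-x^2 \rho_2^{2b}/2}}{\sqrt{2\pi}(2-\mathrm{erfc}(-x\rho_2^{b}/\sqrt 2))}$, and verify that all $M$-dependent remainders are absorbed into the stated error $\bigO(M^{-1})$ (respectively $\bigO(M^{-6})$ at the $n^0$ level). Once the cancellations are verified, the claim follows.
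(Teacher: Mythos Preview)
Your proposal is correct and follows essentially the same approach as the paper: combine Lemmas \ref{lemma:S4p1p hard}, \ref{lemma:S4p3p hard}, and \ref{lemma:S4p2p hard}, observe that $\widetilde{\hat{F}}_{2}=\hat{D}_{2}^{(M)}+\hat{E}_{2}^{(M)}=0$ trivially, and then verify $\hat{F}_{3}^{(n,\epsilon,M)}=\hat{F}_{3}^{(n,\epsilon)}+\bigO(M^{-6})$ and $\hat{F}_{4}^{(M)}=\hat{F}_{4}+\bigO(M^{-1})$ by the bookkeeping you describe. In fact you give more detail on the cancellation mechanism than the paper itself, which simply calls it ``a long but direct computation'' and defers to \cite[Proof of Lemma 2.9]{ACCL2022}.
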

\begin{proof}
By combining Lemmas \ref{lemma:S4p1p hard}, \ref{lemma:S4p3p hard} and \ref{lemma:S4p2p hard}, we have
\begin{align*}
& S_{4} = \hat{F}_{1}^{(\epsilon)}n + \widetilde{\hat{F}}_{2}\sqrt{n} + \hat{F}_{2} \ln n + \hat{F}_{3}^{(n,\epsilon,M)} + \frac{\hat{F}_{4}^{(M)}}{\sqrt{n}}  + \bigO\bigg(\frac{\sqrt{n}}{M^{11}} + \frac{1}{M^{6}} + \frac{1}{\sqrt{n} M} + \frac{M^{4}}{n} + \frac{M^{14}}{n^{2}} \bigg),
\end{align*}
as $n \to +\infty$ uniformly for $u_{1} \in \{z \in \mathbb{C}: |z-x_{1}|\leq \delta\},\dots,u_{2m} \in \{z \in \mathbb{C}: |z-x_{2m}|\leq \delta\}$, where $\hat{F}_{1}^{(\epsilon)}$ and $\hat{F}_{2}$ are as in the statement, and
\begin{align*}
& \widetilde{\hat{F}}_{2} = \hat{D}_{2}^{(M)}+\hat{E}_{2}^{(M)}, \\
& \hat{F}_{3}^{(n,\epsilon,M)} = \hat{D}_{4}^{(n,\epsilon,M)} + \hat{E}_{4}^{(M)}, \\
& \hat{F}_{4}^{(M)} = \hat{D}_{5}^{(n,M)} + \hat{E}_{5}^{(M)}.
\end{align*}
It is easy to check that $\widetilde{\hat{F}}_{2}=0$. Furthermore, a long but direct computation shows that (see e.g. \cite[Proof of Lemma 2.9]{ACCL2022} for a similar analysis with more details provided)
\begin{align*}
& \hat{F}_{3}^{(n,\epsilon,M)} = \hat{F}_{3}^{(n,\epsilon)} + \bigO(M^{-6}), \qquad \hat{F}_{4}^{(M)} = \hat{F}_{4} + \bigO(M^{-1}),
\end{align*}
and the claim follows. 
\end{proof}

\begin{proof}[End of the proof of Theorem \ref{thm:main thm hard}]
Let $M' > 0$ be sufficiently large such that Lemmas \ref{lemma: S2km1 hard} and \ref{lemma: asymp of S2k final hard} hold. Using \eqref{log Dn as a sum of sums hard} and Lemmas \ref{lemma: S0 hard}, \ref{lemma: S2km1 hard}, \ref{lemma: asymp of S2k final hard}, \ref{lemma:S3 asymp hard}, \ref{lemma: asymp of S4 final hard} and \ref{lemma: S5 hard}, we conclude that for any $x_{1},\dots,x_{m} \in \mathbb{R}$, there exists $\delta > 0$ such that
\begin{align*}
& \ln \mathcal{E}_{n} = S_{0}+S_{1}+S_{2}+S_{3} +S_4 + S_5 \\
& = M' \ln \Omega + (j_{-}-M'-1) \ln \Omega - j_{-} \ln  \Omega + C_{1} n + C_{2} \ln n + (C_{3} + \ln \Omega) + \mathcal{F}_n + \frac{C_{4}}{\sqrt{n}} \\
&  + \bigO\bigg(\frac{\sqrt{n}}{M^{11}} + \frac{1}{M^{6}} + \frac{1}{\sqrt{n} M} + \frac{M^{4}}{n} + \frac{M^{14}}{n^{2}} \bigg),
\end{align*}
as $n \to +\infty$ uniformly for $u_{1} \in \{z \in \mathbb{C}: |z-x_{1}|\leq \delta\},\dots,u_{m} \in \{z \in \mathbb{C}: |z-x_{m}|\leq \delta\}$, where
\begin{align*}
  C_1 = &\; F_1^{(\epsilon)} + \int_{\frac{b\rho_{1}^{2b}}{1-\epsilon}}^{\sigma_{\star}} f_{1}(x)dx + \int_{\sigma_{\star}}^{\frac{b\rho_{2}^{2b}}{1+\epsilon}} \hat{f}_{1}(x)dx + \hat{F}_{1}^{(\epsilon)},
  \qquad C_2 = F_2 + \hat{F}_2, \qquad   C_4 = F_4 + \hat{F}_4,
  	\\
  C_3 + \ln \Omega = &\; F_{3}^{(n,\epsilon)} + \bigg( \alpha - \frac{1}{2} + \theta_{1,+}^{(n,\epsilon)} \bigg) f_{1}\Big( \frac{b \rho_{1}^{2b}}{1-\epsilon} \Big)  + \bigg( \theta_{2,-}^{(n,\epsilon)}-\alpha-\frac{1}{2} \bigg) \hat{f}_{1}\Big(\frac{b\rho_{2}^{2b}}{1+\epsilon}\Big)  
  	\\
& + \bigg( \frac{1}{2} - \alpha - \theta_{\star} \bigg) \big(f_{1}(\sigma_{\star}) - \hat{f}_{1}( \sigma_{\star} ) \big) + \int_{\frac{b\rho_{1}^{2b}}{1-\epsilon}}^{\sigma_{\star}}f(x)dx + \int_{\sigma_{\star}}^{\frac{b\rho_{2}^{2b}}{1+\epsilon}}\hat{f}(x)dx 
	\\
& + \theta_\star \ln \mathsf{Q} -\frac{\ln{\mathsf{Q}}}{2} \bigg(
1 - \frac{2\ln(\sigma_{2}/\sigma_{1}) + \ln{\mathsf{Q}}}{2 \ln(\rho_{2}/\rho_{1})}\bigg) 
 + \hat{F}_{3}^{(n,\epsilon)}.
\end{align*}
It only remains to show that the constants $\{C_j\}_1^4$ can be expressed as in the statement of Theorem \ref{thm:main thm hard}. This is easily verified for $C_1$ and $C_2$.
Using the identities
\begin{align*}
f_{1}(\sigma_{\star}) - \hat{f}_{1}( \sigma_{\star} ) & = \ln \mathsf{Q}, \qquad \mathcal{I}_{1} = \frac{\ln (2\sqrt{\pi})}{2},
	\\
\int_{\frac{b\rho_{1}^{2b}}{1-\epsilon}}^{\sigma_{\star}} \frac{b \rho_{1}^{2b} \mathsf{T}_{1}(b\rho_{1}^{2b};\vec{t},\vec{u})}{\Omega (x-b\rho_{1}^{2b})}dx & = \frac{b \rho_{1}^{2b} \mathsf{T}_{1}(b\rho_{1}^{2b};\vec{t},\vec{u})}{\Omega} \bigg( \ln \frac{\sigma_{\star}-b\rho_{1}^{2b}}{b\rho_{1}^{2b}} - \ln \frac{\epsilon}{1-\epsilon}\bigg),
	\\
-\int_{\sigma_{\star}}^{\frac{b\rho_{2}^{2b}}{1 + \epsilon}} \frac{b \rho_{2}^{2b} \hat{\mathsf{T}}_{1}(b\rho_{2}^{2b};\vec{t},\vec{u})}{b\rho_{2}^{2b}-x} dx & =  b \rho_{2}^{2b} \hat{\mathsf{T}}_{1}(b\rho_{2}^{2b};\vec{t},\vec{u}) \bigg(\ln\frac{b\rho_2^{2b}}{b\rho_2^{2b} - \sigma_{\star}}  + \ln \frac{\epsilon}{1+\epsilon}\bigg).
\end{align*}
long but straightforward calculations show that $C_{3}$ also can be written as in Theorem \ref{thm:main thm hard}. Moreover, substituting the expressions for $F_4$ and $\hat{F}_4$ obtained in Lemmas \ref{lemma: asymp of S2k final hard} and 
\ref{lemma: asymp of S4 final hard} into the relation $C_4 = F_4 + \hat{F}_4$, we infer that
\begin{align*}
& C_{4} = \sqrt{2}b\bigg(\rho_{1}^{b}\frac{ \rho_{1}^{2b} \mathsf{T}_{2}(b\rho_{1}^{2b};\vec{t},\vec{u}) -5 \mathsf{T}_{1}(b\rho_{1}^{2b};\vec{t},\vec{u}) }{ \Omega } - \rho_{2}^{b} \big( \rho_{2}^{2b} \hat{\mathsf{T}}_{2}(b\rho_{2}^{2b};\vec{t},\vec{u}) + 5 \hat{\mathsf{T}}_{1}(b\rho_{2}^{2b};\vec{t},\vec{u}) \big) \bigg)  \mathcal{I}
	\\ \nonumber
& + \frac{10\sqrt{2} b }{3} \bigg( \rho_{1}^{b} \frac{\mathsf{T}_{1}(b\rho_{1}^{2b};\vec{t},\vec{u})}{\Omega} + \rho_{2}^{b} \hat{\mathsf{T}}_{1}(b\rho_{2}^{2b};\vec{t},\vec{u}) \bigg) ( \mathcal{I}_2  - \mathcal{I}_4 )
	\\ \nonumber
& + \sqrt{2} b \bigg[ \rho_{1}^{b} \frac{\mathsf{T}_{1}(b\rho_{1}^{2b};\vec{t},\vec{u})}{\Omega} \bigg( \frac{2}{3} -  \rho_{1}^{2b} \frac{\mathsf{T}_{1}(b\rho_{1}^{2b};\vec{t},\vec{u})}{\Omega} \bigg) + \rho_{2}^{b} \hat{\mathsf{T}}_{1}(b\rho_{2}^{2b};\vec{t},\vec{u}) \bigg( \frac{2}{3} -  \rho_{2}^{2b} \hat{\mathsf{T}}_{1}(b\rho_{2}^{2b};\vec{t},\vec{u}) \bigg) \bigg] 
\mathcal{I}_3.
\end{align*}
Using the identities $\mathcal{I}_{3} = \mathcal{I}$ and $\mathcal{I}_{2}-\mathcal{I}_{4} = \mathcal{I}$, which can be obtained via integration by parts (see \cite[Lemma 2.10]{ACCL2022} for details), we conclude that $C_{4}$ also can be expressed as in Theorem \ref{thm:main thm hard}. This finishes the proof of Theorem \ref{thm:main thm hard}.
\end{proof}

\appendix

\section{Balayage of radially symmetric measures}\label{appendix:eq measure}
The measure $\mu_{h}$ is defined as the unique minimizer of the energy functional
\begin{align*}
I[\nu] = \iint \ln \frac{1}{|z-w|}\nu(d^{2}z)\nu(d^{2}w) + \int \hat{Q}(z) \nu(d^{2}z), \quad \hat{Q}(z) := \begin{cases}
|z|^{2b}, & \mbox{if } |z| \in [0,\rho_{1}]\cup[\rho_{2},+\infty), \\
+\infty, & \mbox{otherwise},
\end{cases}
\end{align*}
among all Borel probability measures $\nu$ on $\mathbb{C}$. In this appendix, we present a derivation of the expression \eqref{def of muh} for $\mu_h$. 

\medskip Let $C=\{z\in \mathbb{C}: |z| \in [0,\rho_{1}]\cup [\rho_{2},+\infty)\}$ and $G=\mathbb{C}\setminus C =\{z\in \mathbb{C}: |z| \in (\rho_{1},\rho_{2})\}$. The unique minimizer of 
\begin{align*}
\tilde{I}[\nu] = \iint \ln \frac{1}{|z-w|}\nu(d^{2}z)\nu(d^{2}w) + \int |z|^{2b} \nu(d^{2}z)
\end{align*}
among all Borel probability measures $\nu$ on $\mathbb{C}$ is given by $\mu(d^{2}z) = \chi_{[0,b^{-\frac{1}{2b}}]}(|z|) \frac{1}{4\pi} \Delta |z|^{2b}d^{\smash{2}}z = \smash{\chi_{[0,b^{-\frac{1}{2b}}]}(|z|)\frac{b^{\smash{2}}}{\pi}}|z|^{\smash{2b-2}}d^{\smash{2}}z$ (see \cite[Example IV.6.2]{SaTo}). We start with the following lemma.
\begin{lemma}
We have
\begin{align}\label{muhmuBal}
\mu_{h} = \mu \cdot \chi_{C} + \hat{\mu},
\end{align}
where $\hat{\mu} := \mbox{Bal}(\mu \cdot \chi_{G}\, , \partial G)$ is the balayage of $\mu \cdot \chi_{G}$ onto $\partial G = \{|z|=\rho_{1}\}\cup \{|z|=\rho_{2}\}$.
\end{lemma}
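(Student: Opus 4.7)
My plan is to identify $\mu_h$ via the Frostman/Euler-Lagrange variational characterization of the constrained equilibrium measure and then show that $\nu := \mu \cdot \chi_C + \hat\mu$ satisfies these conditions. Recall that $\mu_h$ is uniquely characterized as the probability measure supported in $C$ for which there exists a constant $F$ such that
\begin{align*}
2 U^{\mu_h}(z) + |z|^{2b} = F \quad \text{on } \operatorname{supp}(\mu_h), \qquad 2 U^{\mu_h}(z) + |z|^{2b} \geq F \quad \text{q.e. on } C,
\end{align*}
where $U^\sigma(z) := \int \ln\frac{1}{|z-w|}\sigma(d^2w)$. The unconstrained minimizer $\mu$ (supported on $\{|z|\leq b^{-1/(2b)}\}$) satisfies the corresponding conditions with the same weight $|z|^{2b}$ for some constant $F_0$, with equality on $\operatorname{supp}(\mu)$ and inequality on all of $\mathbb C$. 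The strategy is therefore to reduce $\nu$'s variational conditions on $C$ to those of $\mu$ on $\mathbb C$, using the defining property of balayage.

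The key steps are as follows. First, since $G = \{\rho_1 < |z| < \rho_2\}$ is a bounded open set with smooth boundary and $\mu\cdot \chi_G$ has a bounded density, the classical balayage $\hat\mu = \operatorname{Bal}(\mu \cdot \chi_G, \partial G)$ is well-defined, is supported on $\partial G$, has the same total mass as $\mu\cdot\chi_G$, and satisfies
\begin{align*}
U^{\hat\mu}(z) = U^{\mu\cdot\chi_G}(z) \qquad \text{for all } z \in C.
\end{align*}
Adding $U^{\mu\cdot\chi_C}(z)$ to both sides yields $U^\nu(z) = U^\mu(z)$ on $C$. From this, mass conservation ($\nu(\mathbb C) = \mu(C) + \mu(G) = 1$) and $\operatorname{supp}(\nu)\subset C$ are immediate. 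Next, for $z \in \operatorname{supp}(\nu)\subset \operatorname{supp}(\mu)$ we get
\begin{align*}
2U^\nu(z) + |z|^{2b} = 2 U^\mu(z) + |z|^{2b} = F_0,
\end{align*}
while for arbitrary $z\in C$ the variational inequality for $\mu$ gives $2U^\nu(z) + |z|^{2b} = 2U^\mu(z) + |z|^{2b} \geq F_0$. Hence $\nu$ satisfies the variational conditions for $\mu_h$ with constant $F = F_0$, and uniqueness forces $\mu_h = \nu$, which is \eqref{muhmuBal}. The rotation invariance of the setup ensures that $\hat\mu$ takes the explicit form $\sigma_1 \delta_{\rho_1}(r)\,dr\,\frac{d\theta}{2\pi} + \sigma_2 \delta_{\rho_2}(r)\,dr\,\frac{d\theta}{2\pi}$ for some constants $\sigma_1,\sigma_2>0$ summing to $\mu(G) = b\rho_2^{2b}-b\rho_1^{2b}$; computing these constants yields the formulas in \eqref{def of taustar}.

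The genuine work lies in two places. First, one must justify the potential identity $U^{\hat\mu} = U^{\mu\cdot\chi_G}$ on $C$; I would prove this by working with the Green's function of $G$ and writing, for the smooth density $d\sigma(w) = \frac{2b^2}{\pi}|w|^{2b-2}\chi_G(w)\,d^2w$, $\hat\mu = \sigma\cdot\omega_G$ where $\omega_G$ denotes harmonic measure (of $\partial G$ with respect to $G$), using the fact that $G$ is a regular domain for the Dirichlet problem. Second, I need to extract the masses $\sigma_1,\sigma_2$ of $\hat\mu$ on the two circles. Exploiting rotation invariance, the balayage of the uniform circular measure on $\{|w|=r\}$ with $\rho_1<r<\rho_2$ splits between the two boundary circles with weights $\frac{\ln(\rho_2/r)}{\ln(\rho_2/\rho_1)}$ and $\frac{\ln(r/\rho_1)}{\ln(\rho_2/\rho_1)}$ (solve the one-dimensional radial Dirichlet problem $\Delta u = 0$ in the annulus, which reduces to a linear function of $\ln r$). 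Integrating these coefficients against the radial profile of $\mu\cdot\chi_G$, $2b^2 r^{2b-1}\,dr$ for $r \in (\rho_1,\rho_2)$, gives
\begin{align*}
\sigma_1 &= \int_{\rho_1}^{\rho_2} \frac{\ln(\rho_2/r)}{\ln(\rho_2/\rho_1)}\, 2b^2 r^{2b-1}\,dr = \frac{\rho_2^{2b}-\rho_1^{2b}}{2\ln(\rho_2/\rho_1)} - b\rho_1^{2b}, \\
\sigma_2 &= b\rho_2^{2b} - \frac{\rho_2^{2b}-\rho_1^{2b}}{2\ln(\rho_2/\rho_1)},
\end{align*}
matching \eqref{def of taustar}. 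Combined with $\mu\cdot\chi_C$ having the claimed radial density, this yields \eqref{def of muh}.
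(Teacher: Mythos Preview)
Your proof of the lemma is correct and follows essentially the same approach as the paper: verify the Frostman conditions for $\nu=\mu\cdot\chi_C+\hat\mu$ by using the balayage identity $U^{\hat\mu}=U^{\mu\cdot\chi_G}$ on $C$ to reduce to the known variational conditions for $\mu$, and conclude via the uniqueness theorem. Your subsequent computation of $\sigma_1,\sigma_2$ via harmonic-measure weights in the annulus is a valid alternative to the paper's method, which instead posits the ansatz \eqref{hatmu}, computes $U^{\hat\mu}$ and $U^{\mu\cdot\chi_G}$ explicitly on $C$, and solves the resulting $2\times 2$ linear system for $\sigma_1,\sigma_2$.
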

\begin{proof}
Let us first introduce some notation and background from \cite{SaTo}. Given a Borel probability measure $\nu$, define
\begin{align*}
U^{\nu}(z) := \int_\C\ln\frac 1 {|z-w|}\, \nu (d^{2}w), \qquad z \in \mathbb{C}.
\end{align*}
By \cite[Theorem I.1.3 (d) and (f)]{SaTo}, $\mu$ and $\mu_{h}$ satisfy the following conditions: there exist $F,F_{h} \in \mathbb{R}$ such that
\begin{align}
& \begin{cases}
U^{\mu}(z)+\frac{1}{2}|z|^{2b} \geq F, & z \in \mathbb{C}, \\
U^{\mu}(z)+\frac{1}{2}|z|^{2b} = F, & |z| \leq b^{-\frac{1}{2b}},
\end{cases} \label{cond for mu} \\
& \begin{cases}
U^{\mu_{h}}(z)+\frac{1}{2}\hat{Q}(z) \geq F_{h}, & z \in \mathbb{C}, \\
U^{\mu_{h}}(z)+\frac{1}{2}\hat{Q}(z) = F_{h}, & z \in \mathrm{supp}(\mu_{h}).
\end{cases} \label{cond for muh}
\end{align}
Furthermore, by \cite[Theorem I.3.3]{SaTo}, the conditions \eqref{cond for muh} uniquely characterize $\mu_{h}$ in the sense that if $\nu$ is a Borel probability measure with compact support which satisfies $U^{\nu}(z)+\frac{1}{2}\hat{Q}(z) \geq F_{\nu}$ for some constant $F_{\nu}$ and all $z \in \mathbb{C}$ and if $U^{\nu}(z)+\frac{1}{2}\hat{Q}(z) = F_{\nu}$ for all $z \in \mathrm{supp}(\nu)$, then $\nu = \mu_{h}$.

Let $\tilde{\mu}_{h}$ denote the right-hand side of \eqref{muhmuBal}. We will show that $\tilde{\mu}_{h}=\mu_{h}$ using the aforementioned uniqueness theorem, namely \cite[Theorem I.3.3]{SaTo}. Since $\hat{\mu}$ is the balayage of $\mu \cdot \chi_{G}$ onto $\partial G$, \cite[Theorem II.4.1]{SaTo} shows that
\begin{align*}
\begin{cases}
U^{\hat{\mu}}(z) \leq U^{\mu \cdot \chi_{G}}(z), & z \in \mathbb{C}, \\
U^{\hat{\mu}}(z) = U^{\mu \cdot \chi_{G}}(z), & z \in C.
\end{cases}
\end{align*}
Combining the above with \eqref{cond for mu} and using that $U^{\mu}=U^{\mu \cdot \chi_{C}}+U^{\mu \cdot \chi_{G}}$, we obtain (recall that $\hat{Q}(z) = + \infty$ for $z \in G$)
\begin{align*}
& \begin{cases}
U^{\mu \cdot \chi_{C}}(z)+U^{\hat{\mu}}(z)+\frac{1}{2}\hat{Q}(z) \geq F, & |z| \notin [0,\rho_{1}]\cup[\rho_{2},b^{-\frac{1}{2b}}], \\
U^{\mu \cdot \chi_{C}}(z)+U^{\hat{\mu}}(z)+\frac{1}{2}\hat{Q}(z) = F, & |z| \in [0,\rho_{1}]\cup[\rho_{2},b^{-\frac{1}{2b}}].
\end{cases}
\end{align*}
Since $U^{\mu \cdot \chi_{C}}+U^{\hat{\mu}}=U^{\tilde{\mu}_{h}}$, \cite[Theorem I.3.3]{SaTo} applies and gives $\tilde{\mu}_{h}=\mu_{h}$.
\end{proof}

It only remains to compute $\hat{\mu}:=\mbox{Bal}(\mu \cdot \chi_{G} \, , \partial G)$ explicitly. 
Recall that 
\begin{align*}
\mu(d^{2}z) =  \smash{\frac{b^{\smash{2}}}{\pi}}|z|^{\smash{2b-2}}\chi_{[0,b^{-\frac{1}{2b}}]}(r)d^{\smash{2}}z, \qquad z=re^{i\theta}, \; r\geq 0, \; \theta \in (-\pi,\pi],
\end{align*}
and that $0<\rho_1<\rho_2<b^{-\frac{1}{2b}}$. Hence $\mu \cdot \chi_{G}$ is the radially symmetric measure given by
\begin{align*}
(\mu \cdot \chi_{G})(d^{2}z) = f(r)\,\chi_{(\rho_{1},\rho_{2})}(r)\, rdr \frac{d\theta}{\pi},
\end{align*}
where $f(r):=b^{2}r^{2b-2}$. Fix $z\in \C\setminus \overline{G}$ and compute the logarithmic potential
\begin{align*}
U^{\mu\cdot \chi_{G}}(z)&=\int_\C\ln\frac 1 {|z-w|}\, (\mu \cdot \chi_{G})(d^{2}w) = 2\int_{\rho_1}^{\rho_2}rf(r)\, dr\frac{1}{2\pi}\int_0^{2\pi}\ln\frac 1 {|z-re^{i\theta}|}\, d\theta.
\end{align*}
Now we use the familiar integral (e.g. \cite[Chapter 0, page 22]{SaTo})
\begin{align}\label{useful simple integral}
\frac 1 {2\pi} \int_0^{2\pi}\ln\frac 1 {|z-re^{i\theta}|}\, d\theta=\begin{cases}
\ln\frac 1 r, & |z|\le r, \\
\ln \frac 1 {|z|}, & |z| \ge r.
\end{cases}
\end{align}
It follows that
\begin{align*}
U^{\mu\cdot \chi_{G}}(z) = \begin{cases}
\mathcal{C}_1, & \mbox{if } |z| \leq \rho_{1}, \\
\mathcal{C}_2\ln \frac 1 {|z|}, & \mbox{if } |z| \geq \rho_{2}, 
\end{cases} \quad \mbox{where } \quad \mathcal{C}_1=2\int_{\rho_1}^{\rho_2}rf(r)\ln\frac 1 r\, dr, \quad \mathcal{C}_2=2\int_{\rho_1}^{\rho_2}rf(r)\, dr.
\end{align*}
We make the ansatz that $\hat{\mu}$ has the form (see \cite[Theorem II.4.1]{SaTo})
\begin{align}\label{hatmu}
\hat{\mu}(d^{2}z)=\sigma_1\,\delta_{\rho_1}(r)\, dr \frac{d\theta}{2\pi}+\sigma_2\,\delta_{\rho_2}(r)\, dr \frac{d\theta}{2\pi}, \qquad z=re^{i\theta}, \; r\geq 0, \; \theta \in (-\pi,\pi],
\end{align}
for some constants $\sigma_1$ and $\sigma_2$. Using again \eqref{useful simple integral} we obtain
\begin{align*}
U^{\hat{\mu}}(z)=  \begin{cases}
\sigma_1 \ln \frac{1}{\rho_1}+\sigma_2\ln\frac{1}{\rho_2}, & \mbox{if } |z| \leq \rho_{1}, \\
(\sigma_1+\sigma_2)\ln\frac 1{|z|}, & \mbox{if } |z| \geq \rho_{2}.
\end{cases}
\end{align*}
By definition of $\hat{\mu}$, we must have $U^{\mu\cdot \chi_{G}}(z) = U^{\hat{\mu}}(z)$ for all $z \notin \overline{G}$, so we have the system
\begin{align*}
\begin{cases} 
\sigma_1\ln \frac 1 {\rho_1}+\sigma_2\ln\frac 1 {\rho_2}&\hspace{-.25cm}=\mathcal{C}_1, \\
\sigma_1+\sigma_2 &\hspace{-.25cm}=\mathcal{C}_2.
\end{cases}
\end{align*}
The solution is
\begin{align}\label{sigma12 in terms of C12}
\sigma_1= \frac{\mathcal{C}_1-\mathcal{C}_2\ln\frac 1 {\rho_2}}{\ln \frac {\rho_2}{\rho_1}},\qquad \sigma_2= \frac{\mathcal{C}_2\ln\frac 1{\rho_1}-\mathcal{C}_1}{\ln\frac {\rho_2}{\rho_1}}.
\end{align}
Recalling that $f(r)= b^2r^{2b-2}$, we get
\begin{align*}
& \mathcal{C}_1=2\int_{\rho_1}^{\rho_2}rf(r)\ln\frac 1 r\, dr =-b\rho_2^{2b}\ln(\rho_2)+b\rho_1^{2b} \ln(\rho_1) + \tfrac{1}{2} (\rho_2^{2b}-\rho_1^{2b}), \\
& \mathcal{C}_2=2\int_{\rho_1}^{\rho_2}rf(r)\, dr = b(\rho_2^{2b}-\rho_1^{2b}).
\end{align*}
Substituting the above expressions into \eqref{sigma12 in terms of C12} and recalling (\ref{muhmuBal}) and (\ref{hatmu}), we arrive at \eqref{def of muh}.

\section{Proof of Lemma \ref{T0lemma}}\label{T0lemmaapp}
Define $\phi_1$ and $\phi_2$ by
\begin{align*}
& \phi_1(x;\vec{t},\vec{u}) :=1 + \mathsf{T}_{0}(x;\vec{t},\vec{u}) + \hat{\mathsf{T}}_{0}(b\rho_{2}^{2b};\vec{u})
= 1 + \sum_{\ell=1}^{m} \omega_{\ell}e^{-\frac{t_{\ell}}{b}(x-b\rho_{1}^{2b})} + \sum_{\ell=m+1}^{2m} \omega_{\ell},
	\\
& \phi_2(x;\vec{t},\vec{u}) :=1 - \hat{\mathsf{T}}_{0}(x;\vec{t},\vec{u}) + \hat{\mathsf{T}}_{0}(b\rho_{2}^{2b};\vec{u})
= 1 - \sum_{\ell=m+1}^{2m} \omega_{\ell} e^{-\frac{t_{\ell}}{b}(b\rho_{2}^{2b}-x)} + \sum_{\ell=m+1}^{2m} \omega_{\ell}.
\end{align*}
A calculation gives
\begin{align*}
\partial_{u_{2m}} \phi_1 =&\; e^{u_1 + \dots + u_{2m}} e^{-\frac{t_{1}}{b}(x-b\rho_{1}^{2b})} + \sum_{\ell=2}^{m} e^{u_\ell + \dots + u_{2m}} (e^{-\frac{t_{\ell}}{b}(x-b\rho_{1}^{2b})} - e^{-\frac{t_{\ell-1}}{b}(x-b\rho_{1}^{2b})})
	\\
& + e^{u_{m+1} + \dots + u_{2m}} (1 - e^{-\frac{t_{m}}{b}(x-b\rho_{1}^{2b})}),
	\\
\partial_{u_{2m}} \phi_2 =&\; e^{u_{m+1} + \dots + u_{2m}} (1 - e^{-\frac{t_{m+1}}{b}(b\rho_{2}^{2b}-x)}) + \sum_{\ell=m+2}^{2m} e^{u_\ell + \dots + u_{2m}} (e^{-\frac{t_{\ell-1}}{b}(b\rho_{2}^{2b}-x)} - e^{-\frac{t_{\ell}}{b}(b\rho_{2}^{2b}-x)}).
\end{align*}
Using that $t_{1}>\dots>t_{m} \geq 0$, we see that $\partial_{u_{2m}} \phi_1(x;\vec{t},\vec{u}) > 0$ for any $x \geq b\rho_{1}^{2b}$, and using that $0 \leq t_{m+1} < \dots < t_{2m}$, we see that $\partial_{u_{2m}} \phi_2(x;\vec{t},\vec{u}) \geq 0$ for any $x \leq b\rho_{2}^{2b}$. In view of the limits
$$\lim_{u_{2m} \to -\infty} \phi_1(x;\vec{t},\vec{u}) = 0, \qquad \lim_{u_{2m} \to -\infty} \phi_2(x;\vec{t},\vec{u}) = e^{-\frac{t_{2m}}{b}(b\rho_{2}^{2b}-x)} > 0,$$
the desired inequalities follow.

\section{Uniform asymptotics of the incomplete gamma function}\label{section:uniform asymp gamma}
\begin{lemma}[{From \cite[formula 8.11.2]{NIST}}]\label{lemma:various regime of gamma}
Let $a>0$ be fixed. As $z \to +\infty$,
\begin{align*}
\gamma(a,z) = \Gamma(a) + \bigO(e^{-\frac{z}{2}}).
\end{align*}
\end{lemma}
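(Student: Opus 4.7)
The identity $\Gamma(a) - \gamma(a,z) = \int_{z}^{+\infty} t^{a-1} e^{-t}\,dt$ expresses the quantity of interest as the upper incomplete gamma function $\Gamma(a,z)$. The plan is to bound this tail integral directly by a crude exponential factor followed by a constant depending only on $a$.

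First I would split the exponential as $e^{-t} = e^{-t/2} \cdot e^{-t/2}$. On the integration range $[z,+\infty)$ one has $e^{-t/2} \leq e^{-z/2}$, so pulling out this factor gives
\begin{align*}
\Gamma(a) - \gamma(a,z) = \int_{z}^{+\infty} t^{a-1} e^{-t}\,dt
\leq e^{-z/2} \int_{z}^{+\infty} t^{a-1} e^{-t/2}\,dt
\leq e^{-z/2} \int_{0}^{+\infty} t^{a-1} e^{-t/2}\,dt
= 2^{a}\, \Gamma(a)\, e^{-z/2},
\end{align*}
where the last equality is the standard evaluation of the gamma integral after the substitution $s = t/2$. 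Since $a>0$ is fixed, the prefactor $2^{a}\Gamma(a)$ is a finite constant, and the stated bound $\gamma(a,z) = \Gamma(a) + \bigO(e^{-z/2})$ follows.

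I expect no obstacle here: the only delicate point is that the constant $\tfrac{1}{2}$ in the exponent is not sharp. The same splitting with $e^{-t} = e^{-(1-\kappa)t}\cdot e^{-\kappa t}$ yields $\bigO(e^{-\kappa z})$ for any $\kappa \in (0,1)$, and the full asymptotic expansion $\Gamma(a,z) \sim z^{a-1}e^{-z}$ from \cite[formula 8.11.2]{NIST} shows that $e^{-z}$ (up to an algebraic prefactor) is the true decay rate. For the paper's purposes, however, the weaker form $\bigO(e^{-z/2})$ is entirely sufficient: it is invoked only to dismiss negligible contributions in sums such as $S_{0}$ and $S_{5}$, where the exponent is driven by $n$ and the precise constant in the exponent is immaterial.
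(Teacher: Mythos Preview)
Your proof is correct. The paper does not actually prove this lemma; it simply cites \cite[formula 8.11.2]{NIST} and states the bound without argument. Your elementary splitting $e^{-t}=e^{-t/2}\cdot e^{-t/2}$ gives a clean self-contained derivation with an explicit constant $2^{a}\Gamma(a)$, which is a nice addition. Your closing remarks about the sharpness of the exponent and the role of the lemma in the paper (dismissing exponentially small contributions in $S_0$) are accurate as well.
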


\begin{lemma}[{From \cite[Section 11.2.4]{Temme}}]\label{lemma: uniform}
We have
\begin{align*}
& \frac{\gamma(a,z)}{\Gamma(a)} = \frac{1}{2}\mathrm{erfc}(-\eta \sqrt{a/2}) - R_{a}(\eta), \qquad R_{a}(\eta) = \frac{e^{-\frac{1}{2}a \eta^{2}}}{2\pi i}\int_{-\infty}^{\infty}e^{-\frac{1}{2}a u^{2}}g(u)du,
\end{align*}
where $\mathrm{erfc}$ is the complementary error function,
\begin{align}\label{lol8}
& \lambda = \frac{z}{a}, \qquad \eta = (\lambda-1)\sqrt{\frac{2 (\lambda-1-\ln \lambda)}{(\lambda-1)^{2}}}, \qquad g(u) := \frac{dt}{du} \frac{1}{\lambda -t} + \frac{1}{u +i\eta}.
\end{align}
The variables $t$ and $u$ are related by the bijection $t \mapsto u$ from $\mathcal{L} := \{\frac{\theta}{\sin \theta} e^{i\theta} : - \pi < \theta < \pi\}$ to $\R$ given by
\begin{align}\label{u t temme}
u = -i(t-1) \sqrt{\frac{2(t-1-\ln t)}{(t-1)^2}}, \qquad  t \in \mathcal{L}.
\end{align}
The principal branch is taken for the roots in \eqref{lol8} and \eqref{u t temme}. In addition, 
\begin{align}\label{asymp of Ra}
& R_{a}(\eta) \sim \frac{e^{-\frac{1}{2}a \eta^{2}}}{\sqrt{2\pi a}}\sum_{j=0}^{\infty} \frac{c_{j}(\eta)}{a^{j}}, \qquad \mbox{as } a \to + \infty
\end{align}
uniformly for $z \in [0,\infty)$ where all coefficients $c_{j}(\eta)$ are bounded functions of $\eta \in \mathbb{R}$ (i.e. bounded for $\lambda \in (0,+\infty)$). The first two coefficients are given by (see \cite[p. 312]{Temme})
\begin{align*}
c_{0}(\eta) = \frac{1}{\lambda-1}-\frac{1}{\eta}, \qquad c_{1}(\eta) = \frac{1}{\eta^{3}}-\frac{1}{(\lambda-1)^{3}}-\frac{1}{(\lambda-1)^{2}}-\frac{1}{12(\lambda-1)}.
\end{align*}
More generally, we have
\begin{align}\label{recursive def of the cj}
c_{j}(\eta) = \frac{1}{\eta} \frac{d}{d\eta}c_{j-1}(\eta) + \frac{\gamma_{j}}{\lambda-1}, \; \qquad j \geq 1,
\end{align}
where the $\gamma_{j}$ are the Stirling coefficients
\begin{align*}
\gamma_{j} = \frac{(-1)^{j}}{2^{j} \, j!} \bigg[ \frac{d^{2j}}{dx^{2j}} \bigg( \frac{1}{2}\frac{x^{2}}{x-\ln(1+x)} \bigg)^{j+\frac{1}{2}} \bigg]_{x=0}.
\end{align*}
In particular, the following hold:
\item[(i)] As $a \to +\infty$, $\gamma(a,\lambda a) = \Gamma(a)\big(1 + \bigO(e^{-\frac{a \eta^{2}}{2}})\big)$ uniformly for $\lambda \geq 1+\delta$ for each fixed $\delta > 0$.

\item[(ii)] As $a \to +\infty$, $\gamma(a,\lambda a) = \Gamma(a)\bigO(e^{-\frac{a \eta^{2}}{2}})$ uniformly for $\lambda$ in compact subsets of $(0,1)$,
\end{lemma}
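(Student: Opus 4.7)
The plan is to follow Temme's saddle-point derivation. First I would start from a contour-integral representation of $\gamma(a, a\lambda)/\Gamma(a)$ in which the exponent features a quadratic saddle at $t = 1$. A convenient starting point is the identity obtained by combining the Hankel contour formula for $1/\Gamma(a)$ with the integral defining $\gamma(a, z)$; after the rescaling $z = a\lambda$, one obtains an expression of the form
\begin{align*}
\frac{\gamma(a, a\lambda)}{\Gamma(a)} = \frac{a^{a} e^{-a}}{\Gamma(a+1)} \cdot \frac{1}{2\pi i}\int_{\mathcal{L}} \frac{e^{a\psi(t)}}{\lambda - t} \, dt, \qquad \psi(t) := 1 + \ln t - t,
\end{align*}
where $\mathcal{L}$ is the steepest-descent contour through $t = 1$ written down in the statement. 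Since $\psi(1)=0$, $\psi'(1)=0$, $\psi''(1) = -1$, the prescription (\ref{u t temme}) encodes the unique holomorphic change of variables $t \mapsto u$ for which $\psi(t) = -u^{2}/2$, with the sign of $u$ fixed so that $u$ is real on $\mathcal{L}$ with $\mathrm{sgn}(u) = \mathrm{sgn}(t-1)$. This substitution is a biholomorphism between a neighbourhood of $\mathcal{L}$ and a neighbourhood of $\mathbb{R}$ carrying $\mathcal{L}$ onto $\mathbb{R}$ and the pole $t = \lambda$ onto $u = -i\eta$, with $\eta$ as in (\ref{lol8}).

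Next I would split off the singular part of the transformed integrand. By construction, $g(u) = (dt/du)/(\lambda - t) + 1/(u + i\eta)$ is holomorphic at $u = -i\eta$ (the residues cancel) and in fact in a complex strip about $\mathbb{R}$. Writing the integral as the sum of the $-1/(u+i\eta)$ term plus the $g$ term, and using the Gaussian/Fourier identity
\begin{align*}
\frac{1}{2\pi i}\int_{\mathbb{R}} \frac{e^{-au^{2}/2}}{u + i\eta}\, du = -\tfrac{1}{2}\, e^{a\eta^{2}/2}\,\mathrm{erfc}(\eta\sqrt{a/2})\, \mathrm{sgn}(\eta)
\end{align*}
together with the Stirling estimate $a^{a} e^{-a}/\Gamma(a+1) = (2\pi a)^{-1/2}(1 + \bigO(a^{-1}))$, one recovers the decomposition $\frac{\gamma(a, a\lambda)}{\Gamma(a)} = \frac{1}{2}\mathrm{erfc}(-\eta\sqrt{a/2}) - R_{a}(\eta)$ with $R_{a}$ in the stated integral form. (A cleaner route that bypasses the bookkeeping with $\Gamma(a+1)$ is to apply the very same change of variables to an explicit contour representation of $\tfrac{1}{2}\mathrm{erfc}(-\eta\sqrt{a/2})$, which forces the prefactors to match automatically.)

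For the expansion (\ref{asymp of Ra}), I would invoke Watson's lemma on the Gaussian integral defining $R_{a}(\eta)$, using the power-series expansion of $g(u)$ about $u = 0$; because $g$ is analytic in a strip about $\mathbb{R}$ with coefficients that extend smoothly through $\eta = 0$ (the point $\lambda = 1$), this yields a full asymptotic series in $a^{-1}$ with coefficients $c_{j}(\eta)$ bounded on $\eta\in\mathbb{R}$. The recursion (\ref{recursive def of the cj}) is obtained by matching the $R_{a}$ series term-by-term against the Stirling series for $\Gamma(a+1)$, producing the coefficients $\gamma_{j}$. Parts (i) and (ii) follow at once: on $\lambda \geq 1 + \delta$ one has $\eta \geq \eta(1+\delta) > 0$, so $\mathrm{erfc}(-\eta\sqrt{a/2}) = 2 + \bigO(e^{-a\eta^{2}/2})$ and $R_{a}(\eta) = \bigO(e^{-a\eta^{2}/2}/\sqrt{a})$; on compact subsets of $(0,1)$, $\eta$ is bounded above by a negative constant and both summands are $\bigO(e^{-a\eta^{2}/2})$. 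The main technical obstacle is verifying that $t \mapsto u$ is actually a holomorphic bijection from $\mathcal{L}$ onto $\mathbb{R}$ and that the apparent poles at $\eta = 0$ in the $c_{j}(\eta)$ (e.g.\ in $c_{0}(\eta) = 1/(\lambda-1) - 1/\eta$) are removable; both reduce to careful Taylor analysis of $\psi$ at its saddle $t = 1$.
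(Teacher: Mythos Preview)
The paper does not prove this lemma at all: it is stated in Appendix~\ref{section:uniform asymp gamma} as a quotation from \cite[Section 11.2.4]{Temme}, with no accompanying argument. Your sketch is essentially Temme's own saddle-point derivation (the contour representation, the quadratic change of variables $t\mapsto u$ sending the steepest-descent path $\mathcal{L}$ to $\mathbb{R}$, the subtraction of the simple pole to isolate the $\mathrm{erfc}$ term, and Watson's lemma for the remainder), so it is the appropriate approach and matches the cited source.

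One small caution: your displayed Gaussian--pole identity with the factor $\mathrm{sgn}(\eta)$ is not quite in the right form; the integral $\frac{1}{2\pi i}\int_{\mathbb{R}}\frac{e^{-au^{2}/2}}{u+i\eta}\,du$ is an entire function of $\eta$ (no sign discontinuity), and the correct evaluation yields $\tfrac{1}{2}e^{a\eta^{2}/2}\,\mathrm{erfc}(\eta\sqrt{a/2})$ for $\eta>0$, extended analytically. This is a bookkeeping slip that does not affect the strategy, but you should redo that line carefully (or, as you yourself suggest, apply the same change of variables to a contour representation of $\tfrac{1}{2}\mathrm{erfc}(-\eta\sqrt{a/2})$ so that the prefactors cancel automatically).
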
 

The following lemma is essentially a result of Tricomi \cite{Tricomi}; however, the coefficients appearing in Lemma \ref{lemma: asymp of gamma for lambda one over sqrt away from 1} below are written in a non-recursive way using \cite[Lemma A.4]{ACCL2022}.
\begin{lemma}[{From \cite[Lemma A.4]{ACCL2022}}]\label{lemma: asymp of gamma for lambda one over sqrt away from 1}
Let $N \geq 0$ be an integer, let $\eta$ be as in \eqref{lol8}, let $\varphi_j(\lambda) := \frac{(-1)^{j+1} (2j-1)!!}{\eta^{2j+1}}$, and let $S(\varphi_j(\lambda))$ denote the singular part of $\varphi_j(\lambda)$ at $\lambda = 1$, i.e., $S(\varphi_j(\lambda))$ is the sum of the singular terms in the Laurent expansion of $\varphi_j(\lambda)$ at $\lambda = 1$. The first $S(\varphi_j(\lambda))$ are given by
  \begin{align*}
& S(\varphi_0(\lambda)) = -\frac{1}{\lambda-1}, \qquad
S(\varphi_1(\lambda)) = \frac{1}{(\lambda-1)^{3}}+\frac{1}{(\lambda-1)^{2}}+\frac{1}{12(\lambda-1)},
	\\
& S(\varphi_2(\lambda)) = -\frac{3}{(\lambda -1)^5}-\frac{5}{(\lambda -1)^4} - \frac{25}{12 (\lambda
   -1)^3} - \frac{1}{12 (\lambda -1)^2} - \frac{1}{288 (\lambda -1)}.
\end{align*}

\item[(i)] As $a \to +\infty$, uniformly for $\lambda \geq 1+ \frac{1}{\sqrt{a}}$,
\begin{align*}
\frac{\gamma(a,\lambda a)}{\Gamma(a)} & = 1 + \frac{e^{-\frac{a}{2}\eta^2}}{\sqrt{2\pi}} \bigg\{\sum_{j=0}^{N-1} \frac{S(\varphi_j(\lambda))}{a^{j+\frac{1}{2}}} + \bigO\bigg(\frac{1}{a^{N+\frac{1}{2}}}\bigg) +  \bigO\bigg(\frac{1}{(a \eta^2)^{N+\frac{1}{2}}}\bigg)\bigg\}.
\end{align*}

\item[(ii)] As $a \to +\infty$, uniformly for $\lambda \in [\epsilon, 1-\frac{1}{\sqrt{a}}]$ for any fixed $\epsilon > 0$,
\begin{align*}
\frac{\gamma(a,\lambda a)}{\Gamma(a)} = \frac{e^{-\frac{a}{2}\eta^2}}{\sqrt{2\pi}} \bigg\{\sum_{j=0}^{N-1} \frac{S(\varphi_j(\lambda))}{a^{j+\frac{1}{2}}} + \bigO\bigg(\frac{1}{a^{N+\frac{1}{2}}}\bigg) +  \bigO\bigg(\frac{1}{(a \eta^2)^{N+\frac{1}{2}}}\bigg)\bigg\}.
%= \frac{e^{-\frac{a\eta^{2}}{2}}}{\sqrt{2\pi}} \bigg( \frac{-1}{\lambda-1}\frac{1}{\sqrt{a}}+\frac{1+10\lambda+\lambda^{2}}{12(\lambda-1)^{3}} \frac{1}{a^{3/2}} + \bigO(a^{-5/2}) \bigg).
\end{align*}
\end{lemma}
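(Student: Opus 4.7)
The plan is to start from Temme's exact identity in Lemma \ref{lemma: uniform}, namely $\frac{\gamma(a,\lambda a)}{\Gamma(a)} = \frac{1}{2}\mathrm{erfc}(-\eta\sqrt{a/2}) - R_a(\eta)$, together with the uniform expansion $R_a(\eta) = \frac{e^{-a\eta^2/2}}{\sqrt{2\pi a}}\bigl(\sum_{j=0}^{N-1}c_j(\eta)/a^j + \bigO(a^{-N})\bigr)$, and to combine it with the classical asymptotic expansion of the complementary error function. In both regimes considered by the lemma, the hypothesis $|\lambda - 1| \geq 1/\sqrt{a}$ translates into $a\eta^2 \gtrsim 1$, which is precisely the regime where the erfc expansion produces a useful remainder.

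First, I would apply the standard expansion $\frac{1}{2}\mathrm{erfc}(x) = \frac{e^{-x^2}}{2x\sqrt{\pi}}\sum_{j=0}^{N-1}\frac{(-1)^j(2j-1)!!}{(2x^2)^j} + \bigO(e^{-x^2}/x^{2N+1})$ as $x \to +\infty$ with $x = |\eta|\sqrt{a/2}$, using $\frac{1}{2}\mathrm{erfc}(-y) = 1 - \frac{1}{2}\mathrm{erfc}(y)$ when $\eta > 0$. A short calculation based on $\frac{1}{(-\eta)(a\eta^2)^j} = -\frac{1}{a^j \eta^{2j+1}}$ then yields
\begin{align*}
\frac{1}{2}\mathrm{erfc}(-\eta\sqrt{a/2}) = \chi_{(0,\infty)}(\eta) + \frac{e^{-a\eta^2/2}}{\sqrt{2\pi a}}\bigg(\sum_{j=0}^{N-1}\frac{\varphi_j(\lambda)}{a^j} + \bigO\big((a\eta^2)^{-N}\big)\bigg),
\end{align*}
with $\varphi_j(\lambda)$ as in the statement. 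Subtracting $R_a(\eta)$ and using its uniform expansion gives
\begin{align*}
\frac{\gamma(a,\lambda a)}{\Gamma(a)} = \chi_{(0,\infty)}(\eta) + \frac{e^{-a\eta^2/2}}{\sqrt{2\pi a}}\bigg(\sum_{j=0}^{N-1}\frac{\varphi_j(\lambda) - c_j(\eta)}{a^j} + \bigO(a^{-N}) + \bigO((a\eta^2)^{-N})\bigg),
\end{align*}
where the indicator supplies the explicit leading $1$ in case (i) and vanishes in case (ii).

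The principal obstacle is the algebraic identity $\varphi_j(\lambda) - c_j(\eta) = S(\varphi_j(\lambda))$, i.e.\ the claim that Temme's coefficient $c_j(\eta)$ is precisely the analytic (regular) part of the Laurent expansion of $\varphi_j(\lambda)$ at $\lambda = 1$. Half of this is automatic: the contour integral formula for $R_a(\eta)$ in Lemma \ref{lemma: uniform} implies $c_j(\eta)$ is analytic at $\eta = 0$, and since $\lambda - 1$ and $\eta$ are related by an analytic bijection at the origin, $c_j$ is analytic in $\lambda$ at $\lambda = 1$, so $\varphi_j - c_j$ and $\varphi_j$ have the same singular part $S(\varphi_j)$. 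The nontrivial assertion that the analytic part of $\varphi_j - c_j$ vanishes I would prove by induction on $j$, using the recursion $c_j = \eta^{-1}\partial_\eta c_{j-1} + \gamma_j/(\lambda - 1)$ stated in the excerpt together with the parallel identity $\varphi_j = \eta^{-1}\partial_\eta \varphi_{j-1}$, which follows by direct differentiation of $\varphi_{j-1} = (-1)^j (2j-3)!!/\eta^{2j-1}$. Subtracting gives $\varphi_j - c_j = \eta^{-1}\partial_\eta(\varphi_{j-1} - c_{j-1}) - \gamma_j/(\lambda - 1)$, and the inductive step reduces to a combinatorial identity about the action of $\eta^{-1}\partial_\eta$ on Laurent tails in $1/(\lambda - 1)$, which is verified using $d\lambda/d\eta = \eta\lambda/(\lambda - 1)$ and the definition of the Stirling coefficients $\gamma_j$.

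Finally, one must verify the claimed uniformity of the error estimates in the two parameter regimes. Case (ii), with $\lambda$ in a compact subset of $(0,1)$, is straightforward since all functions involved are bounded and the two remainders inherit uniformity directly from Lemma \ref{lemma: uniform}. Case (i) is more delicate because $\lambda$ is allowed to be arbitrarily large; here one uses the asymptotics $\eta \sim \sqrt{2\ln\lambda}$ and $\varphi_j(\lambda) = \bigO(\eta^{-(2j+1)})$ as $\lambda \to +\infty$ to confirm that all implicit constants remain controlled, and the pointwise check of the formulas for $S(\varphi_0), S(\varphi_1), S(\varphi_2)$ displayed in the lemma statement provides a base case and a sanity check for the induction.
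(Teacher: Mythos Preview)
The paper does not give its own proof of this lemma: it is stated in the appendix with the attribution ``From \cite[Lemma A.4]{ACCL2022}'' and the preceding sentence notes that it is essentially Tricomi's result with the coefficients rewritten non-recursively. Your approach---combining Temme's identity $\frac{\gamma(a,\lambda a)}{\Gamma(a)}=\tfrac12\mathrm{erfc}(-\eta\sqrt{a/2})-R_a(\eta)$ with the large-argument expansion of $\mathrm{erfc}$ and then establishing the algebraic identity $\varphi_j-c_j=S(\varphi_j)$ by induction via the recursion \eqref{recursive def of the cj} and the observation that $\eta^{-1}\partial_\eta=\tfrac{\lambda}{\lambda-1}\partial_\lambda$ maps finite Laurent tails in $(\lambda-1)^{-1}$ to finite Laurent tails---is correct and is the natural argument; it is in all likelihood the argument given in \cite{ACCL2022}. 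One small clarification: in your inductive step you do not need to verify separately that the result equals $S(\varphi_j)$; once you know (from boundedness of $c_j$ in Lemma~\ref{lemma: uniform}) that $c_j$ is analytic at $\lambda=1$ and (from the recursion) that $\varphi_j-c_j$ is a finite sum of strictly negative powers of $(\lambda-1)$, the identity $\varphi_j-c_j=S(\varphi_j)$ is automatic.
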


\paragraph{Acknowledgements.} CC acknowledges support from the Swedish Research Council, Grant No. 2021-04626. JL acknowledges support from the Swedish Research Council, Grant No. 2021-03877, and the Ruth and Nils-Erik Stenb\"ack Foundation.

\footnotesize

\end{document}